\keywords{syntax \and variable binding \and substitution \and category theory}
\newcommand{\isempty}[1]%
{
  \ifthenelse{\equal{#1}{}}%
    {EMPTY}% if #1 is empty
    {FULL, it contains the string '#1'}% if #1 is not empty
}
\newcommand{\coqident}[2]{\texttt{\lstinline{#1.#2}}}
\tikzset{every arrow/.append style = -{Computer Modern Rightarrow[]}}
\tikzset{
  labelslt/.style n args={2}{%
    labelat={[left]{$\scriptstyle #1$}}{.45},%
    labelat={[right]{$\scriptstyle #2$}}{.55}%
    } %
  }
\tikzset{
  labelsltat/.style n args={4}{%
    labelat={[left]{$\scriptstyle #1$}}{#3},%
    labelat={[right]{$\scriptstyle #2$}}{#4}%
    } %
  }
\tikzset{twoof/.style={twocenter={#1}}}
\newcommand{\xto}[1]{\xrightarrow{#1}}
\newcommand{\xonto}[1]{\xarrow[onto]{#1}}
\newcommand{\DB}[1][S]{\mathrm{DB}_{#1}}
\DeclareMathOperator{\alg}{-\,\mathbf{alg}}
\DeclareMathOperator{\Mod}{-\,\mathbf{Mod}}
\DeclareMathOperator{\MAlg}{-\,\mathbf{MAlg}}
\DeclareMathOperator{\Mon}{-\,\mathbf{Mon}}
\DeclareMathOperator{\DBAlg}{-\,\mathbf{DBAlg}}
\DeclareMathOperator{\id}{id}
\DeclareMathOperator{\scc}{succ}
\DeclareMathOperator{\lam}{lam}
\DeclareMathOperator{\app}{app}
\newcommand{\Lan}{\operatorname{Lan}}
\DeclareTextMath\tmlambda[lambda]{\lambda}
\newcommand{\ST}{\mathrm{ST}}
\newcommand{\its}{\mathit{int}}
\newcommand{\fin}{\mathit{fin}}
\newcommand{\ar}{{\mathit{ar}}}
\newcommand{\alert}[1]{\textbf{#1}}
\NewDocumentCommand{\doGammaInv}{mm}{%
  \reflectbox{$\m@th#1\Gamma$}%
}
\newcommand{\ajustedroit}[2][1]{\adjustbox{max width=#1\columnwidth,max height=.95\textheight}{#2}}%
\newenvironment{full}{\textcolor{red}{Long version}}{\textcolor{red}{End of long version}}
\newenvironment{short}{\textcolor{red}{Short version}}{\textcolor{red}{End of short version}}
\newenvironment{enumerati}{\begin{enumerate}[(i)]}{\end{enumerate}}
\theoremstyle{thmC}
\newtheorem{corC}[thm]{Corollary}
\begin{document}

\title[Variable binding and substitution for (nameless) dummies]{
  Variable binding and substitution \\ for (nameless) dummies}
\titlecomment{{\lsuper*}Extended abstract (FoSSaCS 2022)}
% \thanks{thanks, optional.}	%optional

% affiliations are numbered automatically with a, b, c (see below)
% use the optional argument to indicate the affiliation(s) of each author
% omit the argument if there is only one author, or only one affiliation
\author[A.~Hirschowitz]{André Hirschowitz\lmcsorcid{0000-0003-2523-1481}}[a]
\author[T.~Hirschowitz]{Tom Hirschowitz\lmcsorcid{0000-0002-7220-4067}}[b]
\author[A.~Lafont]{Ambroise Lafont\lmcsorcid{0000-0002-9299-641X}}[c]
\author[M.~Maggesi]{Marco Maggesi\lmcsorcid{0000-0003-4380-7691}}[d]

% affiliation 1 (automatically numbered a)
\address{Univ. Côte d'Azur, CNRS, LJAD, 06103, Nice, France}	%optional
% write emails for all authors having that affiliation
% \email{name1@email1, name2@email1, name3@email1}  %optional

% affiliation 2 (automatically numbered b)
\address{Univ. Savoie Mont Blanc, CNRS, LAMA, 73000,
  Chambéry, France}	%optional
% \email{name2@email2}  %optional

\address{LIX, École polytechnique, Institut Polytechnique de Paris,
Palaiseau, France}	%optional
\address{Università degli Studi di Firenze, Italy}	%optional

%% etc.

%% required for running head on odd and even pages, use suitable
%% abbreviations in case of long titles and many authors:

%%%%%%%%%%%%%%%%%%%%%%%%%%%%%%%%%%%%%%%%%%%%%%%%%%%%%%%%%%%%%%%%%%%%%%%%%%%

%% the abstract has to PRECEDE the command \maketitle:
%% be sure not to issue the \maketitle command twice!

\begin{abstract}
  \noindent By abstracting over well-known properties of De Bruijn's
  representation with nameless dummies, we design a new theory of syntax
  with variable binding and capture-avoiding substitution.  We propose
  it as a simpler alternative to Fiore, Plotkin, and Turi's approach,
  with which we establish a strong formal link. We also show that our
  theory easily incorporates simple types and equations between terms.
\end{abstract}

\maketitle

\section{Introduction}
In this paper we propose a new initial-algebra
semantics~\cite{goguen1974initial} for syntax and substitution in the
presence of variable binding, which gives a new perspective on the
status of the well-known De Bruijn encoding~\cite{DB}.

Given a so-called binding signature~\cite{BindingSignatures} (which we
suppose untyped in this introduction), De Bruijn's encoding provides
an explicit definition of the desired syntax; it consists of a
(single) set of terms, equipped with a suitable operation of
``substitution''.  The salient feature of De Bruijn's encoding is that
variables are represented by natural numbers, which he termed
``nameless dummies'', hence the title of the present paper.  The idea
is that any occurrence of $0$ refers to the binder just above it (in
the abstract syntax tree), if any, while $1$ refers to the next one
up, and so on.  E.g., $λx.λy.(x\ y)$ is represented by $λ.λ.(1\ 0)$.
See~\cite{fiore:presheaf,ShulmanDB} for more recent analyses.  This
encoding is generally considered ``good for machine implementations,
but not […]  for machine-assisted human
reasoning''~\cite{PittsAM:newaas} (see also
\cite{DBLP:conf/tphol/AydemirBFFPSVWWZ05,DBLP:journals/entcs/BerghoferU07}).

Our initial-algebra semantics provides an alternative to the above
\emph{explicit} definition, by offering an \emph{implicit} one:
\begin{itemize}
\item We design a category of ``models'' of the considered signature.
\item We define the desired syntax (up to unique isomorphism) as the
  initial object in this category.
\end{itemize}
One may then reason about syntax independently of any chosen initial
object, since initiality provides a convenient induction principle.

Of course, we have to prove that such an initial object exists, and
the natural witness in this proof is precisely De Bruijn's encoding.
It thus acquires the new status of initiality witness, and hence may
be forgotten, to some extent.

We know of two initial-algebra semantics for syntax with substitution
in the presence of variable binding.  A mainstream one is by Fiore et
al.~\cite{fiore:presheaf,DBLP:conf/lics/Fiore08}, while the second
one, which also handles linear syntax, is due to Power~\cite{Power}.  Both
approaches consider terms indexed by the number of (potential) free
variables. By contrast, ours involves a single (infinite and implicit)
context. It is thus simpler, at least in the sense that it can
naturally be implemented in a proof assistant without dependent types.
We demonstrate this by implementing our framework in HOL Light.  We
also provide a Coq implementation for comparison.

Let us emphasise that our initial-algebra semantics optimises the
usual layering into (1) syntax, (2) variable renaming, and (3)
substitution.  Indeed, we show that the second layer is unnecessary,
and directly give the implicit definition of syntax with substitution
in (unindexed) sets.

A consequence is that our mechanisations offer a very different
\alert{trusted computing base}\footnote{I.e., the part of the
  development that needs to be read in order to check that the
  definitions and statements are correct.} from what one usually gets
with an explicit definition.
\begin{itemize}
\item With an explicit definition, the trusted computing base
  typically consists of
  \begin{itemize}
  \item the inductive type defining the syntax,
  \item the recursive definition of renaming, and
  \item the recursive definition of substitution.
\end{itemize}
\item By contrast, in our mechanisations, the trusted computing base
  consists of
  \begin{itemize}
  \item the definition of the category of models, and
  \item the initiality statement.
  \end{itemize}
\end{itemize}
As the authors have experienced, the pros and cons can be discussed
\emph{ad libitum}. We refrain from doing so in this paper.

\subsection{Overview}
Let us now present our contribution in a bit more detail, for
which we should start by recalling binding signatures.
\begin{defi}
  A \alert{binding arity} is a sequence of natural numbers.  A
  \alert{binding signature} is a set $O$ (of "operations"), together with
  a map $O → ℕ^*$, which associates a binding arity to each operation.
\end{defi}
The idea is that an operation of binding arity $b = (n₁,…,nₚ)$ has $p$
arguments, with the $i$th argument binding $nᵢ$ variables, for all
$i ∈ \{1,…,p\}$.
\begin{exa}\label{ex:lamapp}
  In pure $λ$-calculus, the binding arity for application is $(0,0)$:
  it has two arguments, binding no variables.  Abstraction, on the
  other hand, has one argument which binds one variable.  Its binding
  arity thus is the singleton sequence $(1)$.
\end{exa}

We should now answer the question: where do operations of a given
binding arity live, and what are they?
To the first question, we answer that they live in a De Bruijn monad,
whose definition we now sketch.
\begin{defi}
  A \alert{De Bruijn monad} is a set $X$, equipped with
  \begin{itemize}
  \item a \alert{variables} map $v∶ ℕ → X$, and
  \item a \alert{substitution} map $s∶ X × X^ℕ→ X$, which takes an
    element $x ∈ X$ and an assignment $σ∶ ℕ → X$, and returns an
    element $s(x,σ)$, which we denote by $x[σ]$ when $s$ is clear from
    context,
  \end{itemize}
  satisfying three simple axioms (see Definition~\ref{def:dbmonad} below).
\end{defi}
\begin{rem}
  The use of the word ``monad'' is justified by the fact that De
  Bruijn monads are in fact relative
  monads~\cite{DBLP:journals/corr/AltenkirchCU14}, see
  Corollary~\ref{cor:skew} below.
\end{rem}

To the second question, what is an operation of a given binding arity
in a De Bruijn monad $(X,v,s)$, we answer as follows.
\begin{defi}
  An \alert{operation of binding arity $b = (n₁,…,nₚ)$} is a map
  $o∶ Xᵖ → X$ satisfying the following \alert{binding condition}:
  for all $e₁,…,eₚ ∈ X$, and $σ∶ ℕ → X$,
  \begin{equation}
    o(e₁,…,eₚ)[σ] = o(e₁[⇑^{n₁} σ],…,eₚ[⇑^{nₚ} σ])\rlap{,}
    \label{eq:subst:o}
  \end{equation}
  where $⇑$ is a unary operation defined on $X^ℕ$ by
    \begin{align}
      (⇑σ)(0) & = v(0) \label{eq:subst:0} \\
      (⇑σ)(n+1) & = σ(n)[p↦v(p+1)]\rlap{.} \label{eq:subst:S}
    \end{align}
\end{defi}
To explain the idea behind this definition, let us consider the simplest non-trivial binding operation,
abstraction in the pure $λ$-calculus, which has arity $(1)$.
The main idea of De Bruijn indices is that, under an abstraction:
\begin{enumerate}[(a)]
\item \label{item:bv} the bound variable is $0$, and
\item \label{item:ov} any outer variable $k$ should be referred to as
  $1+k$.
\end{enumerate}
By the binding condition, for any operation $λ∶ X → X$ of binding
arity $(1)$, we have $λ(e)[σ] = λ(e[⇑ σ])$.  By definition, $⇑σ$
leaves the variable $0$ unchanged, which complies with~\ref{item:bv}
above.  Furthermore, by definition, any reference $1+k$ to some outer
variable is mapped by $⇑σ$ to $σ(k)[p↦v(p+1)]$. That is, the intended
element $σ(k)$, whose free variables are shifted by one to comply
with~\ref{item:ov}.

From here, we straightforwardly define models of a given binding
signature $S$ to be De Bruijn monads equipped with operations of the
specified binding arities.  We call such models \alert{De Bruijn
  $S$-algebras}, and organise them into a category $S\DBAlg$.

Finally, we prove that $S\DBAlg$ admits an initial object
(Theorem~\ref{thm:initiality:II}).  For this, we follow (the standard
modern variant of) De Bruijn's construction:
\begin{itemize}
\item We extract from $S$ a first-order signature $|S|$, by mapping
  binding arities $(n₁,…,nₚ)$ to their lengths $p$, and construct the
  free $|S|$-algebra $\DB$ over the set $ℕ$ of variables in the usual,
  first-order way.
\item We prove that $\DB$ admits a unique substitution map satisfying
  both the binding conditions and the De Bruijn monad axioms.  This is
  not entirely trivial, because we cannot directly
  take~\eqref{eq:subst:o}--\eqref{eq:subst:S} as a recursive
  definition. Indeed, the recursive call in~\eqref{eq:subst:S} would
  not be decreasing, at least in any standard proof assistant's sense!
  We thus resort to the usual, two-phase construction:
  \begin{itemize}
  \item We first define a renaming map $\DB × ℕ^ℕ → \DB$, by
    adapting~\eqref{eq:subst:o}--\eqref{eq:subst:S} to the renaming
    case.
  \item We then define the substitution map
    by~\eqref{eq:subst:o}--\eqref{eq:subst:S}, except that we replace
    the problematic recursive call in~\eqref{eq:subst:S} by
    $σ(n)[p ↦ p+1]$, which is a renaming, hence non recursive.
  \end{itemize}
  We finally prove that this uniquely equips $\DB$ with De Bruijn
  $S$-algebra structure, and that the obtaind De Bruijn $S$-algebra is
  initial.
\end{itemize}

Once this initial-algebra semantics is in place, we investigate the
link with the above-mentioned mainstream framework of Fiore, Plotkin,
and Turi.  We find that both categories of models may include
pathological objects, in the sense that we do not see any loss in
ruling them out. When we do so, we obtain equivalent categories
(Theorem~\ref{thm:gap}).

Next, we devote two sections to investigating the status of binding
signatures and the binding conditions. Indeed, binding signatures are
combinatorial objects, and the binding conditions may seem somewhat
arbitrary.  We provide two categorical interpretations of binding
signatures and binding conditions.
\begin{itemize}
\item We first recast binding signatures within Borthelle et al.'s
  framework~\cite{BHL}, which is a generalisation of
  Fiore's~\cite{DBLP:conf/lics/Fiore08}.  After recalling the notion
  of \alert{structurally strong} endofunctor (on $𝐒𝐞𝐭$), and the
  category $Σ\Mon$ of models of such an endofunctor $Σ$, we show that
  any binding signature $S$ gives rise to such an endofunctor $Σ_S$,
  and exhibit an isomorphism $S\DBAlg ≅ Σ_S\Mon$ of categories over
  $𝐃𝐁𝐌𝐧𝐝$.
\item We then recast our initial-algebra semantics within the
  module-based approach to syntax with variable binding and
  substitution~\cite{hirscho:lam,DBLP:journals/iandc/HirschowitzM10}.
  For this, we need to adapt the notion of parametric module over
  monads to De Bruijn monads, thus introducing \alert{parametric De
    Bruijn modules}.  We further define the category $M\MAlg$ (for
  ``modular algebras'') of models of any such parametric De Bruijn
  module $M$.  Finally, we show that any binding signature $S$ gives
  rise to a parametric De Bruijn module $M_S$, and exhibit an
  isomorphism $S\DBAlg ≅ M_S\MAlg$ of categories over $𝐃𝐁𝐌𝐧𝐝$.
\end{itemize}
Our next two contributions extend the initial-algebra semantics
in two different directions.

\begin{itemize}
\item We first propose a simply-typed generalisation, which is
  parameterised over a given set of types. We adopt a standard
  simply-typed variant of binding signatures~\cite{FioreHur}, and
  prove a corresponding initiality result
  (Theorem~\ref{thm:typedinitiality}).  The strength-based and
  module-based recastings that we just mentioned could be extended to
  this setting, but we refrain from doing so for simplicity.
\item Then, we consider equations. We introduce a notion of \alert{De
    Bruijn equational theory}, and prove a corresponding initiality
  result (Theorem~\ref{thm:eqn}), whose witness is a straightforward
  quotient of De Bruijn's encoding.
\end{itemize}
Finally, in~§\ref{s:mechanised}, we provide two mechanised versions of
our framework: the first one is in Coq, while the second one is in HOL
Light, a proof assistant which does not support dependent types, thus
illustrating the simplicity of our theory.

% induces a so-called \alert{structurally

% The rest of the paper is mainly devoted to discussing some expected
% variants and extensions of our framework, which already exist for the
% mainstream approach~\cite{fiore:presheaf}.
% \begin{itemize}
% \item In~§\ref{s:strength}, we show that our category of models for a
%   binding signature is an instance of the category of models defined
%   in~\cite{BHL} for a generalisation of the strength-based notion of
%   signature of~\cite{fiore:presheaf}.

% \item In~§\ref{s:module-based}, we provide an equivalent, module-based
%   definition of the category of models for such a strength-based
%   signature.

% \item In~§\ref{s:types} we show how our framework accomodates simple
%   types as in~\cite{FioreHur}.

% \item In~§\ref{s:equations} we similarly show how our framework
%   accomodates equations as in~\cite{FioreHur}.

% \item Finally in~§\ref{s:mechanised} we provide two mechanised versions of our
%   framework:
%   the first one is in Coq, while the second one is in HOL Light, a proof
%   assistant which does not support dependent types, thus
%   illustrating the simplicity of our theory.
% \end{itemize}

\subsection{Plan of the paper}
In~§\ref{s:dbmonads}, we introduce De Bruijn monads, De Bruijn
$S$-algebras, and the De Bruijn $S$-algebra $\DB$. We furthermore
prove (Theorem~\ref{thm:initiality:I}) that $\DB$ admits a unique
substitution map satisfying the binding conditions with the desired
behaviour on variables.
In~§\ref{s:elementary}, we organise De Bruijn monads as a category,
which we prove equivalent to categories of relative monads and of
monoids.  For any binding signature $S$, we then organise De Bruijn
$S$-algebras into a category $S\DBAlg$, wherein we prove that $\DB$ is
an initial object.
In~§\ref{s:presheaves}, we establish the announced link with
the presheaf-based approach.
In~§\ref{s:strength} and~\ref{s:module-based}, we introduce our
interpretations of binding signatures and binding conditions in terms
of structurally strong endofunctors and modules, respectively.
We enrich the framework with simple types in~§\ref{s:types}, and with
equations in~§\ref{s:equations}.
In~§\ref{s:mechanised}, we briefly describe our mechanisations in HOL
Light and Coq.  Finally, we conclude in~§\ref{s:conclu}.

% \subsection*{Summary of contributions}
% Here is a summary of contributions:
% \begin{itemize}
% \item We introduce De Bruijn monads and define operations of a given
%   binding arity on them, in a way which, just as De Bruijn encoding,
%   decouples bare syntax from substitution.
% \item For each standard binding signature $S$, we give an elementary
%   definition of its term De Bruijn $S$-algebra.
% \item We further give the corresponding initial-algebra semantics.
% \item We mechanise the corresponding proofs in both HOL Light and Coq.
% % \item We give the corresponding initial-algebra semantics for standard
% %   binding signatures, and mechanise the proof in both HOL Light and
% %   Coq.
% \item We relate our initial-algebra semantics to the mainstream
%   one of \cite{fiore:presheaf}.
% \item We recast our initial-algebra semantics in the strength-based
%   and module-based frameworks for variable binding.
% \item We extend the initial framework with simple types, equations,
%   and transitions.
% \end{itemize}

\subsection{Related work}
\paragraph{Abstract frameworks for variable binding}
We have already mentioned the tight link with the presheaf-based
approach~\cite{fiore:presheaf}.  This link could probably be extended
to variants such
as~\cite{hirscho:lam,DBLP:journals/iandc/HirschowitzM10,DBLP:conf/fscd/ArkorM21,FioreSzamozvancev}.

In recent work, Allais et al.~\cite{DBLP:journals/pacmpl/AllaisA0MM18}
introduce a universe of syntaxes, which essentially corresponds to a
simply-typed version of binding signatures. Their framework is
designed to facilitate the definition of so-called \alert{traversals},
i.e., functions defined by structural induction, ``traversing'' their
argument.  In a similar spirit, let us mention the recent work of
Gheri and Popescu~\cite{DBLP:journals/jar/GheriP20}, which presents a
theory of syntax with binding, mechanised in Isabelle/HOL.  Potential
links between these frameworks and our approach remain unclear to us
at the time of writing.

The categories of "intersectional" objects obtained
in~§\ref{s:presheaves} are technically very close to nominal
sets~\cite{PittsAM:newaas}: finite supports appear in the ``action-based''
presentation of nominal sets (and in our~§\ref{ss:wbdb}), while
pullback preservation appears in their sheaf-based presentation (and
in our~§\ref{ss:wbfpt}). And indeed, any intersectional presheaf yields
a nominal set, and so does any finitary De Bruijn monad.  However,
these links are not entirely satisfactory, because they do not account
for substitution. The reason is that the only categorical theory of
substitution that we know of for nominal sets, by Power~\cite{Power},
is operadic rather than monadic, so we do not immediately see how to
state a correspondence.

Finally, Pitts~\cite{10.1145/3571210} recently introduced semantics
for the locally nameless approach to syntax, where bound variables are
De Bruijn indices and free variables are chosen in a fixed infinite
set of atoms.  In some sense, his locally nameless sets are the
counterpart of our finitary De Bruijn monads, in the untyped case.
Beyond the difference between the locally nameless approach and the
crude De Bruijn encoding we focus on, while only single-variable
renamings are available in locally nameless sets, simultaneous
substitution is built-in in De Bruijn monads.  This enables us to
define a notion of model (for a binding signature) with explicit
compatibility conditions about substitution, resulting in a recursion
principle which is compatible with substitution.

\paragraph{Proof assistant libraries}
Allais et al.~\cite{DBLP:journals/pacmpl/AllaisA0MM18} and Gheri and
Popescu~\cite{DBLP:journals/jar/GheriP20} mechanise their approach in
Agda and Isabelle/HOL, respectively. In the same spirit, the
presheaf-based approach was recently
formalised~\cite{FioreSzamozvancev}.

De Bruijn representation benefits from well-developed proof assistant
libraries, in particular
Autosubst~\cite{DBLP:conf/itp/SchaferTS15,DBLP:conf/cpp/StarkSK19}.
Such libraries are somewhat complementary to our work.  Their main
goal is to automate part of the reasoning about substitution in the
proof assistant, while we provide an initial-algebra semantics.  In
particular, it could be useful to adapt the decision procedure of
Autosubst to our Coq library.

\subsection{General notation}
We denote by $A^* = ∑_{n ∈ ℕ} Aⁿ$ the set of finite sequences of
elements of $A$, for any set $A$.  In any category $𝐂$, we tend to
write $[C,D]$ for the hom-set $𝐂(C,D)$ between any two objects $C$ and
$D$.  Finally, for any endofunctor $F$, $F\alg$ denotes the usual
category of $F$-algebras and morphisms between them, and
$\mu F = \mu X. F(X)$ will be its least fixed point.  Finally, $𝐂𝐀𝐓$
denotes the large category of locally small categories.

\section{De Bruijn monads}\label{s:dbmonads}
In this section, we start by introducing De Bruijn monads in an
untyped setting.  Then, we define assignment lifting, the binding
conditions, and the models of a binding signature $S$ in De Bruijn
monads, De Bruijn $S$-algebras. Finally, we construct the term De
Bruijn $S$-algebra $\DB$.

\subsection{Definition of De Bruijn monads}
We start by fixing some terminology and notation, and then give the
definition.

\begin{defi}
  Given a set $X$, an \alert{$X$-assignment} is a map $ℕ → X$. We
  sometimes merely use ``assignment'' when $X$ is clear from context.
\end{defi}

\begin{nota}\label{not:subst} Consider any map $s∶ X × Y^ℕ → Z$.
  \begin{itemize}
  \item For all $x ∈ X$ and $g∶ ℕ → Y$, we write $x[g]ₛ$ for $s(x,g)$,
    or $x[g]_X$ when $s$ is clear from context, or even $x[g]$ when
    $s$ and $X$ are clear from context.
  \item Furthermore, $s$ gives rise to the map
    \[
      \begin{array}{rcl}
      X^ℕ × Y^ℕ & → & Z^ℕ \\
      (f,g) & ↦ & n ↦ s(f(n),g).
      \end{array}
    \]
    We use similar notation for this map, i.e., $f[g](n) ≔ f(n)[g]ₛ$.
\end{itemize}
\end{nota}

% \begin{defi}\label{def:substsubst}
%   Any map $s∶ X × Y^ℕ → Z$ induces a map $s⁺∶ X^ℕ × Y^ℕ → Z^ℕ$, defined by
%   $$f[g]_{s⁺}(n) ≔ f(n)[g]ₛ\rlap{,}$$
%   for all $f∶ ℕ → X$ and $g∶ ℕ → Y$.
% \end{defi}

\begin{defi}\label{def:dbmonad}
  A \alert{De Bruijn monad} is a set $X$, equipped with
  \begin{itemize}
  \item a \alert{substitution} map $s∶ X × X^ℕ→ X$, which
    takes an element $x ∈ X$ and an assignment $f∶ ℕ → X$, and returns
    an element $x[f]$, and

  \item a \alert{variables} map $v∶ ℕ → X$,
  \end{itemize}
  satisfying, for all $x ∈ X$, and $f,g∶ ℕ → X$:
  \begin{itemize}
  \item \alert{associativity}: $x[f][g] = x[f[g]]$,
  \item \alert{left unitality}: $v(n)[f] = f(n)$, and
  \item \alert{right unitality}: $x[v] = x$.
  \end{itemize}
\end{defi}

  \begin{exa}\label{ex:nat}
    The set $ℕ$ itself is clearly a De Bruijn monad, with variables
    given by the identity and substitution $ℕ × ℕ^ℕ → ℕ$ given by
    evaluation. This is in fact the initial De Bruijn monad, as should
    be clear from the development below.
  \end{exa}

  \begin{exa}\label{ex:lam}
    The set $Λ ≔ μ X. ℕ + X+ X^2$ of $λ$-terms forms a De Bruijn monad
    with well-known structure, which we now recall for
    completeness. Elements of $Λ$ are generated by the following
    grammar, where $n$ ranges over $ℕ$.
    $$e \Coloneqq n ｜ λ(e) ｜ e\ e$$
    The variables map $ℕ → Λ$ sends any $n$ to itself, i.e., the leaf
    labelled $n$.  For substitution, we want it to satisfy the
    following mutually recursive equations:
    \begin{align*}
      v(n)[σ] & = σ(n) \\
      (e₁\ e₂)[σ] & = e₁[σ]\ e₂[σ] \\
      λ(e)[σ] & = λ(e[⇑σ]) \\[.5em]
      (⇑σ)(0) &= v(0) \\
      (⇑σ)(n+1) &= σ(n)[v ∘ \scc]\rlap{,}
    \end{align*}
    where $\scc∶ ℕ → ℕ$ denotes the successor map.  However, the very
    last recursive call to substitution is not clearly decreasing in
    any way, so we cannot take this as a definition.  Instead, we 
    take it as a specification, and prove that there exist unique
    substitution and lifting maps satisfying the above equations.

    For this, we use a standard technique, based on the observation
    that the problematic recursive call ($σ(n)[v ∘ \scc]$) does not
    involve a general assignment but the mere renaming $v ∘ \scc$.  We
    replace this recursive call with $σ(n) \{ \scc \}$, where
    ${-}\{{-}\}∶ Λ × ℕ^ℕ → Λ$ denotes a \alert{renaming} map, easily
    defined by recursion as follows:
    \begin{align*}
      v(n)\{ f \} &= v(f(n)) \\
      (e₁\ e₂)\{f\} & = e₁\{f\}\ e₂\{f\} \\
      λ(e)\{f\} & = λ(e\{ {↑}f \})\rlap{,} \\[.5em]
      \llap{where \hspace*{1cm}} ({↑}f)(0) &= 0 \\
      ({↑}f)(n+1) &= f(n)+1.      
    \end{align*}

    (Because $f$ is a mere renaming, the definition of $↑$ is not
    recursive.)

    It is then straightforward to prove that the original equations
    are (uniquely) satisfied.

    In~Example~\ref{ex:univlambda}, as an application of
    Theorem~\ref{thm:initiality:II}, we will characterise the obtained
    De Bruijn monad by a universal property. In fact, the set
    $Λ ≔ μ X. ℕ + X+ X^2$ has infinitely many De Bruijn monad
    structures, as many as there are binding arities with underlying
    endofunctors $X ↦ X$ and $X ↦ X²$, in the sense defined below.
    But only one of these structures models $λ$-calculus substitution.
  \end{exa}

\subsection{Lifting assignments}
In preparation for introducing the binding conditions, given a De
Bruijn monad $M$, we now define an operation called \alert{lifting} on
its set of assignments $ℕ → M$.  It is convenient to stress that only
part of the structure of a De Bruijn monad is needed for this
definition.

\begin{defi} \label{def:prime}
  Consider any set $M$, equipped
  with maps $s∶ M × M^ℕ → M$ and $v∶ ℕ → M$.
  For any assignment
  $σ∶ ℕ → M$, we define the assignment $⇑σ∶ ℕ → M$ by
  \[
  \begin{array}[t]{rcl}
           (⇑σ)(0) & = & v(0) \\
        (⇑σ)(n+1) & = & σ(n)[↑]\rlap{,}
    \end{array}
  \]
%TODO \marco{ we should write $s(σ(n),↑)$ instead of $σ(n)[↑]$ since the notation with $[-]$ is for substitution maps}
    \noindent  where $↑∶ ℕ → X$ maps any $n$ to $v(n+1)$. %
  \end{defi}
    \begin{rem}
      Both $⇑$ and $↑$ depend on $M$ and (part of) $(s,v)$. Here, and
      in other similar situations below, we abuse notation and omit such
      dependencies for readability.
    \end{rem}
Of course we may iterate lifting:
  \begin{defi}\label{def:deriv} Let $⇑^{0}A = A$, and
    $⇑^{n+1}A =\ ⇑(⇑^{n}A)$.
 \end{defi}

\subsection{Binding arities and binding conditions}
Our treatment of binding arities reflects the separation between the
first-order part of the arity, namely its length, which concerns the
syntax, and the binding information, namely the binding numbers, which
concerns the compatibility with substitution.

\begin{defi}\hfill
  \begin{itemize}
  \item A \alert{first-order arity} is a natural number.
  \item A \alert{binding arity} is a sequence $(n₁, …, nₚ)$ of natural
  numbers, i.e., an element of $ℕ^*$.
\item The first-order arity $|a|$ associated with a binding arity
  $a = (n₁,…,nₚ)$ is its length $p$.
\end{itemize}
\end{defi}

Let us now axiomatise what we call an operation of a given binding arity.

\begin{defi}
  Let $a = (n₁,…,nₚ)$ be any binding arity, $M$ be any set,
  $s∶ M × M^ℕ → M$, and $v∶ ℕ → M$ be any maps.  An operation \alert{of binding
    arity $a$} is a map $o∶ Mᵖ → M$ satisfying the following
  \alert{$a$-binding condition} w.r.t.\ $(s,v)$:
  \begin{equation}
    ∀ σ∶ ℕ → M,
    x₁,…,xₚ ∈ M, \quad
    o(x₁,…,xₚ)[σ] = o(x₁[⇑^{n₁}σ],…,xₚ[⇑^{nₚ}σ]).
    \label{eq:binding}
\end{equation}
\end{defi}
\begin{rem}
  Let us emphasise the dependency of this definition on $v$ and $s$ --
  which is hidden in the notation for substitution and lifting.
\end{rem}

\subsection{Binding signatures and algebras}
In this section, we recall the standard notions of first-order (resp.\
binding) signatures, and adapt the definition of algebras to our De
Bruijn context.  Let us first briefly recall the former.
\begin{defi}
  A \alert{first-order signature} consists of a set $O$ of
  \alert{operations}, equipped with an \alert{arity} map $\ar∶ O → ℕ$.
\end{defi}
\begin{defi}
  For any first-order signature $S ≔ (O,\ar)$, an \alert{$S$-algebra}
  is a set $X$, together with, for each operation $o ∈ O$, a map
  $o_X∶ X^{\ar(o)} → X$.
\end{defi}

Let us now generalise this to binding signatures.
  \begin{defi}\hfill
    \begin{itemize}
    \item A \alert{binding signature}~\cite{BindingSignatures} consists of a set
      $O$ of \alert{operations}, equipped with an \alert{arity} map
      $\ar∶ O → ℕ^*$.
      Intuitively, the arity of an operation specifies the number of bound
      variables in each argument.
    \item The first-order signature $|S|$ associated with a binding
      signature $S ≔ (O,\ar)$ is $|S| ≔ (O,|\ar|)$, where
      $|\ar|∶ O → ℕ$ maps any $o ∈ O$ to the length $|\ar(o)|$ of
      $\ar(o)$.
  \end{itemize}
  \end{defi}
  \begin{exa}\label{ex:siglambda}
    As we saw in Example~\ref{ex:lamapp}, the binding signature for
    $λ$-calculus has two operations, abstraction and application, of
    respective arities $(1)$ and $(0,0)$. The associated first-order
    signature has two operations of respective arities $1$ and $2$.
  \end{exa}

  Let us now present the notion of De Bruijn $S$-algebra:
  \begin{defi}
    For any binding signature $S ≔ (O, \ar)$, a \alert{De Bruijn
      $S$-algebra} is a De Bruijn monad $(X,s,v)$ equipped with an
    operation of binding arity $\ar(o)$, for all $o ∈ O$.
  % \begin{itemize}
  % \item a De Bruijn monad $(X,s,v)$, together with
  % \item $|S|$-algebra structure on $X$, such that
  % \item each structural map $o_X∶ X^{|\ar(o)|} → X$ satisfies the
  %   $\ar(o)$-binding condition w.r.t.\ $(s,v)$.
  % \end{itemize}
\end{defi}

In order to state our characterisation of the term model, we associate
to any binding signature an endofunctor on sets, as follows.

% induced by a first-order signature.

\begin{defi}\label{def:SigmaS}
  The endofunctor $Σ_S$ associated to a binding signature $S=(O,\ar)$
  is defined by $Σ_S(X) = ∑_{o ∈ O} X^{|\ar(o)|}$.
\end{defi}
\begin{rem}
  The induced endofunctor merely depends on the underlying first-order
  signature.  
\end{rem}
\begin{defi}
  For any binding signature $S= (O,\ar)$ and $Σ_S$-algebra
  $a∶ Σ_S(X) → X$, we call the composite
  $X^{|\ar(o)|} \xto{inₒ} Σ_S(X) \xto{a} X$ the \alert{interpretation}
  of $o$ in $X$.
\end{defi}

% There is a notion of algebra for any endofunctor, which is such that
% $S$-algebras coincide with $Σ_S$-algebras (for first-order $S$): an
% $F$-algebra is a set $X$ equipped with a map $F(X) → X$, and an
% $F$-algebra morphism $X → Y$ is a map between underlying sets which
% makes the following square commute.
% \begin{center}
%   \diag{%
%     F(X) \& F(Y) \\
%     X \& Y %
%   }{%
%     (m-1-1) edge[labela={F(f)}] (m-1-2) %
%     edge[labell={}] (m-2-1) %
%     (m-2-1) edge[labelb={f}] (m-2-2) %
%     (m-1-2) edge[labelr={}] (m-2-2) %
%   }
% \end{center}
% QUESTION: liso precedent est pas une egalité???
% Tom: non, c'est pas exactement pareil de donner Σ_SX → X et
% pour tout o un o_X du bon type.

\begin{rem}
As is well known, for any binding signature, the initial $(ℕ + \Sigma_S)$-algebra is the desired syntax; it has as carrier
the least fixed point $μA.ℕ+Σ_{S}(A)$.
\end{rem}

The following theorem defines the term model of a binding signature.
\begin{thm}\label{thm:initiality:I}
  Consider any binding signature $S= (O,\ar)$, and let ${\DB}$ denote the
  initial $(ℕ+Σ_S)$-algebra, with structure maps $v∶ ℕ → {\DB}$ and
  $a∶   Σ_S({\DB}) → {\DB}$. Then,
  \begin{enumerati}
  \item \label{item:subst} There exists a unique map
    $s∶ {\DB} × {\DB}^ℕ → {\DB}$ such that
    \begin{itemize}
    \item  for all $n ∈ ℕ$ and $f∶ ℕ → {\DB}$, $s(v(n),f) = f(n)$, and
    \item the interpretation of each $o ∈ O$ in $\DB$ satisfies the
      $\ar(o)$-binding condition w.r.t.\ $(s,v)$.
    \end{itemize}
  \item This map turns $({\DB},v,s,a)$ into a De Bruijn $S$-algebra.
  \end{enumerati}
  %
  % possesses a unique substitution map making it into a De
  % Bruijn $S$-algebra, and furthermore this De Bruijn $S$-algebra is
  % initial in $S\DBAlg$.
\end{thm}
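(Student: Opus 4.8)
The plan is to obtain $s$ by an initial-algebra recursion and then verify the monad laws by further inductions. The key observation is that, although the binding conditions and left unitality do not immediately present $s$ as a morphism of $(ℕ+Σ_S)$-algebras in the argument $x$, we can curry: a map $s∶ \DB × \DB^ℕ → \DB$ is the same as a map $\hat{s}∶ \DB → [\DB^ℕ,\DB]$, and we will endow the set $[\DB^ℕ,\DB]$ with an $(ℕ+Σ_S)$-algebra structure for which $\hat{s}$ is forced to be the unique morphism. For the $ℕ$-part, send $n ∈ ℕ$ to the assignment-application map $f ↦ f(n)$; this encodes left unitality $s(v(n),f)=f(n)$. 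For the $Σ_S$-part, given $o ∈ O$ of arity $(n_1,…,n_p)$ and elements $φ_1,…,φ_p ∈ [\DB^ℕ,\DB]$, send them to the map $f ↦ o_{\DB}(φ_1(⇑^{n_1}f),…,φ_p(⇑^{n_p}f))$; this encodes the $\ar(o)$-binding condition. Here we use only the data $(a,v)$ already present on $\DB$ (lifting $⇑$ is defined from $v$ and the not-yet-constructed $s$, but crucially $⇑$ applied to an assignment into $\DB$ only needs $v$ and the operation $n ↦ v(n+1)$, i.e.\ it is part of the De Bruijn structure one level down and is available once we know how to apply assignments — which the codomain $[\DB^ℕ,\DB]$ records). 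Initiality then yields a unique algebra morphism $\hat{s}∶ \DB → [\DB^ℕ,\DB]$, equivalently the desired unique $s$ satisfying the two bulleted conditions; this proves part~\ref{item:subst}.

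There is a subtlety to address before the recursion goes through: the definition of $⇑σ$ for $σ∶ ℕ → \DB$ refers to $σ(n)[↑]$, i.e.\ to $s$ itself. I would resolve this by a preliminary observation that on \emph{variable assignments} — more precisely, the lift $⇑σ$ only ever feeds its recursion the operation $↑∶ n ↦ v(n+1)$, which is a fixed map independent of $s$ — so $⇑$ is well-defined on $\DB^ℕ$ using $v$ alone, as already noted implicitly in Definition~\ref{def:prime} (the clause $(⇑σ)(n+1) = σ(n)[↑]$ with $↑$ the variable-shift). Hence the algebra structure on $[\DB^ℕ,\DB]$ above is genuinely well-defined independently of $s$, and no circularity arises.

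It remains to prove the three substitution laws (part (ii)): left unitality is immediate from the first bullet of~\ref{item:subst}. For \textbf{right unitality} $x[v]=x$, I argue by induction on $x ∈ \DB$ using initiality: the predicate ``$x[v]=x$'' holds for variables ($v(n)[v]=v(n)$ by left unitality) and is closed under each operation $o_{\DB}$, since $o_{\DB}(x_1,…,x_p)[v] = o_{\DB}(x_1[⇑^{n_1}v],…,x_p[⇑^{n_p}v])$ by the binding condition, and one checks $⇑^{k}v = v$ for all $k$ (an easy sub-induction on $k$, unfolding Definition~\ref{def:prime}: $⇑v$ sends $0$ to $v(0)$ and $n+1$ to $v(n)[↑]=↑(n)=v(n+1)$, so $⇑v=v$), whence the expression equals $o_{\DB}(x_1,…,x_p)$ by the induction hypotheses. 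For \textbf{associativity} $x[f][g]=x[f[g]]$, I again induct on $x$ with $f,g$ universally quantified. The variable case is left unitality on both sides: $v(n)[f][g] = f(n)[g] = f[g](n) = v(n)[f[g]]$. For the operation case, both sides reduce via the binding condition to an $o_{\DB}(\dots)$ of the arguments, and matching them up requires the compatibility lemma $(⇑^{k}f)[⇑^{k}g] = ⇑^{k}(f[g])$ for all $k$; this is proved by induction on $k$, the step $k=0$ being trivial and the inductive step unfolding $⇑$ and using the associativity induction hypothesis at the immediate subterms together with the already-established facts about $↑$ and left unitality — indeed $(⇑f)[⇑g](0) = v(0)[⇑g] = v(0) = (⇑(f[g]))(0)$, and for $n+1$, $(⇑f)[⇑g](n+1) = (f(n)[↑])[⇑g]$, which by the outer associativity IH equals $f(n)[↑[⇑g]]$, and $↑[⇑g](m) = (⇑g)(m+1) = g(m)[↑] = ↑[g](m)[\text{wait, }{=}\,g(m)[↑]]$, giving $f(n)[g][↑] = f[g](n)[↑] = (⇑(f[g]))(n+1)$. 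The one genuine obstacle is this interchange lemma $(⇑^{k}f)[⇑^{k}g] = ⇑^{k}(f[g])$: it is the heart of why De Bruijn lifting commutes with substitution, and its proof must be threaded carefully with the associativity induction so that the induction hypotheses invoked are only ever at \emph{proper} subterms (or at $k-1$), which is what keeps the whole argument well-founded. Everything else is routine unfolding of Definitions~\ref{def:prime} and~\ref{def:deriv} together with the three defining equations.
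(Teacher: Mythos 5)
Your overall strategy --- curry $s$ into $\hat{s}∶ {\DB} → [{\DB}^ℕ,{\DB}]$, endow $[{\DB}^ℕ,{\DB}]$ with an $(ℕ+Σ_S)$-algebra structure encoding left unitality and the binding conditions, invoke initiality, then prove the remaining monad laws by induction --- is the right shape, but there is a genuine gap exactly at the point you flag and then dismiss. The claim that ``$⇑$ is well-defined on ${\DB}^ℕ$ using $v$ alone'' is false: by Definition~\ref{def:prime}, $(⇑σ)(n+1) = σ(n)[↑] = s(σ(n),↑)$, which applies the substitution map $s$ to the \emph{arbitrary} term $σ(n)$. Only the assignment $↑ = n ↦ v(n+1)$ is independent of $s$; the act of substituting it into $σ(n)$ is not. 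Consequently the $Σ_S$-part of your proposed algebra structure on $[{\DB}^ℕ,{\DB}]$, namely $f ↦ o_{\DB}(φ₁(⇑^{n₁}f),…,φₚ(⇑^{nₚ}f))$, is not defined before $s$ is, and the recursion is circular. The standard repair is a two-stage construction: first define \emph{renaming} ${\DB}×ℕ^ℕ→{\DB}$ by the same curried recursion --- this one is well-founded because the lifting of a renaming $ρ∶ℕ→ℕ$ is the purely arithmetic $0↦0$, $n+1↦ρ(n)+1$ --- then define substitution with the lifting clause $(⇑σ)(n+1)=\mathrm{ren}(σ(n),(+1))$, and finally prove $\mathrm{ren}(x,ρ)=x[v∘ρ]$ so that the binding condition holds in the stated form. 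Uniqueness needs the same stratification: to compare two candidates $s,s'$ one first shows they agree on renaming assignments $v∘ρ$ (where $⇑(v∘ρ)=v∘{⇑ρ}$ follows from left unitality alone, independently of which candidate is used), and only then on general assignments.

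The same root cause breaks your associativity induction. In your computation of $(⇑f)[⇑g](n+1)$ you rewrite $(f(n)[↑])[⇑g]$ to $f(n)[↑[⇑g]]$ ``by the outer associativity IH'' --- but $f(n)$ is an arbitrary value of the assignment $f$, not a proper subterm of the term $x$ on which the main induction runs, so the induction hypothesis does not apply and the argument as written is not well-founded, contrary to your closing assurance. The usual fix is again stratification: prove the mixed renaming/substitution interchange laws $x[v∘ρ][g]=x[g∘ρ]$ and $x[f][v∘ρ]=x[f[v∘ρ]]$ first, each by an honest structural induction on $x$ (the assignments involved being renamings, no illegitimate IH is needed); deduce $(⇑f)[⇑g]=⇑(f[g])$ for general $f,g$ from these as a standalone lemma; and only then run the induction for full associativity. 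Your right-unitality argument, by contrast, is fine as it stands, since $⇑v=v$ uses only left unitality and the IH is applied only at subterms. Note finally that the paper gives no informal proof of this theorem at all --- it delegates it entirely to the HOL Light and Coq mechanisations --- so the burden of getting this well-foundedness right is precisely what the formalisations carry.
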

\begin{proof}
  We have proved the result in both HOL Light~\cite{DBHOL} and
  Coq~\cite{DBCoq}, see~§\ref{s:mechanised}.
  % where the first item is called
  % \coqidentwoprefix{syntaxdb}{ZModel_unique_subst}, the second
  % \coqidentwoprefix{syntaxdb}{Z_model_laws}, and the last item
  % is split into existence (\coqidentwoprefix{syntaxdb}{ini_mor_is_model_mor}) and
  % unicity (\coqidentwoprefix{syntaxdb}{initial_morphism_unique}), all proven
  % in the file \texttt{\lstinline{syntaxdb.v}}.
\end{proof}
\begin{rem}
  Point~\ref{item:subst} may be viewed as an abstract form of
  recursive definition for substitution in the term
  model.
  The theorem thus allows us to construct the term model of a signature in
  two steps: first the underlying set, constructed as the inductive
  datatype $μZ.ℕ+Σ_S(Z)$, and then substitution, defined by the
  binding conditions viewed as recursive equations.
\end{rem}

\section{Initial-algebra semantics of binding signatures in De Bruijn
monads}\label{s:elementary}
In this section, for any binding signature $S$, we organise De Bruijn
$S$-algebras into a category, $S\DBAlg$, and prove that the term De
Bruijn $S$-algebra $\DB$ is initial therein.

\subsection{A category of De Bruijn monads}
Let us start by organising general De Bruijn monads into a category:
  \begin{defi}
    A morphism $(X,s,v) → (Y,t,w)$ between De Bruijn monads is a set-map
    $f∶ X → Y$ commuting with substitution and variables, in the sense that
    for all $x ∈ X$ and $σ∶ ℕ → X$ we have
    $f(x[σ]) = f(x)[f∘σ]$ and $f∘v = w$.
  \end{defi}
  \begin{rem}
    More explicitly, the first axiom says: $f(s(x,σ)) = t(f(x),f∘σ)$.
  \end{rem}

  \begin{nota}
    De Bruijn monads and morphisms between them form a category, which
    we denote by $𝐃𝐁𝐌𝐧𝐝$.
  \end{nota}

    \subsection{De Bruijn monads as relative monads and as monoids}\label{ss:relativemonads}
    In this subsection, we briefly mention an alternative presentation
    of De Bruijn monads for the categorically-minded reader, in terms
    of \alert{relative} monads.  Namely, we show that they are monads
    relative to the functor $1 → 𝐒𝐞𝐭$ picking $ℕ$.  Then, following
    Altenkirch et al.~\cite{DBLP:journals/corr/AltenkirchCU14}, we
    explain a companion presentation in terms of monoids in $𝐒𝐞𝐭$, for
    a suitable \alert{skew monoidal}
    structure~\cite{DBLP:journals/corr/AltenkirchCU14,Szlachanyi}.

    \begin{rem}
      Altenkirch et al.\ have similarly shown that Fiore, Plotkin, and
      Turi's approach may be understood in terms of monads relative to
      the canonical embedding from finite sets into sets (and hence
      also in terms of monoids in a corresponding monoidal category).
    \end{rem}
  
    Let us first briefly recall relative monads, which were introduced
    by Altenkirch et al.~\cite{DBLP:journals/corr/AltenkirchCU14}.
    \begin{defi}
      For any set $𝐄$, category $𝐂$, and map $J∶ 𝐄 → 𝐨𝐛(𝐂)$, a \alert{$J$-relative monad}, or
      \alert{monad relative to $J$}, consists of
      \begin{itemize}
      \item an object mapping $T∶ 𝐄 → 𝐨𝐛(𝐂)$, together with
      \item \alert{unit} morphisms $η_X∶ J(X) → T(X)$, for all $X ∈ 𝐄$, and
      \item for each morphism $f∶ J(X) → T(Y)$, an \alert{extension}
        $f^†∶ T(X) → T(Y)$,
      \end{itemize}
      such that the following diagrams commute for all $X,Y,Z ∈ 𝐄$,
      $f∶ J(X) → T(Y)$, and $g∶ J(Y) → T(Z)$.
      \begin{mathpar}
% YADE DIAGRAM {"graph":{"edges":[{"from":0,"id":3,"label":{"isPullshout":false,"label":"\\eta_X","style":{"alignment":"left","bend":0,"dashed":false,"double":false,"head":"default","position":0.5,"tail":"none"},"zindex":0},"to":1},{"from":1,"id":4,"label":{"isPullshout":false,"label":"f^\\dagger","style":{"alignment":"left","bend":0,"dashed":false,"double":false,"head":"default","position":0.5,"tail":"none"},"zindex":0},"to":2},{"from":0,"id":5,"label":{"isPullshout":false,"label":"f","style":{"alignment":"right","bend":0,"dashed":false,"double":false,"head":"default","position":0.5,"tail":"none"},"zindex":0},"to":2}],"latexPreamble":"","nodes":[{"id":0,"label":{"isMath":true,"label":"J(X)","pos":[126,75.125]}},{"id":1,"label":{"isMath":true,"label":"T(X)","pos":[226,75.125]}},{"id":2,"label":{"isMath":true,"label":"T(Y)","pos":[176,125.125]}}],"sizeGrid":50},"version":8}
% GENERATED LATEX
\begin{tikzpicture}[every node/.style={inner sep=5pt,outer sep=0pt,anchor=base,text height=1.2ex, text depth=0.25ex}] 
\node (0) at (6em, -3.5773809523809526em) {$J(X)$} ; 
\node (1) at (10.761904761904763em, -3.5773809523809526em) {$T(X)$} ; 
\node (2) at (8.380952380952381em, -5.958333333333333em) {$T(Y)$} ; 
\path 
(0) to[->, ] node[coordinate](3){} (1) 
(1) to[->, ] node[coordinate](4){} (2) 
(0) to[->, ] node[coordinate](5){} (2) 
; 
\path[->] 
(0) edge["$\scriptstyle \eta_X$", pos=0.5, ->, ] (1) 
(1) edge["$\scriptstyle f^\dagger$", pos=0.5, ->, ] (2) 
(0) edge["$\scriptstyle f$"', pos=0.5, ->, ] (2) 
; 
\end{tikzpicture}

% END OF GENERATED LATEX
\and
% YADE DIAGRAM {"graph":{"edges":[{"from":0,"id":2,"label":{"isPullshout":false,"label":"\\eta_X^\\dagger","style":{"alignment":"left","bend":-0.30000000000000004,"dashed":false,"double":false,"head":"default","position":0.5,"tail":"none"},"zindex":0},"to":1},{"from":0,"id":3,"label":{"isPullshout":false,"label":"","style":{"alignment":"left","bend":0.2,"dashed":false,"double":true,"head":"none","position":0.5,"tail":"none"},"zindex":0},"to":1}],"latexPreamble":"","nodes":[{"id":0,"label":{"isMath":true,"label":"T(X)","pos":[125,75]}},{"id":1,"label":{"isMath":true,"label":"T(X)","pos":[275,75]}}],"sizeGrid":50},"version":8}
% GENERATED LATEX
\begin{tikzpicture}[every node/.style={inner sep=5pt,outer sep=0pt,anchor=base,text height=1.2ex, text depth=0.25ex}] 
\node (0) at (5.9523809523809526em, -3.5714285714285716em) {$T(X)$} ; 
\node (1) at (13.095238095238095em, -3.5714285714285716em) {$T(X)$} ; 
\path 
(0) to[->, bend right={-17.1887338539247}, ] node[coordinate](2){} (1) 
(0) to[cell=0, -,bend right={11.459155902616464}, ] node[coordinate](3){} (1) 
; 
\path[->] 
(0) edge["$\scriptstyle \eta_X^\dagger$", pos=0.5, bend right={-17.1887338539247}, ] (1) 
(0) edge["$\scriptstyle $", pos=0.5, cell=0, -,bend right={11.459155902616464}, ] (1) 
; 
\end{tikzpicture}

% END OF GENERATED LATEX

% YADE DIAGRAM {"graph":{"edges":[{"from":0,"id":3,"label":{"isPullshout":false,"label":"f^\\dagger","style":{"alignment":"left","bend":0,"dashed":false,"double":false,"head":"default","position":0.5,"tail":"none"},"zindex":0},"to":1},{"from":1,"id":4,"label":{"isPullshout":false,"label":"g^\\dagger","style":{"alignment":"left","bend":0,"dashed":false,"double":false,"head":"default","position":0.5,"tail":"none"},"zindex":0},"to":2},{"from":0,"id":5,"label":{"isPullshout":false,"label":"(g^\\dagger \\circ f)^\\dagger","style":{"alignment":"right","bend":0,"dashed":false,"double":false,"head":"default","position":0.5,"tail":"none"},"zindex":0},"to":2}],"latexPreamble":"","nodes":[{"id":0,"label":{"isMath":true,"label":"T(X)","pos":[116,67.125]}},{"id":1,"label":{"isMath":true,"label":"T(Y)","pos":[216,67.125]}},{"id":2,"label":{"isMath":true,"label":"T(Z)","pos":[166,117.125]}}],"sizeGrid":50},"version":8}
% GENERATED LATEX
\begin{tikzpicture}[every node/.style={inner sep=5pt,outer sep=0pt,anchor=base,text height=1.2ex, text depth=0.25ex}] 
\node (0) at (5.523809523809524em, -3.1964285714285716em) {$T(X)$} ; 
\node (1) at (10.285714285714286em, -3.1964285714285716em) {$T(Y)$} ; 
\node (2) at (7.904761904761905em, -5.5773809523809526em) {$T(Z)$} ; 
\path 
(0) to[->, ] node[coordinate](3){} (1) 
(1) to[->, ] node[coordinate](4){} (2) 
(0) to[->, ] node[coordinate](5){} (2) 
; 
\path[->] 
(0) edge["$\scriptstyle f^\dagger$", pos=0.5, ->, ] (1) 
(1) edge["$\scriptstyle g^\dagger$", pos=0.5, ->, ] (2) 
(0) edge["$\scriptstyle (g^\dagger \circ f)^\dagger$"', pos=0.5, ->, ] (2) 
; 
\end{tikzpicture}

% END OF GENERATED LATEX
\end{mathpar}

A morphism $(T,η,(-)^†) → (T',η',(-)^{†'})$ of $J$-relative monads
consists of morphisms $α_X∶ T(X) → T'(X)$, for all $X ∈ 𝐄$,
making
the following diagrams commute
\begin{center}
% YADE DIAGRAM {"graph":{"edges":[{"from":0,"id":3,"label":{"isPullshout":false,"label":"\\eta_X","style":{"alignment":"right","bend":0,"dashed":false,"double":false,"head":"default","position":0.5,"tail":"none"},"zindex":0},"to":1},{"from":1,"id":4,"label":{"isPullshout":false,"label":"\\alpha_X","style":{"alignment":"right","bend":0,"dashed":false,"double":false,"head":"default","position":0.5,"tail":"none"},"zindex":0},"to":2},{"from":0,"id":5,"label":{"isPullshout":false,"label":"\\eta'_X","style":{"alignment":"left","bend":0,"dashed":false,"double":false,"head":"default","position":0.5,"tail":"none"},"zindex":0},"to":2}],"latexPreamble":"","nodes":[{"id":0,"label":{"isMath":true,"label":"J(X)","pos":[275,25]}},{"id":1,"label":{"isMath":true,"label":"T(X)","pos":[225,75]}},{"id":2,"label":{"isMath":true,"label":"T'(X)","pos":[325,75]}}],"sizeGrid":50},"version":8}
% GENERATED LATEX
\begin{tikzpicture}[every node/.style={inner sep=5pt,outer sep=0pt,anchor=base,text height=1.2ex, text depth=0.25ex}] 
\node (0) at (13.095238095238095em, -1.1904761904761905em) {$J(X)$} ; 
\node (1) at (10.714285714285714em, -3.5714285714285716em) {$T(X)$} ; 
\node (2) at (15.476190476190476em, -3.5714285714285716em) {$T'(X)$} ; 
\path 
(0) to[->, ] node[coordinate](3){} (1) 
(1) to[->, ] node[coordinate](4){} (2) 
(0) to[->, ] node[coordinate](5){} (2) 
; 
\path[->] 
(0) edge["$\scriptstyle \eta_X$"', pos=0.5, ->, ] (1) 
(1) edge["$\scriptstyle \alpha_X$"', pos=0.5, ->, ] (2) 
(0) edge["$\scriptstyle \eta'_X$", pos=0.5, ->, ] (2) 
; 
\end{tikzpicture}
% END OF GENERATED LATEX
\hfil
% YADE DIAGRAM {"graph":{"edges":[{"from":0,"id":4,"label":{"isPullshout":false,"label":"f^\\dagger","style":{"alignment":"left","bend":0,"dashed":false,"double":false,"head":"default","position":0.5,"tail":"none"},"zindex":0},"to":1},{"from":1,"id":5,"label":{"isPullshout":false,"label":"\\alpha_Y","style":{"alignment":"left","bend":0,"dashed":false,"double":false,"head":"default","position":0.5,"tail":"none"},"zindex":0},"to":2},{"from":0,"id":6,"label":{"isPullshout":false,"label":"\\alpha_X","style":{"alignment":"right","bend":0,"dashed":false,"double":false,"head":"default","position":0.5,"tail":"none"},"zindex":0},"to":3},{"from":3,"id":7,"label":{"isPullshout":false,"label":"(\\alpha_Y \\circ f)^{\\dagger'}","style":{"alignment":"right","bend":0,"dashed":false,"double":false,"head":"default","position":0.5,"tail":"none"},"zindex":0},"to":2}],"latexPreamble":"","nodes":[{"id":0,"label":{"isMath":true,"label":"T(X)","pos":[125,75]}},{"id":1,"label":{"isMath":true,"label":"T(Y)","pos":[275,75]}},{"id":2,"label":{"isMath":true,"label":"T'(Y)","pos":[275,125]}},{"id":3,"label":{"isMath":true,"label":"T'(X)","pos":[125,125]}}],"sizeGrid":50},"version":8}
% GENERATED LATEX
\begin{tikzpicture}[every node/.style={inner sep=5pt,outer sep=0pt,anchor=base,text height=1.2ex, text depth=0.25ex}] 
\node (0) at (5.9523809523809526em, -3.5714285714285716em) {$T(X)$} ; 
\node (1) at (13.095238095238095em, -3.5714285714285716em) {$T(Y)$} ; 
\node (2) at (13.095238095238095em, -5.9523809523809526em) {$T'(Y)$} ; 
\node (3) at (5.9523809523809526em, -5.9523809523809526em) {$T'(X)$} ; 
\path 
(0) to[->, ] node[coordinate](4){} (1) 
(1) to[->, ] node[coordinate](5){} (2) 
(0) to[->, ] node[coordinate](6){} (3) 
(3) to[->, ] node[coordinate](7){} (2) 
; 
\path[->] 
(0) edge["$\scriptstyle f^\dagger$", pos=0.5, ->, ] (1) 
(1) edge["$\scriptstyle \alpha_Y$", pos=0.5, ->, ] (2) 
(0) edge["$\scriptstyle \alpha_X$"', pos=0.5, ->, ] (3) 
(3) edge["$\scriptstyle (\alpha_Y \circ f)^{\dagger'}$"', pos=0.5, ->, ] (2) 
; 
\end{tikzpicture}
% END OF GENERATED LATEX
\end{center}
 for all $X,Y ∈ 𝐄$, and $f∶ J(X) → T(X)$.
 Monads relative to $J$ and morphisms between them form a category.
\end{defi}
    \begin{rem}
      This definition is slightly different from, but equivalent to
      the original.
    \end{rem}
    
\begin{prop}
  The category $𝐃𝐁𝐌𝐧𝐝$ is canonically isomorphic to the category of
  monads relative to the map $1 → 𝐒𝐞𝐭$ picking $ℕ$.
\end{prop}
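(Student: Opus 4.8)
The plan is to unfold the definition of a monad relative to the functor $J\colon \mathbf{1}\to\mathbf{Set}$ picking the object $ℕ$, and observe that the data and axioms match those of a De Bruijn monad on the nose. Recall that a relative monad on $J$ consists of an object-assignment, here just a single set $X = T(ℕ)$ (since the source category $\mathbf 1$ is terminal), together with a unit $\eta\colon Jℕ\to Tℕ$, i.e.\ a map $v\colon ℕ\to X$, and a Kleisli extension operator sending each morphism $Jℕ\to Tℕ$, i.e.\ each assignment $f\colon ℕ\to X$, to a morphism $Tℕ\to Tℕ$, i.e.\ a map $X\to X$; currying, this extension operator is exactly a substitution map $s\colon X\times X^ℕ\to X$, writing $x[f] = s(x,f)$. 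So the underlying data of a $J$-relative monad and of a De Bruijn monad coincide.

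Next I would match the three relative-monad axioms with the three De Bruijn monad axioms of Definition~\ref{def:dbmonad}. The left-unit law $f^\sharp\circ\eta = f$ says $v(n)[f] = f(n)$, i.e.\ left unitality. The right-unit law $\eta^\sharp = \id$ says $x[v] = x$, i.e.\ right unitality. The associativity law $(g^\sharp\circ f)^\sharp = g^\sharp\circ f^\sharp$ says, for all $x$, that $x[n\mapsto f(n)[g]] = x[f][g]$, i.e.\ $x[f[g]] = x[f][g]$ in the notation of Notation~\ref{not:subst}, which is associativity. Hence a $J$-relative monad is precisely a De Bruijn monad, and this correspondence is evidently bijective on objects.

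Finally I would check morphisms. A morphism of $J$-relative monads is a family of maps, here a single map $h\colon Tℕ\to T'ℕ$, i.e.\ $h\colon X\to Y$, commuting with the units ($h\circ v = w$) and with extension ($h\circ f^\sharp = (h\circ f)^\sharp\circ h$); the latter, uncurried, is exactly $h(x[f]) = h(x)[h\circ f]$. These are precisely the two conditions in the definition of a morphism of De Bruijn monads. Identities and composition plainly agree, so we have an isomorphism of categories, not merely an equivalence. I expect no genuine obstacle here: the proof is entirely a matter of transporting definitions across the currying $[\mathbf{Set}(ℕ,X),\mathbf{Set}(X,X)]\cong \mathbf{Set}(X\times X^ℕ, X)$ and checking the axioms line up. The only mild subtlety worth stating carefully is that, because the domain category $\mathbf 1$ has a single object, ``object-assignment'' degenerates to a single set and ``naturality in the source'' conditions are vacuous, so nothing is lost.
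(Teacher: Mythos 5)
Your proposal is correct and matches the paper's proof, which is simply the remark ``by mere definition unfolding'': you have carried out exactly that unfolding, matching the unit, Kleisli extension (via currying), the three relative-monad laws, and the morphism conditions to the corresponding De Bruijn data. Nothing further is needed.
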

\begin{rem}
  Canonicity here means that the isomorphism lies over the canonical
  isomorphism $[1,𝐒𝐞𝐭] ≅ 𝐒𝐞𝐭$.
\end{rem}
\begin{proof}
  By mere definition unfolding:
  \begin{itemize}
  \item An object mapping $T∶ 1 → 𝐒𝐞𝐭$ amounts to a choice of object $X$ in $𝐒𝐞𝐭$.
  \item A unit $η∶ ℕ → X$ amounts to a choice of variables map.
  \item The assignment of an extension $f^†∶ X → X$ to each $f∶ ℕ → X$
    amounts to a map $X^ℕ → X^X$, which is equivalent by uncurrying to
    a choice of substitution map $X × X^ℕ → X$. Notationally, $f^†(x)$
    thus corresponds to $x[f]$.
  \end{itemize}
  We then check that the axioms match:
  \begin{itemize}
  \item Right unitality $x[η] = x$ corresponds to $η^†(x) = x$, i.e., $η^† = \id_X$.
  \item Left unitality $η(n)[f] = f(n)$ corresponds to $f^†(η(n)) = f(n)$, i.e.,
    $f^† ∘ η = f$.
  \item For associativity $x[f][g] = x[f[g]]$, by definition
    $f[g](n) = f(n)[g]$, so $f[g]$ corresponds to $n ↦ g^†(f(n))$,
    i.e., $g^† ∘ f$, and the axiom becomes $g^† ∘ f^† = (g^† ∘ f)^†$,
    as desired. \qedhere
  \end{itemize}
\end{proof}

Following Altenkirch et al., let us now give a further alternative
characterisation of the category $𝐃𝐁𝐌𝐧𝐝$ of De Bruijn monads in terms
of skew monoidal categories, which we now recall.
\begin{defi}[{Szlachányi~\cite{Szlachanyi}}]
  A \alert{skew monoidal category} is a category $𝐂$ equipped with a \alert{tensor product} functor
  ${⊗}∶ 𝐂² → 𝐂$, written in infix notation, and a \alert{unit} object $I ∈ 𝐂$, together with
  \begin{itemize}
  \item an \alert{associator} natural transformation
    $α_{X,Y,Z}∶ (X ⊗ Y) ⊗ Z → X ⊗ (Y ⊗ Z)$,
  \item a \alert{right unitor} natural transformation
    $ρ_X∶ X → X ⊗ I$, and
  \item a \alert{left unitor} natural transformation
    $λ_X∶ I ⊗ X → X$,
  \end{itemize}
  satisfying the following coherence conditions.
  \begin{mathpar}
% YADE DIAGRAM {"graph":{"edges":[{"from":0,"id":3,"label":{"isPullshout":false,"label":"\\rho_I","style":{"alignment":"left","bend":0,"dashed":false,"double":false,"head":"default","position":0.5,"tail":"none"},"zindex":0},"to":1},{"from":1,"id":4,"label":{"isPullshout":false,"label":"\\lambda_I","style":{"alignment":"left","bend":0,"dashed":false,"double":false,"head":"default","position":0.5,"tail":"none"},"zindex":0},"to":2},{"from":0,"id":5,"label":{"isPullshout":false,"label":"","style":{"alignment":"left","bend":0,"dashed":false,"double":true,"head":"none","position":0.5,"tail":"none"},"zindex":0},"to":2}],"latexPreamble":"","nodes":[{"id":0,"label":{"isMath":true,"label":"I","pos":[125,75]}},{"id":1,"label":{"isMath":true,"label":"I \\otimes I","pos":[225,25]}},{"id":2,"label":{"isMath":true,"label":"I","pos":[325,75]}}],"sizeGrid":40},"version":8}
% GENERATED LATEX
\begin{tikzpicture}[every node/.style={inner sep=5pt,outer sep=0pt,anchor=base,text height=1.2ex, text depth=0.25ex}] 
\node (0) at (5.9523809523809526em, -3.5714285714285716em) {$I$} ; 
\node (1) at (10.714285714285714em, -1.1904761904761905em) {$I \otimes I$} ; 
\node (2) at (15.476190476190476em, -3.5714285714285716em) {$I$} ; 
\path 
(0) to[->, ] node[coordinate](3){} (1) 
(1) to[->, ] node[coordinate](4){} (2) 
(0) to[cell=0, -,] node[coordinate](5){} (2) 
; 
\path[->] 
(0) edge["$\scriptstyle \rho_I$", pos=0.5, ->, ] (1) 
(1) edge["$\scriptstyle \lambda_I$", pos=0.5, ->, ] (2) 
(0) edge["$\scriptstyle $", pos=0.5, cell=0, -,] (2) 
; 
\end{tikzpicture}
\and 
% END OF GENERATED LATEX
% YADE DIAGRAM {"graph":{"edges":[{"from":0,"id":4,"label":{"isPullshout":false,"label":"\\rho_X \\otimes Y","style":{"alignment":"left","bend":0,"dashed":false,"double":false,"head":"default","position":0.30000000000000004,"tail":"none"},"zindex":0},"to":1},{"from":1,"id":5,"label":{"isPullshout":false,"label":"\\alpha_{X,I,Y}","style":{"alignment":"left","bend":0,"dashed":false,"double":false,"head":"default","position":0.5,"tail":"none"},"zindex":0},"to":2},{"from":2,"id":6,"label":{"isPullshout":false,"label":"X \\otimes \\lambda_Y","style":{"alignment":"left","bend":0,"dashed":false,"double":false,"head":"default","position":0.7,"tail":"none"},"zindex":0},"to":3},{"from":0,"id":7,"label":{"isPullshout":false,"label":"","style":{"alignment":"left","bend":0,"dashed":false,"double":true,"head":"none","position":0.5,"tail":"none"},"zindex":0},"to":3}],"latexPreamble":"","nodes":[{"id":0,"label":{"isMath":true,"label":"X \\otimes Y","pos":[175,175]}},{"id":1,"label":{"isMath":true,"label":"(X \\otimes I) \\otimes Y","pos":[245,105]}},{"id":2,"label":{"isMath":true,"label":"X \\otimes (I \\otimes Y)","pos":[455,105]}},{"id":3,"label":{"isMath":true,"label":"X \\otimes Y","pos":[525,175]}}],"sizeGrid":70},"version":8}
% GENERATED LATEX
\begin{tikzpicture}[every node/.style={inner sep=5pt,outer sep=0pt,anchor=base,text height=1.2ex, text depth=0.25ex}] 
\node (0) at (8.333333333333334em, -8.333333333333334em) {$X \otimes Y$} ; 
\node (1) at (11.666666666666666em, -5em) {$(X \otimes I) \otimes Y$} ; 
\node (2) at (21.666666666666668em, -5em) {$X \otimes (I \otimes Y)$} ; 
\node (3) at (25em, -8.333333333333334em) {$X \otimes Y$} ; 
\path 
(0) to[->, ] node[coordinate](4){} (1) 
(1) to[->, ] node[coordinate](5){} (2) 
(2) to[->, ] node[coordinate](6){} (3) 
(0) to[cell=0, -,] node[coordinate](7){} (3) 
; 
\path[->] 
(0) edge["$\scriptstyle \rho_X \otimes Y$", pos=0.30000000000000004, ->, ] (1) 
(1) edge["$\scriptstyle \alpha_{X,I,Y}$", pos=0.5, ->, ] (2) 
(2) edge["$\scriptstyle X \otimes \lambda_Y$", pos=0.7, ->, ] (3) 
(0) edge["$\scriptstyle $", pos=0.5, cell=0, -,] (3) 
; 
\end{tikzpicture}
% END OF GENERATED LATEX
\and 
\begin{tikzpicture}[every node/.style={inner sep=5pt,outer sep=0pt,anchor=base,text height=1.2ex, text depth=0.25ex}] 
\node (0) at (5.9523809523809526em, -8.333333333333334em) {$((A \otimes B) \otimes C) \otimes D$} ; 
\node (1) at (5.9523809523809526em, -10.714285714285714em) {$(A \otimes B) \otimes (C \otimes D)$} ; 
\node (2) at (13.095238095238095em, -13.095238095238095em) {$A \otimes (B \otimes (C \otimes D))$} ; 
\node (3) at (20.238095238095237em, -8.333333333333334em) {$(A \otimes (B \otimes C)) \otimes D$} ; 
\node (4) at (20.238095238095237em, -10.714285714285714em) {$A \otimes ((B \otimes C) \otimes D)$} ; 
\path 
(0) to[->, ] node[coordinate](5){} (1) 
(1) to[->, ] node[coordinate](6){} (2) 
(0) to[->, ] node[coordinate](7){} (3) 
(3) to[->, ] node[coordinate](8){} (4) 
(4) to[->, ] node[coordinate](9){} (2) 
; 
\path[->] 
(0) edge["$\scriptstyle \alpha_{A\otimes B, C, D}$"', pos=0.5, ->, ] (1) 
(1) edge["$\scriptstyle \alpha_{A,B,C \otimes D}$"', pos=0.10000000000000003, ->, ] (2) 
(0) edge["$\scriptstyle \alpha_{A,B,C} \otimes D$", pos=0.5, ->, ] (3) 
(3) edge["$\scriptstyle \alpha_{A, B \otimes C, D}$", pos=0.5, ->, ] (4) 
(4) edge["$\scriptstyle A \otimes \alpha_{B,C,D}$", pos=0.10000000000000003, ->, ] (2) 
; 
\end{tikzpicture}
% END OF GENERATED LATEX
\and
% YADE DIAGRAM {"graph":{"edges":[{"from":0,"id":3,"label":{"isPullshout":false,"label":"\\alpha_{I,X,Y}","style":{"alignment":"left","bend":0,"dashed":false,"double":false,"head":"default","position":0.5,"tail":"none"},"zindex":0},"to":1},{"from":1,"id":4,"label":{"isPullshout":false,"label":"\\lambda_{X \\otimes Y}","style":{"alignment":"left","bend":0,"dashed":false,"double":false,"head":"default","position":0.5,"tail":"none"},"zindex":0},"to":2},{"from":0,"id":5,"label":{"isPullshout":false,"label":"\\lambda_X \\otimes Y","style":{"alignment":"right","bend":0,"dashed":false,"double":false,"head":"default","position":0.5,"tail":"none"},"zindex":0},"to":2}],"latexPreamble":"","nodes":[{"id":0,"label":{"isMath":true,"label":"(I \\otimes X) \\otimes Y","pos":[175,125]}},{"id":1,"label":{"isMath":true,"label":"I \\otimes (X \\otimes Y)","pos":[375,125]}},{"id":2,"label":{"isMath":true,"label":"X \\otimes Y","pos":[275,175]}}],"sizeGrid":50},"version":8}
% GENERATED LATEX
\begin{tikzpicture}[every node/.style={inner sep=5pt,outer sep=0pt,anchor=base,text height=1.2ex, text depth=0.25ex}] 
\node (0) at (8.333333333333334em, -5.9523809523809526em) {$(I \otimes X) \otimes Y$} ; 
\node (1) at (17.857142857142858em, -5.9523809523809526em) {$I \otimes (X \otimes Y)$} ; 
\node (2) at (13.095238095238095em, -8.333333333333334em) {$X \otimes Y$} ; 
\path 
(0) to[->, ] node[coordinate](3){} (1) 
(1) to[->, ] node[coordinate](4){} (2) 
(0) to[->, ] node[coordinate](5){} (2) 
; 
\path[->] 
(0) edge["$\scriptstyle \alpha_{I,X,Y}$", pos=0.5, ->, ] (1) 
(1) edge["$\scriptstyle \lambda_{X \otimes Y}$", pos=0.5, ->, ] (2) 
(0) edge["$\scriptstyle \lambda_X \otimes Y$"', pos=0.5, ->, ] (2) 
; 
\end{tikzpicture}
\and 
% END OF GENERATED LATEX
% YADE DIAGRAM {"graph":{"edges":[{"from":0,"id":3,"label":{"isPullshout":false,"label":"\\rho_{X\\otimes Y}","style":{"alignment":"right","bend":0,"dashed":false,"double":false,"head":"default","position":0.5,"tail":"none"},"zindex":0},"to":1},{"from":1,"id":4,"label":{"isPullshout":false,"label":"\\alpha_{X,Y,I}","style":{"alignment":"right","bend":0,"dashed":false,"double":false,"head":"default","position":0.5,"tail":"none"},"zindex":0},"to":2},{"from":0,"id":5,"label":{"isPullshout":false,"label":"X \\otimes \\rho_Y","style":{"alignment":"left","bend":0,"dashed":false,"double":false,"head":"default","position":0.5,"tail":"none"},"zindex":0},"to":2}],"latexPreamble":"","nodes":[{"id":0,"label":{"isMath":true,"label":"X \\otimes Y","pos":[225,75]}},{"id":1,"label":{"isMath":true,"label":"(X \\otimes Y)  \\otimes I","pos":[125,125]}},{"id":2,"label":{"isMath":true,"label":"X \\otimes (Y  \\otimes I)","pos":[325,125]}}],"sizeGrid":50},"version":8}
% GENERATED LATEX
\begin{tikzpicture}[every node/.style={inner sep=5pt,outer sep=0pt,anchor=base,text height=1.2ex, text depth=0.25ex}] 
\node (0) at (10.714285714285714em, -3.5714285714285716em) {$X \otimes Y$} ; 
\node (1) at (5.9523809523809526em, -5.9523809523809526em) {$(X \otimes Y)  \otimes I$} ; 
\node (2) at (15.476190476190476em, -5.9523809523809526em) {$X \otimes (Y  \otimes I)$} ; 
\path 
(0) to[->, ] node[coordinate](3){} (1) 
(1) to[->, ] node[coordinate](4){} (2) 
(0) to[->, ] node[coordinate](5){} (2) 
; 
\path[->] 
(0) edge["$\scriptstyle \rho_{X\otimes Y}$"', pos=0.5, ->, ] (1) 
(1) edge["$\scriptstyle \alpha_{X,Y,I}$"', pos=0.5, ->, ] (2) 
(0) edge["$\scriptstyle X \otimes \rho_Y$", pos=0.5, ->, ] (2) 
; 
\end{tikzpicture}
% END OF GENERATED LATEX
  \end{mathpar}
\end{defi}

We will now show that De Bruijn monads are equivalently monoids for
some suitable skew monoidal category structure on $𝐒𝐞𝐭$.  For this,
we introduce the following terminology.
\begin{nota}\label{not:Jrel}
  For a functor $J∶ 𝐄 → 𝐂$, ``$J$-relative monad'' means monad
  relative to the object mapping $𝐨𝐛(𝐄) → 𝐨𝐛(𝐂)$ of $J$.
\end{nota}

\begin{prop}
  \label{prop:skew}
  For any small category $𝐄$, cocomplete category $𝐂$, and functor $J∶ 𝐄 → 𝐂$,
  the functor category $[𝐄,𝐂]$ is skew monoidal, with tensor
  $G ⊗ F = ∑_J(G) ∘ F$ and unit $I = J$, where $∑_J$ denotes left Kan
  extension along $J$.  Furthermore, monoids in $[𝐄,𝐂]$ are in
  one-to-one correspondence with monads relative to $J$.
\end{prop}
\begin{proof}
  By~\cite[Theorems~4 and~5]{DBLP:journals/corr/AltenkirchCU14}.
\end{proof}

Applying this to the functor $J∶ 1 → 𝐒𝐞𝐭$ picking $ℕ$ and transferring
across the isomorphism $[1,𝐒𝐞𝐭] ≅ 𝐒𝐞𝐭$, we obtain a skew monoidal
structure on sets, and Proposition~\ref{prop:skew} gives:
\begin{cor}\label{cor:skew}
  The tensor product $X⊗Y ≔ X \times Y^ℕ$ extends to a skew
  monoidal structure on $𝐒𝐞𝐭$, with:
  \begin{itemize}
  \item unit $ℕ$,
  \item right unitor $ρ_X∶ X → X ⊗ ℕ$ given by
    $ρ_X(x) = (x, \id_ℕ)$,
  \item left unitor $λ_X∶ ℕ ⊗ X → X$ given by
    evaluation $λ_X(n,σ) = σ(n)$, and
  \item associator $α_{X,Y,Z}∶ (X ⊗ Y) ⊗ Z → X ⊗ (Y ⊗ Z)$ given by
    $α_{X,Y,Z} ((x,υ),ζ) = (x, (n↦ (υ(n),ζ)))$.
  \end{itemize}

  Furthermore, $𝐃𝐁𝐌𝐧𝐝$ is precisely the category of monoids therein.
\end{cor}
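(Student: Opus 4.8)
The plan is to obtain everything from the theory of relative monads of~\cite{DBLP:journals/corr/AltenkirchCU14}, as announced by the sentence preceding the statement, specialised to the situation at hand, and then to make the resulting abstract data explicit. Concretely, I would first recall the relevant general fact: for any functor $J∶ ℂ → 𝐃$ along which all left Kan extensions exist, the functor category $[ℂ,𝐃]$ carries a left-skew monoidal structure (in the sense of~\cite{Szlachanyi}) with unit $J$, tensor $F ⊗ G ≔ (\Lan_J F) ∘ G$, and unitors and associator assembled canonically from the universal property of $\Lan_J$; moreover the monoids in this structure are exactly the $J$-relative monads. I would then instantiate with $ℂ = 1$, $𝐃 = 𝐒𝐞𝐭$, and $J$ the functor picking $ℕ$: the needed left Kan extensions all exist because $𝐒𝐞𝐭$ is cocomplete, and under the canonical isomorphism $[1,𝐒𝐞𝐭] ≅ 𝐒𝐞𝐭$ the unit $J$ is sent to $ℕ$, giving the first bullet.

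The second step is to compute the structure explicitly. Since $1$ is terminal, the pointwise colimit formula for $\Lan_J$ collapses to $(\Lan_J X)(A) ≅ 𝐒𝐞𝐭(ℕ,A) × X = A^ℕ × X$, naturally in the set $A$; hence on objects $X ⊗ Y ≅ Y^ℕ × X$. Transporting the whole skew monoidal structure along the symmetry $Y^ℕ × X ≅ X × Y^ℕ$ — which is harmless, as it only changes the structure up to a coherent isomorphism — rewrites the tensor as the stated $X ⊗ Y = X × Y^ℕ$. Unwinding the generic definitions through this transport then produces the remaining three bullets by routine inspection: the right unitor is the transported unit of the Kan extension $X → (\Lan_J X)(ℕ)$, namely $x ↦ (\id_ℕ, x)$, equivalently (after the symmetry) $ρ_X(x) = (x, \id_ℕ)$; the left unitor is the evident evaluation map induced by the canonical $\Lan_J J → \Id$, so $λ_X(n,σ) = σ(n)$; and the associator is the canonical comparison between the two iterated left Kan extensions, which on underlying sets sends $((x,υ),ζ)$ to $(x, n ↦ (υ(n),ζ))$. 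It is worth recording, as a consistency check, that this last map is not a bijection in general, so the structure is genuinely skew rather than monoidal.

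For the final sentence, I would compose the isomorphism between $𝐃𝐁𝐌𝐧𝐝$ and the category of $ℕ$-relative monads established in the preceding proposition with the identification of the latter as the monoids of this skew monoidal category, again from~\cite{DBLP:journals/corr/AltenkirchCU14}. I do not anticipate any genuine difficulty: the work is entirely \emph{bookkeeping} — matching the orientation conventions of~\cite{DBLP:journals/corr/AltenkirchCU14} for left-skew monoidal categories (the directions $X → X ⊗ ℕ$, $ℕ ⊗ X → X$, $(X ⊗ Y) ⊗ Z → X ⊗ (Y ⊗ Z)$) with those in the statement, and applying the symmetry transport uniformly to all four components. The one subtlety I would be careful to pin down first is on which operand the Kan extension acts in the definition of $⊗$: taking it on the right operand instead would yield $X^ℕ × Y$ rather than $X × Y^ℕ$, and would correspondingly swap the roles in the associator.
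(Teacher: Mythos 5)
Your proposal is correct and follows essentially the same route as the paper: viewing sets as functors $1 → 𝐒𝐞𝐭$, computing $\Lan_ℕ(X)(Y) ≅ X × Y^ℕ$ by the standard formula, and invoking Theorems~4 and~5 of~\cite{DBLP:journals/corr/AltenkirchCU14} to get both the skew monoidal structure and the identification of monoids with relative monads (hence with De Bruijn monads, via the preceding proposition). The only difference is that you explicitly unwind the unitors and associator and record the sanity check that the associator is non-invertible, which the paper leaves implicit.
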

\begin{proof}
  By the standard formula for left Kan extension, we have
\begin{align*}
  ∑_J(X)(U) & ≅ ∫^{⋆ ∈ 1} 𝐒𝐞𝐭(J(⋆),U) × X(⋆) \\
            & ≅ 𝐒𝐞𝐭(ℕ,U) × X(⋆) \\
            & ≅ U^ℕ × X(⋆). \qedhere
\end{align*}
\end{proof}
% \begin{proof}
% To see this, let us observe that, by viewing any set $X$, in
% particular $ℕ$, as a functor $1 → 𝐒𝐞𝐭$, one may compute the left Kan
% extension of $X$ along $ℕ$, which is a functor $\Lan_ℕ(X)∶ 𝐒𝐞𝐭 →
% 𝐒𝐞𝐭$. By the standard formula for left Kan
% extensions~\cite{MacLane:cwm}, we have $\Lan_ℕ(X)(Y) ≅ X × Y^ℕ = X⊗Y$.
% The result thus follows by~\cite[Theorems~4 and~5]{DBLP:journals/corr/AltenkirchCU14}.
% \end{proof}

% \begin{rem}
%   \AL{This remark is new}
%   De Bruijn monads can also be defined as monoids in a truly monoidal category,
%   as follows.
%  Since the notion of relative monads does not depend
%  on the morphisms in the domain category (Lemma~\ref{lem:relmon-no-mor}), De Bruijn monads can also
%  be defined as monads along the full and faithful embedding $J_ℕ ∶ \{ℕ\} → 𝐒𝐞𝐭$, and
%  thus monoids in the skew monoidal category of sets equipped an action of the
%  monoid of $ℕ^ℕ$ (product is given by composition).
%  Here, the monoidal product of $X$ and $Y$ is defined as the set $\{ x⦇ y₀,y₁,… ⦈ | x ∈ X, y∶ℕ
%  → Y\}$ quotiented by the relation identifying $X(f)(x)⦇ y₀,… ⦈$ and $x ⦇
%  y_{f(0)}, y_{f(1)},… ⦈$, for any $f ∶ ℕ → ℕ$.
%  As seen in the appendix, this skew monoidal structure is in fact truly monoidal.
% \end{rem}

\begin{rem}\label{rem:BNNmonoid}
  By Notation~\ref{not:Jrel}, if two functors $J∶ 𝐄 → 𝐂$ and
  $J'∶ 𝐄' → 𝐂$ have the same object mapping up to isomorphism (hence
  in particular $𝐨𝐛(𝐄) ≅ 𝐨𝐛(𝐄')$), then $J$-relative monads are
  isomorphic to $J'$-relative monads, and both are isomorphic to
  monoids in $[𝐄,𝐂]$, resp.\ in $[𝐄',𝐂]$ (under the assumptions of
  Proposition~\ref{prop:skew}).
  
  In particular, the functor $1 → 𝐒𝐞𝐭$ picking $ℕ$ factors as
  $$1 \xto{I} 𝐁[ℕ,ℕ] \xto{K} 𝐒𝐞𝐭\rlap{,}$$
  where $𝐁[ℕ,ℕ]$ denotes the full subcategory spanned by $ℕ$. Since
  the object mapping of $K$ is the same as that of $1 → 𝐒𝐞𝐭$, De
  Bruijn monads are equivalently monoids in the category
  $[𝐁[ℕ,ℕ], 𝐒𝐞𝐭]$.  Remarkably, unlike $𝐒𝐞𝐭$ with the skew monoidal
  structure of Corollary~\ref{cor:skew}, $[𝐁[ℕ,ℕ], 𝐒𝐞𝐭]$ is in fact
  monoidal.
\end{rem}

\subsection{Categories of De Bruijn algebras}
In this section, for any binding signature $S$,
we organise De Bruijn $S$-algebras into a category $S\DBAlg$.

Let us start by recalling the category of $S$-algebras for a
first-order $S$:
\begin{defi}
  For any first-order signature $S$, a morphism $X → Y$ of
  $S$-algebras is a map between underlying sets commuting with
  operations, in the sense that for each $o ∈ O$, letting
  $p ≔ \ar(o)$, we have $f(o_X(x₁,…,xₚ)) = o_Y(f(x₁),…,f(xₚ))$.

  We denote by  $S\alg$ the category of $S$-algebras and morphisms
  between them.
\end{defi}

We now exploit this to define morphisms between De Bruijn
$S$-algebras:
\begin{defi}
  For any binding signature $S$, a morphism of De Bruijn $S$-algebras
  is a map $f∶ X → Y$ between underlying sets, which is a morphism
  both of De Bruijn monads and of $|S|$-algebras.  We denote by
  $S\DBAlg$ the category of De Bruijn $S$-algebras and morphisms
  between them.
\end{defi}

\begin{thm}\label{thm:initiality:II}
  Consider any binding signature $S= (O,\ar)$, and let ${\DB}$ denote the
  initial $(ℕ+Σ_S)$-algebra. Then, the De Bruijn $S$-algebra structure
  of Theorem~\ref{thm:initiality:I} on ${\DB}$ makes it initial in
  $S\DBAlg$.
\end{thm}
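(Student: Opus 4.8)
The plan is to reduce initiality in $S\DBAlg$ to the already-established initiality of ${\DB}$ in $(ℕ+Σ_S)\alg$, using the forgetful functor $U∶ S\DBAlg → (ℕ+Σ_S)\alg$ that simply discards the substitution map. Given any De Bruijn $S$-algebra $Y$ — with substitution $t$, variables $w$, and operations $o_Y$ — initiality of ${\DB}$ in $(ℕ+Σ_S)\alg$ yields a unique $(ℕ+Σ_S)$-algebra morphism $f∶ {\DB} → Y$, so that $f∘v = w$ and $f(o_{\DB}(x₁,…,xₚ)) = o_Y(f(x₁),…,f(xₚ))$ for each operation $o$. Uniqueness of a morphism ${\DB} → Y$ in $S\DBAlg$ is then immediate: $U$ is faithful and sends any such morphism to an $(ℕ+Σ_S)$-algebra morphism, necessarily $f$. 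So everything comes down to showing that $f$ is moreover a morphism of De Bruijn monads, i.e.\ that $f(x[σ]) = f(x)[f∘σ]$ for all $x∈{\DB}$ and $σ∶ ℕ → {\DB}$; this exhibits $f$ as the required initial morphism.

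A naive structural induction on $x$ for this identity does not close on its own. In the case $x = o_{\DB}(x₁,…,xₚ)$ with $\ar(o)=(n₁,…,nₚ)$, rewriting with the $\ar(o)$-binding condition on ${\DB}$ (Theorem~\ref{thm:initiality:I}) and on $Y$, then using the induction hypothesis on each $xᵢ$, leaves the obligation $f∘(⇑^{nᵢ}σ) = ⇑^{nᵢ}(f∘σ)$, i.e.\ that $f$ commutes with lifting; but unfolding $⇑σ$ substitutes the shift assignment $↑$ into the terms $σ(k)$, which are arbitrary, not structurally smaller than $x$. To break the circularity I would first prove, by structural induction on $y$, the \emph{renaming compatibility} of $f$: for every map $ρ∶ ℕ → ℕ$ and every $y∈{\DB}$, $f(y[v∘ρ]) = f(y)[w∘ρ]$. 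The variable case is left unitality on both sides. The operation case goes through because $⇑(v∘ρ) = v∘ρ'$ with $ρ'(0)=0$ and $ρ'(k+1)=ρ(k)+1$, and the identical formula holds on $Y$ with the same $ρ'$; hence, iterating, each $⇑^{m}(v∘ρ)$ is again of the form $v∘(-)$, so the induction hypothesis — quantified over all reindexings — applies. Generalising over all $ρ$ before the induction is essential, since the iterated liftings produce reindexings other than the one we started from.

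From renaming compatibility, instantiating $ρ$ at the successor function yields $f(σ(k)[↑]) = f(σ(k))[↑]$ for all $k$ and $σ$, hence $f∘(⇑σ) = ⇑(f∘σ)$, and then $f∘(⇑^{n}σ) = ⇑^{n}(f∘σ)$ by induction on $n$. With this lifting lemma available, the structural induction on $x$ for $f(x[σ]) = f(x)[f∘σ]$ now closes: the variable case is again left unitality, and in the operation case the residual obligation is exactly the lifting lemma. Thus $f$ commutes with substitution, so $f$ is a morphism of De Bruijn $S$-algebras, and together with the uniqueness observed above this makes ${\DB}$ initial in $S\DBAlg$. The crux — and the only real obstacle — is the circularity of the naive induction through the lifting operation: one must first establish that $f$ respects renamings, a statement that genuinely has to be proved in its form generalised over all reindexings, before ``$f$ commutes with lifting'', and then ``$f$ commutes with substitution'', can be reached.
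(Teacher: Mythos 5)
Your argument is correct. Note that the paper itself gives no pen-and-paper proof of Theorem~\ref{thm:initiality:II}: it simply defers to the HOL Light and Coq mechanisations of \S\ref{s:mechanised}. Your proposal therefore supplies the informal argument the paper omits, and it follows the route one would expect the mechanisations to take: reduce existence and uniqueness to initiality of ${\DB}$ in $(ℕ+Σ_S)\alg$, so that the only substantive obligation is that the unique algebra morphism $f$ commutes with substitution. You correctly identify the one genuine obstacle --- the naive induction does not close because $(⇑σ)(n+1) = σ(n)[↑]$ substitutes into arbitrary terms $σ(n)$ rather than subterms --- and you resolve it with the standard bootstrap: first prove $f(y[v∘ρ]) = f(y)[w∘ρ]$ by induction on $y$ generalised over all reindexings $ρ$ (using that $⇑(v∘ρ) = v∘ρ'$ and $⇑(w∘ρ) = w∘ρ'$ for the same $ρ'$, by left unitality on each side), deduce $f∘(⇑^{n}σ) = ⇑^{n}(f∘σ)$, and only then run the main induction. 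This ``renamings before substitutions'' stratification is exactly the well-known device in De Bruijn developments, and your remark that the renaming lemma must be stated for all $ρ$ before inducting is the right observation. I see no gap.
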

  \begin{proof}
  We have proved the result in both HOL Light~\cite{DBHOL} and
  Coq~\cite{DBCoq}, see~§\ref{s:mechanised}.
  \end{proof}

  \begin{exa}\label{ex:univlambda}
    For the binding signature of $λ$-calculus
    (Example~\ref{ex:siglambda}), the carrier of the initial model is
    $μ Z. ℕ + Z + Z^2$. 
  \end{exa}

\section{Relation to presheaf-based models}\label{s:presheaves}
The classical initial-algebra semantics introduced in
\cite{fiore:presheaf,DBLP:conf/lics/Fiore08} associates in particular
to each binding signature $S$ a category, say $Φ_S\Mon$ of models,
while we have proposed in~§\ref{s:elementary} an alternative category
of models $S\DBAlg$.  In this section, we are interested in comparing
both categories of models.

In fact, we find that both may include pathological models, in the sense that
we do not see any loss in ruling them out.  And when we do so, we
obtain equivalent categories.

  \subsection{Trimming down presheaf-based models}\label{ss:wbfpt}
  First of all, in this subsection, let us recall the mainstream
  approach we want to relate to, and exclude some pathological objects from
  it.  

  \subsubsection{Presheaf-based models}
  We start by recalling the presheaf-based approach.  The ambient
  category is the category of functors $[𝔽,𝐒𝐞𝐭]$, where $𝔽$ denotes
  the category of finite ordinals, and all maps between them.

  \begin{defi}
    Let $[𝐒𝐞𝐭,𝐒𝐞𝐭]_f$ denote the full subcategory of $[𝐒𝐞𝐭,𝐒𝐞𝐭]_f$
    spanning \alert{finitary} functors, i.e., those preserving
    directed colimits ($=$ colimits of directed posets).
  \end{defi}
  \begin{prop}\label{prop:finsets}
    The restriction functor $[𝐒𝐞𝐭,𝐒𝐞𝐭]_f → [𝔽,𝐒𝐞𝐭]$ is an equivalence.
  \end{prop}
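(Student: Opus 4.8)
The plan is to exhibit an explicit pseudo-inverse to the restriction functor $(-)∘j ∶ [𝐒𝐞𝐭,𝐒𝐞𝐭]_f → [𝔽,𝐒𝐞𝐭]$ along the inclusion $j ∶ 𝔽 \hookrightarrow 𝐒𝐞𝐭$ — which is fully faithful, since $𝔽$ is a skeleton of the category of finite sets — namely left Kan extension $\Lan_j ∶ [𝔽,𝐒𝐞𝐭] → [𝐒𝐞𝐭,𝐒𝐞𝐭]$. At bottom this is the standard fact that $𝐒𝐞𝐭$ is the free cocompletion of $𝔽$ under directed colimits, equivalently $𝐒𝐞𝐭 \simeq \mathrm{Ind}(𝔽)$; but it is worth spelling out the two round-trips.

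First I would check that $\Lan_j$ corestricts to $[𝐒𝐞𝐭,𝐒𝐞𝐭]_f$. By the pointwise formula, $\Lan_j G(X)$ is the colimit over the comma category $j{\downarrow}X$ of pairs $(A, f ∶ jA → X)$ of the diagram sending $(A,f)$ to $G(A)$. Since each finite set $jA$ is finitely presentable in $𝐒𝐞𝐭$, the hom-functors $𝐒𝐞𝐭(jA,-)$ preserve directed colimits; as colimits commute with colimits, $\Lan_j G$ then preserves directed colimits as well, so it is finitary. Next, because $j$ is fully faithful, the unit $G → (\Lan_j G)∘j$ of the adjunction $\Lan_j \dashv (-)∘j$ is a natural isomorphism, which settles one of the two composites.

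The step carrying real content is the other composite: for any finitary $F$, the canonical comparison $\Lan_j(F∘j) → F$ is an isomorphism. Evaluated at a set $X$, the left-hand side is $\colim_{(A,f) ∈ j{\downarrow}X} F(A)$. I would first record that $j{\downarrow}X$ is filtered — equivalently, that the poset of finite subsets of $X$, ordered by inclusion, is directed and cofinal in it — and that the tautological cocone with components $f ∶ jA → X$ presents $X$ as the directed colimit $\colim_{j{\downarrow}X} j$. Since $F$ preserves directed colimits, $\colim_{(A,f)} F(A) ≅ F\bigl(\colim_{(A,f)} jA\bigr) ≅ F(X)$, and a short diagram chase identifies this isomorphism with the comparison map. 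With both unit and counit invertible, $(-)∘j$ and $\Lan_j$ are mutually quasi-inverse, giving the claimed equivalence. I expect the only real obstacle to be organisational: verifying filteredness and cofinality of $j{\downarrow}X$, and checking that the Kan-extension comparison agrees with the mediating isomorphism; no delicate estimate or construction intervenes.
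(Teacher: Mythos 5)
Your proof is correct and takes essentially the same route as the paper: the paper simply cites Adámek--Rosický (Theorem~2.26(ii) with $𝒦=𝐒𝐞𝐭$ and $𝒜=𝔽$), and your explicit argument --- $\Lan_j$ as quasi-inverse, finitarity of $\Lan_j G$ via the pointwise colimit formula, invertibility of the unit by full faithfulness of $j$, and invertibility of the counit via filteredness/cofinality of $j\downarrow X$ and preservation of the directed colimit $X ≅ \colim_{j\downarrow X} jA$ --- is precisely the standard proof of that cited result. No gaps.
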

  \begin{proof}
    The category of sets is $ω$-accessible, so by~\cite[Theorem~2.26,
    (i) $⇔$ (ii)]{Adamek} and~\cite[Remark~2.26(1)]{Adamek}, it is a
    free cocompletion of its full subcategory of finitely presentable
    objects under directed colimits.  Equivalently, it is a free
    cocompletion of $𝔽$ under directed colimits.  Thus, by taking
    $\mathcal{B}$ to be $𝐒𝐞𝐭$ in \cite[Definition~2.25]{Adamek}, we
    obtain that the restriction functor $[𝐒𝐞𝐭,𝐒𝐞𝐭]_f → [𝔽, 𝐒𝐞𝐭]$ is an
    equivalence.
  \end{proof}

  \begin{defi}\label{def:monFSet}
    Let $(⊗,I)$ denote the monoidal structure on $[𝔽,𝐒𝐞𝐭]$ inherited
    from the composition monoidal structure on $[𝐒𝐞𝐭,𝐒𝐞𝐭]_f$ through
    the equivalence of Proposition~\ref{prop:finsets}.
  \end{defi}

  By construction, monoids in $[𝔽,𝐒𝐞𝐭]$ are thus equivalent to
  finitary monads on sets.

  The idea is then to interpret binding signatures $S$ as endofunctors
  $Φ_S$ on $[𝔽,𝐒𝐞𝐭]$, and to define models as monoids equipped with
  $Φ_S$-algebra structure, satisfying a suitable compatibility
  condition.

  The definition of $Φ_S$ relies on an operation called derivation:
  \begin{defi}[Endofunctor associated to a binding signature]\label{def:PhiS}\hfill
    \begin{itemize}
    \item Let the \alert{derivative} $X'$ of any functor $X∶ 𝔽 → 𝐒𝐞𝐭$
      be defined by $X'(n) = X(n+1)$.
  \item Furthermore, let $X^{(0)} = X$, and $X^{(n+1)} = (X^{(n)})'$.
    \item For any binding arity $a = (n₁,…,nₚ)$, let
      $Φₐ(X) = X^{(n₁)}×…×X^{(nₚ)}$.
    \item For any binding signature $S = (O,\ar)$, let
      $Φ_S = ∑_{o ∈ O} Φ_{\ar(o)}$.
  \end{itemize}
\end{defi}
  \begin{prop}
    Through the equivalence with finitary functors, derivation becomes
    $F'(A) = F(A+1)$, for any finitary $F∶ 𝐒𝐞𝐭 → 𝐒𝐞𝐭$ and $A ∈ 𝐒𝐞𝐭$.
  \end{prop}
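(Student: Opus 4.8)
The plan is to use the description of the equivalence of
  Proposition~\ref{prop:finsets} as restriction along the inclusion
  $ι∶ 𝔽 ↪ 𝐒𝐞𝐭$, and then to check that this restriction functor
  intertwines the two derivation operations at hand.  Write
  $R∶ [𝐒𝐞𝐭,𝐒𝐞𝐭]_f → [𝔽,𝐒𝐞𝐭]$, $R(F) = F ∘ ι$, for the restriction
  functor; it is an equivalence, with pseudo-inverse $\Lan_ι$.

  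First I would observe that both operations under consideration are
  instances of precomposition.  On $[𝔽,𝐒𝐞𝐭]$, derivation is
  $X ↦ X ∘ σ$, where $σ∶ 𝔽 → 𝔽$ is the successor functor $n ↦ n+1$,
  acting on a morphism $f∶ m → n$ by $f + \id_1∶ m+1 → n+1$ (the fresh
  element being sent to the fresh element).  On $[𝐒𝐞𝐭,𝐒𝐞𝐭]_f$, the
  candidate operation $F ↦ F(-+1)$ is $F ↦ F ∘ σ'$, where
  $σ'∶ 𝐒𝐞𝐭 → 𝐒𝐞𝐭$ is $A ↦ A+1$.  Here one should note in passing that
  $σ'$ preserves directed colimits (being a coproduct with a fixed
  set), so that $F ∘ σ'$ is again finitary whenever $F$ is; hence
  $F ↦ F(-+1)$ is indeed a well-defined endo-operation of
  $[𝐒𝐞𝐭,𝐒𝐞𝐭]_f$.

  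The crux is the comparison of $σ$ and $σ'$: there is a canonical
  natural isomorphism $ι ∘ σ ≅ σ' ∘ ι$, since adding a fresh element
  to a finite ordinal, viewed merely as a set, is canonically
  bijective with its ordinal successor, naturally in $𝔽$.  Granting
  this, for any finitary $F$, putting $X ≔ R(F) = F ∘ ι$ we obtain a
  natural isomorphism
  \[
    R\bigl(F ∘ σ'\bigr) = F ∘ σ' ∘ ι ≅ F ∘ ι ∘ σ = X ∘ σ = X'\rlap{,}
  \]
  i.e.\ $R$ carries the operation $F ↦ F(-+1)$ to derivation
  $X ↦ X'$.  Since $R$ is an equivalence, transporting derivation
  along a pseudo-inverse of $R$ yields exactly $F ↦ F(-+1)$, as
  claimed.

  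I expect the only genuine bookkeeping to be the canonical
  isomorphism $ι ∘ σ ≅ σ' ∘ ι$ together with its naturality, which is
  a routine coherence computation once one fixes the chosen ordinal
  coproducts in $𝔽$; everything else is formal.
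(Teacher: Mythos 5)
Your argument is correct. The paper states this proposition without proof, treating it as immediate from the description of the equivalence of Proposition~\ref{prop:finsets} as restriction along $𝔽 ↪ 𝐒𝐞𝐭$; your identification of both derivation operations as precomposition with a ``successor'' endofunctor, together with the canonical natural isomorphism $ι ∘ σ ≅ σ' ∘ ι$, is exactly the routine verification being omitted, and your check that $F(-+1)$ is again finitary is a correct (and worthwhile) extra detail.
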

\begin{exa}
  On the binding signature for $λ$-calculus, say $S_λ$, which we saw
  in Example~\ref{ex:siglambda}, we get
  \[ Φ_{S_λ}(X)(n) = X(n)² + X(n+1).\]
\end{exa}

Next, we want to express the relevant compatibility condition between
algebra and monoid structure. For this, let us briefly recall the
notion of pointed strength, see
\cite{fiore:presheaf,DBLP:conf/lics/Fiore08} for details.
  \begin{defi}
    A \alert{pointed strength} on an endofunctor $F∶ 𝐂 → 𝐂$ on a
    monoidal category $(𝐂,⊗,I,α,λ,ρ)$ is a family of morphisms
    $st_{C,(D,v)}∶ F(C)⊗D → F (C⊗D)$, natural in $C ∈ 𝐂$ and
    $(D,v∶ I → D) ∈ I/𝐂$, the coslice category under the tensor unit
    $I$, making the following diagrams commute,
      \begin{center}
    \diag{%
      \& F(A) \\
      F(A)⊗I \& \& F (A⊗I) %
    }{%
      (m-1-2) edge[labelal={ρ_{F(A)}}] (m-2-1) %
      edge[labelar={F(ρ_A)}] (m-2-3) %
      (m-2-1) edge[labelb={st_{A,(I,\id)}}] (m-2-3) %
    }
    \diag(0.6,2.2){%             % Increased arrow label size a bit
      (F(A)⊗X)⊗Y \& F(A⊗X)⊗Y \& F ((A⊗X)⊗Y) \\
      F(A)⊗(X⊗Y) \& \& F (A⊗(X⊗Y))
    }{%
      (m-1-1) edge[labela={\scalebox{1.03}{$st_{A,(X,v_X)}⊗Y$}}] (m-1-2) %
      edge[labell={\scalebox{1.03}{$α_{F(A),X,Y}$}}] (m-2-1) %
      (m-1-2) edge[labela={\scalebox{1.03}{$st_{A⊗X,(Y,v_Y)}$}}] (m-1-3) %
      (m-1-3) edge[labelr={\scalebox{1.03}{$F (α_{A,X,Y})$}}] (m-2-3) %
      (m-2-1) edge[labelb={\scalebox{1.03}{$st_{A,(X⊗Y,v_{X⊗Y})}$}}] (m-2-3) %
    }
  \end{center}
  for all objects $A$, $X$, $Y$, and morphisms $v_X∶ I → X$ and
  $v_Y∶ I → Y$, where $v_{X⊗Y}$ denotes the composite
  $$I \xto{ρ_I} I ⊗ I \xto{v_X ⊗ v_Y} X ⊗ Y.$$
\end{defi}
The next step is to observe that binding signatures generate pointed
strong endofunctors.
\begin{defi}
  The derivation endofunctor $X ↦ X'$ on $[𝔽,𝐒𝐞𝐭]$ has a pointed
  strength, defined through the equivalence with finitary functors by
  $$G (F(X)+1) \xto{G (F(X)+v₁)}G (F (X) + F (1)) \xto{G[F(in₁),F(in₂)]}
  G (F (X+1)).$$
\end{defi}
Product, coproduct, and composition of endofunctors lift
to pointed strong endofunctors, which yields:
% One may furthermore define ``canonical'' pointed strengths on
% coproducts, products, and composites of pointed strong endofunctors.
\begin{corC}[{\cite{fiore:presheaf,DBLP:conf/lics/Fiore08}}]
  For all binding signatures $S$, $Φ_S$ is canonically pointed strong.
\end{corC}

At last, we arrive at the definition of models.
\begin{defi}
  \label{def:F-monoid}
  For any pointed strong endofunctor $F$ on a monoidal category\linebreak{}$(𝐂,⊗,I,α,λ,ρ)$, an \alert{$F$-monoid}
  is an object $X$ equipped with $F$-algebra and monoid structure, say
  $a∶ F(X) → X$, $s∶ X⊗X→X$, and $v∶ I → X$, such that the following
  pentagon commutes.
  \begin{center}
    \diag{%
      F(X)⊗X \& F (X⊗X) \& F(X) \\
      X⊗X \& \& X %
    }{%
      (m-1-1) edge[labela={st_{X,(X,v)}}] (m-1-2) %
      edge[labell={a⊗X}] (m-2-1) %
      (m-1-2) edge[labela={F(s)}] (m-1-3) %
      (m-2-1) edge[labelb={s}] (m-2-3) %
      (m-1-3) edge[labelr={a}] (m-2-3) %
    }
  \end{center}
  A morphism of $F$-monoids is a morphism in $𝐂$ which is a morphism
  both of $F$-algebras and of monoids.  We let $F\Mon$ denote the
  category of $F$-monoids and morphisms between them.
\end{defi}

\begin{exa}
  For the binding signature $S_λ$ of Example~\ref{ex:siglambda}, a
  $Φ_{S_λ}$-monoid is an object $X$, equipped with maps $X' → X$ and
  $X²→X$, and with compatible monoid structure.  Compatibility describes
  how substitution should be pushed down through abstractions and
  applications.
\end{exa}

\subsubsection{Intersectional presheaves}
  The pathology we want to rule out only concerns the underlying
  functor of a model, so we just have to define well-behaved functors in
  $[𝔽,𝐒𝐞𝐭]$.

  Well-behavedness for a functor $T∶ 𝔽 → 𝐒𝐞𝐭$ is about getting closed
  terms right.  More precisely, for some finite sets $m$ and $n$, an
  element of $T(m+n)$ which both exists in $T(m)$ and $T(n)$ should
  also exist in $T(∅)$, and uniquely so. This says exactly that $T$
  should preserve the pullback
  \begin{equation}
    \label{eq:pbk-intersection}
    \Diag{%
      \stdpbk %
    }{%
      ∅ \& n \\
      m \& m+n\rlap{.} %
    }{%
      (m-1-1) edge[labela={}] (m-1-2) %
      edge[labell={}] (m-2-1) %
      (m-2-1) edge[labelb={}] (m-2-2) %
      (m-1-2) edge[labelr={}] (m-2-2) %
    }
    \end{equation}
  Taking \alert{intersection} to mean pullback of two monomorphisms,
  the following known result shows that all non-empty intersections
  are automatically preserved.
  \begin{propC}[{\cite[Proposition~2.2]{PresentationOfSetFunctors}}]
    \label{prop:trnkova}
    All endofunctors of sets preserve non-empty intersections.
  \end{propC}
  Thus, by Proposition~\ref{prop:finsets}, all functors $𝔽 → 𝐒𝐞𝐭$
  preserve non-empty intersections, and we have:
  \begin{cor}
    A functor $𝔽→𝐒𝐞𝐭$ preserves (binary) intersections iff it
    preserves empty (binary) intersections.
  \end{cor}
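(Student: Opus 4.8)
The plan is short, because essentially all the work is already available. The ``only if'' direction is immediate: an empty binary intersection is a particular binary intersection, so any functor preserving all binary intersections certainly preserves the empty ones.

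For the converse, I would argue by a two-case split on the value of the intersection. Given two subobjects of a common object in $𝔽$, i.e.\ two injections $m ↪ k$ and $n ↪ k$ of finite ordinals, their intersection (the pullback) is computed as the set-theoretic intersection $m ∩ n ⊆ k$, again a finite set and hence, up to iso, an object of $𝔽$; so binary intersections do exist in $𝔽$. Now either $m ∩ n ≠ ∅$ or $m ∩ n = ∅$. In the first case, the image under $F$ of the intersection square is again a pullback: this is exactly the consequence of Proposition~\ref{prop:trnkova} noted just before the corollary, namely that every functor $𝔽 → 𝐒𝐞𝐭$ preserves non-empty intersections (Trnkova's result transported along the equivalence of Proposition~\ref{prop:finsets}). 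In the second case, $F$ preserves the square by the hypothesis. In either case $F$ preserves the given intersection, and since the two subobjects were arbitrary, $F$ preserves all binary intersections.

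The only point that in principle deserves care is the transport of Trnkova's statement from endofunctors of $𝐒𝐞𝐭$ to functors $𝔽 → 𝐒𝐞𝐭$: one uses that the equivalence $[𝐒𝐞𝐭,𝐒𝐞𝐭]_f ≃ [𝔽,𝐒𝐞𝐭]$ identifies a functor $T$ with the finitary endofunctor extending it (and restricting back to it on $𝔽$), together with the fact that the inclusion $𝔽 ↪ 𝐒𝐞𝐭$ sends pullbacks of monos in $𝔽$ to pullbacks in $𝐒𝐞𝐭$. But the paper has already stated this consequence in the sentence preceding the corollary, so in the write-up I would simply invoke it. I do not expect any genuine obstacle; the ``hard part'', such as it is, is merely to make sure the notion of binary intersection in the statement (a pullback of two monos, possibly empty) is matched against the hypothesis of Proposition~\ref{prop:trnkova} (the same data, with non-emptiness assumed), so that the case analysis above is exhaustive.
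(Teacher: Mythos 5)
Your proposal is correct and is essentially the paper's own argument: the corollary is stated there as an immediate consequence of the preceding sentence, namely that by Proposition~\ref{prop:trnkova} transported along the equivalence of Proposition~\ref{prop:finsets}, every functor $𝔽 → 𝐒𝐞𝐭$ automatically preserves non-empty intersections, so only the empty case carries any content. Your case split on whether $m ∩ n$ is empty, together with the trivial ``only if'' direction, is exactly that reasoning spelled out.
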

  \begin{lem}
    A functor $T$ from $𝐒𝐞𝐭$ (or $𝔽$) to $𝐒𝐞𝐭$ preserves empty binary intersections if and only
    if it preserves the following pullback.
    \begin{center}
      \Diag{%
        \stdpbk %
      }{%
        0 \& 1 \\
        1 \& 2
      }{%
        (m-1-1) edge[labela={}] (m-1-2) %
        edge[labell={}] (m-2-1) %
        (m-2-1) edge[labelb={0}] (m-2-2) %
        (m-1-2) edge[labelr={1}] (m-2-2) %
      }%
    \end{center}
  \end{lem}
  \begin{proof}
    By Proposition~\ref{prop:finsets}, it is enough to reason on an endofunctor
    $T$ on $𝐒𝐞𝐭$.
    If $T$ preserves empty binary intersections, then it preserves the above
    pullback as a particular case.
    Conversely, assume that it preserves the above pullback. Then, the following
    diagram is an equaliser.
    \[
      \begin{tikzcd}
        T(0)\ar[r] &
        T(1) \ar[r,shift left=.75ex,"T(0)"]
        \ar[r,shift right=.75ex,swap,"T(1)"]
        &
        T(2).
      \end{tikzcd}
    \]
    Therefore, $T$ coincides with its so-called Trnková closure and
    thus by~\cite[Corollary~VII.2]{CoproductsMonadsSet}, it preserves
    finite intersections.
  \end{proof}
  \begin{defi}\hfill
    \label{d:wb-FSet}
    \begin{itemize}
    \item A functor $𝔽 → 𝐒𝐞𝐭$ is \alert{intersectional} iff it preserves
      binary intersections, or equivalently empty binary
      intersections.  Let $[𝔽,𝐒𝐞𝐭]_{\its}$ denote the full subcategory
      spanned by intersectional functors.
    % \item A finitary monad on sets is \alert{intersectional} iff it
    %   preserves binary intersections.  Let $𝐌𝐧𝐝_{f,\its}(𝐒𝐞𝐭)$ denote
    %   the full subcategory spanned by intersectional monads.
    \item A monoid in $[𝔽,𝐒𝐞𝐭]$, (resp., for any binding signature
      $S$, an object of $Φ_S\Mon$) is \alert{intersectional} iff the
      underlying functor is.  Let $Φ_S\Mon_{\its}$ denote the full
      subcategory spanned by intersectional objects.
  \end{itemize}
  \begin{exa}
    As an example of a non intersectional finitary monad, first consider the monad $L$
    of $λ$-calculus, so that $L(X)$ is set of $λ$-terms taking free variables
    in $X$. This monad is intersectional, but now consider the monad $L'$ agreeing
    with $L$ on any non-empty set, and such that $L'(∅) = ∅$. Then, $L'$ is not intersectional.
  \end{exa}
  \end{defi}

  The important result for comparing the presheaf-based approach with
  ours is the following.
  \begin{prop}
    The subcategory $Φ_S\Mon_{\its}$ includes the initial object.
  \end{prop}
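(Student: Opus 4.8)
The goal is to show that the initial $\Phi_S$-monoid is well-behaved, i.e., that its underlying functor $T\colon \mathbb{F}\to\mathbf{Set}$ preserves the distinguished pullback of $\emptyset \to n$ along $m \to m+n$ (equivalently, preserves empty binary intersections, by the corollary just above). The natural approach is to identify the initial $\Phi_S$-monoid concretely: its underlying presheaf is the standard syntax presheaf $n \mapsto \mathrm{DB}_S(n)$, sending a finite ordinal $n$ to the set of terms with free variables among $\{0,\dots,n-1\}$, built as the initial algebra $\mu A.\, V + \Sigma_S'(A)$ in $[\mathbb{F},\mathbf{Set}]$ (where $V$ is the presheaf of variables and $\Sigma_S'$ is the derivation-twisted signature endofunctor). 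This is the classical term model of Fiore, Plotkin and Turi; that it carries the initial $\Phi_S$-monoid structure is part of the setup being recalled, so I may take it as given.

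First I would make precise the pullback-preservation statement: for $m,n$ finite ordinals, the square with corners $\emptyset, n, m, m+n$ and the evident inclusions is a pullback in $\mathbb{F}$, and I must check that applying $T$ still yields a pullback in $\mathbf{Set}$. Concretely this amounts to: (a) $T(\iota_m)$ and $T(\iota_n)$ are injective (which holds because renaming along an injection is injective on terms — an easy structural induction), and (b) if a term $t \in T(m+n)$ lies simultaneously in the image of $T(m)\hookrightarrow T(m+n)$ and of $T(n)\hookrightarrow T(m+n)$, then it lies in the image of $T(\emptyset) \hookrightarrow T(m+n)$, i.e., $t$ is closed. The heart of the matter is (b): a term whose free variables lie in $\{0,\dots,m-1\}$ and also in $\{m,\dots,m+n-1\}$ has free variables in the intersection, which is empty. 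Formally I would prove by structural induction on terms that $T(\iota_m)(s) = T(\iota_n)(s')$ forces both $s$ and $s'$ to be in the image of the renaming from $\emptyset$; the induction goes through because the signature endofunctor $\Sigma_S$ and the derivative operation $(-)'$ both commute with the relevant images, and because variables $v(k)$ for $k<m$ can never coincide with $v(k')$ for $m\le k' < m+n$ after the respective renamings into $m+n$ — so the only overlap is on closed subterms. Uniqueness of the factorization through $T(\emptyset)$ follows from injectivity of $T(\emptyset)\hookrightarrow T(m+n)$, which is again the renaming-injectivity lemma.

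An alternative, slicker route avoids any induction: by Proposition~\ref{prop:finsets} the functor $T$ corresponds to a finitary endofunctor $\hat T$ of $\mathbf{Set}$, and one knows (this is essentially the content of the running comparison with De Bruijn monads, and of the monad-of-a-signature folklore) that $\hat T$ is a finitary \emph{monad}, in fact the free-algebra monad for the first-order signature $|S|$. Free monads on sets preserve empty intersections: one shows directly that a closed term is one with no free variables, and $\hat T(\emptyset)$ is exactly the set of closed terms, which injects into every $\hat T(X)$ with the expected intersection property. So I would first try to reduce to ``the free $|S|$-algebra monad is well-behaved'' and prove that; if the reduction to a free monad is not cleanly available at this point in the paper, I fall back on the direct structural induction described above.

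The main obstacle I anticipate is bookkeeping the derivation twist: the signature endofunctor in $[\mathbb{F},\mathbf{Set}]$ is not $\Sigma_S$ applied pointwise but involves the derivatives $X^{(n_i)}$, i.e., $X^{(n_i)}(k) = X(k + n_i)$, and I must check that these derivative functors preserve the distinguished pullbacks — equivalently, that $\mathbb{F}$'s pushout-style squares stay pushouts after adding a fixed ordinal $n_i$ on the relevant side, which they do, so $X^{(n_i)}$ preserves the pullback whenever $X$ does. Threading this through the inductive step (and through the coproduct $\sum_{o\in O}$, which trivially preserves pullbacks of presheaves, computed pointwise) is the only place where real care is needed; everything else is routine.
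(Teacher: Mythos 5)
Your primary argument is correct and is essentially the paper's proof: the paper's own justification is the one-line sketch ``closed terms are isomorphic to terms in two free variables that use neither the first, nor the second,'' and your detailed version --- identify the initial $Φ_S$-monoid as the term presheaf, prove injectivity of renaming along injections, and show by structural induction that a term lying in both images of $T(m)$ and $T(n)$ inside $T(m+n)$ must be closed --- is exactly that idea spelled out, including the (correct) bookkeeping for the derivation twist. One caveat: your ``slicker'' alternative route rests on a false identification. The finitary monad underlying the term presheaf is \emph{not} the free-algebra monad for the first-order signature $|S|$; binders genuinely change the variable context, so e.g.\ for the $λ$-calculus the free $|S|$-algebra monad has $F(∅)=∅$ while the term monad has plenty of closed terms. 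Since you explicitly fall back on the direct induction when the reduction fails, this does not invalidate the proof, but the alternative as stated should be dropped or repaired.
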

  \begin{proof}
    Roughly, closed terms are isomorphic to terms in two free
    variables that use neither the first, nor the second.
  \end{proof}

  Let us conclude this subsection with the following observation, that
  for a wide class of signatures all models are in fact well behaved.
  \begin{prop}
    If the initial object ${\DB'}$ of $Φ_S\Mon$ has at least one closed term
    (i.e., ${\DB'}(∅) ≠ ∅$), then $Φ_S\Mon_{\its} = Φ_S\Mon$.
  \end{prop}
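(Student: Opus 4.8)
The plan is to show that every object of $Φ_S\Mon$ is well-behaved, i.e.\ that its underlying functor $𝔽 → 𝐒𝐞𝐭$ preserves the pullback square with corners $∅$, $m$, $n$, $m+n$ (with coprojections $i_m ∶ m → m+n$ and $i_n ∶ n → m+n$), for all finite sets $m, n$. The hypothesis $\DB(∅) ≠ ∅$ enters only through the following reduction: since $\DB$ is initial in $Φ_S\Mon$, every object $X$ receives a morphism $\DB → X$, whose component at $∅$ is a function $\DB(∅) → X(∅)$, so $\DB(∅) ≠ ∅$ forces $X(∅) ≠ ∅$; fix $c ∈ X(∅)$. It then remains to prove that every $Φ_S$-monoid $X$ with $X(∅) ≠ ∅$ is well-behaved. (Some structure beyond the bare functor is genuinely needed here: a functor $𝔽 → 𝐒𝐞𝐭$ sending $∅$ to a non-empty set need not be well-behaved.)

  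To carry this out I would pass to the underlying finitary monad $T$ on $𝐒𝐞𝐭$ (Proposition~\ref{prop:finsets} and the ensuing identification of monoids in $[𝔽,𝐒𝐞𝐭]$ with finitary monads); since $T$ restricts to $X$ on $𝔽$ and the square lies in $𝔽$, it suffices to show that $T$ preserves it. Write $η$ for the unit of $T$, $f^{\dagger}$ for the Kleisli extension of $f ∶ A → T(B)$, $!_p ∶ ∅ → p$ for the unique such map, and $\mathrm{const}_d$ for a constant map at $d$. The canonical comparison map is
  \[
    φ ∶ T(∅) → T(m) ×_{T(m+n)} T(n), \qquad φ(u) = \bigl(T(!_m)(u),\ T(!_n)(u)\bigr),
  \]
  and the goal is to prove it is a bijection. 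I would use freely the monad laws $f^{\dagger} ∘ η = f$, $η^{\dagger} = \id$, $g^{\dagger} ∘ f^{\dagger} = (g^{\dagger} ∘ f)^{\dagger}$, and their consequences $g^{\dagger} ∘ T(f) = (g ∘ f)^{\dagger}$ and $T(h) ∘ k^{\dagger} = (T(h) ∘ k)^{\dagger}$.

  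For injectivity, $T(!_m)$ is split monic with retraction $(\mathrm{const}_c)^{\dagger}$, where $\mathrm{const}_c ∶ m → T(∅)$: indeed $(\mathrm{const}_c)^{\dagger} ∘ T(!_m) = (\mathrm{const}_c ∘ !_m)^{\dagger} = η_∅^{\dagger} = \id$, using that $\mathrm{const}_c ∘ !_m = η_∅$ by uniqueness of maps out of $∅$; hence the first component of $φ$ is already injective. For surjectivity, given $(a,b)$ in the pullback write $t$ for the common value $T(i_m)(a) = T(i_n)(b)$ and set $\bar u ≔ (\mathrm{const}_c)^{\dagger}(t)$ where now $\mathrm{const}_c ∶ m+n → T(∅)$. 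To prove $T(!_m)(\bar u) = a$, I would evaluate the copairing $g ≔ [\,η_m,\ \mathrm{const}_{T(!_m)(c)}\,] ∶ m+n → T(m)$ at $t$ via both descriptions: $g^{\dagger}(t) = (g ∘ i_m)^{\dagger}(a) = η_m^{\dagger}(a) = a$, and $g^{\dagger}(t) = (g ∘ i_n)^{\dagger}(b) = (\mathrm{const}_{T(!_m)(c)})^{\dagger}(b) = T(!_m)\bigl((\mathrm{const}_c)^{\dagger}(b)\bigr)$, the last equality because $\mathrm{const}_{T(!_m)(c)} = T(!_m) ∘ \mathrm{const}_c$. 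Finally I would identify $(\mathrm{const}_c)^{\dagger}(b)$ with $\bar u$ by the symmetric computation: with $h ≔ [\,η_n,\ \mathrm{const}_{T(!_n)(c)}\,] ∶ m+n → T(n)$ one has $b = h^{\dagger}(t)$, and $(\mathrm{const}_c)^{\dagger} ∘ h ∶ m+n → T(∅)$ collapses to $\mathrm{const}_c$ (on the $n$-summand by $f^{\dagger} ∘ η = f$, on the $m$-summand by the retraction identity above), so $(\mathrm{const}_c)^{\dagger}(b) = ((\mathrm{const}_c)^{\dagger} ∘ h)^{\dagger}(t) = (\mathrm{const}_c)^{\dagger}(t) = \bar u$. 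Thus $a = T(!_m)(\bar u)$, symmetrically $b = T(!_n)(\bar u)$, and $φ(\bar u) = (a,b)$.

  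I expect the only delicate part to be the bookkeeping in these Kleisli computations — in particular the verification that composites like $(\mathrm{const}_c)^{\dagger} ∘ h$ really collapse to a constant map, which rests on the unit laws together with the retraction identity. The remaining ingredients — the reduction to $X(∅) ≠ ∅$ via initiality, the passage to the finitary monad, and the degenerate cases where $m$ or $n$ is empty (for which the square contains an identity map and is preserved trivially) — should all be routine.
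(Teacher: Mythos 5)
Your proof is correct, and it shares only its first step with the paper's: both arguments use initiality of $\DB$ to transport $\DB(∅) ≠ ∅$ to $T(∅) ≠ ∅$ for an arbitrary model $T$. After that the paper simply invokes \cite[Proposition~VII.7]{CoproductsMonadsSet} — a monad on $𝐒𝐞𝐭$ either preserves the initial object or is well-behaved — whereas you reprove the relevant instance of that dichotomy from scratch: a direct verification that the comparison map $T(∅) → T(m) ×_{T(m+n)} T(n)$ is a bijection, using a chosen $c ∈ T(∅)$ to build the retraction $(\mathrm{const}_c)^{\dagger}$ of $T(!_m)$ (injectivity) and the copairings $[η, \mathrm{const}]$ to recover a common preimage $\bar u = (\mathrm{const}_c)^{\dagger}(t)$ (surjectivity). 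I checked the Kleisli computations, including the collapse of $(\mathrm{const}_c)^{\dagger} ∘ h$ to $\mathrm{const}_c$, and they go through; note also that your argument never actually uses finitarity of $T$, so the passage through $[𝐒𝐞𝐭,𝐒𝐞𝐭]_f$ is harmless but not essential. What the paper's route buys is brevity and a stronger imported statement (preservation of all empty intersections, for any monad); what yours buys is self-containedness and an explicit view of the mechanism — the closed term $c$ serves as a dummy value making the coproduct injections split after applying $T$, which is precisely why non-emptiness at $∅$ forces well-behavedness.
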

  \begin{proof}
    If $T$ is a $Φ_S$-monoid, then by initiality there is a morphism
    ${\DB'} → T$, and in particular a map ${\DB'}(∅) → T(∅)$. Since ${\DB'}(∅)$ is
    non-empty by assumption, $T(∅)$ cannot be empty. The result then
    follows from~\cite[Proposition~VII.7]{CoproductsMonadsSet}: a
    monad $T$ on $𝐒𝐞𝐭$ either preserves the initial object, or is
    intersectional.
  \end{proof}
  \begin{rem}
    The binding signatures for which the initial model has at least
    one closed term are those specifying at least a constant or an
    operation binding (at least) one variable in each argument.
  \end{rem}
  \subsection{Trimming down De Bruijn monads}\label{ss:wbdb}
  Let us now turn to well-behaved De Bruijn algebras.  Here
  well-behavedness is about finitariness.  However, it may not be
  immediately clear how to define finitariness of a De Bruijn monad.

  \begin{defi}\label{def:support}
    A De Bruijn monad $(X,s,v)$ is finitary iff each
    of its elements $x ∈ X$ has a (finite) support $Nₓ ∈ ℕ$, in the sense that
    for all $f∶ ℕ → ℕ$ fixing the first $Nₓ$
    numbers, the corresponding renaming $v∘f$ fixes $x$, i.e., $x[v∘f] = x$.
  \end{defi}
  \begin{exa}
    By Proposition~\ref{prop:initwb} below, the initial $S$-algebra is
    finitary, for any binding signature $S$.  For an example of infinitary De
    Bruijn monad, consider the greatest fixed point $νA. ℕ + Σ_S(A)$, for any $S$
    with at least one operation with more than one argument.  E.g., if
    $S$ has an operation of binding arity $(0,0)$, like application in
    $λ$-calculus, then the term $v(0)\ (v(1)\ (v(2)\ …))$ does not
    have finite support.
  \end{exa}
  \begin{prop}
    \label{prop:support-eqs}
    Let $x∈ X$ be an element of a De Bruijn monad $(X,s,v)$.
    The following are equivalent:
    \begin{enumerate}
    \item $x$ has support $N$;
      \label{it-support1}
    \item given any pair of assignments $f₁,f₂∶ ℕ → X$ which coincide on the first $N$ numbers,
      $x[f₁] = x[f₂]$.
      \label{it-support2}
    \item for any assignment $f∶ ℕ → X$ fixing the first $N$ variables
      (in the sense that $f(n) = v(n)$ for any $n < N$), $x[f] = x$;
      \label{it-support3}
    \end{enumerate}
  \end{prop}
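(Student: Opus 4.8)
The plan is to prove the three conditions equivalent by a cycle of implications, say $(\ref{it-support1}) \Rightarrow (\ref{it-support2}) \Rightarrow (\ref{it-support3}) \Rightarrow (\ref{it-support1})$, exploiting the three substitution laws (associativity, left and right unitality) throughout. Recall that $x$ having support $N$ means: for every $f \colon ℕ → ℕ$ fixing the first $N$ numbers, $x[v∘f] = x$.

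For $(\ref{it-support1}) \Rightarrow (\ref{it-support2})$, suppose $x$ has support $N$ and let $f₁, f₂ \colon ℕ → X$ agree on $\{0,…,N-1\}$. The idea is to find a single renaming $h \colon ℕ → ℕ$ fixing the first $N$ numbers, together with an assignment $g \colon ℕ → X$, such that $f₁ = g∘h$ as functions $ℕ → X$ — concretely, $h$ should collapse all of $ℕ$ onto (a copy of) the first $N+1$ values in a way that loses no information needed to reconstruct $f₁$, while $g$ reads off $f₁$. Actually the cleanest route: pick $h$ to be any injection $ℕ → ℕ$ that is the identity on $\{0,…,N-1\}$ and sends everything $≥ N$ into $\{N, N+1, …\}$ bijectively; then define $g$ by $g(h(n)) = f₁(n)$ on the image and arbitrarily elsewhere. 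Then $x[f₁] = x[g∘h] = x[v∘h][g] = x[g]$ using associativity and left unitality ($v(h(n))[g] = g(h(n))$), where $x[v∘h] = x$ by the support hypothesis. The same $h$ and an analogous $g'$ with $g'∘h = f₂$ give $x[f₂] = x[g']$; and since $f₁, f₂$ agree on the first $N$ numbers one arranges $g = g'$ on the relevant image, so $x[f₁] = x[g] = x[g'] = x[f₂]$. One must be a little careful choosing $h$ and $g$ so that a common $g$ works; the honest move is to take $h$ identity on $\{0,…,N-1\}$ and constant equal to some fixed $k ≥ N$ elsewhere — wait, that is not a renaming fixing the first $N$ but it is a map $ℕ → ℕ$ fixing the first $N$, which is all Definition~\ref{def:support} requires. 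With that $h$, $x[v∘h] = x$, and $g∘h = f₁$ forces only $g(j) = f₁(j)$ for $j < N$ and $g(k) = f₁(n)$ for all $n ≥ N$ — consistent only if $f₁$ is eventually constant, which it need not be. So the injective choice of $h$ is the right one, and then $g$ and $g'$ can be taken to literally agree on $\mathrm{im}(h)$, which is where the value of $x[g]$ is determined.

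The implication $(\ref{it-support2}) \Rightarrow (\ref{it-support3})$ is immediate: if $f$ fixes the first $N$ variables then $f$ and $v$ agree on $\{0,…,N-1\}$, so by $(\ref{it-support2})$, $x[f] = x[v] = x$ by right unitality. And $(\ref{it-support3}) \Rightarrow (\ref{it-support1})$ is also essentially immediate: given $f \colon ℕ → ℕ$ fixing the first $N$ numbers, the assignment $v∘f \colon ℕ → X$ satisfies $(v∘f)(n) = v(f(n)) = v(n)$ for $n < N$, i.e.\ it fixes the first $N$ variables in the sense of $(\ref{it-support3})$, hence $x[v∘f] = x$, which is exactly the support condition.

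The only real obstacle is the first implication, specifically engineering the renaming $h$ and the assignments $g, g'$ so that associativity plus left unitality genuinely reduce $x[f_i]$ to $x[g]$ with a \emph{common} $g$. I expect to handle this by choosing $h$ to be any injection $ℕ → ℕ$ that restricts to the identity on the first $N$ numbers, noting $x[v∘h] = x$, and then defining $g$ on $\mathrm{im}(h)$ by $g(h(n)) = f₁(n)$ and $g'$ on $\mathrm{im}(h)$ by $g'(h(n)) = f₂(n)$; since $f₁, f₂$ agree below $N$ and $h$ is the identity there, $g$ and $g'$ agree on $h(\{0,…,N-1\})$, and on the rest of $\mathrm{im}(h)$ we are free to redefine one of them to match the other outside the positions that matter — in fact the slick version is to observe that $x[g]$ depends only on $g$ restricted to $\mathrm{im}(h)$ in a way that only the first $N$ slots survive, so one can even invoke $(\ref{it-support3})$-style reasoning recursively; but the concrete injection argument above suffices and is the version I would write out.
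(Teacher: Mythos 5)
Your cycle of implications matches the paper's, and the steps $(\ref{it-support2}) \Rightarrow (\ref{it-support3})$ and $(\ref{it-support3}) \Rightarrow (\ref{it-support1})$ are correct and essentially identical to the paper's. The problem is the step $(\ref{it-support1}) \Rightarrow (\ref{it-support2})$, where you correctly identify the difficulty but do not resolve it.

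With a \emph{single} injection $h$ fixing the first $N$ numbers, the identities $g\circ h = f_1$ and $g'\circ h = f_2$ force $g(h(n)) = f_1(n)$ and $g'(h(n)) = f_2(n)$ for \emph{all} $n$, including $n \geq N$, where $f_1$ and $f_2$ may disagree. So $g$ and $g'$ are forced to differ on $h(\{N, N+1, \dots\}) \subseteq \mathrm{im}(h)$, and your proposed fix --- ``redefine one of them to match the other outside the positions that matter'' --- is not available: you may only redefine $g'$ outside $\mathrm{im}(h)$ without breaking $g'\circ h = f_2$, and the positions inside $\mathrm{im}(h)$ above $N$ do enter the value of $x[g']$ a priori. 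The closing remark that ``$x[g]$ depends only on $g$ in a way that only the first $N$ slots survive'' is exactly the statement $(\ref{it-support2})$ you are trying to prove, so invoking it is circular. The argument as written therefore only establishes $x[f_1] = x[g]$ and $x[f_2] = x[g']$ for two assignments $g \neq g'$, which gains nothing.

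The missing idea, which is what the paper does, is to use \emph{two different} renamings rather than one: fix a bijection $s\colon ℕ \to ℕ \amalg ℕ$ and let $h_1, h_2\colon ℕ \to ℕ$ both be the identity below $N$ but send $n \geq N$ to $s^{-1}(in_1(n))$ and $s^{-1}(in_2(n))$ respectively, so that $h_1$ and $h_2$ have disjoint images above $N$. Then a \emph{single} assignment $u$ (equal to $f_1 = f_2$ below $N$, and to $[f_1,f_2](s(n))$ at $n \geq N$) satisfies $u\circ h_1 = f_1$ and $u\circ h_2 = f_2$ simultaneously, because the two sets of constraints live on disjoint parts of $ℕ$. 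Associativity, left unitality, and the support hypothesis applied to each $h_i$ then give $x[f_i] = x[v\circ h_i][u] = x[u]$ for both $i$, hence $x[f_1] = x[f_2]$. Your computation $x[f] = x[g\circ h] = x[v\circ h][g]$ is the right mechanism; it is the combinatorial setup (one $u$, two $h_i$, rather than one $h$, two $g$'s) that needs to change.
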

  \begin{proof}

    \eqref{it-support1} $ ⇒ $ \eqref{it-support2} Suppose given two
    assignments $f₁,f₂∶ ℕ → X$ such that $f₁(n)=f₂(n)$ for any $n<
    N$. Let us fix some bijection $s∶ ℕ → ℕ + ℕ$ such
    that$s(n)=in₁(n)$ for all $n < N$.  For $i ∈ \{1,2\}$, let
    $hᵢ∶ ℕ → ℕ$ fix the first $N$ numbers and map any $n ≥ N$ to
    $s^{-1}(inᵢ(n))$, where $in₁,in₂∶ ℕ → ℕ + ℕ$ are the coproduct
    injections. Furthermore, let $u∶ ℕ → X$ map any $n < N$ to $f₁(n)=f₂(n)$
    and any $n≥N$ to $[f₁,f₂](s(n))$, where $[f₁,f₂]∶ ℕ + ℕ → X$
    denotes the copairing of $f₁$ and $f₂$. We then have $fᵢ = u ∘ hᵢ$, for
    $i ∈ \{1, 2\}$. Indeed, for each $i$:
    \begin{itemize}
    \item For any $n<N$, we have $u(hᵢ(n))=u(n)=fᵢ(n)$ by definition.
    \item For any $n≥N$, we have $hᵢ(n)=s^{-1}(inᵢ(n))$, hence
      $s(hᵢ(n))=inᵢ(n)$. But we know that $hᵢ(n)≥N$, because otherwise
      we would have $s(hᵢ(n))=in₁(k)$ for some $k < N$, which does not
      hold. We thus obtain
      $u(hᵢ(n))=[f₁,f₂](s(hᵢ(n))=[f₁,f₂](inᵢ(n))=fᵢ(n).$
    \end{itemize}
    Thus, $x[fᵢ] = x[u∘hᵢ]=x[v∘hᵢ][u]$. Since $x$ has support $N$,
    $x[v∘hᵢ] = x$. Hence, $x[f₁] = x[u]=x[f₂]$.
    % Suppose given an assignment $f∶ ℕ → X$ fixing the first $N$ variables.
    % Given $k ∈ ℕ$, let $gₖ∶ ℕ → ℕ$  be the renaming fixing $n < N$ and mapping any $n ≥ N$ to
    % $k$. Since $x$ has support $N$, we have $x = x[v∘g_k]$. Moreover,
    % $x[f]=x[v∘g_N][f] = x[f∘g_N]$. Here, the function $f∘g_N∶ ℕ → X$ fixes the first
    % $N$ numbers and maps any other number to $f(N)$.
    % It is equal to the composition $f'∘g_N$, where $f'∶ ℕ → X$ fixes
    % every number, except $N$ which is mapped to $f(N)$.
    % Thus, $x[f∘g_N]=x[f'∘g_N]=x[v∘g_N][f']=x[f']$. But again, $x=x[v∘g_{N+1}]$, so
    % $x[f']=x[v∘g_{N+1}][f']=x[f'∘g_{N+1}]$. Now, $f'∘g_{N+1} = v∘g_{N+1}$.
    % Thus, $x[f'∘g_{N+1}] = x[v∘ g_{N+1}]=x$ 
    % Finally, by chaining all the equalities, $x[f]=x$.

    \eqref{it-support2} $ ⇒ $ \eqref{it-support3} 
    Suppose given an assignment $f∶ ℕ → X$ fixing the first $N$ variables.
    Then, $f$ coincides with $v$ on the first $N$ variables. Thus,
    $x[f]=x[v]=x$.

    \eqref{it-support3} $ ⇒ $ \eqref{it-support1} 
    Let $f∶ ℕ → ℕ$ fixing the first $N$ numbers. Then, $v∘f$, as an assignment,
    also does. Thus, $x[v∘f] = x$.
  \end{proof}

  \begin{defi}
    \label{def:wb-db-alg}
    For any binding signature $S$, let $S\DBAlg_{\fin}$ denote the full
    subcategory spanning De Bruijn $S$-algebras whose underlying De
    Bruijn monad is finitary.
  \end{defi}

  \begin{prop}\label{prop:initwb}
    The subcategory $S\DBAlg_{\fin}$ includes the initial object.
  \end{prop}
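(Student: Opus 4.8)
The goal is to show that the initial De Bruijn $S$-algebra $\DB$ is finitary in the sense of Definition~\ref{def:support}, i.e., that every term $t \in \DB$ has a finite support $N_t \in ℕ$. Since $\DB$ is the initial $(ℕ + Σ_S)$-algebra, its carrier is the inductive datatype $μA.\,ℕ + Σ_S(A)$, so the natural approach is \emph{structural induction} on $t$, where we simultaneously keep track of a bound on the support. Concretely, I would prove by induction on $t$ that there exists $N_t$ such that for every $f∶ ℕ → ℕ$ fixing the first $N_t$ numbers, $t[v∘f] = t$. By Proposition~\ref{prop:support-eqs}, this is equivalent to the other two formulations of "having support $N_t$", which gives flexibility in the inductive step.

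\textbf{Plan of the induction.} For the variable case $t = v(n)$, take $N_t = n+1$: if $f$ fixes the first $n+1$ numbers then in particular $f(n) = n$, so $v(n)[v∘f] = (v∘f)(n) = v(f(n)) = v(n)$ by left unitality. For an operation $t = o_{\DB}(t_1,\dots,t_p)$ with $\ar(o) = (n_1,\dots,n_p)$, use the induction hypothesis to get supports $N_{t_i}$ for each $t_i$, and set $N_t = \max_i (N_{t_i} - n_i)$ (clamped at $0$). The key computation is the $\ar(o)$-binding condition~\eqref{eq:binding}: for $σ = v∘f$, we have $t[σ] = o_{\DB}(t_1[⇑^{n_1}σ],\dots,t_p[⇑^{n_p}σ])$, so it suffices to show $t_i[⇑^{n_i}(v∘f)] = t_i$ for each $i$. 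Here I would need the lemma that $⇑^{n}(v∘f)$ is again of the form $v∘f'$ for a renaming $f'∶ ℕ → ℕ$ that fixes the first $N_{t_i}$ numbers whenever $f$ fixes the first $N_t + n$ of them; this follows by unfolding Definition~\ref{def:prime}, since $⇑(v∘f)$ sends $0 ↦ v(0)$ and $m+1 ↦ (v∘f)(m)[↑] = v(f(m))[n ↦ v(n+1)] = v(f(m)+1)$ by left unitality, i.e., $⇑(v∘f) = v∘f'$ where $f'(0)=0$, $f'(m+1) = f(m)+1$, and $f'$ fixes the first $k+1$ numbers if $f$ fixes the first $k$. Iterating gives the claim for $⇑^{n_i}$, and then the induction hypothesis for $t_i$ closes the case.

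\textbf{Main obstacle.} The genuinely delicate part is the bookkeeping around how lifting $⇑$ interacts with the "fixes the first $k$ numbers" condition on renamings — making precise that $⇑$ shifts the relevant support bound by exactly one and that the clamping by $\max$ over $i$ of $N_{t_i} - n_i$ is the right quantity. This is entirely elementary but must be done carefully because the binding numbers $n_i$ appear both in the exponent of $⇑$ and in the arithmetic on supports; getting an off-by-one wrong here would break the induction. Everything else — the variable base case and the reduction via the binding condition — is routine once Proposition~\ref{prop:support-eqs} is in hand. I would therefore isolate the statement "if $f$ fixes the first $k$ numbers then $⇑^{n}(v∘f) = v∘f'$ for some $f'$ fixing the first $k+n$ numbers" as an auxiliary lemma proved by induction on $n$, and then the main proof is a clean structural induction on $t$.
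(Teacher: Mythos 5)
Your proposal is correct and takes essentially the same route as the paper, whose entire proof is the one-liner ``define by induction the greatest free variable $N$ of a term; then the term has support $N+1$'' --- your structural induction with $N_t=\max_i(N_{t_i}-n_i)$ and the auxiliary lemma on $⇑^{n}(v∘f)$ is precisely the bookkeeping that one-liner leaves implicit. (Minor slip: your first phrasing of the auxiliary lemma, ``\dots whenever $f$ fixes the first $N_t+n$ of them,'' is off, but the version you state at the end --- if $f$ fixes the first $k$ numbers then the induced $f'$ fixes the first $k+n$ --- is the correct one and is what your $\max$ formula needs.)
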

  \begin{proof}
   One can define by induction the greatest free variable $N$ of a term $x$ (or
   $0$ if $x$ is closed). Then, $x$ has support $N+1$.
  \end{proof}
  \subsection{Bridging the gap}
  We may at last state the relationship between initial-algebra
  semantics of binding signatures in presheaves and in De Bruijn monads:
  \begin{thm} \label{thm:gap} Consider any binding signature $S$.
    The subcategories $Φ_S\Mon_{\its}$ and $S\DBAlg_{\fin}$ are
    equivalent.
  \end{thm}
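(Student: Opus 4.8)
The plan is to construct an equivalence of categories by exhibiting functors in both directions, built on the structural correspondences already hinted at in the related-work discussion: the action-based presentation (finite supports) on the De Bruijn side and the sheaf-based presentation (pullback preservation) on the presheaf side. Concretely, given a well-behaved finitary monad $T$ on $\mathbf{Set}$ (equivalently a well-behaved functor $\mathbb{F}\to\mathbf{Set}$ with monoid structure, i.e.\ an object of $Φ_S\Mon_{\wb}$), I would produce a De Bruijn monad by evaluating at a countable set of variables, taking $X \coloneqq \colim_n T(n) = T(ℕ)$ (the filtered colimit along the inclusions $n \hookrightarrow n+1$), which by finitariness is $T$ applied to the countable set $ℕ$. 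The variables map $ℕ \to X$ is the unit of $T$, and substitution $X \times X^ℕ \to X$ is the Kleisli extension: an assignment $f∶ ℕ \to T(ℕ)$ is a Kleisli map $ℕ \to T(ℕ)$, and $x[f] \coloneqq \mu_{ℕ}(T(f)(x))$. The De Bruijn monad laws (associativity, left and right unitality) are then exactly the monad laws for $T$. One then checks that the $Φ_S$-algebra structure, transported through derivation, gives operations satisfying the $\ar(o)$-binding conditions — this is where the strength/derivation bookkeeping of~§\ref{ss:wbfpt} gets used, since $⇑$ on the De Bruijn side corresponds to the derivative and its strength on the presheaf side.

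In the other direction, given a finitary De Bruijn $S$-algebra $(X,s,v)$, I would build a presheaf $T∶ \mathbb{F} \to \mathbf{Set}$ by setting $T(n) \coloneqq \{ x ∈ X \mid x \text{ has support } n \}$, i.e.\ the elements fixed by all renamings fixing the first $n$ variables; functoriality in $\mathbb{F}$ comes from renaming, $T(\iota)(x) = x[v \circ \iota]$, using Proposition~\ref{prop:support-eqs} to see that this lands in the right support class and that distinct injections with the same image agree (that is precisely well-behavedness / pullback preservation, matching the pullback square in~§\ref{ss:wbfpt}). Finitariness of the De Bruijn monad guarantees $X = \bigcup_n T(n)$, so $T$ is finitary and $T(ℕ) \cong X$. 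Monoid structure on $T$ is induced by $s$ restricted appropriately, and the $Φ_S$-algebra structure comes from the operations $o_X$, with the $a$-binding condition translating into the pentagon of Definition~\ref{def:F-monoid} via the correspondence between $⇑^{n}$ and the $n$-fold derivative with its pointed strength.

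The two constructions should be mutually inverse up to natural isomorphism, essentially because $T \mapsto T(ℕ)$ and $X \mapsto (n \mapsto \{x : \mathrm{supp}(x) \le n\})$ undo each other: starting from a well-behaved finitary $T$, the support-$n$ elements of $T(ℕ)$ are exactly the image of $T(n)$ by pullback preservation; starting from $X$, evaluating the reconstructed presheaf at $ℕ$ recovers $X$ by finitariness. The main remaining work is functoriality of both assignments on morphisms (a morphism of $Φ_S$-monoids restricts to one of De Bruijn $S$-algebras and vice versa, which is mostly routine once the object-level correspondence is set up) and naturality of the unit and counit of the equivalence.

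The main obstacle I anticipate is the bookkeeping around derivation, pointed strength, and lifting: one must verify carefully that the $\ar(o)$-binding condition \eqref{eq:binding}, phrased via the iterated lifting $⇑^{n}$, is equivalent under the equivalence to the compatibility pentagon of Definition~\ref{def:F-monoid}, phrased via the pointed strength of the iterated derivative $(-)^{(n)}$. This requires matching $(⇑σ)$ — which freshens one variable and shifts — with the derivative's strength $G(F(X)+1) \to G(F(X+1))$, and then iterating; getting the indices and the "fix the first $n$ variables" conventions to line up with "$X(n) = X(n+1)$ shifted by one" is the delicate point. Everything else (monad laws $\leftrightarrow$ De Bruijn laws, well-behavedness $\leftrightarrow$ pullback preservation, finitariness $\leftrightarrow$ finite support) is a clean dictionary, but the binding-condition/pentagon equivalence is where a genuine computation is unavoidable, and I would isolate it as a lemma proved by induction on the binding number $n$.
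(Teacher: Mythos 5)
Your plan is viable, but it is organised quite differently from the paper's proof. The paper does not build the two functors by hand: it factors both sides through a common intermediate category, $𝐌𝐨𝐧[𝐒𝐞𝐭,𝐒𝐞𝐭]_{f,\wb₀}$ of monoids among finitary endofunctors of $𝐒𝐞𝐭$ preserving empty intersections, obtaining a chain $𝐃𝐁𝐌𝐧𝐝_{\wb} ≃ 𝐌𝐨𝐧[𝐒𝐞𝐭,𝐒𝐞𝐭]_{f,\wb₀} ≃ 𝐌𝐨𝐧[𝔽,𝐒𝐞𝐭]_{\wb}$. Your ``evaluate at $ℕ$'' direction is literally the paper's functor, but the inverse is obtained abstractly (relative monads on the full subcategory on $ℕ$, Cauchy completion, $ℵ₁$-accessibility) rather than via your explicit support filtration $n ↦ \{x : x \text{ has support } n\}$ --- your filtration is the concrete incarnation of that abstract equivalence. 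For the signature part, the paper avoids your induction on binding numbers by interpreting $S$ uniformly in all the relevant subcategories of $[𝐒𝐞𝐭,𝐒𝐞𝐭]$ through a single functor $ℱ_S(F)(X) = ∑_{o}∏_i F(X+\ar(o)ᵢ)$ carrying a module structure (hence it relies on the module-based reading of~§\ref{s:module-based} rather than on the pentagon of Definition~\ref{def:F-monoid}), and then concludes by purely formal pullback-square arguments on categories. The ``genuine computation'' you correctly isolate is still there, but appears once, as the verification that the binding conditions~\eqref{eq:binding} match module morphisms $ℱ_S(T) → T$. Your route buys concreteness and independence from the accessibility machinery; the paper's buys a uniform treatment in which the well-behavedness restrictions and the signature structure are transported by formal arguments instead of checking a unit and counit by hand.

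One point you underestimate, and where your ``clean dictionary'' claim is too optimistic: the whole reason well-behavedness is needed is the value at the empty set. Evaluation at $ℕ$ forgets $T(∅)$ entirely, and the round trip can only recover it as the empty intersection of $T(1)$ with itself inside $T(2)$; dually, your reconstructed presheaf must be shown to preserve that empty pullback, i.e., that an element supported both on $\{0,…,m-1\}$ and, after shifting, on $\{m,…,m+n-1\}$ is closed --- a small but genuine argument in the style of Proposition~\ref{prop:support-eqs}, and your parenthetical ``distinct injections with the same image agree'' is not the right formulation of it. The paper isolates exactly this difficulty in the functor $𝐩𝐢𝐧$, which fails to be monoidal on general endofunctors and is only shown to behave on monoids because the unit forces $F(X) ≠ ∅$ for $X ≠ ∅$. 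Make sure this step appears explicitly in your write-up; without it both the well-behavedness of the reconstructed presheaf and the triangle identities of your equivalence are unjustified.
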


  \begin{rem}
    The moral of this is that, if one removes pathological objects from both
    $Φ_S\Mon$ and $S\DBAlg$, then one obtains equivalent categories,
    which both retain the initial object.  Thus, up to equivalence, the two approaches to
    initial-algebra semantics of binding signatures differ only
    marginally.

    Restricting attention to well-behaved objects, we may thus benefit
    from the strengths of both approaches. Typically, in De Bruijn
    monads, free variables need to be computed explicitly, while
    presheaves come with intrinsic scoping, as terms are indexed by
    sets of potential free variables. Conversely, in some settings,
    observational equivalence may relate programs with different sets
    of free variables~\cite{DBLP:books/daglib/0004377}.  In such
    cases, it is useful to have all terms collected in one single set.
    This needs to be computed (and involves non-trivial quotienting)
    in presheaves, while it is direct in De Bruijn monads.
  \end{rem}

  The remainder of this section is devoted to sketching the proof of
  Theorem~\ref{thm:gap}, and may be skipped on a first reading as
  it relies on the module-based interpretation of the binding
  conditions described later in~§\ref{s:module-based}.
%   A more detailed and principled proof will be provided in an
%   independent future work~\cite{RelativeMonadsRevisited}.
%   \ah{Cette phrase est cruciale et nest ptet pas optimale, en ce sens kelle peut deplaire a un referee; "pkoi ne pas inclure la bonne preuve dans ce papier???".
% Une solution simple consiste a enlever cette phrase, quitte a la remettre/peaufiner apres acceptation.}

  We start by proving that both De Bruijn monads and finitary monads
  are monoids in monoidal, full subcategories of $[𝐒𝐞𝐭,𝐒𝐞𝐭]$.  Let us first treat
  the easy case of finitary monads:
      \begin{lem}\label{lem:endos:fpt}
        The category $𝐌𝐨𝐧[𝔽,𝐒𝐞𝐭]$ of monoids in $[𝔽,𝐒𝐞𝐭]$ for the
        monoidal structure of Definition~\ref{def:monFSet}, is
        equivalent to the category $𝐌𝐨𝐧[𝐒𝐞𝐭,𝐒𝐞𝐭]_f$ of monoids in
        $[𝐒𝐞𝐭,𝐒𝐞𝐭]_f$.
      \end{lem}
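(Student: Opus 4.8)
The plan is to establish the equivalence $\mathbf{Mon}[𝔽,𝐒𝐞𝐭] \simeq \mathbf{Mon}[𝐒𝐞𝐭,𝐒𝐞𝐭]_f$ by lifting the equivalence $[𝔽,𝐒𝐞𝐭] \simeq [𝐒𝐞𝐭,𝐒𝐞𝐭]_f$ of Proposition~\ref{prop:finsets} to the level of monoids. The key observation is that this equivalence is not merely an equivalence of categories but a \emph{monoidal} one: by the very Definition of $(⊗,I)$ on $[𝔽,𝐒𝐞𝐭]$, the monoidal structure there was \emph{defined} by transporting the composition monoidal structure on $[𝐒𝐞𝐭,𝐒𝐞𝐭]_f$ along that equivalence. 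Hence the equivalence functor carries a canonical strong monoidal structure, and its pseudo-inverse does too.

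First I would recall the general principle that a (strong) monoidal equivalence $G∶ 𝐂 → 𝐃$ between monoidal categories induces an equivalence $\mathbf{Mon}(𝐂) → \mathbf{Mon}(𝐃)$ between their categories of monoids: a monoid $(M,m∶ M⊗M→M,e∶ I→M)$ in $𝐂$ is sent to $(G M, G M ⊗ G M \xrightarrow{\varphi} G(M⊗M) \xrightarrow{G m} G M, I \xrightarrow{\varphi_0} G I \xrightarrow{G e} G M)$, where $\varphi,\varphi_0$ are the structure maps of $G$; the monoid axioms transport because $\varphi$ is coherent, and morphisms of monoids go to morphisms of monoids since $G$ is a functor and $\varphi$ is natural. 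Faithfulness and fullness on monoid morphisms follow from the same properties of $G$ on underlying objects together with naturality of $\varphi$; essential surjectivity follows from that of $G$ together with the fact that any monoid structure on an object in the image can be transported back along the inverse structure maps.

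Then I would simply instantiate this with $G$ the restriction functor $[𝐒𝐞𝐭,𝐒𝐞𝐭]_f → [𝔽,𝐒𝐞𝐭]$ (or its inverse, in whichever direction is convenient), which is strong monoidal essentially by construction of $(⊗,I)$ on $[𝔽,𝐒𝐞𝐭]$: the coherence isomorphisms are identities (or the canonical comparison isomorphisms coming from the fact that restriction followed by left Kan extension is naturally isomorphic to the identity on finitary functors). Combined with the already-noted fact that monoids in $[𝐒𝐞𝐭,𝐒𝐞𝐭]_f$ under composition are exactly finitary monads on $𝐒𝐞𝐭$ (and monoids in $[𝔽,𝐒𝐞𝐭]$ are the objects of $𝐌𝐨𝐧[𝔽,𝐒𝐞𝐭]$ by definition), this yields the claimed equivalence.

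I expect the main obstacle to be bookkeeping rather than mathematics: one must be careful that the monoidal structure on $[𝔽,𝐒𝐞𝐭]$ really is \emph{defined} so that the equivalence is strictly (or at least strongly) monoidal, so that no extra coherence data needs to be checked by hand; and one must make sure the comparison isomorphism $\Lan_ℕ(F)\!\restriction_𝔽 \cong F$ interacts correctly with composition, i.e.\ that restriction preserves the tensor up to coherent isomorphism. Since the tensor on $[𝔽,𝐒𝐞𝐭]$ was \emph{transported} across the equivalence precisely to make this true, this step is formal, and the lemma follows. I would therefore present the proof as: the equivalence of Proposition~\ref{prop:finsets} is strong monoidal by construction of $(⊗,I)$, hence induces an equivalence on categories of monoids, which are $𝐌𝐨𝐧[𝔽,𝐒𝐞𝐭]$ and $𝐌𝐨𝐧[𝐒𝐞𝐭,𝐒𝐞𝐭]_f$ respectively.
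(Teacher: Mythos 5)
Your proposal is correct and follows exactly the paper's argument: the paper's proof is the one-liner ``By definition of the monoidal structure on $[𝔽,𝐒𝐞𝐭]$'', which is precisely your observation that the tensor on $[𝔽,𝐒𝐞𝐭]$ was transported across the equivalence of Proposition~\ref{prop:finsets}, making that equivalence strong monoidal and hence inducing an equivalence on categories of monoids. You have simply spelled out the standard transport-of-monoids argument that the paper leaves implicit.
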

      \begin{proof}
        By definition of the monoidal structure on $[𝔽,𝐒𝐞𝐭]$.
      \end{proof}
      Now for De Bruijn monads:
    \begin{defi}
      Let $[𝐒𝐞𝐭,𝐒𝐞𝐭]_{ℵ₁,\its₀}$ denote the full subcategory spanned by
      $ℵ₁$-accessible endofunctors which preserve empty intersections.
      \end{defi}

      \begin{lem}\label{lem:endos:db}
        Evaluation at $ℕ$ induces an equivalence between
        the category\linebreak{}$𝐌𝐨𝐧[𝐒𝐞𝐭,𝐒𝐞𝐭]_{ℵ₁,\its₀}$ of monoids in
        $[𝐒𝐞𝐭,𝐒𝐞𝐭]_{ℵ₁,\its₀}$ and the category $𝐃𝐁𝐌𝐧𝐝$ of De Bruijn monads.
      \end{lem}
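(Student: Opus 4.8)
The plan is to exhibit the equivalence by a chain of identifications, each already available or essentially routine. First I would recall from Corollary~\ref{cor:skew} that $𝐃𝐁𝐌𝐧𝐝$ is the category of monoids for the skew monoidal structure $X ⊗ Y = X × Y^ℕ$ on $𝐒𝐞𝐭$, whose unit is $ℕ$; equivalently, via the relative monad viewpoint, a De Bruijn monad is exactly a monad relative to $ℕ∶ 1 → 𝐒𝐞𝐭$. Now the functor $\Lan_ℕ∶ 𝐒𝐞𝐭 → [𝐒𝐞𝐭,𝐒𝐞𝐭]$ sends a set $X$ to the endofunctor $\Lan_ℕ(X)(Y) ≅ X × Y^ℕ$, and this is fully faithful onto its image, with a left-inverse given by evaluation at $ℕ$ (since $\Lan_ℕ(X)(ℕ) ≅ X × ℕ^ℕ$ — here one must be slightly careful, but the unit $ρ_X∶ X → X × ℕ^ℕ$ of the skew structure provides the splitting and the image is recovered up to iso). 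The standard theory of relative monads along a dense (in the appropriate relative sense) functor tells us that monads relative to $ℕ$ correspond to monoids in the functor category for the induced (skew) monoidal structure, i.e. to ordinary monads on $𝐒𝐞𝐭$ whose underlying endofunctor lies in the essential image of $\Lan_ℕ$ and is closed under the composition; this is precisely the content of~\cite[Theorem~5]{DBLP:journals/corr/AltenkirchCU14} read through the Kan extension. So the real work is to identify that essential image with $[𝐒𝐞𝐭,𝐒𝐞𝐭]_{ℵ₁,\wb₀}$.

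For that identification I would argue both inclusions. An endofunctor of the form $Y ↦ X × Y^ℕ$ is visibly $ℵ₁$-accessible (a countable product commutes with $ℵ₁$-filtered colimits), and it preserves the empty-intersection pullback $0 → 1 ← 1$ over $2$: computing $X × 0^ℕ = X × ∅ = ∅$ for the apex while $X × 1^ℕ ≅ X$ for the two legs and $X × 2^ℕ$ for the bottom corner, the square it produces is again a pullback (the two "leg" copies of $X$ sit inside $X × 2^ℕ$ as the constant-$0$ and constant-$1$ sections, intersecting in the empty set). Conversely, given $F ∈ [𝐒𝐞𝐭,𝐒𝐞𝐭]_{ℵ₁,\wb₀}$, I would show the canonical comparison $\Lan_ℕ(F(ℕ)) ⇒ F$ — whose component at $Y$ is induced by the map $F(ℕ) × Y^ℕ → F(Y)$ sending $(t, g)$ to $F(g)(t)$ — is an isomorphism. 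Surjectivity of each component uses $ℵ₁$-accessibility: every element of $F(Y)$ comes from some countable subset of $Y$, i.e. factors through some $F(g)$ with $g∶ ℕ → Y$. Injectivity is where empty-intersection preservation enters: if $F(g)(t) = F(g')(t)$ for $g, g'∶ ℕ → Y$ with $g ≠ g'$, then by covering $Y$ with the relevant pushout of copies of $ℕ$ along their common "agreement" and invoking preservation of the $\wb₀$ pullback (together with the Trnková-style preservation of non-empty intersections, Proposition~\ref{prop:trnkova}), one forces $t$ to lie in the image of the diagonal, whence $t$ and the reindexing data are pinned down. This is exactly the same bookkeeping as in the proof of Proposition~\ref{prop:support-eqs}, transported from elements of a De Bruijn monad to elements of $F$.

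I expect the main obstacle to be this last injectivity argument: making precise how preservation of that one $2$-element pullback, bootstrapped via preservation of all non-empty intersections, yields the full faithfulness of $\Lan_ℕ$ restricted to the subcategory — essentially a "finite-support" theorem for $\wb₀$-endofunctors. Once $\Lan_ℕ$ is established as an equivalence $𝐒𝐞𝐭 ≃ [𝐒𝐞𝐭,𝐒𝐞𝐭]_{ℵ₁,\wb₀}$ and is checked to be (op)lax monoidal for the skew structure on the source and the composition structure on the target — which is just the Kan-extension computation of Corollary~\ref{cor:skew} again — the passage to categories of monoids is formal, and evaluation at $ℕ$ is the inverse equivalence on monoids, giving the claimed $𝐌𝐨𝐧[𝐒𝐞𝐭,𝐒𝐞𝐭]_{ℵ₁,\wb₀} ≃ 𝐃𝐁𝐌𝐧𝐝$. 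One should also double-check that the monoidal product on $[𝐒𝐞𝐭,𝐒𝐞𝐭]_{ℵ₁,\wb₀}$ is well-defined, i.e. that this subcategory is closed under composition; this follows since the essential image under $\Lan_ℕ$ is closed under $⊗$ by the skew-monoidal structure, and $\Lan_ℕ$ is (weakly) monoidal.
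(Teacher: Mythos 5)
Your overall strategy (realise De Bruijn monads as monoids for the skew tensor, transport along a Kan extension, identify an essential image) has the right shape, but the central identification you propose is false, and the proof collapses there. The essential image of $\Lan_ℕ∶ 𝐒𝐞𝐭 = [1,𝐒𝐞𝐭] → [𝐒𝐞𝐭,𝐒𝐞𝐭]$ consists only of functors isomorphic to $Y ↦ X × Y^ℕ$, and this is much smaller than $[𝐒𝐞𝐭,𝐒𝐞𝐭]_{ℵ₁,\wb₀}$: the identity functor is finitary and preserves all pullbacks, hence lies in $[𝐒𝐞𝐭,𝐒𝐞𝐭]_{ℵ₁,\wb₀}$ --- and it must lie in the domain of the claimed equivalence, since evaluation at $ℕ$ sends the identity monad to the initial De Bruijn monad $ℕ$ --- yet $ℕ × Y^ℕ$ is not isomorphic to $Y$. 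Concretely, your injectivity step fails already for $F = \Id$: the comparison $F(ℕ) × Y^ℕ → F(Y)$, $(t,g) ↦ F(g)(t)$, is then the evaluation map $ℕ × Y^ℕ → Y$, which is nowhere near injective, and no amount of empty-intersection preservation will repair this. What is missing is the quotient identifying $(F(ρ)(t),g)$ with $(t, g∘ρ)$ for renamings $ρ∶ ℕ → ℕ$; that is, the comparison must be a coend $\int^{n} F(n) × Y^{n}$ over a non-trivial index category, not a Kan extension along the terminal category.

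This is exactly how the paper proceeds: a De Bruijn monad is first shown to be a monad relative to the full one-object subcategory $𝐁(ℕ) ↪ 𝐒𝐞𝐭$ (the substitution structure supplies the $ℕ^ℕ$-action for free), then, passing to the Cauchy completion $\bar{𝔽}⁺$ of $𝐁(ℕ)$ (non-empty finite ordinals together with $ℕ$), to a monad relative to the full dense inclusion $\bar{𝔽}⁺ ↪ 𝐒𝐞𝐭$; Kan extension along this inclusion is an equivalence onto the category $[𝐒𝐞𝐭,𝐒𝐞𝐭]_{ℵ₁,0}$ of $ℵ₁$-accessible endofunctors preserving the initial object. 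Note that this is still not $[𝐒𝐞𝐭,𝐒𝐞𝐭]_{ℵ₁,\wb₀}$: a further step, which your plan does not anticipate, is needed to pass from ``preserves $∅$'' to ``preserves empty intersections''. The obvious reindexing functor $𝐩𝐢𝐧$ is an equivalence of categories but is \emph{not} monoidal for composition (the paper exhibits the counterexample $F=0$, $G=1$); it only becomes compatible with composition when the inner functor is a monoid, because the unit forces $F(X) ≠ ∅$ for $X ≠ ∅$, and that dedicated lemma is what lets the equivalence descend to categories of monoids. So your plan needs (i) the index category $1$ replaced by $𝐁(ℕ)$ (or $\bar{𝔽}⁺$) and the comparison map replaced by the corresponding coend, and (ii) the $𝐩𝐢𝐧$ argument added to bridge $[𝐒𝐞𝐭,𝐒𝐞𝐭]_{ℵ₁,0}$ and $[𝐒𝐞𝐭,𝐒𝐞𝐭]_{ℵ₁,\wb₀}$ at the level of monoids.
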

      Note that any monad $T$ on $𝐒𝐞𝐭$ induces a De Bruijn monad $T(ℕ)$ by
      restricting the monadic bind and unit. This induces a functor whose
      restriction to
      $𝐌𝐨𝐧[𝐒𝐞𝐭,𝐒𝐞𝐭]_{ℵ₁,\its₀}$ underlies the above claimed equivalence.
      \begin{proof}[Proof sketch]
        De Bruijn monads are equivalently monads relative to the
        embedding\linebreak{}$𝐁[ℕ,ℕ] → 𝐒𝐞𝐭$ of the full subcategory on $ℕ ∈ 𝐒𝐞𝐭$.
        Now, presheaves on a category are equivalent to presheaves on
        its Cauchy completion, and we prove that the Cauchy completion
        of $𝐁[ℕ,ℕ]$, i.e., the category of idempotent maps $ℕ → ℕ$, is
        equivalent to the full subcategory $\bar{𝔽}⁺$ of $𝐒𝐞𝐭$ spanned
        by non-empty, finite ordinals and $ℕ$. De Bruijn monads are
        thus equivalent to monads relative to the embedding
        $J⁺∶ \bar{𝔽}⁺ ↪ 𝐒𝐞𝐭$.  Now, because the embedding is full,
        functors $\bar{𝔽}⁺ ↪ 𝐒𝐞𝐭$ are equivalent to functors
        $𝐒𝐞𝐭 → 𝐒𝐞𝐭$ which preserve the initial object and are
        $ℵ₁$-accessible. Letting $[𝐒𝐞𝐭,𝐒𝐞𝐭]_{ℵ₁,0}$ denote the
        category of such functors, we thus obtain an equivalence
        $$[𝐁[ℕ,ℕ],𝐒𝐞𝐭] ≃ [𝐒𝐞𝐭,𝐒𝐞𝐭]_{ℵ₁,0}\rlap{,}$$
        which is monoidal. We thus obtain an equivalence
        $$𝐃𝐁𝐌𝐧𝐝 ≃ 𝐌𝐨𝐧[𝐒𝐞𝐭,𝐒𝐞𝐭]_{ℵ₁,0}\rlap{.}$$
        It remains to make the link with $𝐌𝐨𝐧[𝐒𝐞𝐭,𝐒𝐞𝐭]_{ℵ₁,\its₀}$.
        At this point, there is a difficulty.
        Indeed, the functor
        $𝐩𝐢𝐧∶ [𝐒𝐞𝐭,𝐒𝐞𝐭]_{ℵ₁,\its₀} → [𝐒𝐞𝐭,𝐒𝐞𝐭]_{ℵ₁,0}$ defined by
        $$
        𝐩𝐢𝐧(F)(X) = \left \{ \begin{array}{rcl}
                                   ∅ & \mbox{if $X=∅$} \\
                                   F(X) & \mbox{otherwise} %
                                 \end{array}
                               \right .$$ is an equivalence preserving the
                               identity endofunctor, but it is
                               however not monoidal: e.g., letting
                               $F=∅$ and $G=1$, we have
        $$𝐩𝐢𝐧(G ∘ F)(1) = G (F (1)) = G (∅) = 1$$
        while
        $$(𝐩𝐢𝐧(G) ∘ 𝐩𝐢𝐧(F))(1) =
        𝐩𝐢𝐧(G)(F(1)) = 𝐩𝐢𝐧(G)(∅) = ∅.$$
        Still, we have:
        \begin{lem}
          For any monoid $F ∈  [𝐒𝐞𝐭,𝐒𝐞𝐭]_{ℵ₁,\its₀}$ and any $G∈ [𝐒𝐞𝐭,𝐒𝐞𝐭]_{ℵ₁,\its₀}$, we have
          $𝐩𝐢𝐧(G ∘ F) = 𝐩𝐢𝐧(G) ∘ 𝐩𝐢𝐧(F)$.
        \end{lem}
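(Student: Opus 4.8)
The plan is to show that the counterexample which obstructed monoidality of $𝐩𝐢𝐧$ cannot occur when $F$ carries a monoid structure. The key observation is that a monoid $F$ in $[𝐒𝐞𝐭,𝐒𝐞𝐭]_{ℵ₁,\wb₀}$ has a unit $η∶ \Id → F$, so in particular there is a natural map $X → F(X)$ for every set $X$, and hence $F(X) = ∅$ forces $X = ∅$. Contrapositively, if $X ≠ ∅$ then $F(X) ≠ ∅$. This is exactly the property that failed for the ad hoc pair $F = 0$, $G = 1$ used to break monoidality in general: there $F(1) = 0$, but a monoidal unit rules this out.

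First I would spell out the two cases of the claimed equality $𝐩𝐢𝐧(G ∘ F)(X) = (𝐩𝐢𝐧(G) ∘ 𝐩𝐢𝐧(F))(X)$ at an arbitrary set $X$. If $X = ∅$: the left-hand side is $∅$ by definition of $𝐩𝐢𝐧$; the right-hand side is $𝐩𝐢𝐧(G)(𝐩𝐢𝐧(F)(∅)) = 𝐩𝐢𝐧(G)(∅) = ∅$, so both sides agree. If $X ≠ ∅$: then $𝐩𝐢𝐧(F)(X) = F(X)$, and by the unit-map argument above $F(X) ≠ ∅$, so $𝐩𝐢𝐧(G)(F(X)) = G(F(X))$; meanwhile $𝐩𝐢𝐧(G ∘ F)(X) = (G ∘ F)(X) = G(F(X))$ as well. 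Hence the two functors agree on objects, and since on non-empty sets they agree with $G ∘ F$ — and on the empty set both are constantly $∅$ — they agree on morphisms too, so the equality holds as functors. (One should check the naturality squares match as well, but on each of the two regions $\{∅\}$ and $𝐒𝐞𝐭 ∖ \{∅\}$ the action on morphisms is forced, and $𝐒𝐞𝐭$ has no morphisms from a nonempty set to $∅$ mixing the two regions, so there is nothing further to verify.)

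The only subtlety — and the main thing to be careful about — is the inference ``$F$ a monoid $⇒$ $F$ preserves non-emptiness''. Here I would use that the monoid structure on $F ∈ [𝐒𝐞𝐭,𝐒𝐞𝐭]_{ℵ₁,\wb₀}$ is, via the composition monoidal structure, precisely an $ℵ₁$-accessible monad preserving the empty set and empty intersections; its unit $η∶ \Id ⇒ F$ gives for each nonempty $X$ an element $η_X$ of $F(X)$, witnessing $F(X) ≠ ∅$. Note this argument uses nothing about $G$ beyond its membership in $[𝐒𝐞𝐭,𝐒𝐞𝐭]_{ℵ₁,\wb₀}$, which is why the hypothesis is asymmetric. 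I expect no real obstacle: the entire content is the single remark that a pointed endofunctor reflects the initial object, and the rest is the routine two-case bookkeeping sketched above.
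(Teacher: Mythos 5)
Your proposal is correct and follows essentially the same route as the paper: a two-case analysis on whether $X$ is empty, combined with the observation that the unit component $X \to F(X)$ of the monoid structure forces $F(X) \neq \emptyset$ for nonempty $X$ (the paper even isolates exactly the same more general statement, that the equality holds whenever $F$ preserves non-emptiness). Your additional remarks on the action on morphisms are a harmless elaboration the paper leaves implicit.
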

        \begin{proof}
          We prove the more general fact that, if $F(X) ≠ ∅$ at
          any $X ≠ ∅$, then $𝐩𝐢𝐧(G) ∘ 𝐩𝐢𝐧(F) = 𝐩𝐢𝐧(G∘F)$:
          \begin{itemize}
          \item at $X=∅$, we have
            $$(𝐩𝐢𝐧(G) ∘ 𝐩𝐢𝐧(F))(∅) = ∅ = 𝐩𝐢𝐧 (G ∘ F)(∅)\rlap{,}$$
          \item at any $X ≠ ∅$, we have 
            $$(𝐩𝐢𝐧(G) ∘ 𝐩𝐢𝐧(F))(X) =
            𝐩𝐢𝐧(G)(F(X)) = G (F (X)) = 𝐩𝐢𝐧(G ∘ F)(X).$$
          \end{itemize}
          But at any $X ≠ ∅$, any monoid $F$ is equipped with a unit
          component $X → F(X)$, so $F(X) ≠ ∅$, hence the result.
        \end{proof}

          We thus obtain an equivalence
          $$𝐌𝐨𝐧[𝐒𝐞𝐭,𝐒𝐞𝐭]_{ℵ₁,\its₀} ≃ 𝐌𝐨𝐧[𝐒𝐞𝐭,𝐒𝐞𝐭]_{ℵ₁,0}$$
          over $𝐩𝐢𝐧$, hence the result.
      \end{proof}

      We then characterise well-behavedness in both contexts, as follows.
      \begin{defi}
        Let $[𝐒𝐞𝐭,𝐒𝐞𝐭]_{f,\its₀}$ denote the full subcategory of
        $[𝐒𝐞𝐭,𝐒𝐞𝐭]$ spanned by finitary endofunctors preserving empty
        intersections.
      \end{defi}
      \begin{lem}
        The following squares commute,
        \begin{center}
          \diag(0.6,0.4){%
            {𝐌𝐨𝐧[𝐒𝐞𝐭,𝐒𝐞𝐭]_{f,\its₀}} \& {𝐌𝐨𝐧[𝐒𝐞𝐭,𝐒𝐞𝐭]_{ℵ₁,\its₀}} \\
            𝐃𝐁𝐌𝐧𝐝_{\fin} \& 𝐃𝐁𝐌𝐧𝐝 %
          }{%
            (m-1-1) edge[labela={}] (m-1-2) %
            edge[labell={\scalebox{1.03}{$≃$}}] (m-2-1) %
            (m-2-1) edge[labelb={}] (m-2-2) %
            (m-1-2) edge[labelr={\scalebox{1.03}{$≃$}}] (m-2-2) %
          }\quad
          \diag(0.6,0.4){%
            {𝐌𝐨𝐧[𝐒𝐞𝐭,𝐒𝐞𝐭]_{f,\its₀}} \& {𝐌𝐨𝐧[𝐒𝐞𝐭,𝐒𝐞𝐭]_f} \\
            {𝐌𝐨𝐧[𝔽,𝐒𝐞𝐭]_{\its}} \& {𝐌𝐨𝐧[𝔽,𝐒𝐞𝐭]} %
          }{%
            (m-1-1) edge[labela={}] (m-1-2) %
            edge[labell={\scalebox{1.03}{$≃$}}] (m-2-1) %
            (m-2-1) edge[labelb={}] (m-2-2) %
            (m-1-2) edge[labelr={\scalebox{1.03}{$≃$}}] (m-2-2) %
          }        \end{center}
        and all vertical functors are equivalences.
      \end{lem}
      \begin{proof}
        For De Bruijn monads, well-behavedness is finitarity.
        For presheaves, well-behavedness is preservation of empty
        intersections.
      \end{proof}
      
      \begin{cor}
        We obtain a chain of equivalences
        \begin{equation}
        𝐃𝐁𝐌𝐧𝐝_{\fin} ≃ 𝐌𝐨𝐧[𝐒𝐞𝐭,𝐒𝐞𝐭]_{f,\its₀} ≃ 𝐌𝐨𝐧[𝔽,𝐒𝐞𝐭]_{\its}.\label{eq:wb:equivs}
      \end{equation}
    \end{cor}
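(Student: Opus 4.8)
The plan is to read off the chain by composing the two equivalences furnished by the two commuting-square lemmas immediately above, pivoting through the common category $𝐌𝐨𝐧[𝐒𝐞𝐭,𝐒𝐞𝐭]_{f,\wb₀}$ of well-behaved finitary monads on sets. No genuinely new argument is needed: everything has been prepared in Lemmas~\ref{lem:endos:fpt} and~\ref{lem:endos:db} and in the two square lemmas, and the Corollary merely packages them.

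In detail, for the left-hand equivalence $𝐃𝐁𝐌𝐧𝐝_{\wb} ≃ 𝐌𝐨𝐧[𝐒𝐞𝐭,𝐒𝐞𝐭]_{f,\wb₀}$ I would take the left vertical functor of the first square and invert it. That functor is the restriction, to well-behaved objects on both sides, of the equivalence $𝐌𝐨𝐧[𝐒𝐞𝐭,𝐒𝐞𝐭]_{ℵ₁,\wb₀} ≃ 𝐃𝐁𝐌𝐧𝐝$ induced by evaluation at $ℕ$ (Lemma~\ref{lem:endos:db}); it is again an equivalence because, along that equivalence, a monoid in $[𝐒𝐞𝐭,𝐒𝐞𝐭]_{ℵ₁,\wb₀}$ is finitary exactly when the associated De Bruijn monad has finite supports in the sense of Definition~\ref{def:support}. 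For the right-hand equivalence $𝐌𝐨𝐧[𝐒𝐞𝐭,𝐒𝐞𝐭]_{f,\wb₀} ≃ 𝐌𝐨𝐧[𝔽,𝐒𝐞𝐭]_{\wb}$ I would take the left vertical functor of the second square, namely the restriction of the equivalence $𝐌𝐨𝐧[𝐒𝐞𝐭,𝐒𝐞𝐭]_f ≃ 𝐌𝐨𝐧[𝔽,𝐒𝐞𝐭]$ of Lemma~\ref{lem:endos:fpt} to the full subcategories cut out by preservation of empty binary intersections — which on the presheaf side is precisely well-behavedness (Definition~\ref{d:wb-FSet}, using Proposition~\ref{prop:trnkova}) and on the finitary-monad side is membership in $[𝐒𝐞𝐭,𝐒𝐞𝐭]_{f,\wb₀}$.

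Composing these two equivalences yields the displayed chain~\eqref{eq:wb:equivs}; concretely the resulting equivalence $𝐃𝐁𝐌𝐧𝐝_{\wb} ≃ 𝐌𝐨𝐧[𝔽,𝐒𝐞𝐭]_{\wb}$ sends a finitary De Bruijn monad to (the restriction to $𝔽$ of) its unique extension to a finitary monad on $𝐒𝐞𝐭$ sending $∅$ to $∅$, and conversely. I expect no obstacle at this last step: the only delicate point — that the naive ``punctured'' comparison functor $𝐩𝐢𝐧$ relating endofunctors preserving the initial object to those merely sending $∅$ to $∅$ is not monoidal, yet becomes so as soon as one of its arguments carries a monoid structure — has already been dealt with inside the proof of Lemma~\ref{lem:endos:db}, so once that lemma and the two square lemmas are available the Corollary follows immediately.
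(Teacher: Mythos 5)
Your proposal is correct and follows exactly the route the paper intends: the corollary is just the composite of the two left vertical equivalences from the preceding commuting-square lemmas, pivoting through $𝐌𝐨𝐧[𝐒𝐞𝐭,𝐒𝐞𝐭]_{f,\wb₀}$, with the only substantive content (the identification of well-behavedness with finitarity on the De Bruijn side and with preservation of empty intersections on the presheaf side, and the non-monoidality of $𝐩𝐢𝐧$ being repaired by the monoid unit) already discharged in Lemmas~\ref{lem:endos:fpt} and~\ref{lem:endos:db} and the two square lemmas. The paper gives no separate proof for this corollary precisely because it is this immediate composition, so your packaging matches it.
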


      The point is now to prove that this chain of equivalences lifts
      to one between $S\DBAlg_{\fin}$ and $Φ_S\Mon_{\its}$, for any
      binding signature $S$.

      For this, we adopt the viewpoint of modules over
      monads~\cite{DBLP:journals/iandc/HirschowitzM10}
      (see~§\ref{s:module-based} below for the module-based interpretation of
      the binding conditions).  Let
      $S = (O,\ar)$ denote any binding signature.  We first introduce the
      analogue of the endofunctor $Φ_S$ induced by a binding signature
      (Definition~\ref{def:PhiS}) in the context of $[𝐒𝐞𝐭,𝐒𝐞𝐭]$:
      \begin{defi}
        We define $ℱ_S∶ [𝐒𝐞𝐭,𝐒𝐞𝐭] → [𝐒𝐞𝐭,𝐒𝐞𝐭]$ by
        $$ℱ_S(F)(X) = ∑_{o ∈ O} ∏_{i ∈ |\ar(o)|} F(X + \ar(o)ᵢ).$$
      \end{defi}
      We then show that this functor restricts to the relevant subcategories.
      \begin{prop}
        For any
        $𝐂 ∈ \{ [𝐒𝐞𝐭,𝐒𝐞𝐭]_{ℵ₁,\its₀}, [𝐒𝐞𝐭,𝐒𝐞𝐭]_f,[𝐒𝐞𝐭,𝐒𝐞𝐭]_{f,\its₀} \}$,
        the functor $ℱ_S$ restricts to a functor $ℱ^𝐂_S∶ 𝐂 → 𝐂$ making the following
        square commute.
        \begin{center}
          \diag{%
            𝐂 \& 𝐂 \\
            {[𝐒𝐞𝐭,𝐒𝐞𝐭]} \& {[𝐒𝐞𝐭,𝐒𝐞𝐭]}
          }{%
            (m-1-1) edge[labela={ℱ^𝐂_S}] (m-1-2) %
            edge[into,labell={}] (m-2-1) %
            (m-2-1) edge[labelb={ℱ_S}] (m-2-2) %
            (m-1-2) edge[into,labelr={}] (m-2-2) %
          }
        \end{center}
        Furthermore, for any monoid $T ∈ 𝐂$, $ℱ^𝐂_S(T)$ forms a $T$-module, with action given at
        each $o ∈ O$ with $\ar(o) = (n₁,…,nₚ)$ by
        $$∏_{i ∈ p} T(T(X) + nᵢ)
        → ∏_{i ∈ p} T(T(X + nᵢ))
        \xto{∏ᵢ μ} ∏_{i ∈ p} T(X + nᵢ).$$
      \end{prop}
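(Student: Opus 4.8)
The plan is to exploit that $ℱ_S$ is assembled from three elementary operations on endofunctors of $𝐒𝐞𝐭$: for a fixed finite set $k$, the \emph{$k$-th derivative} $F ↦ F^{(k)}$ given by $F^{(k)}(X) = F(X+k)$; finite products; and set-indexed coproducts. Indeed, writing $\ar(o) = (n_1^o,…,n_p^o)$, we have $ℱ_S(F) = ∑_{o ∈ O} ∏_{i} F^{(n_i^o)}$, so $ℱ_S$ is a set-indexed coproduct of finite products of derivatives. It thus suffices to show (i) that each of these three operations preserves $ℵ_1$-accessibility, finitariness, and preservation of empty intersections (which together cover the defining properties of the three subcategories $𝐂$), and (ii) that, for a monad $T$, each of them sends $T$-modules to $T$-modules, the derivative $T^{(k)}$ carrying a canonical such structure.

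For (i): accessibility and finitariness are immediate, since $X ↦ X+k$ preserves all colimits (so precomposition with it preserves filtered and $ℵ_1$-filtered colimits), finite products of functors preserving such colimits still do (finite limits commute with filtered colimits in $𝐒𝐞𝐭$), and coproducts commute with all colimits. Preservation of empty intersections is, by Proposition~\ref{prop:trnkova}, equivalent to preservation of all binary intersections; this property is stable under finite products and set-indexed coproducts (both computed pointwise, the relevant intersections splitting over the product, resp.\ coproduct) and under $F ↦ F^{(k)}$, because for subobjects $A,B \subseteq C$ one has $(A \cap B)+k = (A+k)\cap(B+k)$ inside $C+k$, whence $F^{(k)}(A\cap B) = F((A+k)\cap(B+k)) = F(A+k)\cap F(B+k) = F^{(k)}(A)\cap F^{(k)}(B)$. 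Hence $ℱ_S$ restricts to a functor $ℱ^𝐂_S ∶ 𝐂 → 𝐂$ for each of the three $𝐂$, and the square in the statement commutes since $ℱ^𝐂_S$ is by construction the restriction of $ℱ_S$.

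For (ii): first, the $k$-th derivative $T^{(k)}$ of a monad $(T,η,μ)$ carries the $T$-module structure whose action $T^{(k)}(T(X)) = T(T(X)+k) → T(X+k) = T^{(k)}(X)$ is $μ_{X+k}$ composed with $T$ applied to $[T(\mathrm{inl}), η_{X+k} ∘ \mathrm{inr}] ∶ T(X)+k → T(X+k)$; the module axioms follow from the monad laws by a routine chase. Next, the category of $T$-modules (valued in $[𝐒𝐞𝐭,𝐒𝐞𝐭]$) has finite products and set-indexed coproducts, created by the forgetful functor, because $M ↦ M ∘ T$ preserves them. Therefore $ℱ^𝐂_S(T) = ∑_{o} ∏_{i} T^{(n_i^o)}$ inherits a $T$-module structure; this structure, being a morphism of $[𝐒𝐞𝐭,𝐒𝐞𝐭]$ between objects of $𝐂$, lives in the full subcategory $𝐂$, which is closed under composition by the earlier lemmas. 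Finally, unfolding the assembled action on the summand indexed by an operation $o$ with $\ar(o) = (n_1,…,n_p)$ recovers exactly the displayed composite $∏_{i} T(T(X)+n_i) → ∏_{i} T(T(X+n_i)) \xto{∏_i μ} ∏_{i} T(X+n_i)$, the first map being the product of the $T$-images of the maps $[T(\mathrm{inl}), η ∘ \mathrm{inr}]$.

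The one point I expect to require genuine care is the interaction of the derivative with preservation of empty intersections: shifting by $k$ turns an empty intersection into a non-empty one, so one cannot reason directly and must first invoke Proposition~\ref{prop:trnkova} to restate the condition as preservation of all binary intersections, after which the identity $(A\cap B)+k = (A+k)\cap(B+k)$ disposes of the case. Everything else — the accessibility and finitariness bookkeeping, the verification of the module axioms for $T^{(k)}$, and the identification of the assembled action with the displayed composite — is routine.
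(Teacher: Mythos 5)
Your proposal is correct and follows essentially the same route as the paper's (much terser) proof: decompose $ℱ_S$ into derivatives $F ↦ F({-}+n)$, finite products, and set-indexed coproducts, check that each of these preserves finitariness, $ℵ₁$-accessibility, and preservation of empty intersections, and obtain the module structure in $[𝐒𝐞𝐭,𝐒𝐞𝐭]$ before transporting it along the full inclusions. Your extra care about the derivative and empty intersections (routing through Proposition~\ref{prop:trnkova} to upgrade to all binary intersections) is exactly the right way to fill in the detail the paper leaves implicit.
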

      \begin{proof}
        For the various restrictions of $ℱ_S$, one checks that each of
        the conditions (finitarity, $ℵ₁$-accessibility, preservation
        of empty intesections) is closed under coproducts, finite
        products, and shift, i.e., any $T ↦ T({-}+n)$.  For the second
        statement, it holds in $[𝐒𝐞𝐭,𝐒𝐞𝐭]$, and all considered
        subcategories are full.
      \end{proof}
      \begin{defi}
        For any
        $𝐂 ∈ \{ [𝐒𝐞𝐭,𝐒𝐞𝐭]_{ℵ₁,\its₀}, [𝐒𝐞𝐭,𝐒𝐞𝐭]_f,[𝐒𝐞𝐭,𝐒𝐞𝐭]_{f,\its₀} \}$,
        an \alert{$S^𝐂$-algebra} is a monoid $T$ in $𝐂$ equipped with a
        module morphism $ℱ^𝐂_S(T) → T$. A morphism of $S^𝐂$-algebras
        is a monoid morphism commuting with action. We denote by
        $S^𝐂\alg$ the category of $S^𝐂$-algebras.
      \end{defi}
      We next prove that in the case of $[𝐒𝐞𝐭,𝐒𝐞𝐭]_f$ and
      $[𝐒𝐞𝐭,𝐒𝐞𝐭]_{ℵ₁,\its₀}$ this interpretation of $S$ corresponds to
      its interpretations in presheaves and De Bruijn monads through
      the equivalences of Lemmas~\ref{lem:endos:fpt}
      and~\ref{lem:endos:db}.
      
      \begin{lem}
        We have commuting squares
        \begin{equation}
          \diag{%
            S^{[𝐒𝐞𝐭,𝐒𝐞𝐭]_f}\alg \& Φ_S\Mon \\
            {𝐌𝐨𝐧[𝐒𝐞𝐭,𝐒𝐞𝐭]_f} \& {𝐌𝐨𝐧[𝔽,𝐒𝐞𝐭]} %
          }{%
            (m-1-1) edge[labela={\scalebox{1.03}{$≃$}}] (m-1-2) %
            edge[labell={}] (m-2-1) %
            (m-2-1) edge[labelb={\scalebox{1.03}{$≃$}}] (m-2-2) %
            (m-1-2) edge[labelr={}] (m-2-2) %
          }
          \qquad
          \diag{%
            S^{[𝐒𝐞𝐭,𝐒𝐞𝐭]_{ℵ₁,\its₀}}\alg \& S\DBAlg \\
            {𝐌𝐨𝐧[𝐒𝐞𝐭,𝐒𝐞𝐭]_{ℵ₁,\its₀}} \& {𝐃𝐁𝐌𝐧𝐝} %
          }{%
            (m-1-1) edge[labela={\scalebox{1.03}{$≃$}}] (m-1-2) %
            edge[labell={}] (m-2-1) %
            (m-2-1) edge[labelb={\scalebox{1.03}{$≃$}}] (m-2-2) %
            (m-1-2) edge[labelr={}] (m-2-2) %
          }
          \label{eq:S:equivs}
        \end{equation}
      \end{lem}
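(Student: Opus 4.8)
The plan is to lean on the module-based reading of the binding conditions developed in §\ref{s:module-based}, which presents both $Φ_S\Mon$ and $S\DBAlg$ as categories of monoids equipped with a single module morphism into the carrier, and then to check that the functor $ℱ_S$ of the preceding proposition computes, through the base equivalences of Lemmas~\ref{lem:endos:fpt} and~\ref{lem:endos:db}, exactly the relevant source module.

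First I would treat the left-hand square. By the module-based interpretation -- in essence the translation between strength-based and module-based signatures going back to~\cite{DBLP:journals/iandc/HirschowitzM10} -- a $Φ_S$-monoid is the same datum as a monoid $T$ in $[𝔽,𝐒𝐞𝐭]$ together with a $T$-module morphism $∑_{o∈O}∏_i T^{(\ar(o)_i)} → T$, whose source is the evident module assembled from the derivative modules $T^{(n)}$; the $T$-module structure on $T^{(n)}$ is precisely the one encoded by the pointed strength, so that being a module morphism unfolds to the compatibility pentagon of Definition~\ref{def:F-monoid}. Through the equivalence $[𝔽,𝐒𝐞𝐭] ≃ [𝐒𝐞𝐭,𝐒𝐞𝐭]_f$ the derivative $X^{(n)}$ is sent to $F({-}+n)$, so this source module is carried to $ℱ^{[𝐒𝐞𝐭,𝐒𝐞𝐭]_f}_S(T)$ together with exactly the module structure recorded in the preceding proposition. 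Hence the two notions of structure on $T$ coincide; and since a morphism of $S^{[𝐒𝐞𝐭,𝐒𝐞𝐭]_f}$-algebras and a morphism of $Φ_S$-monoids are both a monoid morphism commuting with the action, the top functor is an equivalence. The square commutes on the nose, because the equivalence of Lemma~\ref{lem:endos:fpt} is the identity on underlying monoids.

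The right-hand square is symmetric. Again by §\ref{s:module-based}, the $\ar(o)$-binding condition on an operation $o_X ∶ X^{|\ar(o)|}→X$ says exactly that $o_X$ is a morphism of $X$-modules $∏_i X^{[\ar(o)_i]} → X$, where $X^{[n]}$ denotes the $n$-fold lifting-derivative of the tautological $X$-module $X$, its module structure being built from the operation $⇑$; so $S\DBAlg$ is the category of De Bruijn monads equipped with a module morphism $∑_{o∈O}∏_i X^{[\ar(o)_i]} → X$. Under the equivalence $𝐃𝐁𝐌𝐧𝐝 ≃ 𝐌𝐨𝐧[𝐒𝐞𝐭,𝐒𝐞𝐭]_{ℵ₁,\wb₀}$ of Lemma~\ref{lem:endos:db}, which sends $(X,s,v)$ to a monad $T$ with $T(ℕ)≅X$, the lifting-derivative $X^{[n]}$ corresponds to $F({-}+n)$, so the source module is carried to $ℱ^{[𝐒𝐞𝐭,𝐒𝐞𝐭]_{ℵ₁,\wb₀}}_S(T)$ with the module structure of the preceding proposition. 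The top functor is therefore an equivalence, morphisms matching as before, and the square commutes because the equivalence of Lemma~\ref{lem:endos:db} is evaluation at $ℕ$ on underlying structures.

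The main obstacle is the bookkeeping hidden in the two appeals to §\ref{s:module-based}: one must verify that the two ad hoc module structures -- the one coming from the pointed strength, and the one coming from the lifting operation $⇑$ -- are each transported by the respective base equivalences onto the single canonical structure on $ℱ_S(T)$ recorded in the preceding proposition, namely $∏_i T(T(X)+n_i) → ∏_i T(T(X+n_i)) \xto{∏_i μ} ∏_i T(X+n_i)$. For the presheaf side this is the already-known strength/module dictionary; for the De Bruijn side it is the genuinely new part, and its crux is that, under evaluation at $ℕ$, the assignment lifting $⇑$ corresponds exactly to the module derivative $F({-}+1)$ with its standard $T$-module structure.
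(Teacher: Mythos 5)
Your proposal is correct and follows essentially the same route as the paper, whose own proof is just the remark that the first square is easy and the second is a tedious verification that the binding conditions correspond to module morphisms $ℱ^{[𝐒𝐞𝐭,𝐒𝐞𝐭]_{ℵ₁,\wb₀}}_S(T) → T$ --- precisely the module-based dictionary you invoke. You simply spell out more of the bookkeeping (in particular the correct identification of the lifting $⇑$ with the derivative $T({-}+1)$ under evaluation at $ℕ$), which the paper leaves implicit.
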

      \begin{proof}
        The first square is easy. The second is a tedious
        verification that the binding conditions correspond
        to the definition of module morphisms
        $ℱ^{[𝐒𝐞𝐭,𝐒𝐞𝐭]_{ℵ₁,\its₀}}_S(T) → T$.
      \end{proof}

      Finally, we show that the restrictions of $Φ_S\Mon$ and
      $S\DBAlg$ to well-behaved objects are equivalent to
      $S^{[𝐒𝐞𝐭,𝐒𝐞𝐭]_{f,\its₀}}\alg$.

      Indeed, by definition, we have pullback squares
    \begin{equation}
      \Diag(.5,.5){
        \pullbackk{m-2-1}{m-1-1}{m-1-2}{draw,-,shorten >=3pt} %
      }{%
        Φ_S\Mon_{\its} \& Φ_S\Mon \\
        𝐌𝐨𝐧[𝔽,𝐒𝐞𝐭]_{\its} \& 𝐌𝐨𝐧[𝔽,𝐒𝐞𝐭] %
      }{%
        (m-1-1) edge[labela={}] (m-1-2) %
        edge[labell={}] (m-2-1) %
        (m-2-1) edge[labelb={}] (m-2-2) %
        (m-1-2) edge[labelr={}] (m-2-2) %
      }
      \qquad
      \Diag(.5,.5){ %
        \pullbackk{m-2-1}{m-1-1}{m-1-2}{draw,-,shorten >=4pt} %
      }{%
        S\DBAlg_{\fin} \& S\DBAlg \\
        𝐃𝐁𝐌𝐧𝐝_{\fin} \& 𝐃𝐁𝐌𝐧𝐝 
      }{%
        (m-1-1) edge[labela={}] (m-1-2) %
        edge[labell={}] (m-2-1) %
        (m-2-1) edge[into,labelb={}] (m-2-2) %
        (m-1-2) edge[labelr={}] (m-2-2) %
      }
      \label{eq:pbks}
    \end{equation}
    so by the equivalences~\eqref{eq:S:equivs} and~\eqref{eq:wb:equivs} the theorem follows
    from the next result.
      \begin{prop}
        We have the following pullback squares.
        \begin{center}
          \Diag{%
            \pullbackk{m-2-2}{m-1-2}{m-1-1}{draw,-,shorten >=3pt} %
            \pullbackk{m-2-2}{m-1-2}{m-1-3}{draw,-,shorten >=3pt} %
          }{%
            S^{[𝐒𝐞𝐭,𝐒𝐞𝐭]_f}\alg \&  S^{[𝐒𝐞𝐭,𝐒𝐞𝐭]_{f,\its₀}}\alg \& S^{[𝐒𝐞𝐭,𝐒𝐞𝐭]_{ℵ₁,\its₀}}\alg  \\
            {𝐌𝐨𝐧[𝐒𝐞𝐭,𝐒𝐞𝐭]_f} \&  {𝐌𝐨𝐧[𝐒𝐞𝐭,𝐒𝐞𝐭]_{f,\its₀}} \& {𝐌𝐨𝐧[𝐒𝐞𝐭,𝐒𝐞𝐭]_{ℵ₁,\its₀}}  %
          }{%
            (m-1-1) edge[<-,labela={}] (m-1-2) %
            edge[labell={}] (m-2-1) %
            (m-2-1) edge[<-,labelb={}] (m-2-2) %
            (m-1-2) edge[labelr={}] (m-2-2) %
            (m-1-2) edge[labela={}] (m-1-3) %
            edge[labell={}] (m-2-2) %
            (m-2-2) edge[labelb={}] (m-2-3) %
            (m-1-3) edge[labelr={}] (m-2-3) %
          }%
          \end{center}
      \end{prop}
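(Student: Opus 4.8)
The two squares follow a common pattern, so I would establish both uniformly. In each of them the bottom horizontal functor is the inclusion of the full monoidal subcategory $[𝐒𝐞𝐭,𝐒𝐞𝐭]_{f,\wb₀}$ into a larger full monoidal subcategory $𝐃$ of $[𝐒𝐞𝐭,𝐒𝐞𝐭]$ — namely $𝐃 = [𝐒𝐞𝐭,𝐒𝐞𝐭]_f$ for the left square and $𝐃 = [𝐒𝐞𝐭,𝐒𝐞𝐭]_{ℵ₁,\wb₀}$ for the right one (a finitary endofunctor being in particular $ℵ₁$-accessible) — and the vertical functors forget an algebra to its underlying monoid. To show such a square is a pullback it suffices to prove that the top functor $S^{[𝐒𝐞𝐭,𝐒𝐞𝐭]_{f,\wb₀}}\alg → S^𝐃\alg$ is fully faithful, with essential image exactly the full subcategory of those $S^𝐃$-algebras whose underlying monoid lies in $[𝐒𝐞𝐭,𝐒𝐞𝐭]_{f,\wb₀}$, i.e.\ preserves empty intersections.

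On objects, fix a monoid $T$ whose carrier lies in $[𝐒𝐞𝐭,𝐒𝐞𝐭]_{f,\wb₀}$, together with an $S^𝐃$-algebra structure, that is, a $T$-module morphism $ℱ^𝐃_S(T) → T$. By the Proposition above establishing the restrictions $ℱ^𝐂_S$ of $ℱ_S$, the functor $ℱ^𝐃_S$ is a restriction of $ℱ_S∶ [𝐒𝐞𝐭,𝐒𝐞𝐭] → [𝐒𝐞𝐭,𝐒𝐞𝐭]$, and $ℱ_S$ also restricts to $[𝐒𝐞𝐭,𝐒𝐞𝐭]_{f,\wb₀}$; hence the endofunctor $ℱ^𝐃_S(T)$ coincides with $ℱ^{[𝐒𝐞𝐭,𝐒𝐞𝐭]_{f,\wb₀}}_S(T)$ and in particular already lies in $[𝐒𝐞𝐭,𝐒𝐞𝐭]_{f,\wb₀}$. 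By the same Proposition, the $T$-module structures carried by these two functors are given by one and the same formula, computed in $[𝐒𝐞𝐭,𝐒𝐞𝐭]$. Since $[𝐒𝐞𝐭,𝐒𝐞𝐭]_{f,\wb₀}$ is a full monoidal subcategory of $𝐃$ — so that it contains $ℱ_S(T)$, $T$, and all the composites ($ℱ_S(T) ∘ T$, $T ∘ T$) appearing in the module axioms — a $T$-module morphism $ℱ_S(T) → T$ in $𝐃$ is literally the same datum as one in $[𝐒𝐞𝐭,𝐒𝐞𝐭]_{f,\wb₀}$. Thus an $S^𝐃$-algebra with carrier in $[𝐒𝐞𝐭,𝐒𝐞𝐭]_{f,\wb₀}$ is the very same thing as an $S^{[𝐒𝐞𝐭,𝐒𝐞𝐭]_{f,\wb₀}}$-algebra.

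On morphisms, a map of $S^𝐃$-algebras is a monoid morphism $φ$ making the evident square with the two actions and $ℱ^𝐃_S(φ) = ℱ_S(φ)$ commute; since $ℱ_S(φ)$ and the actions are unchanged when read in $[𝐒𝐞𝐭,𝐒𝐞𝐭]_{f,\wb₀}$ instead of $𝐃$, and $𝐌𝐨𝐧[𝐒𝐞𝐭,𝐒𝐞𝐭]_{f,\wb₀}$ is a full subcategory of $𝐌𝐨𝐧\,𝐃$, a morphism in the pullback is precisely a morphism of $S^{[𝐒𝐞𝐭,𝐒𝐞𝐭]_{f,\wb₀}}$-algebras. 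Together with the previous paragraph this identifies $S^{[𝐒𝐞𝐭,𝐒𝐞𝐭]_{f,\wb₀}}\alg$ with the pullback, for both choices of $𝐃$; hence both squares are pullbacks.

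The only point requiring care is the bookkeeping behind ``everything is ambient-independent'': one must check that neither $ℱ^𝐃_S$, nor the $T$-module structure on $ℱ^𝐃_S(T)$, nor the notion of $T$-module morphism, depends on the choice of $𝐃$ among the categories under consideration. The first two facts are exactly what the preceding Proposition records — it presents $ℱ^𝐃_S$ as a restriction of $ℱ_S$ and gives its action by a single formula valid in $[𝐒𝐞𝐭,𝐒𝐞𝐭]$ — and the third is the elementary observation that hom-sets, and hence the equations cutting out module morphisms, are computed identically in a category and in any full subcategory of it. Granting these, both pullback squares follow with no further computation.
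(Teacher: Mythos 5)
Your argument is correct. The paper actually states this proposition without any proof, treating it as routine after the preceding proposition on the restrictions $ℱ^𝐂_S$, and what you write is precisely the expected justification: because every category involved is a full monoidal subcategory of $[𝐒𝐞𝐭,𝐒𝐞𝐭]$ and $ℱ^𝐂_S$, together with its $T$-module structure, is a restriction of the ambient $ℱ_S$, an $S^𝐃$-algebra structure on a monoid whose carrier lies in $[𝐒𝐞𝐭,𝐒𝐞𝐭]_{f,\wb₀}$ is literally the same datum as an $S^{[𝐒𝐞𝐭,𝐒𝐞𝐭]_{f,\wb₀}}$-algebra structure, and likewise for morphisms. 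One cosmetic remark: since the bottom functors are inclusions of full subcategories, these are strict pullbacks, so it is slightly cleaner to phrase your conclusion as an isomorphism of $S^{[𝐒𝐞𝐭,𝐒𝐞𝐭]_{f,\wb₀}}\alg$ onto the full subcategory of $S^𝐃\alg$ spanned by algebras with carrier in $[𝐒𝐞𝐭,𝐒𝐞𝐭]_{f,\wb₀}$ (which your ``literally the same datum'' argument in fact establishes) rather than via full faithfulness and essential image.
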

      % \begin{proof}
      %   \tom{insérer ici une phrase d'explication qui vend bien \cite{RelativeMonadsRevisited}.}
      % \end{proof}

% \subsection{Coq proof}\label{sec:coqproof}
% The Coq formalisation consists of 2400 lines of codes and was tested
% with Coq 8.12.0. A summary file recalls the main formalised results.
% 
% One third of the formalisation is devoted to the
% definitions and proofs up to Theorem~\ref{thm:initiality}. There, we 
% avoid the use of any axiom such as functional extensionality,
% stating that pointwise equal functions are equal.
% Instead, we require that substitution in a De Bruijn monad $X$ is compatible
% with pointwise equality: if $f,g∶ ℕ → X$ are pointwise equal, then $x[f]=%[g]$
% for any $x∈ X$.
% 
% The last two thirds of the formalisation focus on initiality for
% signatures with equations (Theorem~\ref{thm:eqn}), in the untyped
% setting. There, we use functional extensionality and an
% axiomatisation of quotient types.

  \section{Strength-based interpretation of the binding conditions}\label{s:strength}
  In the previous section, we have compared the category $S\DBAlg$ of
  models of a binding signature $S$ in De Bruijn monads with the usual
  category of $Φ_S$-monoids~\cite{fiore:presheaf}.  In fact, the
  latter approach is much more general, in the sense that it does not
  only work for binding signatures but for so-called \alert{pointed
    strong} endofunctors~\cite{fiore:presheaf}, and in fact also for
  the more general \alert{structurally strong} endofunctors introduced
  by Borthelle et al.~\cite{BHL}.

  In this section, we show that De Bruijn algebras also generalise
  from binding signatures to structurally strong endofunctors, in the
  following sense.  To any binding signature $S$, we associate such an
  endofunctor, say $Σ_S$, such that $Σ_S\Mon ≅ S\DBAlg$, where
  $Σ_S\Mon$ is as defined for any structurally strong endofunctor by
  Borthelle et al.
  
  This way we give a categorical status to binding signatures, as
  particular structurally strong endofunctors on $𝐒𝐞𝐭$.
  \begin{rem}
    We do not (yet) prove existence of an initial $Σ$-De Bruijn
    algebras for any larger class of endofunctors than those of the
    form $Σ_S$.
  \end{rem}

  \begin{rem}
    We resort to structurally strong endofunctors because pointed
    strong endofunctors live on monoidal
    categories~\cite{fiore:presheaf,DBLP:conf/lics/Fiore08}, while we
    have seen in Corollary~\ref{cor:skew} that our tensor product
    merely equips $𝐒𝐞𝐭$ with skew monoidal structure.  (The very
    purpose of structurally strong endofunctors is to generalise
    pointed strong endofunctors to the skew monoidal case.)  Following
    up on Remark~\ref{rem:BNNmonoid}, we could
    equivalently work with the monoidal category $[𝐁[ℕ,ℕ],𝐒𝐞𝐭]$, in
    which the machinery of pointed strong endofunctors applies.
\end{rem}
  % \AL{TODO: au moins une remarque qui dit qu'on pourrait se passer de cette
  %   generalisation, on travaillant dans $[𝐁[ℕ,ℕ],Set]$. Ou alors faire l'inverse:
  %   travailler dans $[𝐁[ℕ,ℕ],Set]$ et faire une remarque pour dire qu'on pourrait
  %   faire le cas skew, en mentionnant le fait interessant que le foncteur
  %   identite a plusieurs forces, et c'est ca qui fait le binding?}

\begin{rem}
  In fact, the isomorphism $Σ_S\Mon ≅ S\DBAlg$ is almost an
  equality, since the only difference lies in the difference between a
  family $(M^{|\ar(o)|} → M)_{o ∈ O}$ of operations and its cotupling
  $∑_{o ∈ O} M^{|\ar(o)|} → M$: one could easily adjust the
  presentation to get an exact match.
\end{rem}

  The starting point is that the endofunctor $Σ_S$ associated to any
  given binding signature $S$ may be equipped with a family of maps
  \[𝐝𝐛𝐬_S∶ Σ_S(X) ⊗ Y → Σ_S(X⊗Y)\] that will be used to specify how
  substitution commutes with the operations of $S$.  However, in order
  for such a map to be well-defined for binding operations, we need to
  assume that $Y$ features variables and renaming, i.e., that it is a
  \alert{pointed $ℕ$-module}. Moreover, this map should satisfy some
  compatibility laws. These definitions and conditions are detailed
  in~§\ref{ss:sstrength}, where we furthermore recall structurally strong
  endofunctors.  In~§\ref{ss:ssinterp}, we interpret binding
  signatures as such endofunctors, we recall the category of
  $Σ$-monoids, for any structurally strong endofunctor $Σ$, and
  establish the announced isomorphism of categories.

  \subsection{Structural strengths}  \label{ss:sstrength}
  We start by introducing a notion of set equipped with variables and
  renamings, in Definition~\ref{d:pointed-mod} below.
  Recalling from Example~\ref{ex:nat} that $ℕ$ forms a De Bruijn monad,
  we have:
  \begin{defi}
    \label{def:N-module}
    An \alert{$ℕ$-module} is a set $X$ equipped with an action of the
    monoid $ℕ^ℕ$, namely a map $r∶ X×ℕ^ℕ=X⊗ℕ → X$, making the
    following diagrams commute.
    \begin{center}
% file:///home/thirs/github/quiver/src/index.html?q=WzAsNSxbMCwwLCJYIOKKlyDihJUg4oqXIOKElSJdLFsyLDAsIlgg4oqXICjihJUg4oqXIOKElSkiXSxbMiwxLCJYIOKKlyDihJUiXSxbMSwyLCJYIl0sWzAsMSwiWCDiipcg4oSVIl0sWzAsMSwizrFfe1gs4oSVLOKElX0iXSxbMSwyLCJYIOKKlyBcXG1hdGhybXtldn0iXSxbMiwzLCJyIl0sWzAsNCwiciDiipcg4oSVICIsMl0sWzQsMywiciIsMl1d
\begin{tikzcd}[ampersand replacement=\&]
	{(X ⊗ ℕ) ⊗ ℕ} \&\& {X ⊗ (ℕ ⊗ ℕ)} \\
	{X ⊗ ℕ} \&\& {X ⊗ ℕ} \\
	\& X
	\arrow["{α_{X,ℕ,ℕ}}", from=1-1, to=1-3]
	\arrow["{X ⊗ λ_ℕ}", from=1-3, to=2-3]
	\arrow["r", from=2-3, to=3-2]
	\arrow["{r ⊗ ℕ }"', from=1-1, to=2-1]
	\arrow["r"', from=2-1, to=3-2]
      \end{tikzcd}
      \quad
% file:///home/thirs/github/quiver/src/index.html?q=WzAsMyxbMSwxLCJYIl0sWzIsMCwiWCDiipcg4oSVIl0sWzAsMCwiWCJdLFsxLDAsInIiXSxbMiwxLCLPgV9YIl0sWzIsMCwiIiwyLHsibGV2ZWwiOjIsInN0eWxlIjp7ImhlYWQiOnsibmFtZSI6Im5vbmUifX19XV0=
\begin{tikzcd}[ampersand replacement=\&]
	X \&\& {X ⊗ ℕ} \\
	\& X
	\arrow["r", from=1-3, to=2-2]
	\arrow["{ρ_X}", from=1-1, to=1-3]
	\arrow[Rightarrow, no head, from=1-1, to=2-2]
\end{tikzcd}      
\end{center}
A \alert{morphism of $ℕ$-modules} $(X,r) → (Y,s)$ is a map $f∶ X → Y$
between underlying sets commuting with action, i.e., making the
following square commute.
\begin{center}
  \diag{
    X ⊗ ℕ \& Y ⊗ ℕ \\
    X \& Y %
  }{ (m-1-1) edge[labela={f ⊗ ℕ}] (m-1-2) %
    edge[labell={r}] (m-2-1) %
    (m-2-1) edge[labelb={f}] (m-2-2) %
    (m-1-2) edge[labelr={s}] (m-2-2) %
    }
  \end{center}
  Finally, $ℕ$-modules and morphisms between them form a category,
  which we denote by $ℕ\Mod$.
  \end{defi}
  \begin{nota}
    We generally denote $r(x,f)$ by $x[f]ᵣ$, or merely $x[f]$ when
    $r$ is clear from context.
  \end{nota}
  \begin{exa}\label{ex:dbmnd:Nmodule}
    Any De Bruijn monad $(X,s,v)$ (in particular $(ℕ,λ,\id)$ itself) has a canonical
      structure of $ℕ$-module given by $r(x,f)=x[v∘f]ₛ$.
  \end{exa}
  \begin{defi}
    \label{d:pointed-mod}\hfill
    \begin{itemize}
    \item A \alert{pointed $ℕ$-module} is an $ℕ$-module $(X, r)$,
      equipped with a map $v∶ ℕ → X$ which is a morphism of
      $ℕ$-modules, i.e., such that the following square commutes.
      \begin{center}
        \diag{%
          ℕ ⊗ ℕ \& X ⊗ ℕ \\
          ℕ \& X 
        }{%
          (m-1-1) edge[labela={v ⊗ ℕ}] (m-1-2) %
          edge[labell={λ_ℕ}] (m-2-1) %
          (m-2-1) edge[labelb={v}] (m-2-2) %
          (m-1-2) edge[labelr={r}] (m-2-2) %
        }
      \end{center}
    \item A morphism of pointed $ℕ$-modules $(X,r,v) → (Y,s,w)$ is a morphism of $ℕ$-modules
      $f∶ (X,r)→(Y,s)$ commuting with point, i.e., such that
      the following triangle commutes.
      \begin{center}
        \diag{%
          \& ℕ \\
          X \& \& Y %
        }{%
          (m-1-2) edge[labelal={v}] (m-2-1) % 
	edge[labelar={w}] (m-2-3) (m-2-1) edge[labelb={f}] (m-2-3) %
        }
      \end{center}
  \item 
    Let $ℕ\Mod_ℕ$ denote the category of pointed $ℕ$-modules.
  \end{itemize}
  \end{defi}
  \begin{rem}
    Equivalently, $ℕ\Mod_ℕ$ is the coslice category $ℕ/(ℕ\Mod)$.
  \end{rem}
  % \begin{exa}
  %   The variables map of any De Bruijn monad makes it a pointed $ℕ$-module.
  % \end{exa}
  \begin{exa}\label{ex:dbmnd:module}
    The canonical $ℕ$-module structure of any De Bruijn monad
    $(X,s,v)$ (in particular $(ℕ,λ,\id)$ itself), described in
    Example~\ref{ex:dbmnd:Nmodule}, is in fact pointed, by the map
    $v∶ ℕ → X$.
  \end{exa}

  We now define a tensor product on (pointed) $ℕ$-modules,
  following~\cite[(8.1)]{Warpings}.

  \begin{defi}
    Given an $ℕ$-module $(X,r)$ and a set $Y$, let $X ⊠ Y$ denote the
    following coequaliser in $𝐒𝐞𝐭$.
    \begin{equation}
      \hspace*{-2ex} \diag|baseline=(m-2-1.base)|(0,1){%
        \& X ⊗ (ℕ ⊗ Y) \\
        (X ⊗ ℕ) ⊗ Y \&             \& X ⊗ Y \& X ⊠ Y
      }{%
        (m-2-1) edge[labelal={α_{X,ℕ,Y}}] (m-1-2) %
        edge[bend right=10,labelb={r ⊗ Y}] (m-2-3) %
        (m-1-2) edge[labelar={X⊗λ_Y}] (m-2-3) %
        (m-2-3) edge[onto,labela={κ_{X,Y}}] (m-2-4)
      }
      \label{eq:btens}
    \end{equation}
    % where $α$ denotes the associator of $(𝐒𝐞𝐭,⊗,ℕ)$.
  \end{defi}
  \begin{nota}
    Concretely, $X⊠Y$ is the set of equivalence classes of pairs
    $(x,υ) ∈ X × Y^ℕ$, modulo the equation $(x[ρ],υ) = (x,υ∘ρ)$, for
    any $ρ∶ ℕ → ℕ$. We denote such an equivalence class by $x⦇υ⦈$, and
    extend the notation to $σ⦇υ⦈$, for any assignment $σ∶ ℕ → X$,
    i.e., $σ⦇υ⦈(n) = σ(n)⦇υ⦈$ for all $n$.
  \end{nota}
  \begin{propC}[{\cite[Theorem~8.1]{Warpings}}]
    When $Y$ is equipped with $ℕ$-module structure, $X⊠Y$ admits a
    canonical $ℕ$-module structure, such that $x ⦇ υ ⦈ [ f ] = x ⦇ υ [ f ] ⦈$.
    This makes the category $ℕ\Mod$
    of $ℕ$-modules into a skew monoidal category (with unit $ℕ$, and
    invertible right unitor). Furthermore, the forgetful functor is
    monoidal, and creates monoids in the sense that monoids are the
    same in $ℕ\Mod$ and in $𝐒𝐞𝐭$.
  \end{propC}
  \begin{proof}
    To apply~\cite[Theorem~8.1]{Warpings}, we need to prove that
    tensoring on the right in $𝐒𝐞𝐭$ preserves reflexive coequalisers,
    which holds by interchange of colimits since $X ⊗ Y = X × Y^ℕ$ is
    the $Y^ℕ$-fold coproduct of $X$ with itself.
  \end{proof}
  In fact, this extends to $ℕ\Mod_ℕ$:
  \begin{prop}
    % \AL{TODO: generaliser}
    Given pointed $ℕ$-modules $(X,r,v)$ and $(Y,s,w)$, the $ℕ$-module
    $X⊠Y$ is canonically pointed by the map
    $$ℕ \xto{ρ_ℕ} ℕ ⊗ ℕ \xto{v ⊗ w} X ⊗ Y \xonto{κ_{X,Y}} X ⊠ Y.$$
  \end{prop}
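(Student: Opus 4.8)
The plan is to show that the displayed composite $p ∶ ℕ → X⊠Y$ is a morphism of $ℕ$-modules; by Definition~\ref{d:pointed-mod} this is exactly what is needed, once we take on $X⊠Y$ the $ℕ$-module structure supplied by the preceding proposition and on $ℕ$ its canonical one $(ℕ,λ_ℕ,\id)$ (Example~\ref{ex:dbmnd:module}). Unfolding the defining square of a morphism of $ℕ$-modules exactly as for $v$ and $w$ themselves, the claim reduces to the single equation $p(n)[f] = p(f(n))$, for all $n ∈ ℕ$ and $f ∶ ℕ → ℕ$.

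First I would make $p$ explicit. Chasing $n$ through $ρ_ℕ$, $v⊗w$ and $κ_{X,Y}$, using $ρ_ℕ(n) = (n,\id)$ and the description $κ_{X,Y}(x,υ) = x⦇υ⦈$, one obtains $p(n) = v(n)⦇w⦈$. For the left-hand side of the target equation, the formula $x⦇υ⦈[f] = x⦇υ[f]⦈$ from the preceding proposition (with $υ[f]$ as in Notation~\ref{not:subst}, computed with the action $s$ of $Y$) gives $p(n)[f] = v(n)⦇w[f]⦈$, where $w[f](m) = w(m)[f]_s$. Since $(Y,s,w)$ is a \emph{pointed} $ℕ$-module, $w(m)[f]_s = w(f(m))$, so $w[f] = w ∘ f$ and $p(n)[f] = v(n)⦇w ∘ f⦈$. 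For the right-hand side, $p(f(n)) = v(f(n))⦇w⦈$. Now the defining relation of the coequaliser~\eqref{eq:btens}, concretely $(x[ρ]_r)⦇υ⦈ = x⦇υ ∘ ρ⦈$, applied with $x = v(n)$, $ρ = f$ and $υ = w$, yields $v(n)⦇w ∘ f⦈ = (v(n)[f]_r)⦇w⦈$; and since $(X,r,v)$ is a \emph{pointed} $ℕ$-module, $v(n)[f]_r = v(f(n))$. Hence both sides equal $v(f(n))⦇w⦈$, as required.

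A more conceptual alternative is to observe that $p$ factors as $ℕ \xto{ρ} ℕ⊠ℕ \xto{v⊠w} X⊠Y$, where $ρ$ is the (invertible) right unitor of the skew monoidal structure on $ℕ\Mod$ and $v⊠w$ is the image under the bifunctor $⊠$ of the $ℕ$-module morphisms $v$ and $w$; unitors and $⊠$-images of morphisms of $ℕ$-modules are again morphisms of $ℕ$-modules, and this factorisation computes the stated composite by the universal property of $κ_{X,Y}$. Either way, I do not anticipate a real obstacle: the only care needed is bookkeeping — keeping the three $ℕ$-module structures in play (on $ℕ$, on $Y$, and on $X⊠Y$) distinct, and invoking the pointedness of $Y$ to simplify $w[f]$ and that of $X$ to simplify $v(n)[f]_r$ at the right moments rather than conflating them. (Naturality of the pointing in $X$ and $Y$, making $⊠$ a bifunctor on $ℕ\Mod_ℕ$, is then a routine check of the same flavour.)
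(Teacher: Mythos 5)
Your argument is correct, but it is genuinely different from what the paper does: the paper gives no computation at all here, instead deferring to the general skew-monoidal result of Borthelle et al.\ (\cite{BHL}, formalised as \coqident{IModules}{PtIModule\_tensor} in their Coq development), where the pointing of a tensor product of pointed modules is established once and for all over an arbitrary skew monoidal base. Your proof is a direct, elementary verification in the concrete case at hand: you correctly reduce pointedness to the single equation $p(n)[f] = p(f(n))$, compute $p(n) = v(n)⦇w⦈$, and close the loop using exactly the three ingredients needed --- the formula $x⦇υ⦈[f] = x⦇υ[f]⦈$ for the action on $X⊠Y$, the pointedness of $Y$ to turn $w[f]$ into $w∘f$, and the coequaliser relation together with the pointedness of $X$ to turn $v(n)⦇w∘f⦈$ into $v(f(n))⦇w⦈$. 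What your route buys is a self-contained, readable proof that makes visible where each hypothesis (pointedness of $X$, pointedness of $Y$, the defining relation of $⊠$) is actually used; what the paper's route buys is generality and a machine-checked guarantee without repeating the diagram chase. Your sketched ``conceptual alternative'' via $ℕ \to ℕ⊠ℕ \to X⊠Y$ is in fact closer in spirit to the cited general argument, though as stated it leans on functoriality of $⊠$ on $ℕ\Mod$ and the identification of the composite with the stated one, which would themselves need the same kind of check you perform in the concrete version.
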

  \begin{proof}
    This result was proved and formalised in a general skew monoidal setting
    in~\cite{BHL},
    see~\cite[\lstinline|IModules.PtIModule_tensor|]{BHLcode}.
  \end{proof}

  Now that we have defined the tensor product of pointed $ℕ$-modules, we
  may introduce structural strengths.
  \begin{defiC}[{\cite[Definition~2.11]{BHL}}]
    A \alert{structural strength} on an endofunctor $Σ∶ 𝐒𝐞𝐭 → 𝐒𝐞𝐭$ is
    a natural transformation $st_{X,Y}∶ Σ(X)⊗Y → Σ(X⊗Y)$, where $X$ is
    any set and $Y$ is a pointed $ℕ$-module, making the following
    diagrams commute,
      \begin{center}
    \diag{%
      \& Σ(A) \\
      Σ(A)⊗ℕ \& \& Σ (A⊗ℕ) %
    }{%
      (m-1-2) edge[labelal={ρ_{Σ(A)}}] (m-2-1) %
      edge[labelar={Σ(ρ_A)}] (m-2-3) %
      (m-2-1) edge[labelb={st_{A,ℕ}}] (m-2-3) %
    }
    \diag(.8,2.2){%
      (Σ(A)⊗X)⊗Y \& Σ(A⊗X)⊗Y \& Σ ((A⊗X)⊗Y) \\
      Σ(A)⊗(X⊠Y) \& \& Σ (A⊗(X⊠Y))
    }{%
      (m-1-1) edge[labela={\scalebox{1.03}{$st_{A,X}⊗Y$}}] (m-1-2) %
      edge[labell={\scalebox{1.03}{$α'_{Σ(A),X,Y}$}}] (m-2-1) %
      (m-1-2) edge[labela={\scalebox{1.03}{$st_{A⊗X,Y}$}}] (m-1-3) %
      (m-1-3) edge[labelr={\scalebox{1.03}{$Σ (α'_{A,X,Y})$}}] (m-2-3) %
      (m-2-1) edge[labelb={\scalebox{1.03}{$st_{A,X⊠Y}$}}] (m-2-3) %
    }
  \end{center}
  where $α'_{A,X,Y}$ is
  $(A⊗κ_{X,Y}) ∘ α_{A,X,Y}$, for any $A$.
  \end{defiC}
  \begin{rem}
    In examples, the first axiom will entail that the ``identity''
    assignment should cross operations unchanged.  In terms of De
    Bruijn monads, the ``identity'' assignment is merely the variables
    map $v$, so, e.g., in the setting of Example~\ref{ex:lam}, the
    axiom boils down to the fact that lifting $v$ yields $v$ again:
    $⇑v = v$.  The second axiom will entail the substitution lemma
    $x[σ][ϕ] = x[σ[ϕ]]$, where we recall that by definition
    $σ[ϕ](n) = σ(n)[ϕ]$. E.g., if crossing a given unary operation $o$
    maps assignments $σ$ to $σ'$, the axiom says that
    $(σ[ϕ])' = σ'[ϕ']$.  This is just what is needed for a proof by
    induction to go through, as in
    $$
    \begin{array}{rcll}
      o(x)[σ][ϕ] & = & o(x[σ'][ϕ']) & \mbox{by the assumed binding condition} \\
                 & = &  o(x[σ'[ϕ']]) & \mbox{by induction hypothesis} \\
                 & = &  o(x[(σ[ϕ])']) & \mbox{by the second axiom} \\
                 & = & o(x)[σ[ϕ]] & \mbox{by the binding condition again.}
    \end{array}$$
    The technique extends to operations with more complex arities.
  \end{rem}

  \subsection{De Bruijn algebras as \texorpdfstring{$Σ$}{Sigma}-monoids}\label{ss:ssinterp}
  Let us now interpret binding signatures as structurally strong
  endofunctors, and show that the corresponding category of models
  coincides with De Bruijn algebras.
  
  % The main strength of De Bruijn representation, with the current
  % viewpoint, is that
  We can readily equip the endofunctor $Σ_S$ associated to any binding
  signature $S$ (Definition~\ref{def:SigmaS}) with a structural
  strength $𝐝𝐛𝐬_S$, which we call the \alert{De Bruijn} strength
  prescribed by $S$ on $Σ_S$.
  \begin{rem}
    Let us recall that by
    definition, $Σ_S$ ignores the binding information in $S$: we have
    $$Σ_S(X) = ∑_{o ∈ O} X^{pₒ}\rlap{,}$$ where $\ar(o) = (nᵒ₁,…,nᵒ_{pₒ})$ for
    all $o ∈ O$.
  \end{rem}
  In order to define $𝐝𝐛𝐬_S$, we start by adapting the definition of
  assignment lifting (Definition~\ref{def:prime}) to pointed
  $ℕ$-modules.
  \begin{defi}
    Let $(Y,r,v)$ be a pointed $ℕ$-module.
    For any assignment
    $σ∶ ℕ → Y$, we define the assignment $⇑σ∶ ℕ → Y$ by
    \[
      \begin{array}[t]{rcl}
        (⇑σ)(0) & = & v(0) \\
        (⇑σ)(n+1) & = & σ(n)[↑]\rlap{,}
      \end{array}
    \]
    
    \noindent where $↑∶ ℕ → ℕ$ denotes the successor map.  Let
    $⇑^{0}A = A$, and $⇑^{n+1}A = ⇑(⇑^{n}A)$.
  \end{defi}
  % \begin{defi}
  %   For any structurally strong endofunctor $(Σ,st)$, let $st'$ denote
  %   the structural strength defined at $(X,Y)$ by the composite
  %     $$Σ(X)⊗Y \xto{st_{X,Y}} Σ(X⊗Y) \xto{Σ(D_{X,Y})} Σ(X⊗Y).$$
  % \end{defi}
  
  Using this, let us now define the De Bruijn strength of the identity
  functor.  We will then iterate the process to show that each
  iterated lifting also equips the identity functor with a structural
  strength.  Finally, we will use this as a basis for equipping the
  endofunctor $Σ_S$ associated with any binding signature $S$, with a
  structural strength.
  \begin{defi}
    The \alert{first De Bruijn strength} of the identity functor is the map
    \begin{displaymath}
      \begin{array}{rcl}
        𝐝𝐛𝐬_{\id,X,Y}∶ X⊗Y & → & X⊗Y \\
        (x,σ) & ↦ & (x,⇑ σ)\rlap{,} %
      \end{array}
    \end{displaymath}
    defined for all $X ∈ 𝐒𝐞𝐭$ and $Y ∈ ℕ\Mod_ℕ$.
  \end{defi}
  \begin{prop}
    The first De Bruijn strength is a structural strength on the
    identity functor.
  \end{prop}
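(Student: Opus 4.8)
The plan is to check the three requirements of a structural strength directly, with $Σ = \Id$ and $st_{X,Y} = 𝐝𝐛𝐬_{\id,X,Y}$: naturality in the set $X$ and in the pointed $ℕ$-module $Y$, the unit triangle, and the associativity pentagon. Naturality in $X$ is immediate, since a map $g∶ X → X'$ does not interact with the assignment component and $⇑$ is computed purely from the pointed $ℕ$-module $Y$, so both legs of the square send $(x,σ)$ to $(g(x),⇑σ)$. Naturality in a morphism $h∶ Y → Y'$ of pointed $ℕ$-modules reduces to the pointwise identity $⇑(h∘σ) = h∘(⇑σ)$: at $0$ this is $h(v_Y(0)) = v_{Y'}(0)$ (because $h$ preserves the point), and at $n+1$ it is $h(σ(n)[↑]) = h(σ(n))[↑]$ with $↑∶ ℕ → ℕ$ the successor (because $h$ commutes with the $ℕ$-module action). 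For the unit triangle, with $Σ = \Id$ the condition becomes $𝐝𝐛𝐬_{\id,A,ℕ}∘ρ_A = ρ_A$, i.e., $(a,⇑\id_ℕ) = (a,\id_ℕ)$; so it suffices to see that $⇑\id_ℕ = \id_ℕ$ for the canonical pointed $ℕ$-module $(ℕ,λ,\id)$ of Example~\ref{ex:dbmnd:module}, which is clear at $0$ (value $\id_ℕ(0)=0$) and at $n+1$ (value $\id_ℕ(n)[↑] = ↑(n) = n+1$, using that the $ℕ$-module action on $ℕ$ is evaluation).

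The substance is the associativity pentagon. Fixing a set $A$ and pointed $ℕ$-modules $(X,r_X,v_X)$ and $(Y,r_Y,v_Y)$, I chase a generic $((a,υ),ζ) ∈ (A⊗X)⊗Y = (A×X^ℕ)×Y^ℕ$. Unfolding $𝐝𝐛𝐬_{\id}$, the associators, and the quotient maps $κ_{X,Y}$, the top leg delivers $(a,\; n↦(⇑υ)(n)⦇⇑ζ⦈)$ in $A⊗(X⊠Y) = A×(X⊠Y)^ℕ$, while the bottom leg delivers $(a,\; ⇑τ)$ with $τ = (n↦υ(n)⦇ζ⦈)$, where the outer $⇑$ is now taken for the earlier established pointed $ℕ$-module structure on $X⊠Y$. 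The $A$-components coincide, so everything reduces to the equality of the two resulting assignments $ℕ→X⊠Y$, which I check pointwise, by cases on the argument. At $0$: the top leg gives $v_X(0)⦇⇑ζ⦈$, and the bottom leg gives the value at $0$ of the point of $X⊠Y$, which by the formula $κ_{X,Y}∘(v_X⊗v_Y)∘ρ_ℕ$ is $v_X(0)⦇v_Y⦈$; these agree because a variable in $X⊠Y$ only sees the head of its assignment — applying the coequaliser relation $x[ρ]_{r_X}⦇w⦈ = x⦇w∘ρ⦈$ with $ρ$ constant equal to $0$, together with the point axiom $v_X(0)[ρ]_{r_X} = v_X(ρ(0)) = v_X(0)$, gives $v_X(0)⦇w⦈ = v_X(0)⦇w∘ρ⦈$ for every $w∶ℕ→Y$, and since $(⇑ζ)(0) = v_Y(0)$ this forces $v_X(0)⦇⇑ζ⦈ = v_X(0)⦇v_Y⦈$. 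At $n+1$: the top leg gives $(υ(n)[↑]_{r_X})⦇⇑ζ⦈$, which by the same coequaliser relation equals $υ(n)⦇(⇑ζ)∘↑⦈$; the bottom leg gives $(υ(n)⦇ζ⦈)[↑]_{r_{X⊠Y}}$, which by the identity $x⦇w⦈[f] = x⦇w[f]⦈$ characterising the $ℕ$-module structure on $X⊠Y$ equals $υ(n)⦇ζ[↑]⦈$; and $(⇑ζ)∘↑ = ζ[↑]$, since both send $m$ to $ζ(m)[↑]_{r_Y}$ by the definition of $⇑$ on $Y$. Thus the two assignments coincide and the pentagon commutes.

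I expect the main obstacle to be the $0$ case of the pentagon: one must identify the point of the quotient $X⊠Y$ and observe that on a variable the quotient collapses the whole assignment to its head — a small but essential point, which reflects the familiar fact that substituting into a variable reads off a single component of the substitution. Everywhere else the argument is a routine unfolding of the coequaliser relation defining $X⊠Y$ and of the transported $ℕ$-module structure on it, together with the point and action axioms of pointed $ℕ$-modules; I would also make sure all computations are phrased on equivalence classes rather than representatives, which is harmless here since every map involved is the one induced on the quotient.
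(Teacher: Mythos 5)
Your proposal is correct and follows essentially the same route as the paper's proof: the unit triangle reduces to $⇑\id = \id$, and the associativity pentagon reduces to the pointwise identity $⇑(σ⦇υ⦈) = (⇑σ)⦇⇑υ⦈$, proved by cases with the constant-$0$ renaming trick at $0$ (using the point axiom and the coequaliser relation defining $⊠$) and the identity $(⇑υ)∘↑\ = υ[↑]$ at successors. The only difference is that you additionally spell out naturality in both arguments, which the paper leaves implicit.
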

  \begin{proof}
    We first check commutation with the right unitor, in this case
    \begin{center}
      % https://q.uiver.app/?q=WzAsMyxbMSwwLCJBIl0sWzAsMSwiQeKKl+KElSJdLFsyLDEsIkHiipfihJUiXSxbMCwxLCLPgV9BIiwyXSxbMSwyLCLwnZCd8J2Qm/CdkKxfe1xcaWQsQSzihJV9IiwyXSxbMCwyLCLPgV9BIl1d
\begin{tikzcd}[ampersand replacement=\&]
	\& A \\
	{A⊗ℕ} \&\& {A⊗ℕ\rlap{.}}
	\arrow["{ρ_A}"', from=1-2, to=2-1]
	\arrow["{𝐝𝐛𝐬_{\id,A,ℕ}}"', from=2-1, to=2-3]
	\arrow["{ρ_A}", from=1-2, to=2-3]
      \end{tikzcd}
    \end{center}
    This triangle commutes because any $a ∈ A$ is mapped by $ρ_A$ to
    $(a,\id)$, and then by $𝐝𝐛𝐬_{\id,A,ℕ}$ to $(a,⇑\id)$. But
    $⇑\id = \id$, hence the result.

    For commutation with the associator,
    \begin{center}
      % https://q.uiver.app/?q=WzAsNyxbMCwwLCJBIOKKlyBYIOKKlyBZIl0sWzAsMSwiQSDiipcgKFgg4oqXIFkpIl0sWzAsMiwiQSDiipcgKFgg4oqgIFkpIl0sWzIsMCwiQSDiipcgWCDiipcgWSJdLFs0LDAsIkEg4oqXIFgg4oqXIFkiXSxbNCwxLCJBIOKKlyAoWCDiipcgWSkiXSxbNCwyLCJBIOKKlyAoWCDiiqAgWSkiXSxbMCwxLCLOsV97QSxYLFl9IiwyXSxbMSwyLCJBIOKKlyDOul97WCxZfSIsMl0sWzAsMywi8J2QnfCdkJvwnZCsX3tcXGlkLEEsWH0g4oqXIFkiXSxbMyw0LCLwnZCd8J2Qm/CdkKxfe1xcaWQsQeKKl1gsWX0iXSxbNCw1LCLOsV97QSxYLFl9Il0sWzUsNiwiQSDiipfOul97WCxZfSJdLFsyLDYsIvCdkJ3wnZCb8J2QrF97XFxpZCxBLFjiiqBZfSIsMl1d
\begin{tikzcd}[ampersand replacement=\&]
	{(A ⊗ X) ⊗ Y} \&\& {(A ⊗ X) ⊗ Y} \&\& {(A ⊗ X) ⊗ Y} \\
	{A ⊗ (X ⊗ Y)} \&\&\&\& {A ⊗ (X ⊗ Y)} \\
	{A ⊗ (X ⊠ Y)} \&\&\&\& {A ⊗ (X ⊠ Y)}
	\arrow["{\scalebox{1.2}{$α_{A,X,Y}$}}"', from=1-1, to=2-1]
	\arrow["{\scalebox{1.2}{$A ⊗ κ_{X,Y}$}}"', from=2-1, to=3-1]
	\arrow["{\scalebox{1.2}{$𝐝𝐛𝐬_{\id,A,X} ⊗ Y$}}", from=1-1, to=1-3]
	\arrow["{\scalebox{1.2}{$𝐝𝐛𝐬_{\id,A⊗X,Y}$}}", from=1-3, to=1-5]
	\arrow["{\scalebox{1.2}{$α_{A,X,Y}$}}", from=1-5, to=2-5]
	\arrow["{\scalebox{1.2}{$A ⊗κ_{X,Y}$}}", from=2-5, to=3-5]
	\arrow["{\scalebox{1.2}{$𝐝𝐛𝐬_{\id,A,X⊠Y}$}}"', from=3-1, to=3-5]
      \end{tikzcd}
    \end{center}
    we observe that any triple $(a,σ,υ) ∈ (A ⊗ X) ⊗ Y$
    is mapped by the bottom left composite to
    $$(a, ⇑(σ⦇υ⦈))\rlap{,}$$
    where we define $σ⦇υ⦈(n) ≔ σ(n)⦇υ⦈$.
    Furthermore, $(a,σ,υ)$ is mapped by the top right composite to
    $$(a, (⇑σ)⦇⇑υ⦈)\rlap{,}$$
    so we are left with the task of proving 
    $$⇑(σ⦇υ⦈) = (⇑σ)⦇⇑υ⦈.$$
    Let $u∶ ℕ → X$ and $v : ℕ → Y$ denote the points of $X$ and $Y$.
    We proceed by case analysis:
    \begin{itemize}
    \item At $0$, we have:
      $$
      \begin{array}{rcl}
        ⇑(σ⦇υ⦈)(0) & = & v(0) \\
                   & = & (⇑υ)(0) \\
                   & = & u(0)⦇⇑υ⦈ \\
                   & = & (⇑σ)(0) ⦇⇑υ⦈  \\
                   & = & (⇑σ)⦇⇑υ⦈(0)\rlap{.}
      \end{array}$$
    \item At any $n+1$, we have
      $$
      \begin{array}{rcl}
        ⇑(σ⦇υ⦈)(n+1) & = & σ⦇υ⦈(n)[↑] \\
                     & = & σ(n)⦇υ⦈[↑] \\
                     & = & σ(n)⦇υ[↑]⦈\rlap{,}
      \end{array}$$
    where by definition $υ[↑](n) = υ(n)[↑]$.
      $$
      \begin{array}{rcl}
        (⇑σ)⦇⇑υ⦈(n+1) & = & (⇑σ)(n+1)⦇⇑υ⦈ \\
                      & = & σ(n)[↑]⦇⇑υ⦈ \\
                      & = & σ(n)⦇(⇑υ)∘↑⦈.
      \end{array}$$
      But we have
      $(⇑υ)∘↑\ = υ[↑]$ since,
      for all $p ∈ ℕ$, we have

      \noindent \hfill $((⇑υ)∘↑)(p) = (⇑υ) (p+1) = υ(p)[↑] = υ[↑](p).$ \qedhere
    \end{itemize}
  \end{proof}

  Furthermore, we have:
  \begin{prop}\label{prop:compost}
    Structurally strong endofunctors compose, in the sense that if $F$
    and $G$ are structurally strong endofunctors, then so is $G∘F$,
    with structural strength given by the composite
    \begin{equation}
    G (F (X)) ⊗Y \xto{} G (F (X)⊗Y) \xto{} G (F (X⊗Y)).\label{eq:composite:strength}
  \end{equation}
  \end{prop}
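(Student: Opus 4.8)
The plan is to equip $G\circ F$ with the strength displayed in~\eqref{eq:composite:strength}, that is, to set
$$st^{G\circ F}_{X,Y} \;:=\; G\big(st^F_{X,Y}\big)\circ st^G_{F(X),Y} \colon\ G(F(X))⊗Y \longrightarrow G(F(X⊗Y))$$
for every set $X$ and pointed $ℕ$-module $Y$, and then to verify the three clauses of the definition of structural strength: naturality in $(X,Y)$, the right-unitor triangle, and the associativity hexagon. Each is a diagram chase assembling the corresponding data for $F$ and for $G$; only the last one requires genuine care.

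For \emph{naturality}, given a set map $g\colon X → X'$ and a morphism of pointed $ℕ$-modules $h\colon Y → Y'$, I would paste the naturality square of $st^G$ at $(F(g),h)$ onto $G$ applied to the naturality square of $st^F$ at $(g,h)$, the latter commuting since $G$ is a functor. For the \emph{right-unitor} axiom, one must show $st^{G\circ F}_{A,ℕ}\circ ρ_{G(F(A))} = G(F(ρ_A))$; unfolding the definition and applying the right-unitor axiom for $G$ at the set $F(A)$ turns the left-hand side into $G\big(st^F_{A,ℕ}\circ ρ_{F(A)}\big)$, and the right-unitor axiom for $F$ at $A$ rewrites this as $G(F(ρ_A))$.

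The \emph{associativity} axiom is the substantial step. Fix a set $A$ and pointed $ℕ$-modules $X,Y$, and expand $st^{G\circ F}$ in both legs of the hexagon to be proved. The strategy is to tile the region with three commuting pieces: (i) the associativity hexagon for $st^G$ at the set $F(A)$ with modules $X$ and $Y$; (ii) the image under $G$ of the associativity hexagon for $st^F$ at $A$ with modules $X$ and $Y$, which again commutes because $G$ is a functor; and (iii) a single instance of naturality of $st^G$ in its first (set-)variable, taken along the map $st^F_{A,X}\colon F(A)⊗X → F(A⊗X)$ with module $Y$ — this is exactly what glues the edge $G(st^F_{A,X})⊗Y$ appearing when one unfolds $st^{G\circ F}_{A,X}⊗Y$ to the corresponding edge of piece (i). The three pieces meet along edges that mention $X⊠Y$ only through $κ_{X,Y}$ inside the various instances of $α'$, and $κ_{X,Y}$ is literally the same coequaliser map in all of them, so the tiling is coherent; reading off its outer boundary gives the hexagon for $G\circ F$.

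I expect step (iii)'s place in the associativity chase to be the main obstacle — not because any equation is deep, but because the bookkeeping juggles three occurrences of the skew associator $α'$ (at $G(F(A))$, at $F(A)$ inside $G$, and at $A$ inside $G\circ F$), threaded together by naturality of $st^G$ along $st^F_{A,X}$ and by the hexagons for $F$ and $G$; a hasty chase can easily mismatch edges. The argument is otherwise purely formal and, notably, uses nothing about pointed $ℕ$-modules beyond what the definition of structural strength already supplies.
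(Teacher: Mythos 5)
Your proof is correct and takes essentially the same route as the paper: the unit axiom is split into the unit triangle for $G$ at $F(A)$ pasted with $G$ applied to the unit triangle for $F$, and the associativity chase is tiled by exactly the three pieces you name — the hexagon for $st^G$ at $F(A)$, the naturality square of $st^G$ along $st^F_{A,X}$ tensored with $Y$, and $G$ applied to the hexagon for $st^F$ at $A$. The paper presents this as a single diagram chase (and silently omits the routine naturality verification you also mention), but the decomposition is identical.
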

  \begin{proof}
    For the first axiom, we have
    \begin{center}
      % https://q.uiver.app/?q=WzAsNCxbMSwwLCJHKEYoQSkpIl0sWzAsMSwiRyhGKEEpKeKKl+KElSJdLFsyLDEsIkcoRihBIOKKlyDihJUpKSJdLFsxLDEsIkcoRihBKeKKl+KElSkiXSxbMCwxLCLPgV97RyhGKEEpKX0iLDJdLFswLDIsIkcoRijPgV9BKSkiXSxbMSwzLCJzdOG0s197RihBKSzihJV9IiwyXSxbMywyLCJHKHN0XkZfe0Es4oSVfSkiLDJdLFswLDMsIkcoz4Ffe0YoQSl9KSJdXQ==
      \begin{tikzcd}[ampersand replacement=\&]
	\& {G(F(A))} \\
	{G(F(A))⊗ℕ} \& {G(F(A)⊗ℕ)} \& {G(F(A ⊗ ℕ))\rlap{.}}
	\arrow["{ρ_{G(F(A))}}"', from=1-2, to=2-1]
	\arrow["{G(F(ρ_A))}", from=1-2, to=2-3]
	\arrow["{stᴳ_{F(A),ℕ}}"', from=2-1, to=2-2]
	\arrow["{G(st^F_{A,ℕ})}"', from=2-2, to=2-3]
	\arrow["{G(ρ_{F(A)})}", from=1-2, to=2-2]
      \end{tikzcd}
    \end{center}
    The second axiom holds by chasing the following diagram.

    \noindent \hfill
      % https://q.uiver.app/?q=WzAsMTIsWzAsMCwiRyhGKEEpKeKKl1gg4oqXWSJdLFsyLDAsIkcoRihBKeKKl1gpIOKKl1kiXSxbMywwLCJHKEYoQeKKl1gpKSDiipdZIl0sWzMsMSwiRyhGKEHiipdYKSDiipdZKSJdLFszLDIsIkcoRihB4oqXWOKKl1kpKSJdLFszLDMsIkcoRihB4oqXKFjiipdZKSkpIl0sWzMsNCwiRyhGKEHiipcoWOKKoFkpKSkiXSxbMCwyLCJHKEYoQSkp4oqXKFgg4oqXWSkiXSxbMCw0LCJHKEYoQSkp4oqXKFgg4oqgWSkiXSxbMiw0LCJHKEYoQSniipcoWCDiiqBZKSkiXSxbMiwxLCJHKEYoQSniipdYIOKKl1kpIl0sWzIsMiwiRyhGKEEp4oqXKFgg4oqXWSkpIl0sWzAsMSwic3ThtLNfe0YoQSksWH0g4oqXWSJdLFsxLDIsIkcoc3ReRl97QSxYfSniipdZIl0sWzIsMywic3ThtLNfe0YoQeKKl1gpLFl9Il0sWzMsNCwiRyhzdF5GX3tB4oqXWCxZfSkiXSxbNCw1LCJHKEYozrEpKSJdLFs1LDYsIkcoRihB4oqXzrpfe1gsWX0pKSJdLFswLDcsIs6xIiwyXSxbNyw4LCJHKEYoQSkp4oqXzrpfe1gsIFl9IiwyXSxbOCw5LCJzdOG0s197RihBKSxY4oqgWX0iLDJdLFs5LDYsIkcoc3ReRl97QSxY4oqgWX0pIiwyXSxbMSwxMCwic3ThtLNfe0YoQSniipdYLFl9IiwyXSxbMTAsMywiRyhzdF5GX3tBLFh94oqXWSkiLDJdLFsxMCwxMSwiRyjOsSkiLDJdLFsxMSw5LCJHKEYoQSniipfOul97WCwgWX0pIiwyXV0=
\begin{tikzcd}[ampersand replacement=\&,baseline=(\tikzcdmatrixname-5-1.base)]
	{G(F(A))⊗X ⊗Y} \&\& {G(F(A)⊗X) ⊗Y} \&\& {G(F(A⊗X)) ⊗Y} \\
	\&\& {G(F(A)⊗X ⊗Y)} \&\& {G(F(A⊗X) ⊗Y)} \\
	{G(F(A))⊗(X ⊗Y)} \&\& {G(F(A)⊗(X ⊗Y))} \&\& {G(F(A⊗X⊗Y))} \\
	\&\&\&\& {G(F(A⊗(X⊗Y)))} \\
	{G(F(A))⊗(X ⊠Y)} \&\& {G(F(A)⊗(X ⊠Y))} \&\& {G(F(A⊗(X⊠Y)))}
	\arrow["{\scalebox{1.2}{$stᴳ_{F(A),X} ⊗Y$}}", from=1-1, to=1-3]
	\arrow["{\scalebox{1.2}{$G(st^F_{A,X})⊗Y$}}", from=1-3, to=1-5]
	\arrow["{\scalebox{1.2}{$stᴳ_{F(A⊗X),Y}$}}", from=1-5, to=2-5]
	\arrow["{\scalebox{1.2}{$G(st^F_{A⊗X,Y})$}}", from=2-5, to=3-5]
	\arrow["{\scalebox{1.2}{$G(F(α))$}}", from=3-5, to=4-5]
	\arrow["{\scalebox{1.2}{$G(F(A⊗κ_{X,Y}))$}}", from=4-5, to=5-5]
	\arrow["{\scalebox{1.2}{$α$}}"'right, from=1-1, to=3-1, xshift=-10pt]
	\arrow["{\scalebox{1.2}{$G(F(A))⊗κ_{X, Y}$}}"'right, from=3-1, to=5-1, xshift=-10pt]
	\arrow["{\scalebox{1.2}{$stᴳ_{F(A),X⊠Y}$}}"', from=5-1, to=5-3]
	\arrow["{\scalebox{1.2}{$G(st^F_{A,X⊠Y})$}}"', from=5-3, to=5-5]
	\arrow["{\scalebox{1.2}{$stᴳ_{F(A)⊗X,Y}$}}"', from=1-3, to=2-3]
	\arrow["{\scalebox{1.2}{$G(st^F_{A,X}⊗Y)$}}"', from=2-3, to=2-5]
	\arrow["{\scalebox{1.2}{$G(α)$}}"', from=2-3, to=3-3]
	\arrow["{\scalebox{1.2}{$G(F(A)⊗κ_{X, Y})$}}"', from=3-3, to=5-3]
\end{tikzcd}
  \end{proof}
  Combining the last two results, any $\idⁿ = \id$ is structurally
  strong, with the following strength, obtained by inductively
  unfolding~\eqref{eq:composite:strength}:
  \begin{defi}\label{def:dbsn}
    Let the \alert{$n$th De Bruijn strength} of the identity functor,
    $𝐝𝐛𝐬ⁿ$, be defined by $𝐝𝐛𝐬ⁿ_{\id,X,Y}(x,σ) = (x,⇑^{n}σ)$.
  \end{defi}
  
  In summary:
  \begin{prop}\label{prop:dbsn}
    Each $𝐝𝐛𝐬ⁿ$ is a structural strength on the identity functor.
  \end{prop}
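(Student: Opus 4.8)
The plan is to prove Proposition~\ref{prop:dbsn} by induction on $n$, using exactly two ingredients already at hand: the preceding proposition, which states that the first De Bruijn strength $𝐝𝐛𝐬$ is a structural strength on $\id$, and Proposition~\ref{prop:compost}, which states that structurally strong endofunctors are closed under composition.

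First I would treat the base case $n = 0$. Since $⇑^{0}σ = σ$, Definition~\ref{def:dbsn} makes $𝐝𝐛𝐬^{0}_{\id,X,Y}$ the identity map on $X ⊗ Y$, so $𝐝𝐛𝐬^{0}$ is simply the identity natural transformation on $\id$. This is trivially a structural strength: taking $st = \id$, the right-unitor triangle reduces to $ρ_A = ρ_A$ and the pentagon reduces to $α'_{A,X,Y} = α'_{A,X,Y}$. (Alternatively, one could start the induction at $n = 1$, where the statement is precisely the preceding proposition.)

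For the inductive step, suppose $𝐝𝐛𝐬ⁿ$ is a structural strength on $\id$. I would then apply Proposition~\ref{prop:compost} with $G = F = \id$, equipping the outer copy $G$ with $𝐝𝐛𝐬ⁿ$ (structurally strong by the induction hypothesis) and the inner copy $F$ with $𝐝𝐛𝐬 = 𝐝𝐛𝐬^{1}$ (structurally strong by the preceding proposition). As $\id ∘ \id = \id$, this equips $\id$ with a structural strength, namely the composite~\eqref{eq:composite:strength}, which here unfolds to
\[
  \id(\id(X)) ⊗ Y \xto{𝐝𝐛𝐬ⁿ_{\id,X,Y}} \id(\id(X) ⊗ Y) \xto{\id(𝐝𝐛𝐬_{\id,X,Y})} \id(\id(X ⊗ Y)).
\]
Chasing $(x,σ) ∈ X ⊗ Y$ through it, the first map produces $(x, ⇑^{n}σ)$ and the second then $(x, ⇑(⇑^{n}σ))$, which equals $(x, ⇑^{n+1}σ)$ by the very definition of iterated lifting; this is $𝐝𝐛𝐬^{n+1}_{\id,X,Y}(x,σ)$. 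Hence $𝐝𝐛𝐬^{n+1}$ coincides with this composite and is therefore a structural strength on $\id$, completing the induction.

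I do not anticipate any genuine obstacle: the substance is entirely carried by Proposition~\ref{prop:compost} and the preceding proposition. The only point deserving attention is the element-level bookkeeping in the inductive step — checking that the composite strength produced by Proposition~\ref{prop:compost} out of two copies of $\id$ computes to the $(n+1)$-fold lifting $⇑^{n+1}$, which is precisely the closed form by which $𝐝𝐛𝐬^{n+1}$ was introduced in Definition~\ref{def:dbsn}. Choosing the orientation above (outer copy carrying $𝐝𝐛𝐬ⁿ$, inner carrying $𝐝𝐛𝐬$) is convenient because the resulting composite $(x,σ) ↦ (x, ⇑(⇑^{n}σ))$ matches $⇑^{n+1}$ on the nose, with no auxiliary commutation of liftings required.
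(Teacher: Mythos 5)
Your proof is correct and follows essentially the same route as the paper: the paper obtains Proposition~\ref{prop:dbsn} precisely by ``combining the last two results'' (the structural strength of the first De Bruijn strength and closure under composition, Proposition~\ref{prop:compost}) and inductively unfolding the composite strength~\eqref{eq:composite:strength}, which is exactly your induction. Your explicit base case and the element-level check that the composite computes to $⇑^{n+1}$ just make precise what the paper leaves implicit.
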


  Let us now extend this to general binding arities:
  \begin{prop}
    \label{prop:strong-endo-product}
    Given structurally strong endofunctors $(F,st^F)$ and $(G,stᴳ)$,
    % on a skew monoidal category $(𝐂,⊗,I,α,λ,ρ)$,
    the pointwise product $F×G$ admits the structural strength defined at any $X ∈ 𝐂$
    and $Y ∈ ℕ\Mod_ℕ$ by the composite
    \begin{equation}
    (F(X) × G(X)) ⊗ Y \xto{⟨π₁ ⊗ Y, π₂ ⊗ Y⟩} (F(X) ⊗ Y) × (G(X) ⊗ Y)
    \xto{st^F_{X,Y} × stᴳ_{X,Y}} F (X ⊗ Y) × G (X ⊗ Y).\label{eq:strength:prod}
  \end{equation}
  \end{prop}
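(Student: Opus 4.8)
The plan is to verify directly the three conditions defining a structural strength for the composite~\eqref{eq:strength:prod}, which we abbreviate $st_{X,Y}∶ (F×G)(X)⊗Y → (F×G)(X⊗Y)$. Write $q_F = π₁$ and $q_G = π₂$ for the two projections, viewed as natural transformations $F×G ⇒ F$ and $F×G ⇒ G$; since the product $F×G$ is taken pointwise, the pair $(q_F,q_G)$ is jointly monic, so it suffices to check every required diagram after postcomposition with the relevant component of $q_F$, and likewise of $q_G$.

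Naturality of $st$ in $X$ and in the pointed module $Y$ is immediate: the comparison map $⟨π₁⊗Y,π₂⊗Y⟩$ is natural by bifunctoriality of $⊗$ and the universal property of products, $st^F_{X,Y}×st^G_{X,Y}$ is natural because $st^F$ and $st^G$ are, and a composite of natural transformations is natural. For the two coherence diagrams, the key observation is that $st$ is \emph{compatible with the projections}: unfolding~\eqref{eq:strength:prod} yields $q_F∘st_{X,Y} = st^F_{X,Y}∘(q_F⊗Y)$, and symmetrically for $G$. Combining this identity with the naturality of $q_F$, of the right unitor $ρ$, of the associator $α$, and of the coequaliser maps $κ_{X,Y}$ occurring in $α'$, one may push $q_F$ through every occurrence of $st$ and of $(F×G)(-)$ appearing in the right-unitor triangle, resp.\ the associator diagram, of the definition of structural strength; what remains is exactly the right-unitor triangle, resp.\ the associator diagram, for $st^F$, which commutes by hypothesis. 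The same argument with $q_G$ settles the $G$-component, and joint monicity of $(q_F,q_G)$ concludes.

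The main obstacle, modest as it is, should be this last diagram chase for the associator axiom, where $q_F$ must be commuted past $α'_{(F×G)(A),X,Y} = ((F×G)(A)⊗κ_{X,Y})∘α_{(F×G)(A),X,Y}$ and past the tensored map $st^{F×G}_{A,X}⊗Y$; this is a routine consequence of the bifunctoriality of $⊗$ and the naturality of $α$ and of $κ$. Since in our application the ambient category is $𝐒𝐞𝐭$, one may alternatively chase a concrete element $(((a_F,a_G),ξ),υ)$ of $((F×G)(A)⊗X)⊗Y$ along both sides of each axiom, observing that the $F$- and $G$-coordinates never interact, so that equality of the two images reduces coordinatewise to the corresponding axiom for $st^F$ and for $st^G$.
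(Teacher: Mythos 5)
Your proposal is correct and matches the paper's own proof in essence: the paper likewise verifies the unit axiom by a direct chase and handles the associator axiom componentwise, noting that since the target is a product it suffices to postcompose with each projection (checking only the first by symmetry), which reduces everything to the corresponding coherence diagram for $st^F$ via the identity $π₁∘st_{X,Y} = st^F_{X,Y}∘(π₁⊗Y)$ and naturality. Your explicit appeal to joint monicity of the projections and the optional element-wise chase in $𝐒𝐞𝐭$ are just slight repackagings of the same argument.
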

  \begin{proof}
    The first axiom holds by chasing the following diagram.
    \begin{center}
    \hspace*{-5pt}
      % https://q.uiver.app/?q=WzAsNCxbMCwxLCIoRihYKSDDlyBHKFgpKSDiipcgSSJdLFsyLDEsIihGKFgpIOKKlyBJKSDDlyAoRyhYKSDiipcgSSkiXSxbNCwxLCJGIChYIOKKlyBJKSDDlyBHIChYIOKKlyBJKSJdLFsyLDAsIkYoWCkgw5cgRyhYKSJdLFswLDEsIuKfqM+A4oKBIOKKlyBJLCDPgOKCgiDiipcgSeKfqSIsMl0sWzEsMiwic3ReRl97WCxJfSDDlyBzdOG0s197WCxJfSIsMl0sWzMsMCwiXFxyaG9fe0YoWCkgw5cgRyhYKX0iLDJdLFszLDIsIkYoXFxyaG9fWCkgw5cgRyhcXHJob19YKSJdLFszLDEsIs+BX3tGKFgpfSDDlyDPgV97RyhYKX0iXV0=
      \begin{tikzcd}[ampersand replacement=\&]
	\&\& {F(X) × G(X)} \\
	{(F(X) × G(X)) ⊗ ℕ\!} \&\& {\!(F(X) ⊗ ℕ) × (G(X) ⊗ ℕ)\!} \&\& {\!F (X ⊗ ℕ) × G (X ⊗ ℕ)}
	\arrow["{⟨π₁ ⊗ ℕ, π₂ ⊗ ℕ⟩}"', from=2-1, to=2-3]
	\arrow["{st^F_{X,ℕ} × stᴳ_{X,ℕ}}"', from=2-3, to=2-5]
	\arrow["{\rho_{F(X) × G(X)}}"', from=1-3, to=2-1]
	\arrow["{F(\rho_X) × G(\rho_X)}", from=1-3, to=2-5]
	\arrow["{ρ_{F(X)} × ρ_{G(X)}}", from=1-3, to=2-3]
      \end{tikzcd}
    \end{center}      
    For the second axiom, we need to prove that the following diagram
    commutes.
    \begin{center}
\ajustedroit{
% https://q.uiver.app/?q=WzAsMTAsWzAsMCwiKEYoWCkgw5cgRyhYKSkg4oqXIFniipcgWiJdLFsyLDAsIigoRihYKSDiipcgWSkgw5cgKEcoWCkg4oqXIFkpKeKKl1oiXSxbMiwxLCIoRiAoWCDiipcgWSkgw5cgRyAoWCDiipcgWSkp4oqXWiJdLFsyLDIsIihGIChYIOKKlyBZKSDiipdaKcOXIChHIChYIOKKlyBZKeKKl1opIl0sWzIsMywiRiAoWCDiipcgWSDiipdaKcOXIEcgKFgg4oqXIFniipdaKSJdLFsyLDQsIkYgKFgg4oqXIChZIOKKl1opKcOXIEcgKFgg4oqXIChZ4oqXWikpIl0sWzIsNSwiRiAoWCDiipcgKFkg4oqgWikpw5cgRyAoWCDiipcgKFniiqBaKSkiXSxbMCwxLCIoRihYKSDDlyBHKFgpKSDiipcgKFniipcgWikiXSxbMCwzLCIoRihYKSDDlyBHKFgpKSDiipcgKFniiqAgWikiXSxbMCw1LCIoRihYKSAg4oqXIChZ4oqgIFopKSDDlyAoRyhYKSDiipcgKFniiqAgWikpIl0sWzAsMSwi4p+oz4DigoEg4oqXIFksIM+A4oKCIOKKlyBZ4p+pIOKKlyBaIl0sWzEsMiwiKHN0XkZfe1gsWX0gw5cgc3ThtLNfe1gsWX0p4oqXWiJdLFsyLDMsIuKfqM+A4oKBIOKKlyBaLCDPgOKCgiDiipcgWuKfqSJdLFszLDQsInN0XkZfe1jiipdZLFp9IMOXIHN04bSzX3tY4oqXWSxafSJdLFs0LDUsIkYozrFfe1gsWSxafSkgw5cgRyjOsV97WCxZLFp9KSJdLFs1LDYsIkYgKFgg4oqXIM66X3tZLFp9KcOXIEcgKFgg4oqXIM66X3tZLFp9KSJdLFswLDcsIs6xIiwyXSxbNyw4LCIoRihYKSDDlyBHKFgpKSDiipcgzrpfe1ksIFp9IiwyXSxbOCw5LCLin6jPgOKCgSDiipcgKFniiqBaKSwgz4DigoIgIOKKlyAoWeKKoFop4p+pIiwyXSxbOSw2LCJzdF5GX3tYLFniiqBafSDDlyBzdOG0s197WCxZ4oqgWn0iLDJdXQ==
\begin{tikzcd}[ampersand replacement=\&]
	{(F(X) × G(X)) ⊗ Y⊗ Z} \&\& {((F(X) ⊗ Y) × (G(X) ⊗ Y))⊗Z} \\
	{(F(X) × G(X)) ⊗ (Y⊗ Z)} \&\& {(F (X ⊗ Y) × G (X ⊗ Y))⊗Z} \\
	\&\& {(F (X ⊗ Y) ⊗Z)× (G (X ⊗ Y)⊗Z)} \\
	{(F(X) × G(X)) ⊗ (Y⊠ Z)} \&\& {F (X ⊗ Y ⊗Z)× G (X ⊗ Y⊗Z)} \\
	\&\& {F (X ⊗ (Y ⊗Z))× G (X ⊗ (Y⊗Z))} \\
	{(F(X)  ⊗ (Y⊠ Z)) × (G(X) ⊗ (Y⊠ Z))} \&\& {F (X ⊗ (Y ⊠Z))× G (X ⊗ (Y⊠Z))}
	\arrow["{\scalebox{1.2}{$⟨π₁ ⊗ Y, π₂ ⊗ Y⟩ ⊗ Z$}}", from=1-1, to=1-3]
	\arrow["{\scalebox{1.2}{$(st^F_{X,Y} × stᴳ_{X,Y})⊗Z$}}", from=1-3, to=2-3]
	\arrow["{\scalebox{1.2}{$⟨π₁ ⊗ Z, π₂ ⊗ Z⟩$}}", from=2-3, to=3-3]
	\arrow["{\scalebox{1.2}{$st^F_{X⊗Y,Z} × stᴳ_{X⊗Y,Z}$}}", from=3-3, to=4-3]
	\arrow["{\scalebox{1.2}{$F(α_{X,Y,Z}) × G(α_{X,Y,Z})$}}", from=4-3, to=5-3]
	\arrow["{\scalebox{1.2}{$F (X ⊗ κ_{Y,Z})× G (X ⊗ κ_{Y,Z})$}}", from=5-3, to=6-3]
	\arrow["{\scalebox{1.2}{$α$}}"', from=1-1, to=2-1]
	\arrow["{\scalebox{1.2}{$(F(X) × G(X)) ⊗ κ_{Y, Z}$}}"', from=2-1, to=4-1]
	\arrow["{\scalebox{1.2}{$⟨π₁ ⊗ (Y⊠Z), π₂  ⊗ (Y⊠Z)⟩$}}"', from=4-1, to=6-1]
	\arrow["{\scalebox{1.2}{$st^F_{X,Y⊠Z} × stᴳ_{X,Y⊠Z}$}}"', from=6-1, to=6-3]
\end{tikzcd}}
    \end{center}
    For this, since the target is a product, we proceed componentwise,
    and by symmetry it suffices to check the first:

    \noindent \hfill
\ajustedroit[.9]{
% file:///home/thirs/github/quiver/src/index.html?q=WzAsMTAsWzAsMCwiKEYoWCkgw5cgRyhYKSkg4oqXIFniipcgWiJdLFsyLDAsIkYoWCkg4oqXIFniipdaIl0sWzQsMCwiRiAoWCDiipcgWSkg4oqXWiJdLFs0LDEsIkYgKFgg4oqXIFkg4oqXWikiXSxbNCwyLCJGIChYIOKKlyAoWSDiipdaKSkiXSxbNCw0LCJGIChYIOKKlyAoWSDiiqBaKSlcXHJsYXB7Ln0iXSxbMCwyLCIoRihYKSDDlyBHKFgpKSDiipcgKFniipcgWikiXSxbMCw0LCIoRihYKSDDlyBHKFgpKSDiipcgKFniiqAgWikiXSxbMiw0LCJGKFgpICDiipcgKFniiqAgWikiXSxbMiwyLCJGKFgpIOKKlyAoWeKKl1opIl0sWzAsMSwiz4DigoEg4oqXIFniipcgWiJdLFsxLDIsInN0XkZfe1gsWX0g4oqXWiJdLFsyLDMsInN0XkZfe1jiipdZLFp9Il0sWzMsNCwiRijOsV97WCxZLFp9KSJdLFs0LDUsIkYgKFgg4oqXIM66X3tZLFp9KSJdLFswLDYsIs6xIiwyXSxbNiw3LCIoRihYKSDDlyBHKFgpKSDiipcgzrpfe1ksIFp9IiwyXSxbNyw4LCLPgOKCgSDiipcgKFniiqBaKSIsMl0sWzgsNSwic3ReRl97WCxZ4oqgWn0gIiwyXSxbMSw5LCLOsSJdLFs2LDksIs+A4oKBIOKKlyAoWeKKlyBaKSJdLFs5LDgsIkYoWCkg4oqXIM66X3tZLFp9Il1d
\begin{tikzcd}[ampersand replacement=\&,baseline=(\tikzcdmatrixname-5-1.base)]
	{(F(X) × G(X)) ⊗ Y⊗ Z} \&\& {F(X) ⊗ Y⊗Z} \&\& {F (X ⊗ Y) ⊗Z} \\
	\&\&\&\& {F (X ⊗ Y ⊗Z)} \\
	{(F(X) × G(X)) ⊗ (Y⊗ Z)} \&\& {F(X) ⊗ (Y⊗Z)} \&\& {F (X ⊗ (Y ⊗Z))} \\
	\\
	{(F(X) × G(X)) ⊗ (Y⊠ Z)} \&\& {F(X)  ⊗ (Y⊠ Z)} \&\& {F (X ⊗ (Y ⊠Z))\rlap{.}}
	\arrow["{\scalebox{1.2}{$π₁ ⊗ Y⊗ Z$}}", from=1-1, to=1-3]
	\arrow["{\scalebox{1.2}{$st^F_{X,Y} ⊗Z$}}", from=1-3, to=1-5]
	\arrow["{\scalebox{1.2}{$st^F_{X⊗Y,Z}$}}", from=1-5, to=2-5]
	\arrow["{\scalebox{1.2}{$F(α_{X,Y,Z})$}}", from=2-5, to=3-5]
	\arrow["{\scalebox{1.2}{$F (X ⊗ κ_{Y,Z})$}}", from=3-5, to=5-5]
	\arrow["{\scalebox{1.2}{$α$}}"', from=1-1, to=3-1]
	\arrow["{\scalebox{1.2}{$(F(X) × G(X)) ⊗ κ_{Y, Z}$}}"', from=3-1, to=5-1]
	\arrow["{\scalebox{1.2}{$π₁ ⊗ (Y⊠Z)$}}"', from=5-1, to=5-3]
	\arrow["{\scalebox{1.2}{$st^F_{X,Y⊠Z} $}}"', from=5-3, to=5-5]
	\arrow["{\scalebox{1.2}{$α$}}", from=1-3, to=3-3]
	\arrow["{\scalebox{1.2}{$π₁ ⊗ (Y⊗ Z)$}}", from=3-1, to=3-3]
	\arrow["{\scalebox{1.2}{$F(X) ⊗ κ_{Y,Z}$}}", from=3-3, to=5-3]
\end{tikzcd}} \qedhere
  \end{proof}

  \begin{cor}\label{cor:dbsa}
    For any binding arity $a = (n₁,…,nₚ)$, the family
    \begin{displaymath}
      \begin{array}{rcl}
        𝐝𝐛𝐬_{a,X,Y}∶ Xᵖ⊗Y & → & (X⊗Y)ᵖ \\
        ((x₁,…,xₚ),σ) & ↦ & ((x₁,⇑^{n₁}σ),…,(xₚ,⇑^{nₚ}σ))\rlap{,} %
      \end{array}
    \end{displaymath}
    for all sets $X$ and pointed $ℕ$-modules $Y$, defines a structural
    strength on the endofunctor $X ↦ Xᵖ$, which we call the \alert{De
      Bruijn strength} $𝐝𝐛𝐬ₐ$ of $a$.
  \end{cor}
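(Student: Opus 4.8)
The plan is to recognise the endofunctor $X ↦ Xᵖ$ as the $p$-fold pointwise product of copies of the identity functor, and to assemble the claimed strength by iterating the product construction of Proposition~\ref{prop:strong-endo-product}, starting from the iterated De Bruijn strengths $𝐝𝐛𝐬ⁿ$ of the identity, which are structural strengths by Proposition~\ref{prop:dbsn}. Concretely, I would argue by induction on $p$, so that the only genuine work is to check that the strength produced by the iterated product agrees on the nose with the closed formula given for $𝐝𝐛𝐬ₐ$.

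For the base cases: when $p = 0$ the functor is constant at the terminal set $1$, and the unique map $1 ⊗ Y → 1$ is vacuously a structural strength since every diagram with target $1$ commutes; when $p = 1$ the statement is exactly Proposition~\ref{prop:dbsn} applied to the arity $(n₁)$, since by Definition~\ref{def:dbsn} we have $𝐝𝐛𝐬_{(n₁),X,Y}(x,σ) = (x, ⇑^{n₁}σ) = 𝐝𝐛𝐬^{n₁}_{\id,X,Y}(x,σ)$.

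For the inductive step, I would write $X^{p+1} = Xᵖ × X$, take $F$ to be the endofunctor $X ↦ Xᵖ$ equipped — by the induction hypothesis — with the strength $𝐝𝐛𝐬_{(n₁,…,nₚ)}$, and $G = \id$ equipped with $𝐝𝐛𝐬^{n_{p+1}}$. Proposition~\ref{prop:strong-endo-product} then equips $F × G = ({-})^{p+1}$ with a structural strength, and it remains only to identify it with the stated formula by unfolding the composite~\eqref{eq:strength:prod} on an element $((x₁,…,x_{p+1}),σ)$: the map $⟨π₁ ⊗ Y, π₂ ⊗ Y⟩$ duplicates the assignment, giving $(((x₁,…,xₚ),σ),(x_{p+1},σ))$; then $𝐝𝐛𝐬_{(n₁,…,nₚ)} × 𝐝𝐛𝐬^{n_{p+1}}$ produces $(((x₁,⇑^{n₁}σ),…,(xₚ,⇑^{nₚ}σ)),(x_{p+1},⇑^{n_{p+1}}σ))$, which under the identification $(X⊗Y)ᵖ × (X⊗Y) = (X⊗Y)^{p+1}$ is precisely $𝐝𝐛𝐬_{(n₁,…,n_{p+1}),X,Y}((x₁,…,x_{p+1}),σ)$. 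This closes the induction.

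I do not expect any real obstacle: the two structural-strength axioms for the product functor are already discharged inside Proposition~\ref{prop:strong-endo-product}, so what is left is the routine bookkeeping matching the iterated product strength with the explicit formula for $𝐝𝐛𝐬ₐ$, plus the trivial but easily overlooked empty-arity base case $p = 0$. The mildest point of care is keeping the reassociations $X^{p+1} ≅ Xᵖ × X$ (and the analogous one in the target) consistent throughout the induction.
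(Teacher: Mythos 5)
Your proposal is correct and follows exactly the paper's argument: the paper's proof of Corollary~\ref{cor:dbsa} is precisely ``by inductively unfolding~\eqref{eq:strength:prod} using Proposition~\ref{prop:dbsn}'', i.e.\ the iterated product of identity functors equipped with the strengths $𝐝𝐛𝐬^{nᵢ}$. You have simply spelled out the induction, the base cases, and the element-level identification of the resulting strength with the closed formula, all of which the paper leaves implicit.
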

  \begin{proof}
    By inductively unfolding~\eqref{eq:strength:prod} using
    Proposition~\ref{prop:dbsn}.    
  \end{proof}

  As promised, let us now express the binding condition in terms
  of strengths:
  \begin{prop}\label{prop:bindingiffpenta:a}
    For any binding arity $a = (n₁,…,nₚ)$ and De Bruijn monad
    $(M,s,v)$, a map $o∶ Mᵖ → M$ satisfies the $a$-binding condition
    w.r.t.\ $(s,v)$ iff the following pentagon commutes.
    \begin{equation}
      \diag(1,1.3){%
        Xᵖ ⊗ X \&  (X⊗X)ᵖ \& Xᵖ \\
        X⊗X \& 
        \& X
      }{%
        (m-1-1) edge[labela={𝐝𝐛𝐬_{a,X,X}}] (m-1-2) %
       edge[labell={o⊗X}] (m-2-1) %
       (m-1-2) edge[labela={sᵖ}] (m-1-3) %
       (m-2-1) edge[labelb={s}] (m-2-3) %
       (m-1-3) edge[labelr={o}] (m-2-3) %
      }
      \label{eq:pentagon:a}
    \end{equation}    
  \end{prop}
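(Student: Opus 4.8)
The plan is to prove the equivalence by a direct element chase, once all the maps in the pentagon have been unwound. First, recall that since $(M,s,v)$ is a De Bruijn monad its underlying set — which I write $X$ — carries the canonical pointed $ℕ$-module structure of Example~\ref{ex:dbmnd:module}, with action $r(x,f) = x[v∘f]ₛ$ and point $v$, and that $s∶ X ⊗ X → X$ in the pentagon is precisely the substitution map, $s(x,σ) = x[σ]$, viewed as the monoid multiplication via Corollary~\ref{cor:skew}. The pointed $ℕ$-module structure on $X$ is exactly what makes $𝐝𝐛𝐬_{a,X,X}$ well-typed, with $X$ playing simultaneously the role of the plain-set argument and of the pointed-$ℕ$-module argument of the structural strength. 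Before chasing the diagram I would check that the lifting $⇑$ on this pointed $ℕ$-module agrees with the lifting of assignments on $(M,s,v)$ from Definition~\ref{def:prime}: both send $0$ to $v(0)$, while $(⇑σ)(n+1)$ is, in the module version, the action of the successor renaming on $σ(n)$, namely $σ(n)[v∘(m ↦ m+1)]ₛ = σ(n)[m ↦ v(m+1)]ₛ$, which is precisely $σ(n)[↑]ₛ$ for the $↑∶ ℕ → X$ of Definition~\ref{def:prime}. By induction on $nᵢ$ the iterates $⇑^{nᵢ}σ$ are likewise unambiguous.

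Next I would take an arbitrary element $((x₁,…,xₚ),σ)$ of $Xᵖ ⊗ X = Xᵖ × X^ℕ$ and push it along both edges of the pentagon~\eqref{eq:pentagon:a}. Along the left-then-bottom path, $o ⊗ X$ produces $(o(x₁,…,xₚ),σ) ∈ X ⊗ X$, and then $s$ produces $o(x₁,…,xₚ)[σ]$. Along the top-then-right path, $𝐝𝐛𝐬_{a,X,X}$ produces $((x₁,⇑^{n₁}σ),…,(xₚ,⇑^{nₚ}σ)) ∈ (X⊗X)ᵖ$ by the componentwise description in Corollary~\ref{cor:dbsa}; then $sᵖ$, the $p$-fold product of the substitution map, produces $(x₁[⇑^{n₁}σ],…,xₚ[⇑^{nₚ}σ])$; and finally $o$ produces $o(x₁[⇑^{n₁}σ],…,xₚ[⇑^{nₚ}σ])$.

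Comparing the two outputs, the pentagon commutes if and only if
\[ o(x₁,…,xₚ)[σ] = o(x₁[⇑^{n₁}σ],…,xₚ[⇑^{nₚ}σ]) \]
holds for all $σ∶ ℕ → M$ and all $x₁,…,xₚ ∈ M$, which is literally the $a$-binding condition~\eqref{eq:binding} w.r.t.\ $(s,v)$; this settles both implications simultaneously. I do not expect any genuine obstacle here: the only slightly delicate point is the identification, recorded in the first paragraph, of the module-theoretic lifting used to define $𝐝𝐛𝐬ₐ$ with the De Bruijn lifting appearing in the binding condition, and everything else is bookkeeping with the identification $A ⊗ B = A × B^ℕ$ and with the explicit form of $𝐝𝐛𝐬ₐ$ from Corollary~\ref{cor:dbsa}.
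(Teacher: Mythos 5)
Your proposal is correct and follows essentially the same route as the paper: a direct element chase of both legs of the pentagon, showing that the resulting equality of outputs is literally the $a$-binding condition. The paper's proof is just the chase itself; your extra paragraph checking that the pointed-$ℕ$-module lifting agrees with the lifting of Definition~\ref{def:prime} is a sensible precaution that the paper leaves implicit.
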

  \begin{proof}
    The bottom left composite maps any tuple $((x₁,…,xₚ),σ)$
    to $o(x₁,…,xₚ)[σ]$, while the top right one
    maps it first to
    $$((x₁, ⇑^{n₁} σ),…,(xₚ, ⇑^{nₚ} σ))\rlap{,}$$
    then to
    $$(x₁[⇑^{n₁} σ],…,xₚ[⇑^{nₚ} σ])\rlap{,}$$
    and finally to
    $$o(x₁[⇑^{n₁} σ],…,xₚ[⇑^{nₚ} σ])\rlap{,}$$
    as desired.
  \end{proof}
  
  At last, let us now define the De Bruijn strength of the endofunctor
  $Σ_S$ induced by an arbitrary binding signature $S$.  For this, just
  as we have shown that structurally strong endofunctors are closed
  under products (Proposition~\ref{prop:strong-endo-product}), we start by
  showing that they are closed under coproducts.
  \begin{prop}
  %   Consider any skew monoidal category $(𝐂,⊗,I,α,λ,ρ)$ with
  %   coproducts, in which each cotupling
  %   \begin{equation}
  %   [inᵢ ⊗ Y]ᵢ∶ ∑_{i ∈ I} (Xᵢ ⊗ Y) → (∑_{i ∈ I} Xᵢ) ⊗ Y\label{eq:distrib}
  % \end{equation}
  % is an isomorphism.
    \label{prop:coprod-strong}
  Given structurally strong endofunctors
    $(Fᵢ,stⁱ)_{i ∈ I}$, the pointwise coproduct $∑ᵢ Fᵢ$ admits the
    structural strength defined at any $X ∈ 𝐂$ and $Y ∈ I\Mod_I$ by
    the composite
    $$\Big (∑ᵢFᵢ(X) \Big )⊗Y ≅ ∑ᵢ(Fᵢ(X)⊗Y) \xto{∑ᵢ stⁱ_{X,Y}} ∑ᵢ Fᵢ(X⊗Y).$$
  \end{prop}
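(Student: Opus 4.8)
The plan is to observe that, for each fixed second argument $Y$, the endofunctor $(-)⊗Y = (-)×Y^{ℕ}$ on $𝐒𝐞𝐭$ is a left adjoint (with right adjoint $Z ↦ Z^{Y^{ℕ}}$), hence preserves coproducts — equivalently, as already noted in the proof that $⊠$ is well-defined, $A⊗Y$ is a $Y^{ℕ}$-fold coproduct of copies of $A$, and coproducts commute with coproducts. This yields a natural isomorphism
\[
  δ_{X,Y}∶ \Bigl(\,∑_i F_i(X)\,\Bigr)⊗Y \;≅\; ∑_i\bigl(F_i(X)⊗Y\bigr)\rlap{,}
\]
natural both in the set $X$ and in the pointed $ℕ$-module $Y$, and it is exactly the first arrow of the composite in the statement. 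I take that composite, $\bigl(∑_i st^i_{X,Y}\bigr) ∘ δ_{X,Y}$, as candidate structural strength on $∑_i F_i$; its naturality in $X$ and $Y$ is immediate from that of $δ$ and of the $st^i$.

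It then remains to check the two structural-strength axioms. Since $(-)⊗X$ and $(-)⊗Y$ preserve coproducts, after unfolding $δ$ the common source of each axiom is itself a coproduct, with coprojections obtained by tensoring the coprojections $ι_j∶ F_j(A) → ∑_i F_i(A)$ — for the associator pentagon the source is $∑_i\bigl((F_i(A)⊗X)⊗Y\bigr)$, with coprojections the maps $(ι_j⊗X)⊗Y$; so, these being jointly epic, it suffices to verify each axiom after precomposition with one such coprojection. Precomposing with $ι_j$ (suitably tensored), and repeatedly using that $δ$ carries a tensored coprojection to the matching coprojection, that $∑_i st^i$ commutes with coprojections, the bifunctoriality of $⊗$, and the naturality of the ordinary associator $α$ in its first argument — whence also that of $α' = (A⊗κ_{X,Y})∘α_{A,X,Y}$ — one finds that both sides of each axiom reduce, after postcomposition with the coprojection $F_j(\cdots) ↪ ∑_i F_i(\cdots)$, to the corresponding axiom for $st^j$, which holds by hypothesis. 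Since the source coprojections are jointly epic, the axioms for $∑_i F_i$ follow.

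I expect no genuine obstacle here beyond organising the verification componentwise. The one point demanding a little care is the skew associator $α'$, whose target involves the coequaliser $⊠$: but this causes no trouble, because $⊗$ is a bifunctor and every instance of $(-)⊗Z$ appearing in the computation still preserves the coproducts being manipulated, so the reduction to a single summand goes through exactly as in the associator axiom for products (Proposition~\ref{prop:strong-endo-product}). The bookkeeping of these naturality and functoriality squares is the only real content of the argument.
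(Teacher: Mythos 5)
Your proposal is correct and is essentially the paper's own argument: the paper likewise verifies both axioms by chasing diagrams whose cells are exactly the naturality squares for the canonical isomorphism $(∑_iF_i(X))⊗Y ≅ ∑_i(F_i(X)⊗Y)$ (and its analogue after tensoring with $Z$ or $Y⊠Z$) together with the coproduct of the individual strength axioms for the $F_i$. Your reformulation via jointly epic coprojections is just the componentwise reading of that same diagram chase.
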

  % \begin{rem}
  %   The hypothesis~\eqref{eq:distrib} holds in $𝐒𝐞𝐭$ by
  %   distributivity.
  % \end{rem}
  \begin{proof}
    The first axiom holds by chasing the following diagram.
    \begin{center}
% % file:///home/thirs/github/quiver/src/index.html?q=WzAsNCxbMCwxLCIo4oiR4bWiRuG1oihYKSniipdcXG1hdGhiYntOfSJdLFsxLDEsIuKIkeG1oihG4bWiKFgp4oqXXFxtYXRoYmJ7Tn0pIl0sWzIsMSwi4oiR4bWiIEbhtaIoWOKKl1xcbWF0aGJie059KSJdLFsxLDAsIuKIkeG1okbhtaIoWCkiXSxbMCwxLCJcXGNvbmciLDJdLFsxLDIsIuKIkeG1oiBzdOKBsV97WCxcXG1hdGhiYntOfX0iLDJdLFszLDAsIs+BX3viiJHhtaJG4bWiKFgpfSIsMl0sWzMsMiwi4oiR4bWiRuG1oijPgV9YKSJdLFszLDEsIuKIkeG1os+BX3tG4bWiKFgpfSJdXQ==
\begin{tikzcd}[ampersand replacement=\&]
	\&\& {∑ᵢFᵢ(X)} \\
	{(∑ᵢFᵢ(X))⊗\mathbb{N}} \&\& {∑ᵢ(Fᵢ(X)⊗\mathbb{N})} \&\& {∑ᵢ Fᵢ(X⊗\mathbb{N})}
	\arrow["{\scalebox{1.2}{$\cong$}}"', from=2-1, to=2-3]
	\arrow["{\scalebox{1.2}{$∑ᵢ stⁱ_{X,\mathbb{N}}$}}"', from=2-3, to=2-5]
	\arrow["{\scalebox{1.2}{$ρ_{∑ᵢFᵢ(X)}$}}"', from=1-3, to=2-1]
	\arrow["{\scalebox{1.2}{$∑ᵢFᵢ(ρ_X)$}}", from=1-3, to=2-5]
	\arrow["{\scalebox{1.2}{$∑ᵢρ_{Fᵢ(X)}$}}", from=1-3, to=2-3]
\end{tikzcd}
    \end{center}
    The second axiom holds by chasing the diagram in Figure~\ref{fig:big-diag}.
    \end{proof}
    \qedhere
    \begin{figure}
      \centering
    \ajustedroit[.9]{
% % file:///home/thirs/github/quiver/src/index.html?q=WzAsMTIsWzAsMCwiKOKIkeG1okbhtaIoWCkp4oqXWSDiipcgWiJdLFsyLDAsIuKIkeG1oihG4bWiKFgp4oqXWSkg4oqXWiJdLFsyLDIsIuKIkeG1oiBG4bWiKFjiipdZKSDiipcgWiJdLFsyLDQsIuKIkeG1oiAoRuG1oihY4oqXWSkg4oqXIFopIl0sWzIsNSwi4oiR4bWiIEbhtaIoWOKKl1kg4oqXIFopIl0sWzIsNiwi4oiR4bWiIEbhtaIoWOKKlyhZIOKKlyBaKSkiXSxbMiw3LCLiiJHhtaIgRuG1oihY4oqXKFkg4oqgIFopKSJdLFswLDEsIijiiJHhtaJG4bWiKFgpKeKKlyhZIOKKlyBaKSJdLFswLDUsIijiiJHhtaJG4bWiKFgpKeKKlyhZIOKKoCBaKSJdLFswLDcsIuKIkeG1oihG4bWiKFgp4oqXKFkg4oqgIFopKSJdLFsxLDIsIuKIkeG1oiAoRuG1oihYKeKKl1kg4oqXIFopIl0sWzEsNCwi4oiR4bWiIChG4bWiKFgp4oqXKFkg4oqXIFopKSJdLFswLDEsIltpbuG1oiDiipcgWV3htaJeey0xfSDiipcgWiJdLFsxLDIsIuKIkeG1oiBzdOKBsV97WCxZfSDiipcgWiJdLFsyLDMsIltpbuG1oiDiipcgWl3htaJeey0xfSAiXSxbMyw0LCLiiJHhtaIgc3TigbFfe1jiipdZLFp9Il0sWzQsNSwi4oiR4bWiIEbhtaIozrFfe1gsWSxafSkiXSxbNSw2LCLiiJHhtaIgRuG1oihY4oqXzrpfe1ksICBafSkiXSxbMCw3LCLOsSAiLDJdLFs3LDgsIijiiJHhtaJG4bWiKFgpKeKKl866X3tZLCBafSIsMl0sWzgsOSwiW2lu4bWiIOKKlyAoWSDiiqBaKV3htaJeey0xfSAiLDJdLFs5LDYsIuKIkeG1oiBzdOKBsV97WCxZIOKKoCBafSIsMl0sWzEsMTAsIltpbuG1oiDiipcgWl3htaJeey0xfSAiLDIseyJsYWJlbF9wb3NpdGlvbiI6ODB9XSxbMTAsMywi4oiR4bWiIChzdOKBsV97WCxZfSDiipcgWikiXSxbMTAsMTEsIuKIkeG1oiDOsV97Rl9pKFgpLFksWn0iXSxbMTEsOSwi4oiR4bWiIChG4bWiKFgp4oqXzrpfe1ksIFp9KSJdLFs3LDExLCJbaW7htaIg4oqXIChZIOKKlyBaKV3htaJeey0xfSAiLDIseyJsYWJlbF9wb3NpdGlvbiI6NzB9XV0=
\begin{tikzcd}[ampersand replacement=\&]
	{(∑ᵢFᵢ(X))⊗Y ⊗ Z} \&\& {∑ᵢ(Fᵢ(X)⊗Y) ⊗Z} \\
	{(∑ᵢFᵢ(X))⊗(Y ⊗ Z)} \\
	\& {∑ᵢ (Fᵢ(X)⊗Y ⊗ Z)} \& {∑ᵢ Fᵢ(X⊗Y) ⊗ Z} \\
	\\
	\& {∑ᵢ (Fᵢ(X)⊗(Y ⊗ Z))} \& {∑ᵢ (Fᵢ(X⊗Y) ⊗ Z)} \\
	{(∑ᵢFᵢ(X))⊗(Y ⊠ Z)} \&\& {∑ᵢ Fᵢ(X⊗Y ⊗ Z)} \\
	\&\& {∑ᵢ Fᵢ(X⊗(Y ⊗ Z))} \\
	{∑ᵢ(Fᵢ(X)⊗(Y ⊠ Z))} \&\& {∑ᵢ Fᵢ(X⊗(Y ⊠ Z))}
	\arrow["{\scalebox{1.2}{$[inᵢ ⊗ Y]ᵢ^{-1} ⊗ Z$}}", from=1-1, to=1-3]
	\arrow["{\scalebox{1.2}{$∑ᵢ stⁱ_{X,Y} ⊗ Z$}}", from=1-3, to=3-3]
	\arrow["{\scalebox{1.2}{$[inᵢ ⊗ Z]ᵢ^{-1} $}}", from=3-3, to=5-3]
	\arrow["{\scalebox{1.2}{$∑ᵢ stⁱ_{X⊗Y,Z}$}}", from=5-3, to=6-3]
	\arrow["{\scalebox{1.2}{$∑ᵢ Fᵢ(α_{X,Y,Z})$}}", from=6-3, to=7-3]
	\arrow["{\scalebox{1.2}{$∑ᵢ Fᵢ(X⊗κ_{Y,  Z})$}}", from=7-3, to=8-3]
	\arrow["{\scalebox{1.2}{$α $}}"', from=1-1, to=2-1]
	\arrow["{\scalebox{1.2}{$(∑ᵢFᵢ(X))⊗κ_{Y, Z}$}}"', from=2-1, to=6-1]
	\arrow["{\scalebox{1.2}{$[inᵢ ⊗ (Y ⊠Z)]ᵢ^{-1} $}}"', from=6-1, to=8-1]
	\arrow["{\scalebox{1.2}{$∑ᵢ stⁱ_{X,Y ⊠ Z}$}}"', from=8-1, to=8-3]
	\arrow["{\scalebox{1.2}{$[inᵢ ⊗ Z]ᵢ^{-1} $}}"'{pos=0.8}, from=1-3, to=3-2]
	\arrow["{\scalebox{1.2}{$∑ᵢ (stⁱ_{X,Y} ⊗ Z)$}}", from=3-2, to=5-3, xshift=-5pt]
	\arrow["{\scalebox{1.2}{$∑ᵢ α_{F_i(X),Y,Z}$}}", from=3-2, to=5-2, yshift=-5pt]
	\arrow["{\scalebox{1.2}{$∑ᵢ (Fᵢ(X)⊗κ_{Y, Z})$}}", from=5-2, to=8-1]
	\arrow["{\scalebox{1.2}{$[inᵢ ⊗ (Y ⊗ Z)]ᵢ^{-1} $}}"'{pos=0.7}, from=2-1, to=5-2]
      \end{tikzcd}
    }
\caption{Diagram chasing for the proof of Proposition~\ref{prop:coprod-strong}.}
      \label{fig:big-diag}
    \end{figure}
    % \noindent \hfill

    Let us finally put things together:
  \begin{cor}
    For any binding signature $S = (O,\ar)$, the endofunctor $Σ_S$ induced by $S$
    admits as structural strength the composite
$$(∑ₒ X^{|\ar(o)|})⊗Y ≅ ∑ₒ (X^{|\ar(o)|}⊗Y) \xto{∑ₒ 𝐝𝐛𝐬_{\ar(o),X,Y}} ∑ₒ (X⊗Y)^{|\ar(o)|}\rlap{,}$$
    or more concretely
    $$\begin{array}[t]{rcl}
        Σ_S(X)⊗Y & → & Σ_S(X⊗Y) \\
        ((o,(x₁,…,x_{pₒ})),σ) & ↦ & (o,((x₁,⇑^{n₁}σ),…,(x_{pₒ},⇑^{n_{pₒ}}σ)))
                                    \mbox{,}
     \end{array}$$
     \noindent for all sets $X$ and pointed $ℕ$-modules $Y$,
     where $\ar(o) = (n₁,…,n_{pₒ})$.

     We call this the \alert{De Bruijn strength} $𝐝𝐛𝐬_S$ of $Σ_S$.
  \end{cor}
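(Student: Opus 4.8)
The plan is simply to combine the two closure results just established with the definition of $Σ_S$. By Definition~\ref{def:SigmaS}, $Σ_S$ is by construction the pointwise coproduct $∑_{o ∈ O}$ of the endofunctors $X ↦ X^{|\ar(o)|}$. By Corollary~\ref{cor:dbsa}, each of these endofunctors carries the structural strength $𝐝𝐛𝐬_{\ar(o)}$ of the binding arity $\ar(o)$. Applying Proposition~\ref{prop:coprod-strong} to this family $(X ↦ X^{|\ar(o)|},\ 𝐝𝐛𝐬_{\ar(o)})_{o ∈ O}$ then immediately gives that $Σ_S$ is structurally strong, with structural strength the composite
$$(∑_o X^{|\ar(o)|})⊗Y ≅ ∑_o (X^{|\ar(o)|}⊗Y) \xto{∑_o 𝐝𝐛𝐬_{\ar(o),X,Y}} ∑_o (X⊗Y)^{|\ar(o)|},$$
for every set $X$ and pointed $ℕ$-module $Y$; this is exactly the first display in the statement.

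It then only remains to unfold this composite into the concrete componentwise formula. I would evaluate it on an element $((o,(x₁,…,x_{pₒ})),σ)$ of $Σ_S(X)⊗Y$, writing $\ar(o) = (n₁,…,n_{pₒ})$: the distributivity isomorphism $(∑_o X^{|\ar(o)|})⊗Y ≅ ∑_o(X^{|\ar(o)|}⊗Y)$ sends it to $(o,((x₁,…,x_{pₒ}),σ))$ in the $o$-th summand; the map $∑_o 𝐝𝐛𝐬_{\ar(o),X,Y}$ then acts there by $𝐝𝐛𝐬_{\ar(o),X,Y}$, which by Corollary~\ref{cor:dbsa} sends $((x₁,…,x_{pₒ}),σ)$ to $((x₁,⇑^{n₁}σ),…,(x_{pₒ},⇑^{n_{pₒ}}σ))$; re-tagging with $o$ yields the announced value $(o,((x₁,⇑^{n₁}σ),…,(x_{pₒ},⇑^{n_{pₒ}}σ)))$.

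There is no genuine obstacle here, since the substantive work is done by Corollary~\ref{cor:dbsa} and Proposition~\ref{prop:coprod-strong}; the only point requiring a little care is purely notational, namely that the reindexing isomorphism $(∑_o A_o)⊗Y ≅ ∑_o(A_o⊗Y)$ used in Proposition~\ref{prop:coprod-strong} is the canonical one coming from the distributivity of $-⊗Y = -×Y^ℕ$ over coproducts in $𝐒𝐞𝐭$, hence acts as the identity on the operation tag and on the $Y^ℕ$-coordinate. Once this is spelled out, the verification above is immediate, and naming the resulting strength $𝐝𝐛𝐬_S$ completes the proof.
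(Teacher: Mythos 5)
Your proposal is correct and matches the paper's own (implicit) argument exactly: the paper also obtains the corollary by applying Proposition~\ref{prop:coprod-strong} to the family of structurally strong functors $X ↦ X^{|\ar(o)|}$ furnished by Corollary~\ref{cor:dbsa}, the concrete formula being just the elementwise unfolding of the resulting composite. Your extra remark that the distributivity isomorphism acts as the identity on the operation tag and the assignment coordinate is a harmless (and accurate) elaboration of what the paper leaves unstated.
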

  \begin{proof}
    Recalling that, by Definition~\ref{def:SigmaS}, we have
    $Σ_S(X) = ∑_{o ∈ O} X^{|\ar(o)|}$, $Σ_S$ is a coproduct of
    functors $X ↦ X^{ | \ar(o)|}$ with structural strengths
    $𝐝𝐛𝐬_{\ar(o)}$ by Corollary~\ref{cor:dbsa}, hence admits the given
    structural strength by Proposition~\ref{prop:coprod-strong}.
  \end{proof}

  In order to relate the initial-algebra semantics
  of~§\ref{s:elementary} to the strength-based approach
  of~\cite{fiore:presheaf,DBLP:conf/lics/Fiore08}, let us recall the
  definition of models, following the generalisation to the skew
  monoidal setting~\cite{BHL}.
  \begin{defi}
    Given an endofunctor $Σ$ with structural strength $st$, a
    \alert{$Σ$-monoid} is an object $X$, equipped with monoid and
    $Σ$-algebra structures, say $s∶ X⊗X → X$, $v∶ ℕ → X$, and
    $a∶ Σ(X) → X$, making the following pentagon commute.
    \begin{equation}
      \diag{%
        Σ(X) ⊗ X \& Σ (X⊗X) \&  Σ(X) \\
        X⊗X \&      \& X
      }{%
        (m-1-1) edge[labela={st_{X,X}}] (m-1-2) %
       edge[labell={a⊗X}] (m-2-1) %
       (m-1-2) edge[labela={Σ(s)}] (m-1-3) %
       (m-2-1) edge[labelb={s}] (m-2-3) %
       (m-1-3) edge[labelr={a}] (m-2-3) %
      }
      \label{eq:pentagon}
    \end{equation}
    A morphism of $Σ$-monoids is a map which is both a monoid and a $Σ$-algebra morphism.

    Let $Σ\Mon$ denote the category of $Σ$-monoids and morphisms
    between them.    
  \end{defi}

  We may at last relate the initial-algebra semantics
  of~§\ref{s:elementary} with the strength-based approach:
  \begin{prop}
    For any binding signature $S= (O,\ar)$ and De Bruijn monad
    $(M,s,v)$ equipped with a map $o_M∶ Mᵖ → M$ for all $o ∈ O$ with
    $\ar(o) = (n₁,…,nₚ)$, the following are equivalent:
    \begin{enumerati}
    \item each map $o_M∶ Mᵖ → M$ satisfies the \alert{$a$-binding
        condition} w.r.t.\ $(s,v)$;
    \item the corresponding map $Σ_SM → M$ renders the
      pentagon~\eqref{eq:pentagon} (with $Σ ≔ Σ_S$ and $st ≔ 𝐝𝐛𝐬_S$)
      commutative.
    \end{enumerati}
  \end{prop}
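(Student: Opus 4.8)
The plan is to decompose the pentagon~\eqref{eq:pentagon} for $Σ_S$ into the per-operation pentagons~\eqref{eq:pentagon:a} of Proposition~\ref{prop:bindingiffpenta:a}, along the coproduct $Σ_S = ∑_{o ∈ O}(-)^{|\ar(o)|}$. First I would observe that the $Σ_S$-algebra structure $a∶ Σ_S(M) → M$ determined by the given family is the cotupling $[o_M]_{o ∈ O}$, so that $a ∘ \mathrm{in}_o = o_M$ for each $o$, where $\mathrm{in}_o∶ M^{p_o} → Σ_S(M)$ is the $o$-th coprojection and $p_o = |\ar(o)|$.

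Next, recalling that $Z ⊗ X = Z × X^ℕ$, tensoring on the left preserves coproducts in $\mathbf{Set}$, so — exactly as already used when defining $\mathbf{dbs}_S$ — we have $Σ_S(M) ⊗ M ≅ ∑_{o ∈ O}(M^{p_o} ⊗ M)$, with coprojections $\mathrm{in}_o ⊗ M$, and these are jointly epic. Hence the pentagon~\eqref{eq:pentagon} (with $Σ ≔ Σ_S$, $st ≔ \mathbf{dbs}_S$) commutes if and only if it does after precomposition with each $\mathrm{in}_o ⊗ M$. So it remains to check that this precomposition recovers exactly the pentagon~\eqref{eq:pentagon:a} for $o_M$ with arity $\ar(o)$. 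This rests on three component-wise identities: $\mathbf{dbs}_{S,M,M} ∘ (\mathrm{in}_o ⊗ M) = \mathrm{in}_o ∘ \mathbf{dbs}_{\ar(o),M,M}$, by the very definition of $\mathbf{dbs}_S$ as the coproduct of the arity-wise strengths $\mathbf{dbs}_{\ar(o)}$ (Corollary~\ref{cor:dbsa} and the ensuing corollary); $Σ_S(s) ∘ \mathrm{in}_o = \mathrm{in}_o ∘ s^{p_o}$, since $Σ_S$ acts as $(-)^{p_o}$ on the $o$-summand; and $a ∘ \mathrm{in}_o = o_M$ as noted, whence $s ∘ (a ⊗ M) ∘ (\mathrm{in}_o ⊗ M) = s ∘ (o_M ⊗ M)$. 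Putting these together, the two legs of~\eqref{eq:pentagon} precomposed with $\mathrm{in}_o ⊗ M$ are precisely $o_M ∘ s^{p_o} ∘ \mathbf{dbs}_{\ar(o),M,M}$ and $s ∘ (o_M ⊗ M)$, i.e.\ the two legs of~\eqref{eq:pentagon:a} with $X ≔ M$, $p ≔ p_o$ and $o ≔ o_M$.

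Combining these observations, the pentagon~\eqref{eq:pentagon} commutes iff, for every $o ∈ O$, the pentagon~\eqref{eq:pentagon:a} for $o_M$ commutes, which by Proposition~\ref{prop:bindingiffpenta:a} is equivalent to $o_M$ satisfying its $\ar(o)$-binding condition w.r.t.\ $(s,v)$ — that is, to condition (i). I do not expect a genuine obstacle here: the entire content is the three component-wise identities just listed, each immediate from the explicit descriptions of $Σ_S$ on morphisms and of $\mathbf{dbs}_S$. Alternatively, one may bypass the structural argument entirely and just chase an element $((o,(x_1,…,x_{p_o})),σ)$ through both composites of~\eqref{eq:pentagon}, obtaining $o_M(x_1,…,x_{p_o})[σ]$ along one leg and $o_M(x_1[\Uparrow^{n_1}σ],…,x_{p_o}[\Uparrow^{n_{p_o}}σ])$ along the other, exactly as in the proof of Proposition~\ref{prop:bindingiffpenta:a}.
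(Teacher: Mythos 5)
Your argument is correct and is exactly the paper's proof, only spelled out in more detail: the paper likewise invokes the universal property of the coproduct and distributivity of $⊗$ over it to reduce the pentagon~\eqref{eq:pentagon} to the per-operation pentagons~\eqref{eq:pentagon:a}, and then concludes by Proposition~\ref{prop:bindingiffpenta:a}. The three component-wise identities you verify are the content the paper leaves implicit.
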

  \begin{proof}
    By universal property of coproduct and distributivity, the
    pentagon~\eqref{eq:pentagon} commutes iff each corresponding
    pentagon~\eqref{eq:pentagon:a} does, which holds iff each $o$
    satisfies the $\ar(o)$-binding condition w.r.t.\ $(s,v)$ by
    Proposition~\ref{prop:bindingiffpenta:a}.
  \end{proof}
  
  \begin{cor}\label{cor:equivsigmamon}
    For any binding signature $S$, we have an isomorphism
    $Σ_S\Mon ≅ S\DBAlg$ of categories over $𝐃𝐁𝐌𝐧𝐝$.
  \end{cor}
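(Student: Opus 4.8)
The plan is to unwind both sides of the claimed isomorphism into literally the same data and to check that morphisms correspond. First, recall from Corollary~\ref{cor:skew} that a monoid in $(𝐒𝐞𝐭,⊗,ℕ)$ is exactly a De Bruijn monad, and that a monoid morphism is exactly a morphism of De Bruijn monads; hence the monoid part $(s∶X⊗X→X, v∶ℕ→X)$ of a $Σ_S$-monoid $X$ \emph{is} precisely a De Bruijn monad structure on $X$. Next, by Definition~\ref{def:SigmaS} we have $Σ_S(X) = ∑_{o∈O} X^{|\ar(o)|}$, so by the universal property of the coproduct a $Σ_S$-algebra structure $a∶Σ_S(X)→X$ is the \emph{same} thing as a family $(o_X∶ X^{|\ar(o)|}→X)_{o∈O}$ of operations, and a $Σ_S$-algebra morphism $f∶X→Y$ (i.e.\ $f∘a_X = a_Y∘Σ_S(f)$) is the same as a map commuting with each $o_X$, i.e.\ a morphism of $|S|$-algebras. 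This identification is a genuine bijection — cotupling a family of maps into a single map is invertible — which is why we obtain an isomorphism of categories and not merely an equivalence; this is the only place where the two presentations differ, as already noted above.

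It then remains only to see that the coherence conditions match. A $Σ_S$-monoid additionally requires the pentagon~\eqref{eq:pentagon} to commute, for $Σ ≔ Σ_S$ and $st ≔ 𝐝𝐛𝐬_S$; a De Bruijn $S$-algebra requires instead that each $o_X$ satisfy the $\ar(o)$-binding condition w.r.t.\ $(s,v)$. These two conditions are equivalent by the preceding proposition (whose proof already decomposes the pentagon for the cotupled map $Σ_S M→M$, component by component, into the pentagons~\eqref{eq:pentagon:a}, via distributivity, and then invokes Proposition~\ref{prop:bindingiffpenta:a}). Since a morphism of $Σ_S$-monoids is by definition simultaneously a monoid morphism and a $Σ_S$-algebra morphism, and a morphism of De Bruijn $S$-algebras is simultaneously a morphism of De Bruijn monads and of $|S|$-algebras, the object- and morphism-level bijections above assemble into a functor that is bijective on objects and fully faithful, i.e.\ an isomorphism of categories.

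Finally, to upgrade this to an isomorphism \emph{over} $𝐃𝐁𝐌𝐧𝐝$, observe that both $Σ_S\Mon$ and $S\DBAlg$ carry evident forgetful functors to $𝐃𝐁𝐌𝐧𝐝$ — forget the $Σ_S$-algebra structure on one side, forget the operations on the other — and that under the correspondence just described these two functors agree on the nose: the underlying monoid of a $Σ_S$-monoid is sent to the underlying De Bruijn monad of the associated De Bruijn $S$-algebra, and likewise for morphisms. Hence the isomorphism commutes strictly with the projections to $𝐃𝐁𝐌𝐧𝐝$. I expect no genuine obstacle here: all the mathematical content has been isolated in Corollary~\ref{cor:skew} and the preceding proposition, and what is left is the routine coproduct/product bookkeeping; the only point deserving a moment's care is making sure the bijection between a structure map $Σ_S M→M$ and a family $(o_M)_{o∈O}$ is natural in $M$, which is immediate from the naturality of the coproduct injections.
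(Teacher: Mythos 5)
Your proposal is correct and follows exactly the route the paper intends (the paper gives no separate proof, since the corollary is meant to follow immediately from the preceding proposition together with Corollary~\ref{cor:skew} and the cotupling bijection between a structure map $Σ_S M → M$ and a family $(o_M)_{o∈O}$, as the introductory paragraph of \S\ref{s:strength} already announces). Your spelled-out version of the object/morphism correspondences and of the compatibility with the forgetful functors to $𝐃𝐁𝐌𝐧𝐝$ is just a more explicit rendering of that same argument.
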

  This readily entails the following (bundled) reformulation of
  Theorems~\ref{thm:initiality:I} and~\ref{thm:initiality:II}.
  \begin{cor}\label{cor:initiality:str}
  Consider any binding signature $S= (O,\ar)$, and let ${\DB}$ denote the
  initial $(ℕ+Σ_S)$-algebra, with structure maps $v∶ ℕ → {\DB}$ and
  $a∶   Σ_S({\DB}) → {\DB}$. Then:
  \begin{enumerati}
  \item \label{item:exists-subst} There exists a unique substitution map
    $s∶ {\DB} ⊗ {\DB} → {\DB}$ such that
    \begin{itemize}
    \item the map $ℕ⊗{\DB} \xrightarrow{v⊗{\DB}} {\DB}⊗{\DB} \xrightarrow{s} {\DB}$ coincides
      with the left unit of the skew monoidal structure $(n,f)↦ f(n)$, and
    \item 
      the pentagon~\eqref{eq:pentagon} (with
      $Σ ≔ Σ_S$) commutes.
    \end{itemize}
  \item This substitution map turns $({\DB},v,s,a)$ into a $Σ_S$-monoid.
  \item This $Σ_S$-monoid is initial in $Σ_S\Mon$.
  \end{enumerati}
\end{cor}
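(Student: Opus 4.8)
The plan is to derive Corollary~\ref{cor:initiality:str} purely formally from the material already established, without reproving anything about inductive data types or about substitution. First I would invoke Corollary~\ref{cor:equivsigmamon}, which gives an isomorphism $Σ_S\Mon ≅ S\DBAlg$ of categories over $𝐃𝐁𝐌𝐧𝐝$. Under this isomorphism, the pentagon~\eqref{eq:pentagon} for $Σ ≔ Σ_S$ and $st ≔ 𝐝𝐛𝐬_S$ corresponds exactly to the family of $\ar(o)$-binding conditions (by the Proposition immediately preceding Corollary~\ref{cor:equivsigmamon}), and the left-unit condition $ℕ⊗{\DB} \xrightarrow{v⊗{\DB}} {\DB}⊗{\DB} \xrightarrow{s} {\DB} = λ_{\DB}$ is, by the description of the left unitor in Corollary~\ref{cor:skew} as evaluation $(n,f) ↦ f(n)$, literally the condition $s(v(n),f) = f(n)$ appearing in Theorem~\ref{thm:initiality:I}\eqref{item:subst}. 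So the three bullet points of Corollary~\ref{cor:initiality:str} are term-by-term translations of the corresponding statements in Theorems~\ref{thm:initiality:I} and~\ref{thm:initiality:II}.

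Concretely, for item~\eqref{item:exists-subst}: Theorem~\ref{thm:initiality:I}\eqref{item:subst} provides a unique set-map $s∶ {\DB} × {\DB}^ℕ → {\DB}$ satisfying left unitality at variables and all the binding conditions. Since ${\DB} ⊗ {\DB} = {\DB} × {\DB}^ℕ$ by definition of the skew monoidal tensor (Corollary~\ref{cor:skew}), this is exactly a map $s∶ {\DB} ⊗ {\DB} → {\DB}$, and the two displayed conditions in item~\eqref{item:exists-subst} are the translations just described; uniqueness transfers verbatim. For the second item, Theorem~\ref{thm:initiality:I}\eqref{item:subst} (second part) already says this $s$ makes $({\DB},v,s,a)$ a De Bruijn $S$-algebra, which under Corollary~\ref{cor:equivsigmamon} is the same as a $Σ_S$-monoid; equivalently, one observes directly that associativity and right unitality of $s$ give the monoid laws (using again that the associator and unitors of Corollary~\ref{cor:skew} are exactly substitution's associativity/unitality witnesses), and the pentagon holds by the binding conditions. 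For the third item, Theorem~\ref{thm:initiality:II} states that this De Bruijn $S$-algebra is initial in $S\DBAlg$; transporting along the isomorphism of categories $Σ_S\Mon ≅ S\DBAlg$ of Corollary~\ref{cor:equivsigmamon}, which visibly sends this object to the $Σ_S$-monoid of item~(ii), yields initiality in $Σ_S\Mon$.

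I would present the proof as a short chain of ``by Corollary~\ref{cor:equivsigmamon} together with Theorems~\ref{thm:initiality:I} and~\ref{thm:initiality:II}'' remarks, spelling out only the dictionary: the tensor $X⊗Y = X×Y^ℕ$, the left unitor as evaluation, the associator/unitors as the substitution laws, the pentagon as the binding conditions. The main (and only real) obstacle is making sure the translation is faithful in every component — in particular that the ``unique $s$ such that ...'' clauses on the two sides genuinely match, i.e.\ that the skew-monoidal left-unit equation is pointwise $s(v(n),f)=f(n)$ and nothing more, and that the pentagon unpacks to precisely the $\ar(o)$-binding conditions and not a stronger or weaker family. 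Both of these are already settled earlier (the left unitor is computed in Corollary~\ref{cor:skew}; the pentagon/binding-condition equivalence is the Proposition before Corollary~\ref{cor:equivsigmamon}), so once the dictionary is laid out the corollary follows with essentially no further argument. No step requires new calculation; the work is entirely bookkeeping of what has been proved.
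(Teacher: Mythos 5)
Your proposal is correct and follows essentially the same route as the paper: the paper's own proof likewise observes that $𝐌𝐨𝐧(𝐒𝐞𝐭)=𝐃𝐁𝐌𝐧𝐝$, that the algebra map $Σ_S(\DB)→\DB$ is the cotupling of the $o_{\DB}$, and then translates Theorems~\ref{thm:initiality:I} and~\ref{thm:initiality:II} through the correspondence of Corollary~\ref{cor:equivsigmamon}. Your version merely spells out the dictionary (tensor, left unitor as evaluation, pentagon as binding conditions) in more detail than the paper does.
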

\begin{proof}
  Let $𝐌𝐨𝐧(𝐒𝐞𝐭)$ denote the category of monoids in $𝐒𝐞𝐭$ for the skew
  monoidal structure.  We have an equality $𝐌𝐨𝐧(𝐒𝐞𝐭) = 𝐃𝐁𝐌𝐧𝐝$ of
  categories, and the algebra structure $Σ_S({\DB}) → {\DB}$ is merely the
  cotupling of the maps $o_{\DB}$ of Theorem~\ref{thm:initiality:I}.  This
  correspondence translates one statement into the other.
\end{proof}
  \begin{rem}
    This result hints at a potential push-button proof of
    Theorems~\ref{thm:initiality:I} and~\ref{thm:initiality:II} (and
    Corollary~\ref{cor:initiality:str}). Indeed, it is almost an
    instance of~\cite[Theorem~2.15]{BHL}: the latter is stated for
    general skew monoidal categories instead of merely $𝐒𝐞𝐭$, but does
    not directly apply in the present setting, because it assumes that
    the tensor product is finitary in the second argument. However, we
    expect the generalisation consisting in replacing this finitarity
    condition with $α$-accessibility to
    be straightforward.
  \end{rem}

  % En fait chuis pas sûr que ça marche…
  % \begin{rem}
  %   Following ideas from~\cite{}\tom{Ambroise?}, there is a properly
  %   monoidal variant of this presentation: the category of $ℕ$-modules
  %   in $(𝐒𝐞𝐭,⊗,ℕ)$ is monoidal, with a slightly different tensor
  %   product, and for any binding signature $S$, the induced
  %   endofunctor $Σ_S∶ 𝐒𝐞𝐭 → 𝐒𝐞𝐭$ lifts to a \alert{pointed strong}
  %   endofunctor
  %   $Σ^+_S∶ ℕ\Mod →
  %   ℕ\Mod$~\cite{fiore:presheaf,DBLP:conf/lics/Fiore08}, such that
  %   $S\DBAlg ≅ Σ^+_S\Mod$ over $𝐒𝐞𝐭$.
  % \end{rem}
  \section{Module-based interpretation of the binding conditions}\label{s:module-based}
  In the previous section, we have shown that the construction of De
  Bruijn algebras generalises from binding signatures to structurally
  strong endofunctors, thus yielding a categorical status for binding
  signatures and a categorical interpretation of the binding
  conditions.

  In this section, we give binding signatures an alternative
  categorical status, as \alert{parametric modules} over De Bruijn
  monads, together with a corresponding categorical interpretation of
  the binding conditions. For this, we merely adapt to De Bruijn
  monads the treatment proposed for mere monads by Hirschowitz and
  Maggesi~\cite{hirscho:lam,DBLP:journals/iandc/HirschowitzM10}.
  % The present section may safely be skipped, as it is not used in the rest
  % of the paper, except for proofs.
  % The contents is essentially an adaptation to the De Bruijn setting
  % of the original development of~\cite{hirscho:lam}.

  % Indeed, for any structurally strong endofunctor $Σ$ on a skew
  % monoidal category and monoid $X$ therein, with algebra structure
  % $a∶ Σ(X) → X$, $Σ(X)$ is canonically equipped with $X$-module
  % structure, and the binding condition, abstractly expressed as
  % Diagram~\eqref{eq:pentagon}, is equivalent to the algebra structure
  % $a$ being an $X$-module morphism. A $Σ$-monoid is thus equivalently
  % a monoid $X$ equipped with an $X$-module morphism $Σ(X) → X$.

  Compared to the original setting~\cite{hirscho:lam}, a peculiarity
  is that \alert{derivation} of modules, the module-theoretic
  incarnation of variable binding, preserves the underlying object.
  In other words, it only affects substitution.

  In~§\ref{ss:modules}, we introduce modules over a De Bruijn
  monad. In~§\ref{ss:derivation}, we define module derivation.  We
  then introduce parametric De Bruijn modules in~§\ref{ss:parametric},
  and show how any binding signature $S$ yields such a module $M_S$.
  Finally, in~§\ref{ss:modulebinding}, we define the category $M\MAlg$
  of modular algebras of a parametric De Bruijn module $M$, and show
  that they provide an alternative categorical interpretation of the
  binding conditions by exhibiting an isomorphism $S\DBAlg ≅ M_S\MAlg$
  of categories over $𝐃𝐁𝐌𝐧𝐝$.

  \subsection{Modules over De Bruijn monads and first-order signatures}\label{ss:modules}
  There is a general notion of module over a monoid in a monoidal (or
  skew monoidal) category;
  we just give the instance we are concerned with.  Intuitively, if
  $X$ is a De Bruijn monad, an $X$-module is a set that admits
  substitution of variables by elements of $X$:
  \begin{defi}\label{def:module}
    For any De Bruijn monad $(X,s,v)$, an \alert{$X$-module} is a set
    $A$ equipped with a \alert{substitution} map, or \alert{action}, 
    $$r∶ A \times X^ℕ → A$$
    subject to the following condition, where we we use
    Notation~\ref{not:subst}:
    \begin{center}
      for all $a ∈ A$ and $f,g ∈ X^ℕ$, we have \qquad $a[f][g] = a [f[g]]$
      \qquad and \qquad $a[v] = a$.
    \end{center}
  \end{defi}
  \begin{rem}
    Please note that the first equation involves both
    substitution maps:\linebreak{}$a[f]_A[g]_A = a [f[g]_X]_A$.
  \end{rem}
  \begin{rem}
    The definition is consistent with the definition of $ℕ$-modules
    (Definition~\ref{def:N-module}), viewing $ℕ$ as a De Bruijn monad
    as in Example~\ref{ex:nat}.
    % An $X$-module is equivalently a functor from the Kleisli category of $X$,
    % seen as relative monad, to $𝐒𝐞𝐭$.
  \end{rem}
  \begin{rem}
    Equivalently, an $X$-module is an algebra for the monad $-⊗X$,
    using the skew monoidal structure of Corollary~\ref{cor:skew}.
    Indeed, the equations amount to commutation of the following
    diagrams,
    \begin{center}
      \diag{%
        (A ⊗ X) ⊗ X \& \& A ⊗ (X ⊗ X) \\
        A ⊗ X \& \& A ⊗ X \\
        \& A
      }{%
        (m-1-1) edge[labela={α_{A,X,X}}] (m-1-3) %
       edge[labell={r ⊗ X}] (m-2-1) %
       (m-1-3) edge[labelr={A ⊗ s}] (m-2-3) %
       (m-2-1) edge[labelbl={r}] (m-3-2) %
       (m-2-3) edge[labelbr={r}] (m-3-2) %
     }
     \quad
     \diag{%
       A \& A ⊗ ℕ \& A ⊗ X \\
         \&       \& A\rlap{.} %
       }{%
         (m-1-1) edge[labela={ρ_A}] (m-1-2) %
         edge[bend right=10,identity] (m-2-3) %
         (m-1-2) edge[labela={A ⊗ v}] (m-1-3) %
         (m-1-3) edge[labelr={r}] (m-2-3) %
     }
    \end{center}
    which are exactly the equations for $({-}⊗X)$-algebras.
  \end{rem}
  Let us now introduce a few basic constructions of modules:
  \begin{defi} Consider any De Bruijn monad $X$.
   \begin{itemize}
   \item The \alert{tautological} $X$-module is $X$ itself, with action
     $X × X^ℕ → X$ given by substitution.
   \item Given $X$-modules $U$ and $V$, their \alert{binary
       product} is $U × V$, with action given by
     \begin{align*}
       U × V × X^ℕ &→ U × V \\
       (u,v,σ) &↦ (u[σ]_{U}, v[σ]_{V}).
     \end{align*}
     This extends straightforwardly to small products.
   \item Given $X$-modules $U$ and $V$, their
     \alert{coproduct} is $U+V$, with action
     defined by case analysis:
     $$\begin{array}{rcl}
       (U + V) × X^ℕ & → & U + V \\
       (in₁(u),σ) & ↦ & in₁(u[σ]_{U}) \\
       (in₂(v),σ) & ↦ & in₂(v[σ]_{V}).
       \end{array}$$
       This extends straightforwardly to small coproducts.
 \end{itemize}
  \end{defi}

  \subsection{Derivation of substitution for modules}\label{ss:derivation}
  In this subsection, we explain module derivation.  This operation
  does not change the carrier of the module, hence it acts on the
  substitution map only. In fact, it acts via the second argument of
  substitution, namely the assignment, as in~§\ref{s:elementary}
  and~§\ref{s:strength}.

 % Our starting point is the endofunctor
 %  ${-}×X∶ X\Mod → X\Mod$ on $X$-modules mapping any $A$ to $A×X$.  Let
 %  us prove:
  \begin{defi}
    Given a De Bruijn monad $X$,
   the \textbf{derivative} $A^{(1)}$ of an $X$-module $A$
    has the same carrier as $A$, with action given by
   \begin{displaymath}
     \begin{array}{rcl}
       A\times X^ℕ & → &  A \\        %%A^{(1)} \\
       (a,σ) & ↦ &
       % x[v(0),σ(0)[v(1),v(2),…],σ(1)[v(1),v(2),…],…]
                   a[⇑σ]\rlap{,}
     \end{array}
   \end{displaymath}
   where $⇑σ$ is as in Definition~\ref{def:prime}:
  \[
  \begin{array}[t]{rcl}
           (⇑σ)(0) & = & v(0) \\
        (⇑σ)(n+1) & = & σ(n)[↑]\rlap{,} %TODO change to s(σ(n),↑)?
    \end{array}
  \]   
  \end{defi}

Of course we may iterate this operation:
  \begin{defi}\label{def:derivmod} Let $A^{(0)} = A$, and
    $A^{(n+1)} = (A^{(n)})^{(1)}$.
 \end{defi}

 \subsection{Binding signatures as parametric modules}\label{ss:parametric}
 In order to interpret binding signatures, we now introduce a
 parametric version of modules.  For this, we construct a category
 $𝐃𝐁𝐌𝐨𝐝$ whose objects are pairs of a De Bruijn monad and a module
 over it, and then define parametric modules as sections of the
 forgetful functor $𝐃𝐁𝐌𝐨𝐝 → 𝐃𝐁𝐌𝐧𝐝$.

 \begin{defi}
   Let $𝐃𝐁𝐌𝐨𝐝$ denote the category with
   \begin{itemize}
   \item as objects all pairs $(X,(U,a))$ of a De Bruijn monad $X$ and
     an $X$-module $(U,a)$, with $a∶ U × X^ℕ → U$, and
   \item as morphisms $(X,(U,a)) → (Y,(V,b))$ all pairs $(f,g)$,
     where $f∶ X → Y$ is a De Bruijn monad morphism, and $g∶ U → V$
     is a map making the following diagram commute,
     \begin{center}
       \diag{%
         U × X^ℕ  \& V × Y^ℕ \\
         U \& V
       }{%
         (m-1-1) edge[labela={g × f^ℕ}] (m-1-2) %
         edge[labell={a}] (m-2-1) %
         (m-2-1) edge[labelb={g}] (m-2-2) %
         (m-1-2) edge[labelr={b}] (m-2-2) %
       }
     \end{center}
     or equivalently,
     $g(u[σ]_U) =  g(u)[f∘σ]_V$, for all
     $u ∈ U$ and $σ∶ ℕ → X$.
   \end{itemize}
   The \alert{forgetful functor} $𝒰∶ 𝐃𝐁𝐌𝐨𝐝 → 𝐃𝐁𝐌𝐧𝐝$ maps any
   $(X,(U,a))$ to $X$, and any $(f,g)$ to $f$.
 \end{defi}

 We now introduce parametric modules:
 \begin{defi}
   A \alert{parametric De Bruijn module} is a section of the forgetful
   functor $𝒰∶ 𝐃𝐁𝐌𝐨𝐝 → 𝐃𝐁𝐌𝐧𝐝$, i.e., a functor $M∶ 𝐃𝐁𝐌𝐧𝐝 → 𝐃𝐁𝐌𝐨𝐝$ such
   that $𝒰 ∘ M = \id_{𝐃𝐁𝐌𝐧𝐝}$.
 \end{defi}

 % \begin{nota}
 %   We often write $(U,a)∶ 𝐃𝐁𝐌𝐧𝐝 → 𝐃𝐁𝐌𝐨𝐝$ for the parametric De Bruijn
 %   module with first component $U$ and second component $a$, i.e.,
 %   mapping any De Bruijn monad $X$ to $U(X)$, with action
 %   $a(X)∶ U(X) × X^ℕ → U(X)$.  We further write $U(f)∶ U(X) → U(Y)$
 %   for the action on De Bruijn monad morphisms $f∶ X → Y$.
 % \end{nota}

 Binding signatures naturally induce parametric De Bruijn modules:
 \begin{defi}\hfill
   \begin{itemize}
   \item The \alert{tautological} parametric De Bruijn module,
     denoted by $θ$, maps
     any De Bruijn monad $X$ to itself, with action $X × X^ℕ → X$
     given by substitution.
   \item The \alert{derivative} $U^{(1)}$ of a parametric De
     Bruijn module $U$ is defined to map any De Bruijn monad $X$
     to $U(X)^{(1)}$, and any morphism $f∶ X → Y$ to $(f, U(f))$.
     This works because the following square commutes.
     \begin{center}
       \diag{%
         X^ℕ \& Y^ℕ \\
         X^ℕ \& Y^ℕ 
       }{%
         (m-1-1) edge[labela={f^ℕ}] (m-1-2) %
         edge[labell={⇑}] (m-2-1) %
         (m-2-1) edge[labelb={f^ℕ}] (m-2-2) %
         (m-1-2) edge[labelr={⇑}] (m-2-2) %
       }
     \end{center}
     Indeed, letting $v_X$ and $v_Y$ denote the respective variables
     maps of $X$ and $Y$, we show by case analysis on $n ∈ ℕ$ that
     for all $σ∶ ℕ → X$, we have $f^ℕ (⇑σ) (n) = ⇑(f^ℕ (σ)) (n)$:
     \begin{itemize}
     \item at $0$, we have
       \begin{align*}
         f^ℕ (⇑σ)(0) &= f (⇑σ(0)) \\
                     &= f(v_X(0)) \\
                     &= v_Y(0) \\
                     &= ⇑(f^ℕ (σ)) (0)\rlap{,}
       \end{align*}
       
     \item and at any $n+1$, we have
       \begin{align*}
         f^ℕ (⇑σ)(n+1) &= f (⇑σ(n+1)) \\
                       &= f(σ (n)[↑_X]) \\
                       &= f (σ (n))[f ∘ ↑_X] \\
                       &= f (σ (n))[↑_Y] \\
                       &= ⇑(f^ℕ (σ)) (n+1)\rlap{.}
       \end{align*}
     \end{itemize}
   \item The \alert{$n$th derivative} $U^{(n)}$ of a parametric De
     Bruijn module $U$ is defined by induction: $U^{(0)} = U$ and
     $U^{(n+1)} = (U^{(n)})^{(1)}$.
   \item Given parametric De Bruijn modules $U$ and $V$, their
     \alert{binary product} maps any $X$ to the $X$-module product
     $U(X)×V(X)$.  This extends straightforwardly to small products.
   \item The parametric De Bruijn module $Mₐ$ induced by a binding
     arity $a = (n₁,…,nₚ)$ is the product $∏_{i ∈ p} θ^{(nᵢ)}$ of
     derivatives of the tautological parametric De Bruijn module.
   \item Given parametric De Bruijn modules $U$ and $V$, their
     \alert{coproduct} maps any $X$ to the $X$-module coproduct
     $U(X)+V(X)$.  This extends straightforwardly to small coproducts.
   \item The parametric De Bruijn module $M_S$ induced by
     a binding signature $S = (O,\ar)$ is the coproduct
     $∑_{o ∈ O} M_{\ar(o)}$ of the parametric De Bruijn modules
     induced by the arities of all operations.
 \end{itemize}
 \end{defi}

  \subsection{Interpreting the binding conditions}\label{ss:modulebinding}
  In the previous subsection, we have interpreted binding signatures
  as parametric modules, but we have not yet defined the models of a
  parametric module. Let us do this now, and prove that, for any
  binding signature $S$, the category of De Bruijn $S$-algebras is
  isomorphic to the category of models of the induced parametric De
  Bruijn module $M_S$.

  \begin{defi}\hfill 
    \begin{itemize}
    \item Given a parametric De Bruijn module $U$, a
    \alert{$U$-algebra} is a De Bruijn monad $X$, equipped with an
    $X$-module morphism $α∶ U(X) → X$.
  \item For any $U$, given $U$-algebras $(X,α)$ and $(Y,β)$, a
    \alert{$U$-algebra morphism} is a De Bruijn monad morphism
    $f∶ X → Y$ making the following diagram commute,
    \begin{center}
      \diag{%
        U(X) \& U(Y) \\
        X \& Y %
      }{%
        (m-1-1) edge[labela={U(f)}] (m-1-2) %
        edge[labell={α}] (m-2-1) %
        (m-2-1) edge[labelb={f}] (m-2-2) %
        (m-1-2) edge[labelr={β}] (m-2-2) %
      }
    \end{center}
    or equivalently $f(α(u)) = β(U(f)(u))$, for all $u ∈ U(X)$.
  \item For any $U$, $U$-algebras and morphisms between them form a
    category, which we denote by $U\MAlg$.
  \item The \alert{forgetful functor} $𝒰ᴹ∶ U\MAlg → 𝐃𝐁𝐌𝐧𝐝$ maps any
    $(X,α)$ to $X$.
  \end{itemize}
\end{defi}

As announced, let us prove
\begin{prop}\label{prop:sndorder}
  For any binding signature $S$, the categories $S\DBAlg$ and
  $M_S\MAlg$ are isomorphic over $𝐃𝐁𝐌𝐧𝐝$.
\end{prop}
\begin{proof}
  The key point is that for any binding arity $a= (n₁,…,nₚ)$, a map
  $o∶ Xᵖ → X$ is an operation of binding arity $a$ iff it is an
  $X$-module morphism $∏_{i=1}ᵖ X^{(nᵢ)} → X$.
  Indeed, the latter condition unfolds to the fact that,
  for any assignment $σ∶ ℕ → X$ and tuple
  $(e₁,…,eₚ) ∈ Xᵖ$, we have
  $$o(e₁,…,eₚ)[σ] = o (e₁[⇑^{n₁}σ],…,eₚ[⇑^{nₚ}σ])\rlap{,}$$
  which is exactly the $a$-binding condition~\eqref{eq:binding}.
  % By Corollary~\ref{cor:equivsigmamon}, it suffices to exhibit an
  % isomorphism $Σ_S\Mon ≅ M_S\MAlg$ over $𝐃𝐁𝐌𝐧𝐝$.  But in fact the two
  % categories are equal. Indeed, by definition, we have
  % $Σ_S(X) = M_S(X)$ (assuming we made the same choice of products and
  % coproducts on both sides), and the action of $M_S$ at $X$ is in fact
  % equal to the composite
  % $$Σ_S(X) × X^ℕ = Σ_S(X) ⊗ X \xto{(𝐝𝐛𝐬_S)_{X,X}} Σ_S (X ⊗ X)
  % \xto{Σ_S(s)} Σ_S(X)\rlap{.}$$ Furthermore, for any map
  % $a∶ Σ_S(X) → X$, the coherence condition~\eqref{eq:pentagon} for
  % $Σ_S$-monoids is equivalent to $a$ being an $M_S(X)$-module
  % morphism. 
\end{proof}

We readily obtain the following (bundled) reformulation of
Theorems~\ref{thm:initiality:I} and~\ref{thm:initiality:II}.
  \begin{cor}
  Consider any binding signature $S= (O,\ar)$, and let ${\DB}$ denote the
  initial $(ℕ+Σ_S)$-algebra, with structure maps $v∶ ℕ → {\DB}$ and
  $a∶   Σ_S({\DB}) → {\DB}$. Then:
  \begin{enumerati}
  \item \label{item:exists-subst:module} There exists a unique substitution map
    $s∶ {\DB} ⊗ {\DB} → {\DB}$ such that
    \begin{itemize}
    \item the map $ℕ⊗{\DB} \xrightarrow{v⊗{\DB}} {\DB}⊗{\DB} \xrightarrow{s} {\DB}$ coincides
      with the left unit of the skew monoidal structure $(n,f)↦ f(n)$, and
    \item $a$ is an $X$-module morphism.
    \end{itemize}
  \item This substitution map turns $({\DB},v,s,a)$ into an
    $M_S$-algebra.
  \item This $M_S$-algebra is initial in $M_S\MAlg$.
  \end{enumerati}
\end{cor}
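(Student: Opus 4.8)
The plan is to obtain this corollary as a routine transcription of Corollary~\ref{cor:initiality:str} through the module-based reformulation of the binding conditions developed in this section. The single ingredient that is more than notational is the equivalence, for a monoid $X$ in $𝐒𝐞𝐭$ equipped with a $Σ_S$-algebra structure $a∶ Σ_S(X) → X$, between ``the pentagon~\eqref{eq:pentagon} (with $Σ ≔ Σ_S$ and $st ≔ 𝐝𝐛𝐬_S$) commutes'' and ``$a$ is an $X$-module morphism'', where $Σ_S(X)$ is given the canonical $X$-module structure induced by the monoid multiplication and $𝐝𝐛𝐬_S$. This is exactly the content of the proposition above stating $Σ_S\Mon = Σ_S\MMod$ (equivalently, of Corollary~\ref{cor:sndorder} in its cotupled form). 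I would therefore begin by recording this equivalence and noting that the $\DB$-module structure on $Σ_S(\DB)$ tacitly referred to in the statement is precisely this canonical one; just as the pentagon of Corollary~\ref{cor:initiality:str} already mentions the substitution $s$, so does this module structure, so there is no circularity.

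With that in hand, each clause transfers directly. For the first point, Corollary~\ref{cor:initiality:str} furnishes a unique $s∶ \DB ⊗ \DB → \DB$ for which $ℕ ⊗ \DB \xrightarrow{v ⊗ \DB} \DB ⊗ \DB \xrightarrow{s} \DB$ equals the left unitor and the pentagon~\eqref{eq:pentagon} commutes; replacing the pentagon condition by the equivalent ``$a$ is a $\DB$-module morphism'' yields the claim, uniqueness included. For the second point, Corollary~\ref{cor:initiality:str} says this $s$ makes $(\DB,v,s,a)$ a $Σ_S$-monoid, and by the equality $Σ_S\Mon = Σ_S\MMod$ together with the convention $S\MMod ≔ Σ_S\MMod$, this is literally the datum of a modular model of $S$. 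For the third point, Corollary~\ref{cor:initiality:str} gives initiality of this $Σ_S$-monoid in $Σ_S\Mon$, and the same equality of categories turns this into initiality in $S\MMod$.

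I do not anticipate any genuine obstacle: all the substantive work --- the construction of substitution, the verification of the remaining substitution lemmas, and initiality --- is already packaged inside Corollary~\ref{cor:initiality:str} (itself a repackaging of Theorems~\ref{thm:initiality:I} and~\ref{thm:initiality:II}), and the passage to modules is purely definitional. The only point that demands a little care, as flagged above, is to keep track of which $X$-module structure on $Σ_S(\DB)$ is intended in the first two clauses, and to check that it coincides with the one used in the equivalence ``pentagon $\iff$ module morphism'', so that the term-by-term translation of Corollary~\ref{cor:initiality:str} really produces the statement above.
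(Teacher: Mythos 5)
Your proposal is correct and matches the paper's intent exactly: the paper states this corollary without proof as a ``reformulation'' obtained by transporting Corollary~\ref{cor:initiality:str} across the equality $Σ_S\Mon = Σ_S\MMod$ (the pentagon~\eqref{eq:pentagon} commuting iff $a$ is a module morphism), which is precisely the translation you carry out. Your extra remark about pinning down the intended module structure on $Σ_S({\DB})$ is a sensible clarification but introduces nothing beyond the paper's route.
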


 \section{Simply-typed extension}\label{s:types}
 In this section, we extend the framework
 of~§\ref{s:dbmonads}--\ref{s:elementary}, which is untyped, to the
 simply-typed case. The development essentially follows the same
 pattern, replacing sets with families.

 We fix in the whole section a set $𝕋$ of \alert{types}, and call
 \alert{$𝕋$-sets} the objects of $𝐒𝐞𝐭^𝕋$.  A morphism $X → Y$ is a
 family $(X(τ) → Y(τ))_{τ ∈ 𝕋}$ of maps.

 \subsection{De Bruijn \texorpdfstring{$𝕋$}{T}-monads}
 In this subsection, we define the typed analogue of De Bruijn monads. 

The role of $ℕ$ will be played in the typed context by the following
$𝕋$-set.
\begin{defi}
  Let $𝐍 ∈ 𝐒𝐞𝐭^𝕋$ be defined by $𝐍(τ) = ℕ$.
\end{defi}

% \begin{rem}
%   This provides a countable set of variables at each type, which may
%   not quite be what the reader would have called ``typed De Bruijn
%   representation''.  An inconvenience of this representation is that
%   an ``erasure'' map from typed to untyped terms appears to need to
%   rely on a bijection $𝕋×ℕ ≅ ℕ$ for ``renaming'' variables. In
%   particular, not all indices can be preserved by such a map.
% \end{rem}

\begin{defi}
  Given a $𝕋$-set $X$, an \alert{$X$-assignment} is a morphism of indexed sets $𝐍 → X$. We
  sometimes merely use ``assignment'' when $X$ is clear from context.
\end{defi}

\pagebreak   % Bring notation title to next side
\begin{nota}\hfill
  \begin{itemize}
  \item We observe that $𝕋$-sets form a cartesian closed category,
    where the exponential object $X^Y$ is given by
    $(X^Y)(τ) = X(τ)^{Y(τ)}$.
  \item We distinguish it from the hom-set by writing the latter
    $[Y,X]$.
  \item For any set $A$ and $𝕋$-set $X$, let $A·X = ∑_{a ∈ A} X$
    denote the $A$-fold coproduct of $X$.
\end{itemize}
\end{nota}

The analogue of the tensor product $X⊗Y = X×Y^ℕ$ will be played by
$[𝐍,Y]·X$, i.e., the iterated self-coproduct of $X$, with one copy
per $Y$-assignment (see Definition~\ref{def:tensor-prod-simply-typed} below).

\begin{exa} Consider arbitrary $𝕋$-sets $X$, $Y$, and $Z$.
  \begin{itemize}
  \item The $𝕋$-set $[X,Y]·Z$ is such that for all types $τ$, we have
    $$([X,Y]·Z) (τ) = [X,Y] · Z(τ) = [X,Y] × Z(τ).$$
  \item The $𝕋$-set $Y^X×Z$ is such that
    for all types $τ$, we have $$(Y^X×Z) (τ) = Y(τ)^{X(τ)} × Z(τ).$$
  \end{itemize}
  We will use the former for generalising substitution to the typed
  case.
\end{exa}

\begin{nota}
  For coherence with the untyped case, we tend to write an element of
  $([𝐍,Y]·X)(τ)$ as $(x,f)$, with $x ∈ X(τ)$ and $f∶ 𝐍 → Y$.

\end{nota}

Furthermore, Notation~\ref{not:subst} straightforwardly adapts to
the typed case as follows.

\begin{nota} Consider any map $s∶ [𝐍,Y]·X → Z$.
  \begin{itemize}
  \item For all $τ ∈ 𝕋$, $x ∈ X(τ)$, and $g∶ 𝐍 → Y$, we write
    $x[g]_{s,τ}$ for $s_τ(x,g)$, or even $x[g]$ when $s$ and $τ$ are
    clear from context.
  \item Furthermore, $s$ gives rise to the map
    \[
    \begin{array}[t]{rcl}
      [𝐍,Y]·X^𝐍 & → & Z^𝐍 \\
      τ ↦ (g, f: ℕ → X(τ)) & ↦ & n ↦ f(n)[g]_{s,τ}.
      \end{array}
    \]

    We use notation similar to Notation~\ref{not:subst} for this map, i.e.,
    $f[g]_{s,τ}(n) ≔ f(n)[g]_{s,τ}$, or $f[g](n) = f(n)[g]$ when $s$
    and $τ$ are clear from context.
  \item We use the same notation for the map
    \[
    \begin{array}[t]{rcl}
        [𝐍,Y]×[𝐍,X] & → & [𝐍,Z] \\
      (g,f) & ↦ & τ,n ↦ f(n)[g]_{s,τ}.
      \end{array}
    \]
\end{itemize}
\end{nota}

The definition of De Bruijn monads generalises almost \emph{mutatis
  mutandis}:
\begin{defi}\label{def:Tdbmonad}
  A \alert{De Bruijn $𝕋$-monad} is a $𝕋$-set $X$, equipped with
  \begin{itemize}
  \item a \alert{substitution} morphism $s∶ [𝐍,X]·X → X$, which
    takes an element $x ∈ X$ and an assignment $f∶ 𝐍 → X$, and returns
    an element $x[f]$, and

  \item a \alert{variables} morphism $v∶ 𝐍 → X$,
  \end{itemize}
  such that for all $x ∈ X$, and $f,g∶ 𝐍 → X$, we have
  \begin{mathpar}
    x[f][g] = x[f[g]]
    \and
    v(n)[f] = f(n)
    \and
    x[v] = x\rlap{.}
  \end{mathpar}
\end{defi}

  \begin{exa}
    The $𝕋$-set $𝐍$ itself is clearly a De Bruijn $𝕋$-monad, with
    variables given by the identity and substitution $[𝐍,𝐍]·𝐍 → 𝐍$
    given by evaluation. It is in fact initial in $𝐃𝐁𝐌𝐧𝐝(𝕋)$.
  \end{exa}
  \begin{exa}\label{ex:stlc}
    The set $Λ_{\ST}$ of simply-typed $λ$-terms with free variables of
    type $τ$ in $ℕ×\{τ\}$, considered equivalent modulo $α$-renaming,
    forms a De Bruijn monad.  Variables $𝐍 → Λ_{\ST}$ are given by
    mapping, at any $τ$, any $n ∈ ℕ$ to the variable $(n,τ)$.
    Substitution $[𝐍,Λ_{\ST}] · Λ_{\ST} → Λ_{\ST}$ is standard,
    capture-avoiding substitution.  One main purpose of this section
    is to characterise $Λ_{\ST}$ by a universal property, and
    reconstruct it categorically.
  \end{exa}

  \begin{rem}\label{rem:values}
    Untyped languages with multiple syntactic categories form De
    Bruijn monads. Indeed, it suffices to take $𝕋$ to be the set of
    syntactic categories, and, for each $c ∈ 𝕋$, let $X(c)$ be the set
    of terms of syntactic category $c$.  This should be taken with a
    grain of salt, though, as this assumes that each syntactic
    category has its kind of variables. E.g., let us consider a
    $λ$-calculus in which we wish to distinguish values from
    terms. Syntax then goes as follows:
    \begin{align*}
      e &\Coloneqq v ｜ e\ e & \text{(terms)} \\
      v &\Coloneqq x ｜ λx.e & \text{(values)}\rlap{.}
    \end{align*}
    Attempting to organise this as a (simply-typed) De Bruijn monad
    $X$, we take $𝕋 = 2 = \{ 𝐭, 𝐯 \}$, and let $X(𝐭)$ be the set of
    terms, while $X(𝐯)$ is the set of values.  However, $X$ fails to
    be a De Bruijn monad because there are no term variables.  One way
    of understanding this is that values form an untyped De Bruijn
    monad, and terms form a module over it~\cite{HHL,HHLlong}.  We
    present a simply-typed version of this approach in detail below
    in~§\ref{ss:values}.  Another way out consists in adding term
    variables $α$ to the syntax, which thus becomes:
    \begin{align*}
      e &\Coloneqq α ｜ v ｜ e\ e & \text{(terms)} \\
      v &\Coloneqq x ｜ λx.e & \text{(values)}\rlap{.}
    \end{align*}
  \end{rem}

  \subsection{Morphisms of De Bruijn \texorpdfstring{$𝕋$}{T}-monads}
  \begin{defi}
    A morphism $(X,s,v) → (Y,t,w)$ between De Bruijn $𝕋$-monads is a
    morphism $f∶ X → Y$ of $𝕋$-sets commuting with substitution and
    variables, in the sense that for all $τ ∈ 𝕋$, $x ∈ X(τ)$, and
    $g∶ 𝐍 → X$ we have $f_τ(x[g]) = f_τ(x)[f∘g]$ and $f∘v = w$.
  \end{defi}
  \begin{rem}
    More explicitly, the first axiom says: $f_τ(s_τ(x,g)) = t_τ(f_τ(x),f∘g)$.
  \end{rem}
  \begin{prop}
    De Bruijn $𝕋$-monads and morphisms between them form a category
    $𝐃𝐁𝐌𝐧𝐝(𝕋)$.
  \end{prop}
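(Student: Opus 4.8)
The plan is to exhibit the categorical data and verify the axioms, all of which are routine, following exactly the pattern of the untyped case (\cf\ the category $𝐃𝐁𝐌𝐧𝐝$ of §\ref{ss:relativemonads}).

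First I would take the identity on $(X,s,v)$ to be the identity morphism $\id_X$ of $𝕋$-sets; it commutes with variables since $\id_X ∘ v = v$, and with substitution since, for all $τ ∈ 𝕋$, $x ∈ X(τ)$, and assignment $g∶ 𝐍 → X$, we have $(\id_X)_τ(x[g]) = x[g] = (\id_X)_τ(x)[\id_X ∘ g]$.

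Next I would check that the composite $g ∘ f$, taken in $𝐒𝐞𝐭^𝕋$, of two morphisms $f∶ (X,s,v) → (Y,t,w)$ and $g∶ (Y,t,w) → (Z,u,z)$ of De Bruijn $𝕋$-monads is again one. Compatibility with variables is immediate from $(g∘f)∘v = g∘(f∘v) = g∘w = z$. For compatibility with substitution, one fixes $τ ∈ 𝕋$, $x ∈ X(τ)$, and $h∶ 𝐍 → X$, then applies in turn that $f$ and then $g$ commutes with substitution, using associativity of composition of morphisms of $𝕋$-sets to conclude $(g∘f)_τ(x[h]) = (g∘f)_τ(x)[(g∘f)∘h]$.

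Finally, associativity of composition and the left and right unit laws hold because they already hold for the underlying morphisms of $𝕋$-sets, and being a morphism of De Bruijn $𝕋$-monads is a mere property of that underlying morphism, not extra structure. There is no genuine obstacle; the only point requiring a little care is the interplay between the componentwise ($τ$-indexed) formulation of the substitution-compatibility condition and the fact that assignments are themselves morphisms of $𝕋$-sets, but this is pure bookkeeping.
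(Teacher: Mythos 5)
Your proof is correct and matches what the paper intends: the paper states this proposition without proof, treating it as the routine verification you carry out (identities and composites of $𝕋$-set morphisms preserve the variable- and substitution-compatibility conditions, and the category axioms are inherited from $𝐒𝐞𝐭^𝕋$). Nothing further is needed.
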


%   \begin{short}
    
%     The presentation based on relative monads extends to the typed
%     setting, by replacing the functor $ℕ∶ 1 → 𝐒𝐞𝐭$ with
%     $𝐍∶ 1 → 𝐒𝐞𝐭^𝕋$.  The relevant tensor product is as follows:
% \begin{defi}
%   For any $𝕋$-sets $X$ and $Y$, let $X⊗Y = [𝐍,Y]·X$.
% \end{defi}

%   \end{short}
  \subsection{De Bruijn \texorpdfstring{$𝕋$}{T}-monads as relative monads}
  The presentation based on relative monads extends to the typed
  setting, by replacing the functor $ℕ∶ 1 → 𝐒𝐞𝐭$ with $𝐍∶ 1 → 𝐒𝐞𝐭^𝕋$,
  so that $\Lan_𝐍(X)(Y) ≅ [𝐍,Y]·X$.  Thus, $𝐃𝐁𝐌𝐧𝐝(𝕋)$ is equivalently
  the category of monads relative to the functor $1 → 𝐒𝐞𝐭^𝕋$ picking
  $𝐍$.  For the record, let us explicitly introduce the corresponding
  tensor product.
\begin{defi}
  \label{def:tensor-prod-simply-typed}
  For any $𝕋$-sets $X$ and $Y$, let $X⊗Y = [𝐍,Y]·X$.
\end{defi}
% \begin{rem}
%   More explicitly, $X⊗Y$ is the $[𝐍,Y]$-fold coproduct of $X$ with
%   itself, i.e.,
%   $$(X ⊗ Y)(τ) ≅ [𝐍,Y] × X(τ) ≅ X(τ) × ∏_{τ' ∈ 𝕋} Y(τ')^ℕ.$$
%   Intuitively, instead of just one set $ℕ$ of variables, we have a
%   copy of $ℕ$ for each type, and a \alert{$Y$-assignment} maps the $ℕ$
%   variables of type $τ'$ to $Y(τ')$, for all $τ'$.
% \end{rem}
\begin{nota}
  For coherence with the untyped case, we tend to write an element of
  $(X⊗Y)(τ)$ as $(x,f)$, with $x ∈ X(τ)$ and $f∶ 𝐍 → Y$.
\end{nota}

  \subsection{Initial-algebra semantics}
  We now adapt the initial-algebra semantics of~§\ref{s:elementary} to
  the typed case.
  \subsubsection{Assignment  lifting}
  Let us start by generalising lifting to the typed case. This relies on a
  typed form of lifting, which acts on all
  variables of a given type, leaving all other variables untouched.
  \begin{defi}
    Let $(X,s,v)$ denote any De Bruijn $𝕋$-monad.
    We first define a typed analogue $↑^{τ}$ of the $↑$
    of Definition~\ref{def:prime}, as below left, and then
    the \alert{lifting} of any assignment  $σ∶ 𝐍 → X$
    as below right.
    \begin{center}
      $\begin{array}[t]{rcll}
          (↑^{τ})_τ(n) & = & v_τ(n+1) \\
          (↑^{τ})_{τ'}(n) & = & v_{τ'}(n) & \mbox{(if $τ≠τ'$)}
       \end{array}$
       \hfill $\begin{array}[t]{rcll}
                 (⇑^{τ}σ)_τ(0) & = & v_τ(0) \\
                 (⇑^{τ}σ)_τ(n+1) & = & σ_τ(n)[↑^{τ}] \\
                 (⇑^{τ}σ)_{τ'}(n) & = & σ_{τ'}(n)[↑^{τ}] 
                              & \mbox{(if $τ≠τ'$).}
        \end{array}$
    \end{center}
    Finally, for any sequence $γ = (τ₁,…,τₙ)$ of types, we define
    $⇑^{γ}σ$ inductively, by $⇑^{ε}σ = σ$ and
    $⇑^{γ,τ}σ = ⇑^{τ}(⇑^{γ}σ)$, where $ε$ denotes the empty sequence.
  \end{defi}

  \subsubsection{Binding arities and binding conditions}
We may now generalise binding arities and the binding conditions to
the typed setting.

  \begin{defiC}[{\cite{FioreHur}}]\hfill
    \begin{itemize}
    \item A \alert{first-order arity} is a pair
      $((τ₁,…,τₚ),τ) ∈ 𝕋^*×𝕋$ of a list of types and a type.
    \item A \alert{binding arity} is a tuple
      $a = (((γ₁,τ₁),…,(γₚ,τₚ)),τ)$, where each $γᵢ ∈ 𝕋^*$ is a list
      of types, and each $τᵢ$, as well as $τ$, are types. In other
      words,  $a ∈ (𝕋^*×𝕋)^*×𝕋$.
    \item The \alert{first-order arity $|a|$ associated} to $a$ is
      $((τ₁,…,τₚ),τ) ∈ 𝕋^*×𝕋$.
    \end{itemize}
  \end{defiC}

  \begin{rem}
    An arity $(((γ₁,τ₁),…,(γₚ,τₚ)),τ)$ may be
    understood as follows:
    \begin{itemize}
    \item $τ$ is the return type;
    \item $(τ₁,…,τₚ)$ are the argument types;
    \item each list $γᵢ = (τⁱ₁,…,τⁱ_{qᵢ})$ specifies that the $i$th
      argument should be considered as binding $qᵢ$ variables, of
      respective types $τⁱ₁$,…,$τⁱ_{qᵢ}$.
    \end{itemize}
  \end{rem}
  \begin{nota}
    We write any arity $(((γ₁,τ₁),…,(γₚ,τₚ)),τ)$ as an inference rule
    % \hfil \inferrule{γ₁ ⊢ τ₁ \\ … \\ γₚ ⊢ τₚ}{⊢ τ}~· 
    \begin{mathpar}
    \inferrule{γ₁ ⊢ τ₁ \\ … \\ γₚ ⊢ τₚ}{⊢ τ}~· 
  \end{mathpar}
\end{nota}

\begin{exa}\label{ex:stl}
  The binding signature for simply-typed $λ$-calculus has two
  operations $\lam_{τ,τ'}$ and $\app_{τ,τ'}$ for each pair 
  $(τ,τ')$ of types, of respective arities
    \begin{center}
     $\inferrule{τ ⊢ τ'}{ ⊢ τ → τ'}$ \hfil and \hfil
    $\inferrule{ ⊢ τ → τ' \\ ⊢ τ}{ ⊢ τ'}~·$
  \end{center}
  \end{exa}

This allows us to generalise the binding conditions, as follows.

\begin{defi}
  Let $a = (((γ₁,τ₁),…,(γₚ,τₚ)),τ)$ be any binding arity, and $M$ be
  any $𝕋$-set equipped with morphisms $s∶ [𝐍,M]·M → M$ and $v∶ 𝐍 → M$.  An
  \alert{operation of binding arity $a$} is a map
    $o∶ M(τ₁)×…×M(τₚ) → M(τ)$ satisfying the following
    \alert{$a$-binding condition} w.r.t.\ $(s,v)$:
  \begin{equation}
    \begin{array}{l}
        ∀ σ∶ 𝐍 → M,
    x₁,…,xₚ ∈ M(τ₁)×…×M(τₚ), \\
    o(x₁,…,xₚ)[σ] = o(x₁[⇑^{γ₁}σ],…,xₚ[⇑^{γₚ}σ]).
    \end{array}
    \label{eq:bindingtyped}
    \end{equation}
\end{defi}

  \subsubsection{Binding signatures and algebras}
  Finally, we generalise signatures and their models to the typed
  setting, and state a typed initiality theorem.
  \begin{defi}
    A \alert{first-order typed signature} consists of a set $O$ of
    \alert{operations}, equipped with an \alert{arity} map
    $\ar∶ O → 𝕋^*× 𝕋$.
  \end{defi}
\begin{defi}
  Consider a first-order signature $S ≔ (O,\ar)$.
  \begin{itemize}
  \item An \alert{$S$-algebra} is a set $X$, together with, for each
    operation $o ∈ O$ with arity $((τ₁,…,τₚ),τ)$, a map
    $o_X(τ₁)×…×X(τₚ) → X(τ)$.
  \item A morphism $X → Y$ of $S$-algebras is a map between underlying
    sets commuting with operations, in the sense that for each $o ∈ O$,
    letting $((τ₁,…,τₚ),τ) ≔ \ar(o)$, we have
    for all $x₁,…,xₚ ∈ X(τ₁)×…×X(τₚ)$, 
    $f_τ(o_X(x₁,…,xₚ)) = o_Y(f_{τ₁}(x₁),…,f_{τₚ}(xₚ))$.
  \end{itemize}
  We denote by  $S\alg$ the category of $S$-algebras and morphisms
  between them.
\end{defi}

  \begin{defi}\hfill
    \begin{itemize}
    \item A \alert{$𝕋$-binding signature} consists of a set $O$ of
    \alert{operations}, equipped with an arity map $O → (𝕋^*×𝕋)^*×𝕋$.
    \item The first-order signature $|S|$ associated with a binding
      signature $S ≔ (O,\ar)$ is $|S| ≔ (O,|\ar|)$, where
      $|\ar|∶ O → 𝕋^*×𝕋$ maps any $o ∈ O$ to $|\ar(o)|$.
  \end{itemize}
  \end{defi}

  Let us now present the notion of De Bruijn $S$-algebra:
  \begin{defi}
    Consider any $𝕋$-binding signature $S ≔ (O, \ar)$.
 \begin{itemize}
 \item A \alert{De Bruijn $S$-algebra} consists of a De Bruijn
   $𝕋$-monad $(X,s,v)$, together with, for all $o ∈ O$, an operation
   of binding arity $\ar(o)$.
\item A morphism of De Bruijn $S$-algebras is a map $f∶ X → Y$ between
  underlying sets, which is a morphism both of De Bruijn monads and of
  $|S|$-algebras.
  \end{itemize}
  We denote by $S\DBAlg$ the category of De Bruijn $S$-algebras and
  morphisms between them.
\end{defi}

In order to extend the initiality theorem to the typed case, we need
to define the endofunctor induced by a $𝕋$-binding signature $S$, which only depends 
on $|S|$, as in the untyped case.
  \begin{defi}[Induced endofunctor]\hfill
    \begin{itemize}
    \item For any $τ ∈ 𝕋$, let $𝐲_τ$ denote the $𝕋$-set
      defined by
      \[ \begin{array}[t]{rcll}
          𝐲_τ(τ) & = & 1 \\
          𝐲_τ(τ') & = & ∅ & \mbox{(if $τ' ≠ τ$)}.
        \end{array}\]
    \item The endofunctor $Σₐ$ induced by any arity
      \[a = \inferrule{τ¹₁,…,τ¹_{q₁}⊢τ₁ \\ … \\ τᵖ₁,…,τᵖ_{qₚ} ⊢ τₚ}{⊢
          τ}\] is defined by 
       $Σₐ(X) = (X(τ₁)×…×X(τₚ))·𝐲_{τ}.$
      Thus, a $Σₐ$-algebra is a $𝕋$-set $X$ equipped with a morphism
      % \begin{center}
      $(X(τ₁)×…×X(τₚ))·𝐲_{τ} → X$,  %\hfil
      or equivalently a map %\hfil
      $X(τ₁)×…×X(τₚ) → X(τ).$
    % \end{center}
    \item The endofunctor $Σ_S$ induced by any $𝕋$-binding signature
      $S = (O,\ar)$ is defined by
      $$Σ_S(X) = ∑_{o∈O} Σ_{\ar(o)}(X).$$
    \end{itemize}
  \end{defi}

We have the following typed extension of the initiality theorem.
\begin{thm}\label{thm:typedinitiality}
  For any $𝕋$-binding signature $S$, let ${\DB}$ denote the initial
  $(𝐍+Σ_S)$-algebra, with structure morphisms $v∶ 𝐍 → {\DB}$ and
  $a∶ Σ_S({\DB}) → {\DB}$, inducing maps \[
    o_{\DB}∶ {\DB}(τ₁)×…×{\DB}(τₚ) → {\DB}(τ)\] for all
  $o ∈ O$ with $\ar(o) = (((γ₁,τ₁),…,(γₚ,τₚ)),τ)$. Then:
  \begin{enumerati}
  \item \label{item:Tsubst} There exists a unique morphism
    $s∶ [𝐍,{\DB}] · {\DB} → {\DB}$ such that
    \begin{itemize}
    \item  for all $τ ∈ 𝕋$, $n ∈ ℕ$, and $f∶ 𝐍 → {\DB}$,
      $s_τ(v_τ(n),f) = f_τ(n)$, and
    \item for all $o ∈ O$, the map $o_{{\DB}}$ satisfies the
      $\ar(o)$-binding condition w.r.t.\ $(s,v)$.
    \end{itemize}
  \item This morphism $s$ turns $({\DB},v,s,a)$ into a De Bruijn $S$-algebra.
  \item This De Bruijn $S$-algebra is initial in $S\DBAlg$.
  \end{enumerati}
\end{thm}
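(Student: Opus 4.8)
The proof follows the same three-step pattern as the untyped Theorems~\ref{thm:initiality:I} and~\ref{thm:initiality:II}, with $ℕ$ replaced by $𝐍$, sets by $𝕋$-sets, and untyped renamings and liftings by their typed analogues. The one subtlety, already present in the untyped case, is that the $a$-binding condition for an operation $o$ refers to the lifted assignments $⇑^{γᵢ}σ$, so that a naive structural recursion on the first argument of substitution would call itself on the \emph{same} assignment's lift. As usual, we sidestep this by currying the recursion and by constructing renaming before full substitution; note that the lifting $⇑^{γ}σ$ only ever substitutes along $↑^{τ}$, i.e.\ renames, so it is available as soon as renaming is.

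For part~(i) we proceed in three stages. \textbf{(a)} We equip the $𝕋$-set whose $τ$-component is ${\DB}(τ)^{[𝐍,𝐍]}$ (exponential by the set of renamings $𝐍 → 𝐍$) with an $(𝐍+Σ_S)$-algebra structure: $v_τ(n)$ is sent to $(ρ ↦ v_τ(ρ_τ(n)))$, and an operation $o$ with $\ar(o) = (((γ₁,τ₁),…,(γₚ,τₚ)),τ)$ acts by $(gᵢ)ᵢ ↦ (ρ ↦ o_{\DB}(g₁(⇑^{γ₁}ρ),…,gₚ(⇑^{γₚ}ρ)))$, where the lifting $⇑^{γ}ρ$ of a renaming is a purely combinatorial (arithmetic) operation needing nothing of substitution. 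The induced algebra morphism out of ${\DB}$ uncurries to the renaming action $⟨ρ⟩∶ {\DB} → {\DB}$. \textbf{(b)} Using $⟨{-}⟩$, the typed lifting $⇑^{γ}σ$ of an arbitrary assignment $σ∶ 𝐍 → {\DB}$ is now well-defined. \textbf{(c)} We equip the $𝕋$-set with $τ$-component ${\DB}(τ)^{[𝐍,{\DB}]}$ with an $(𝐍+Σ_S)$-algebra structure exactly as in~(a), but with renamings replaced by assignments and $⇑^{γ}ρ$ by $⇑^{γ}σ$; the induced algebra morphism out of ${\DB}$ uncurries to a map $s∶ [𝐍,{\DB}]·{\DB} → {\DB}$. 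By construction $s$ satisfies left unitality and the binding conditions, and these two requirements say precisely that the curried map is an $(𝐍+Σ_S)$-algebra morphism into the algebra of~(c); hence $s$ is the unique such map.

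For part~(ii) we must check associativity $x[f][g] = x[f[g]]$ and right unitality $x[v] = x$ (left unitality being one of the defining clauses). Both go by structural induction on $x$: the variable case is immediate from left unitality, and the operation case reduces, via the binding conditions, to the ``substitution lemma for lifting'' $⇑^{γ}(f[g]) = (⇑^{γ}f)[⇑^{γ}g]$ and $⇑^{γ}v = v$. Establishing these requires a small \emph{autosubst}-style layer of preliminary lemmas about the interaction of renamings with substitution (functoriality of $⟨{-}⟩$, the identities $⟨ρ⟩(x)[f] = x[f∘ρ]$ and $⟨ρ⟩(x[f]) = x[⟨ρ⟩∘f]$, and the commutation of renaming-lifting with assignment-lifting), each itself proved by structural induction. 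This bookkeeping — together with the need in part~(i) to order the constructions so that renaming precedes full substitution — is the main obstacle: it is routine but it is where essentially all the effort lies, which is exactly why the untyped case was handled by the mechanisations~\cite{DBHOL,DBCoq}.

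For part~(iii), let $(Y,t,w)$ with operations $o_Y$ be any De Bruijn $S$-algebra. Cotupling $w$ and the $o_Y$ makes $Y$ an $(𝐍+Σ_S)$-algebra, so there is a unique $(𝐍+Σ_S)$-algebra morphism $h∶ {\DB} → Y$; it commutes with variables and with every operation by construction. It remains to see that $h$ commutes with substitution, $h_τ(x[f]) = h_τ(x)[h∘f]$, which again goes by structural induction on $x$, the operation case using the binding conditions on both sides together with the fact that $h$ commutes with lifting, $h∘(⇑^{γ}σ) = ⇑^{γ}(h∘σ)$ — a consequence of $h$ commuting with variables and with renaming, the latter being an easy induction. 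Finally, any morphism of De Bruijn $S$-algebras is in particular an $(𝐍+Σ_S)$-algebra morphism, hence equals $h$ by initiality of ${\DB}$ among $(𝐍+Σ_S)$-algebras; so ${\DB}$ is initial in $S\DBAlg$.
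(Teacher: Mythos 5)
Your proposal is correct, and it actually supplies more than the paper does: the paper states Theorem~\ref{thm:typedinitiality} with no proof at all, and even the untyped analogues (Theorems~\ref{thm:initiality:I} and~\ref{thm:initiality:II}) are justified only by reference to the Coq and HOL Light mechanisations, which themselves treat only the untyped case. Your renaming-first strategy -- curry the recursion, obtain the renaming action $⟨ρ⟩$ as the unique algebra map into ${\DB}^{[𝐍,𝐍]}$, use it to make lifting well-defined, then repeat with assignments in place of renamings -- is the standard way to break the apparent circularity in the clause $(⇑^{τ}σ)_τ(n+1)=σ_τ(n)[↑^{τ}]$, and it is visibly what the HOL Light development does (\texttt{DERIV} is characterised via \texttt{SUBST} along the renaming \texttt{REF o SUC}, which presupposes exactly this stratification). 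One point you should make explicit rather than bury in the ``autosubst-style layer'': the binding condition in the statement is taken w.r.t.\ $(s,v)$, i.e.\ with lifting computed via $s$ itself, whereas your algebra in stage~(c) uses the lifting computed via $⟨{-}⟩$. Both for existence and for uniqueness you therefore need the bridging lemma $x[v∘ρ]_s = ⟨ρ⟩(x)$; for uniqueness, note that any $s'$ satisfying the two conditions restricts, on assignments of the form $v∘ρ$, to an algebra morphism into the algebra of stage~(a), hence agrees with $⟨{-}⟩$, after which the two notions of lifting coincide and your identification of the two conditions with ``algebra morphism into~(c)'' goes through. This is routine and follows from the same inductions you already invoke, so it is a presentational rather than a mathematical gap; parts~(ii) and~(iii) are exactly right.
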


\subsection{Application: values in simply-typed \tmlambda-calculus}
  \label{ss:values}
  We saw in Example~\ref{ex:stl} that the De Bruijn monad of
  simply-typed $λ$-calculus terms admits a simple signature.  But we
  also mentioned in Remark~\ref{rem:values} that (untyped) values may
  be organised as a monad, with terms forming a module over it.  In
  this subsection, as announced, we present a signature for a
  simply-typed version of this.

  We want elements of our De Bruijn monad at any type to be
  \alert{values} of that type.  (Indeed, values are closed under value
  substitution.)

  However, in order to define a signature for this De Bruijn
  monad, we cannot use application. Indeed, application returns terms
  which are not values.

  In order to solve this problem, we need to introduce the following
  auxiliary construction.  The idea is to pack up all occurrences of
  application between layers of value operations (abstraction and
  variable), into a single operation.
  
  We do this by introducing \alert{application binary trees},
    which are proof derivations generated by the following rules,
    \begin{mathpar}
      \inferrule{ }{σ ⊢_{BT} σ} \and
      \inferrule{Γ⊢_{BT} σ → τ \\ Δ⊢_{BT} σ}{Γ,Δ⊢_{BT}τ}
    \end{mathpar}
    \noindent where $Γ,Δ$ denotes concatenation of lists of simple
    types.  Thus a proof of $Γ ⊢_{BT} τ$ is essentially a simply-typed
    term involving only application, with one, linearly used free
    variable for each type in $Γ$, in the same order.  Linearity is
    here used to keep track of all leaves in the typing context, which
    we will now use to define the desired binding signature.

    \begin{defi}
      Let $BT^Γ_σ$ denote the set of such application binary trees
      with conclusion $Γ⊢_{BT}σ$.
  \end{defi}
  We then take as binding signature $S_{λᵥ}$ for simply-typed values
  the one with one operation $L_{π,σ}$ of arity
  \[
  \inferrule{σ ⊢ τ₁ \\ … \\ σ ⊢ τₙ}{⊢ σ → τ} \]
for each simple type $σ$
  and application binary tree $π ∈ BT^{τ₁,…,τₙ}_τ$.

    \begin{exa}
    For a simple example, if $π$ is merely the axiom $τ ↦ τ$,
    then $L_{π,σ}$ has the arity
    $\inferrule{σ ⊢ τ}{⊢ σ → τ}$
    of $λ$-abstraction.
    In this case, $L_{π,σ}$ is thought of as forming $λx:σ.v^τ$ from
    any value $v$ of type $τ$ with an additional variable of type $σ$.
  \end{exa}
  
  \begin{exa}
    For a less trivial, yet basic example, if $π$ is
    \begin{mathpar}
      \inferrule{τ₁ → τ₂ ⊢_{BT} τ₁ → τ₂ \\ τ₁ ⊢_{BT} τ₁}{τ₁ → τ₂, τ₁
        ⊢_{BT} τ₂}~\rlap{,}
    \end{mathpar}
    then $L_{π,σ}$ has arity
    \[
    \inferrule{σ ⊢ τ₁ → τ₂ \\ σ ⊢ τ₁}{⊢ σ → τ₂}~·
    \]
    This operation is thought of as forming
    $λx:σ. (f^{τ₁→τ₂}\ a^{τ₁})$ from values $f$ and $a$.
  \end{exa}
  
    By Theorem~\ref{thm:typedinitiality}, the initial De Bruijn
    $S_{λᵥ}$-algebra has as carrier the initial algebra for the
    induced endofunctor, which is by construction the subset of values.
  \section{Equations}\label{s:equations}
  In this section, we introduce a notion of equational theory for
  specifying (typed) De Bruijn monads, following ideas from
  \cite{FioreHurEquational}.
    % We start with an elementary
    % presentation, and then give a module-based one.

    % \subsection{Elementary presentation}
  \begin{defi}
    \label{def:equational-theory}
    A \alert{De Bruijn equational theory} consists of
    \begin{itemize}
    \item two binding signatures $S$ and $T$, and
    \item two functors $L,R∶ S\DBAlg → T\DBAlg$ over $𝐃𝐁𝐌𝐧𝐝(𝕋)$, i.e.,
      making the following diagram commute serially,
      where $U^S$ and $Uᵀ$ denote the forgetful functors.
      \begin{center}
        \diag{%
          S\DBAlg \& \& T\DBAlg \\
          \& 𝐃𝐁𝐌𝐧𝐝(𝕋) %
        }{%
          (m-1-1.5) edge[labela={L}] (m-1-3.175) %
          (m-1-1.-3) edge[labelb={R}] (m-1-3.183) %
          (m-1-1) edge[labelbl={U^S}] (m-2-2) %
          (m-1-3) edge[labelbr={Uᵀ}] (m-2-2) %
        }
      \end{center}
    \end{itemize}
  \end{defi}

\begin{exa}\label{ex:eqsig}
  Recalling the binding signature $S_Λ$ for $λ$-calculus from
  Example~\ref{ex:siglambda}, let us define a De Bruijn equational
  theory for $β$-equivalence. We take $Tᵦ = (1,0)$, and for any De
  Bruijn $S_Λ$-algebra $X$,
  \begin{itemize}
  \item $L(X)$ has as structure map
    \[\begin{array}[t]{rcl}
        X² & → & X \\
                  (e₁,e₂) & ↦ & \app(\lam(e₁),e₂)
      \end{array}\] while
  \item $R(X)$ has as structure map 
    \[ \begin{array}[t]{rcl}
        X² & → & X \\
        (e₁,e₂) & ↦ & e₁[e₂·\id].
       \end{array}\]
     
    \noindent (Here $e₂·\id$ denotes the assignment $0 ↦ e₂$, $n+1↦v(n)$.)
  \end{itemize}
\end{exa}

  \begin{defi}
    Given an equational theory $E = (S,T,L,R)$, a De Bruijn
    \alert{$E$-algebra} is a De Bruijn $S$-algebra $X$ such that
    $L(X) = R(X)$.

    Let $E\DBAlg$ denote the category of $E$-algebras, with morphisms of
    De Bruijn $S$-algebras between them.
  \end{defi}

  \begin{rem}
    The category $E\DBAlg$ is an equaliser of $L$ and $R$ in $𝐂𝐀𝐓$.
  \end{rem}

  Let us now turn to characterising the initial De Bruijn $E$-algebra,
  for any De Bruijn equational theory $E$.
  For this, we introduce the following relation.
  \begin{defi}
    \label{def:eqrel-ini}
    For any De Bruijn equational theory $E = (S,T,L,R)$, with
    $S = (O,\ar)$ and $T = (O',\ar')$, let $\DB$ denote the initial
    $(𝐍+Σ_S)$-algebra.  We define $∼_E$ to be the smallest equivalence
    relation on $\DB$ satisfying the following rules,
    \begin{mathpar}
      \inferrule{ }{o'_{L(\DB)}(e₁,…,eₚ) ∼_E o'_{R(\DB)}(e₁,…,eₚ)}
      \and
      \inferrule{e₁ ∼_E e'₁ \\ … \\ e_q ∼_E e'_q}{o_{\DB}(e₁,…,e_q)
        ∼_E o_{\DB}(e'₁,…,e'_q)}
    \end{mathpar}
    for all $e,e₁,…$ in $\DB$, $o' ∈ O'$ with $|\ar'(o')| = p$, and
    $o ∈ O$ with $|\ar(o)| = q$.
  \end{defi}

\begin{exa}
  For the equational theory of Example~\ref{ex:eqsig}, the first rule
  instantiates precisely to the $β$-rule, while the second enforces
  congruence.
\end{exa}

\begin{thm}\label{thm:eqn}
  For any equational theory $E = (S,T,L,R)$, $E\DBAlg$ admits an
  initial object, whose carrier set is the quotient $\DB/{∼_E}$.
\end{thm}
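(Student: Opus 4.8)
The plan is to show that $∼_E$ is a \emph{substitutive congruence} on $\DB$, to equip the quotient $\DB/{∼_E}$ with an $E$-algebra structure making the projection a morphism, and to verify its universal property using the initiality of $\DB$ in $S\DBAlg$ (Theorem~\ref{thm:typedinitiality}). The crux — and the main obstacle — is the first point, because the rules defining $∼_E$ mention only the operations of $S$ and say nothing about substitution, so substitutivity must be bootstrapped, in the following order. First, \emph{monotonicity in the first argument}: if $e ∼_E e'$ then $e[f] ∼_E e'[f]$ for every assignment $f$; one proves this by exhibiting $\{(e,e') : ∀f,\ e[f] ∼_E e'[f]\}$ as an equivalence relation closed under the two generating rules, the congruence rule following from the binding conditions for the $o_{\DB}$, and the axiom rule following from the fact that $L$ and $R$ are functors over $𝐃𝐁𝐌𝐧𝐝(𝕋)$, so that $L(\DB)$ and $R(\DB)$ share the underlying De Bruijn monad of $\DB$, whence $o'_{L(\DB)}$ and $o'_{R(\DB)}$ satisfy the $\ar'(o')$-binding condition w.r.t.\ the substitution of $\DB$ — pushing $[f]$ through both sides and reapplying the axiom rule to the substituted arguments closes that case. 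Second, \emph{stability of lifting}: if $f(n) ∼_E g(n)$ for all $n$, then $(⇑^{γ}f)(n) ∼_E (⇑^{γ}g)(n)$ for all $n$ and all type lists $γ$, by induction on $γ$, the successor step being an instance of the first point applied to a renaming $↑^{τ}$. Third, \emph{monotonicity in the second argument}: if $f(n) ∼_E g(n)$ for all $n$, then $e[f] ∼_E e[g]$ for all $e ∈ \DB$, by structural induction on $e ∈ μZ.𝐍+Σ_S(Z)$, using left unitality in the variable case and, in the operation case, the binding condition together with stability of lifting and the congruence rule. Chaining the first and third points by transitivity yields full substitutivity: $e ∼_E e'$ and $f(n) ∼_E g(n)$ for all $n$ imply $e[f] ∼_E e'[g]$.

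Granting this, write $q\colon \DB → \DB/{∼_E}$ for the projection. Substitutivity and the congruence rule make the substitution, variables and operations of $\DB$ descend to $\DB/{∼_E}$, and the De Bruijn monad axioms and binding conditions transfer along the surjection $q$, which commutes with all the structure by construction; so $\DB/{∼_E} ∈ S\DBAlg$ and $q$ is a morphism of De Bruijn $S$-algebras. To see that $L(\DB/{∼_E}) = R(\DB/{∼_E})$, apply $L$ and $R$ to $q$: since $U^T(L(q)) = q = U^T(R(q))$, the $T$-algebra morphisms $L(q)$ and $R(q)$ have underlying map $q$, so $q(o'_{L(\DB)}(e_1,…,e_p)) = o'_{L(\DB/{∼_E})}(q e_1,…,q e_p)$ and likewise for $R$; by the axiom rule the two left-hand sides are $∼_E$-related, hence equal in the quotient, so $o'_{L(\DB/{∼_E})}$ and $o'_{R(\DB/{∼_E})}$ agree on every tuple (by surjectivity of $q$), for each $o' ∈ O'$, while the two $T$-algebras already share their underlying De Bruijn monad.

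For initiality, let $Y$ be any $E$-algebra; then $Y ∈ S\DBAlg$ with $L(Y) = R(Y)$, so by Theorem~\ref{thm:typedinitiality} there is a unique morphism $h\colon \DB → Y$ of De Bruijn $S$-algebras. A routine induction on $∼_E$ shows that $h$ identifies $∼_E$-related elements: the congruence rule uses that $h$ commutes with the $o_{\DB}$, and the axiom rule uses that $L(h)$ and $R(h)$ have underlying map $h$, so $h(o'_{L(\DB)}(e_1,…,e_p)) = o'_{L(Y)}(h e_1,…,h e_p) = o'_{R(Y)}(h e_1,…,h e_p) = h(o'_{R(\DB)}(e_1,…,e_p))$, the middle step by $L(Y) = R(Y)$. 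Hence $h$ factors uniquely through $q$ as a set-map $\bar h\colon \DB/{∼_E} → Y$, which is automatically a morphism of De Bruijn $S$-algebras (hence of $E$-algebras) since $q$ is surjective; and it is the unique such morphism, because any $E$-algebra morphism $g\colon \DB/{∼_E} → Y$ satisfies $g ∘ q = h$ by initiality of $\DB$, forcing $g = \bar h$. This establishes that $\DB/{∼_E}$ is initial in $E\DBAlg$.

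The one genuinely delicate part is the bootstrapping of substitutivity in the first paragraph: since nothing in the definition of $∼_E$ refers to substitution, one must unwind the mutual dependence between the two arguments of substitution in exactly that order — monotonicity in the first argument first (by rule induction, crucially relying on the binding conditions for the $T$-operations on the \emph{same} underlying monad $\DB$), then stability of lifting, then monotonicity in the second argument (by structural induction on $\DB$). Everything downstream is routine transfer of structure along a quotient and a diagram chase through the "over $𝐃𝐁𝐌𝐧𝐝(𝕋)$" condition on $L$ and $R$.
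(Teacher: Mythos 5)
Your argument is correct, and it supplies something the paper itself does not: the paper's proof of Theorem~\ref{thm:eqn} consists entirely of a pointer to the Coq mechanisation (and, for $\lambda$-calculus modulo $\beta\eta$, the HOL Light one), so there is no prose proof to compare against. Your decomposition is the natural one, and the part you rightly single out as delicate is exactly the content any proof must contain: since the generating rules of $\sim_E$ never mention substitution, compatibility with substitution has to be bootstrapped, and your ordering --- closure in the term argument by rule induction (the axiom case working because $L({\DB})$ and $R({\DB})$ lie over the same underlying De Bruijn monad as ${\DB}$, so the $T$-operations satisfy their binding conditions with respect to the substitution of ${\DB}$, letting you push $[f]$ inside and re-apply the axiom rule), then stability of lifting, then closure in the assignment argument by structural induction --- correctly untangles the mutual dependence. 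The downstream steps (descent of the structure, the identity $L({\DB}/{\sim_E}) = R({\DB}/{\sim_E})$ via surjectivity of the projection and the ``over $𝐃𝐁𝐌𝐧𝐝$'' condition, and the factorisation of the initial morphism) are all sound. Two minor points worth making explicit if you write this up: defining substitution on ${\DB}/{\sim_E}$ requires lifting a quotient-valued assignment back to ${\DB}$, i.e.\ a choice of representatives for an $ℕ$-indexed family (the Coq development sidesteps this with an axiomatic quotient construction); and in the typed reading of the section, $\sim_E$ is a type-indexed family of relations, which changes nothing substantive in your argument.
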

\begin{proof}
  We formalised the proof in Coq~\cite{DBCoq}, where
  \begin{itemize}
  \item 
    existence is called
    \coqident{quotsyntax}{ini_morE_model_mor},
   \item uniqueness is called
     \coqident{quotsyntax}{ini_morE_unique}.
  \end{itemize}
  The specific case of $λ$-calculus modulo $βη$-equation has also 
  been mechanised in HOL~\cite{DBHOL}.
\end{proof}

\begin{exa}
  The initial model for the equational theory of
  Example~\ref{ex:eqsig} is the quotient of $λ$-terms in De Bruijn
  representation by $β$-equivalence.
\end{exa}

\section{Mechanised proofs}\label{s:mechanised}
Our theoretical framework is in particular meant to help specifying and
reasoning mechanically about binding syntax using De Bruijn
representation.  To give practical examples, we describe in this
section two implementations, in HOL Light and Coq, that cover several
crucial parts of the theory of this paper, and, in particular
Theorems~\ref{thm:initiality:I} and~\ref{thm:initiality:II}.  The full
source code is available on github~\cite{DBCoq,DBHOL}.

\subsection{HOL Light proof}

Let us start by discussing the HOL Light formalisation.
One key fact is that, despite being much weaker than ZFC set theory or
the Calculus of Inductive Constructions, HOL is expressive enough to
program the De Bruijn encoding and to reason about it.  For instance,
other representations, such as those based on monads over
sets~\cite{DBLP:journals/iandc/HirschowitzM10} (possibly via the nested datatype
technique~\cite{bird_paterson_1999,DBLP:journals/jar/HirschowitzM12}) are
not directly implementable in HOL.

The reader does not need to be familiar with higher-order
logic (HOL) to follow the essential ideas of this section.  HOL is
based on simply-typed $λ$-calculus.  Following a standard
denotational semantics in Zermelo-Fraenkel set theory,
types in HOL can be thought of as non-empty sets.

Our HOL Light implementation is divided into two main parts.
The first part treats the specific case of $λ$-calculus and can be useful to illustrate the essential ideas of this paper in a simple---yet
paradigmatic---setting.
\begin{nota}
In the following, we refer to the theorems and definitions in the HOL code by indicating their name in parentheses 
in \verb|teletype| font, e.g., (\verb|INITIAL_MORPHISM_UNIQUE|).  HOL terms and formulas are enclosed in
backquotes as in \verb|`2 + 2`|, types are prefixed by a colon as in \verb|`:bool`|.
\end{nota}
The type \verb|:dblambda| of $λ$-calculus is simply
defined as the following inductive type
\begin{verbatim}
  let dblambda_INDUCT, dblambda_RECURSION = define_type
    "dblambda = REF num | APP dblambda dblambda | ABS dblambda";;  
\end{verbatim}
The lifting function (\verb|`DERIV`|) and substitution function \verb|`SUBST`| are defined in the obvious way.  The equations that characterise them are summarised in the following theorem (\verb|SUBST_CLAUSES|).
\begin{verbatim}
  |- (!f i. SUBST f (REF i) = f i) /\
     (!f x y. SUBST f (APP x y) = APP (SUBST f x) (SUBST f y)) /\
     (!f x. SUBST f (ABS x) = ABS (SUBST (DERIV f) x)) /\
     (!f. DERIV f 0 = REF 0) /\
     (!f i. DERIV f (SUC i) = SUBST (REF o SUC) (f i))
\end{verbatim}
The names \verb|`SUC`| and \verb|`o`| respectively denote successor
and function composition.  The symbols \verb|`/\`| and \verb|`!`| are
HOL notation for conjunction and universal quantification.  We
recognise:
\begin{itemize}
\item in the first line, the variables map,
\item in the next two lines, the binding conditions for application
  and abstraction, and
\item in the final two lines, the equations defining the
  lifting of an assignment.
\end{itemize}

We show that the functions \verb|`SUBST`| and \verb|`DERIV`| satisfying the above identities are unique (\verb|SUBST_DERIV_UNIQUE|); thus, we have formalised the first point of Theorem~\ref{thm:initiality:I} for the case of $λ$-calculus.

The second point of the Theorem translates in a law for associativity (\verb|SUBST_SUBST|)
and a second law for unitality
(\verb|SUBST_REF|) complementing the first equation of (\verb|SUBST_CLAUSES|) above.

We also provide the classical definition of unary substitution
(\verb|SUBST1|, as found e.g., in~\cite{Huet}) and then show how the
latter is an instance of the former (\verb|SUBST1_EQ_SUBST|).  As
shown by other authors
\cite{DBLP:conf/popl/AbadiCCL90,DBLP:conf/cpp/SchaferST15}, reasoning
on parallel substitution can be significantly easier.  Here for
instance, we prove the associativity of unary substitution in a few
lines (\verb|SUBST1_SUBST1|) by reducing to parallel substitution.
In contrast, proving the same result directly for  unary substitution
is less intuitive: the mere statement of the property to be proved by
induction is tricky to devise.

Next, we introduce the category of De Bruijn monads (\verb|MONAD|,
\verb|MONAD_MOR|) and their associated modules (\verb|MODULE|,
\verb|MODULE_MOR|).
Our Definition~\ref{def:dbmonad} presents De Bruijn monads as triples consisting of a set, an associative substitution operator and a two-sided unit.
In HOL, this is implemented as a type \verb|`:A`| together with a substitution operation \verb|`op:(num->A)->A->A`| and a unit \verb|`e:num->A`|.\footnote{We warn the reader that in this code \text{op} is used for substitution operation and should not be confused with the operations of the syntax $o\in O$ of the previous sections.}
However, the unit is uniquely determined by the substitution operator and it is denoted \verb|`UNIT op`| in our implementation.  Moreover, the type \verb|`:A`| is automatically inferred.  Thus, we simply identify a De Bruijn monad by its substitution operator, that is, we write \verb|`op IN MONAD`| to indicate that \verb|`op`| is (the substitution operator of) a monad.
\begin{verbatim}
  |- !op. op IN MONAD <=>
          (!f g x:A. op g (op f x) = op (op g o f) x) /\
          (!f n. op f (UNIT op n) = f n) /\
          (!x. op (UNIT op) x = x)
\end{verbatim}
The set of morphisms between two De Bruijn monads \verb|`op1`| and \verb|`op2`| is then defined as follows.
\begin{verbatim}
  |- MONAD_MOR (op1,op2) =
     {h:A->B | op1 IN MONAD /\ op2 IN MONAD /\
               (!n. h (UNIT op1 n) = UNIT op2 n) /\
               (!f x. h (op1 f x) = op2 (h o f) (h x))}
\end{verbatim}
Modules are implemented using a similar style.  We implemented the
constructions on modules needed for interpreting binding signatures:
product (\verb|MPROD|) and derivation (\verb|DMOP|).

In this setup, we can state and prove Theorem~\ref{thm:initiality:II} for the $λ$-calculus.
The models of our syntax are De Bruijn monads endowed with functions \verb|`app`| and \verb|`abs`| that are module morphisms (\verb|DBLAMBDA_MODEL|).
\begin{verbatim}
  app IN MODULE_MOR op (MPROD op op, op)
  lam IN MODULE_MOR op (DMOP op op, op)
\end{verbatim}
\begin{rem}
  Both product \verb|MPROD| and derivation \verb|DMOP| expect two
  arguments, for different reasons.  Product expects the two modules
  it takes the product of.  Derivation takes a monad and a module over
  it, and it derives the latter. The monad is needed because
  derivation relies on monadic substitution (in the definition of
  $⇑$).
\end{rem}
Model morphisms (\verb|DBLAMBDA_MODEL_MOR|) are  De Bruijn
monad morphisms that commute with \verb|`app`| and \verb|`abs`|.  We then
obtain the universal property of $λ$-calculus by giving an initial model
morphism (\verb|DBLAMBDAINIT|,
\verb|DBLAMBDAINIT_IN_DBLAMBDA_MODEL_MOR|),
\begin{verbatim}
  |- !op app lam.
        (op,app,lam) IN DBLAMBDA_MODEL
        ==> DBLAMBDAINIT (op,app,lam) IN
            DBLAMBDA_MODEL_MOR ((SUBST,UNCURRY APP,ABS),(op,app,lam))
\end{verbatim}
and by proving its uniqueness (\verb|DBLAMBDAINIT_UNIQUE|).
% \begin{verbatim}
%   |- !op app lam r.
%         r IN DBLAMBDA_MODEL_MOR ((SUBST,UNCURRY APP,ABS),(op,app,lam))
%         ==> r = DBLAMBDAINIT (op,app,lam)
% \end{verbatim}

This part closes with the analogous theorem for the initial-algebra semantics
of $λ$-calculus modulo $βη$-equivalence
(\verb|EXP_MONAD_MOR_LC_EXPMAP|, \verb|LC_EXPMAP_UNIQUE|).  The style
is the one proposed in~\cite{DBLP:journals/iandc/HirschowitzM10} which uses exponential monads, that
is, a monads $M$ endowed with a module isomorphism $\mathsf{abs} : M' \stackrel{\simeq}{\longrightarrow} M$.

The second part of the HOL Light code implements Theorems~\ref{thm:initiality:I}
and~\ref{thm:initiality:II} for arbitrary signatures.

The increased generality comes at a cost in this implementation: since
HOL does not feature dependent types, it is impossible to implement
the term algebra of a given binding signature as a mere type: one has
to resort to a ``well-formedness'' predicate.  From this perspective,
it may be instructive to compare with our Coq implementation, which takes advantage of dependent types.

The above difficulty is solved in the standard way in HOL Light.
First, we build a type \verb|rterm| for \alert{raw terms} over a ``full'' signature, i.e., one
with countably many operations of each arity.  We then introduce an
inductive set (i.e., an inductive predicate) of well-formed terms
(\verb|WELLFORMED_RULES|) that selects the terms respecting a given
signature.  Besides this technical difficulty, the formal development
follows the same pattern as for $λ$-calculus.

The substitution operator is specified by two equations (\verb|TMSUBST_CALUSES|).
\begin{verbatim}
  |- (!f i. TMSUBST f (TMREF i) = f i) /\
     (!f c args. TMSUBST f (FN c args) =
                 FN c (MAP (\(k,x). k,TMSUBST (TMDERIV k f) x) args))
\end{verbatim}
where \verb|`TMREF`| denotes variables and \verb|`FN`| denotes operations from the signature.
The latter takes two arguments, the \alert{name} of the construction \verb|`c`| (a natural number) and a list of pairs \verb|`(k,x)'| where \verb|`k`| is a natural number denoting the number of bound variables and \verb|`x`| is a term.

We formulate the appropriate notion of category of models in this setting (\verb|MODEL|, \verb|MODEL_MOR|) of which the above data constitutes an object (\verb|RTERM_IN_MODEL|).
Then we prove the universal property by giving the initial morphism (\verb|INITIAL_MORPHISM_IN_MODEL_MOR|) and show its uniqueness (\verb|INITIAL_MORPHISM_UNIQUE|).

% We stress that our framework allows us to relate and reason on different syntaxes simultaneously and, for instance, prove theorems on a family of syntaxes.  To give an example, we start by considering the trivial category structure on signatures given by inclusion.  Then, we prove the functoriality (i.e., monotonicity) of our constructions (\verb|MODEL_MONO|, \verb|MODEL_MOR_MONO|), and we show an easy but significant modularity result (\verb|INITIAL_MORPHISM_MODULARITY|).

Finally, to tie the knot, we derive again the universal property for
$λ$-calculus as an instance of this new, more general, framework
(\verb|DBLAMBDA_UNIVERSAL|).

\subsection{Coq proof}
We now discuss the Coq formalisation~\cite{DBCoq}.  Our implementation
addresses only the general case of arbitrary signatures since, as
mentioned before, we do not gain anything by treating a particular
case separately, thanks to dependent types.  As also mentioned
above, the formalisation covers signatures with equations, in the
untyped case.

The reader who wants to skim through the main definitions and
constructions of this implementation can look at the file
\verb|Summary.v|, which reviews the main constructions and results.  The
formalisation has an idiomatic style: the minute details of
implementation pose no significant problem.  Therefore, we just point
the reader to the most relevant definitions.

We start with the file \verb|syntaxdb.v|.
The syntax is defined as an inductive type parameterised by a signature.
\begin{coqdoccode}
\coqdocemptyline
\coqdocnoindent
\coqdockw{Record} \coqdef{syntaxdb.signature}{signature}{\coqdocrecord{signature}} :=\coqdoceol
\coqdocindent{1.00em}
\{ \coqdef{syntaxdb.O}{O}{\coqdocprojection{O}}  : \coqdockw{Type};\coqdoceol
\coqdocindent{2.00em}
\coqdef{syntaxdb.ar}{ar}{\coqdocprojection{ar}} : \coqref{syntaxdb.O:2}{\coqdocmethod{O}} \coqexternalref{::type scope:x '->' x}{http://coq.inria.fr/distrib/V8.12.0/stdlib//Coq.Init.Logic}{\coqdocnotation{\ensuremath{\rightarrow}}} \coqexternalref{list}{http://coq.inria.fr/distrib/V8.12.0/stdlib//Coq.Init.Datatypes}{\coqdocinductive{list}} \coqexternalref{nat}{http://coq.inria.fr/distrib/V8.12.0/stdlib//Coq.Init.Datatypes}{\coqdocinductive{$ℕ$}}\}.\coqdoceol
\coqdocemptyline
\coqdocnoindent
\coqdockw{Inductive} \coqdef{syntaxdb.Z}{Z}{\coqdocinductive{Z}} (\coqdef{syntaxdb.S:4}{S}{\coqdocbinder{S}} : \coqref{syntaxdb.signature}{\coqdocrecord{signature}}) : \coqdockw{Type} :=\coqdoceol
\coqdocindent{1.00em}
\coqdef{syntaxdb.Var}{Var}{\coqdocconstructor{Var}} : \coqexternalref{nat}{http://coq.inria.fr/distrib/V8.12.0/stdlib//Coq.Init.Datatypes}{\coqdocinductive{$ℕ$}} \coqexternalref{::type scope:x '->' x}{http://coq.inria.fr/distrib/V8.12.0/stdlib//Coq.Init.Logic}{\coqdocnotation{\ensuremath{\rightarrow}}} \coqref{syntaxdb.Z:5}{\coqdocinductive{Z}} \coqref{syntaxdb.S:4}{\coqdocvariable{S}}\coqdoceol
\coqdocnoindent
\ensuremath{|} \coqdef{syntaxdb.Op}{Op}{\coqdocconstructor{Op}} : \coqdockw{\ensuremath{\forall}} (\coqdef{syntaxdb.o:7}{o}{\coqdocbinder{o}} : \coqref{syntaxdb.O}{\coqdocprojection{O}} \coqref{syntaxdb.S:4}{\coqdocvariable{S}}), \coqdocinductive{vec} (\coqref{syntaxdb.Z:5}{\coqdocinductive{Z}} \coqref{syntaxdb.S:4}{\coqdocvariable{S}}) (\coqref{syntaxdb.ar}{\coqdocprojection{ar}} \coqref{syntaxdb.o:7}{\coqdocvariable{o}})  \coqexternalref{::type scope:x '->' x}{http://coq.inria.fr/distrib/V8.12.0/stdlib//Coq.Init.Logic}{\coqdocnotation{\ensuremath{\rightarrow}}} \coqref{syntaxdb.Z:5}{\coqdocinductive{Z}} \coqref{syntaxdb.S:4}{\coqdocvariable{S}}.\coqdoceol
\end{coqdoccode}
\smallskip
\noindent Here, \coqdocinductive{vec} $A$ $\ell$ is defined as the
inductive type of vectors of elements $A$, whose length is that of the
list $\ell$.  Assuming a type $X$ equipped with a substitution map
$-[-]$ and variable embedding, the binding condition is defined as
\begin{coqdoccode}
  \coqdocemptyline
  \coqdocnoindent
\coqdockw{Definition} \coqdef{syntaxdb.binding condition}{binding\_condition}{\coqdocdefinition{binding\_condition}} (\coqdef{syntaxdb.a:217}{a}{\coqdocbinder{a}} : \coqexternalref{list}{http://coq.inria.fr/distrib/V8.12.0/stdlib//Coq.Init.Datatypes}{\coqdocinductive{list}} \coqexternalref{nat}{http://coq.inria.fr/distrib/V8.12.0/stdlib//Coq.Init.Datatypes}{\coqdocinductive{$ℕ$}}) (\coqdef{syntaxdb.op:218}{op}{\coqdocbinder{op}} : \coqdocinductive{vec} \coqref{syntaxdb.binding condition.X}{\coqdocvariable{X}} \coqref{syntaxdb.a:217}{\coqdocvariable{a}} \coqexternalref{::type scope:x '->' x}{http://coq.inria.fr/distrib/V8.12.0/stdlib//Coq.Init.Logic}{\coqdocnotation{\ensuremath{\rightarrow}}} \coqref{syntaxdb.binding condition.X}{\coqdocvariable{X}}) :=\coqdoceol
\coqdocindent{2.00em}
\coqdockw{\ensuremath{\forall}} (\coqdef{syntaxdb.f:219}{f}{\coqdocbinder{f}} : \coqexternalref{nat}{http://coq.inria.fr/distrib/V8.12.0/stdlib//Coq.Init.Datatypes}{\coqdocinductive{$ℕ$}} \coqexternalref{::type scope:x '->' x}{http://coq.inria.fr/distrib/V8.12.0/stdlib//Coq.Init.Logic}{\coqdocnotation{\ensuremath{\rightarrow}}} \coqref{syntaxdb.binding condition.X}{\coqdocvariable{X}})(\coqdef{syntaxdb.v:220}{v}{\coqdocbinder{v}} : \coqdocinductive{vec} \coqref{syntaxdb.binding condition.X}{\coqdocvariable{X}} \coqref{syntaxdb.a:217}{\coqdocvariable{a}}),\coqdoceol
\coqdocindent{3.00em}
\coqref{syntaxdb.op:218}{\coqdocvariable{op}} \coqref{syntaxdb.v:220}{\coqdocvariable{v}} \coqref{syntaxdb.binding condition.:::x '[' x ']'}{\coqdocnotation{[}} \coqref{syntaxdb.f:219}{\coqdocvariable{f}} \coqref{syntaxdb.binding condition.:::x '[' x ']'}{\coqdocnotation{]}} \coqexternalref{::type scope:x '=' x}{http://coq.inria.fr/distrib/V8.12.0/stdlib//Coq.Init.Logic}{\coqdocnotation{=}} \coqref{syntaxdb.op:218}{\coqdocvariable{op}} (\coqdocdefinition{vec\_map} (\coqdockw{fun} \coqdef{syntaxdb.n:221}{n}{\coqdocbinder{n}} \coqdef{syntaxdb.x:222}{x}{\coqdocbinder{x}} \ensuremath{\Rightarrow} \coqref{syntaxdb.x:222}{\coqdocvariable{x}} \coqref{syntaxdb.binding condition.:::x '[' x ']'}{\coqdocnotation{[}} \coqref{syntaxdb.f:219}{\coqdocvariable{f}} \coqref{syntaxdb.binding condition.:::x 'x5E' '(' x ')'}{\coqdocnotation{\^{}}} \coqref{syntaxdb.binding condition.:::x 'x5E' '(' x ')'}{\coqdocnotation{(}} \coqref{syntaxdb.n:221}{\coqdocvariable{n}} \coqref{syntaxdb.binding condition.:::x 'x5E' '(' x ')'}{\coqdocnotation{)}}\coqref{syntaxdb.binding condition.:::x '[' x ']'}{\coqdocnotation{]}}) \coqref{syntaxdb.v:220}{\coqdocvariable{v}}).\coqdoceol
\coqdocemptyline
\coqdocnoindent
\end{coqdoccode}%
where
\begin{itemize}
\item $-[-]$ denotes substitution,
\item 
 \begin{coqdoccode}
  \coqdocvar{f} \^{} ( \coqdocvar{n} )
\end{coqdoccode}
denotes $⇑^{n}f$, and
\item  \coqdocdefinition{vec\_map} $f$ maps a vector
$v = (x₁,\dots,xₙ)$ of type \coqdocinductive{vec} $A$ $(a₁,…,aₙ)$
to $(f\ a₁\ x₁, …,f\ aₙ\ xₙ)$.
\end{itemize}
The definition of models is split into two parts: the data, and the properties.

\begin{coqdoccode}
  \coqdocnoindent
\coqdockw{Record} \coqdef{syntaxdb.model data}{model\_data}{\coqdocrecord{model\_data}} (\coqdef{syntaxdb.S:223}{S}{\coqdocbinder{S}} : \coqref{syntaxdb.signature}{\coqdocrecord{signature}}) := \coqdoceol
\coqdocindent{1.00em}
\{ \coqdef{syntaxdb.carrier}{carrier}{\coqdocprojection{carrier}} $:>$ \coqdockw{Type};\coqdoceol
\coqdocindent{2.00em}
\coqdef{syntaxdb.variables}{variables}{\coqdocprojection{variables}} : \coqexternalref{nat}{http://coq.inria.fr/distrib/V8.12.0/stdlib//Coq.Init.Datatypes}{\coqdocinductive{$ℕ$}} \coqexternalref{::type scope:x '->' x}{http://coq.inria.fr/distrib/V8.12.0/stdlib//Coq.Init.Logic}{\coqdocnotation{\ensuremath{\rightarrow}}} \coqref{syntaxdb.carrier:225}{\coqdocmethod{carrier}};\coqdoceol
\coqdocindent{2.00em}
\coqdef{syntaxdb.ops}{ops}{\coqdocprojection{ops}} : \coqdockw{\ensuremath{\forall}} (\coqdef{syntaxdb.o:227}{o}{\coqdocbinder{o}} : \coqref{syntaxdb.O}{\coqdocprojection{O}} \coqref{syntaxdb.S:223}{\coqdocvariable{S}}), \coqdocinductive{vec} \coqref{syntaxdb.carrier:225}{\coqdocmethod{carrier}} (\coqref{syntaxdb.ar}{\coqdocprojection{ar}} \coqref{syntaxdb.o:227}{\coqdocvariable{o}}) \coqexternalref{::type scope:x '->' x}{http://coq.inria.fr/distrib/V8.12.0/stdlib//Coq.Init.Logic}{\coqdocnotation{\ensuremath{\rightarrow}}} \coqref{syntaxdb.carrier:225}{\coqdocmethod{carrier}};\coqdoceol
\coqdocindent{2.00em}
\coqdef{syntaxdb.substitution}{substitution}{\coqdocprojection{substitution}} : \coqexternalref{::type scope:x '->' x}{http://coq.inria.fr/distrib/V8.12.0/stdlib//Coq.Init.Logic}{\coqdocnotation{(}}\coqexternalref{nat}{http://coq.inria.fr/distrib/V8.12.0/stdlib//Coq.Init.Datatypes}{\coqdocinductive{$ℕ$}} \coqexternalref{::type scope:x '->' x}{http://coq.inria.fr/distrib/V8.12.0/stdlib//Coq.Init.Logic}{\coqdocnotation{\ensuremath{\rightarrow}}} \coqref{syntaxdb.carrier:225}{\coqdocmethod{carrier}}\coqexternalref{::type scope:x '->' x}{http://coq.inria.fr/distrib/V8.12.0/stdlib//Coq.Init.Logic}{\coqdocnotation{)}} \coqexternalref{::type scope:x '->' x}{http://coq.inria.fr/distrib/V8.12.0/stdlib//Coq.Init.Logic}{\coqdocnotation{\ensuremath{\rightarrow}}} \coqexternalref{::type scope:x '->' x}{http://coq.inria.fr/distrib/V8.12.0/stdlib//Coq.Init.Logic}{\coqdocnotation{(}}\coqref{syntaxdb.carrier:225}{\coqdocmethod{carrier}} \coqexternalref{::type scope:x '->' x}{http://coq.inria.fr/distrib/V8.12.0/stdlib//Coq.Init.Logic}{\coqdocnotation{\ensuremath{\rightarrow}}} \coqref{syntaxdb.carrier:225}{\coqdocmethod{carrier}}\coqexternalref{::type scope:x '->' x}{http://coq.inria.fr/distrib/V8.12.0/stdlib//Coq.Init.Logic}{\coqdocnotation{)}}\coqdoceol
\coqdocindent{1.00em}
\}.\coqdoceol
\end{coqdoccode}

\begin{coqdoccode}
\coqdocnoindent
\coqdockw{Record} \coqdef{syntaxdb.is model}{is\_model}{\coqdocrecord{is\_model}} \{\coqdef{syntaxdb.S:235}{S}{\coqdocbinder{S}} : \coqref{syntaxdb.signature}{\coqdocrecord{signature}}\}(\coqdef{syntaxdb.m:236}{m}{\coqdocbinder{m}} : \coqref{syntaxdb.model data}{\coqdocrecord{model\_data}} \coqref{syntaxdb.S:235}{\coqdocvariable{S}}) := \{\coqdoceol
\coqdocindent{0.50em}
\coqdef{syntaxdb.substitution ext}{substitution\_ext}{\coqdocprojection{substitution\_ext}} : \coqdockw{\ensuremath{\forall}} (\coqdef{syntaxdb.f:238}{f}{\coqdocbinder{f}} \coqdef{syntaxdb.g:239}{g}{\coqdocbinder{g}} : \coqexternalref{nat}{http://coq.inria.fr/distrib/V8.12.0/stdlib//Coq.Init.Datatypes}{\coqdocinductive{$ℕ$}} \coqexternalref{::type scope:x '->' x}{http://coq.inria.fr/distrib/V8.12.0/stdlib//Coq.Init.Logic}{\coqdocnotation{\ensuremath{\rightarrow}}} \coqref{syntaxdb.m:236}{\coqdocvariable{m}}),  \coqexternalref{::type scope:x '->' x}{http://coq.inria.fr/distrib/V8.12.0/stdlib//Coq.Init.Logic}{\coqdocnotation{(}}\coqdockw{\ensuremath{\forall}} \coqdef{syntaxdb.n:241}{n}{\coqdocbinder{n}}, \coqref{syntaxdb.f:238}{\coqdocvariable{f}} \coqref{syntaxdb.n:241}{\coqdocvariable{n}} \coqexternalref{::type scope:x '=' x}{http://coq.inria.fr/distrib/V8.12.0/stdlib//Coq.Init.Logic}{\coqdocnotation{=}} \coqref{syntaxdb.g:239}{\coqdocvariable{g}} \coqref{syntaxdb.n:241}{\coqdocvariable{n}}\coqexternalref{::type scope:x '->' x}{http://coq.inria.fr/distrib/V8.12.0/stdlib//Coq.Init.Logic}{\coqdocnotation{)}} \coqexternalref{::type scope:x '->' x}{http://coq.inria.fr/distrib/V8.12.0/stdlib//Coq.Init.Logic}{\coqdocnotation{\ensuremath{\rightarrow}}} \coqdockw{\ensuremath{\forall}} \coqdef{syntaxdb.x:240}{x}{\coqdocbinder{x}}, \coqref{syntaxdb.x:240}{\coqdocvariable{x}} \coqref{syntaxdb.:::x '[' x ']'}{\coqdocnotation{[}} \coqref{syntaxdb.f:238}{\coqdocvariable{f}} \coqref{syntaxdb.:::x '[' x ']'}{\coqdocnotation{]}} \coqexternalref{::type scope:x '=' x}{http://coq.inria.fr/distrib/V8.12.0/stdlib//Coq.Init.Logic}{\coqdocnotation{=}} \coqref{syntaxdb.x:240}{\coqdocvariable{x}} \coqref{syntaxdb.:::x '[' x ']'}{\coqdocnotation{[}} \coqref{syntaxdb.g:239}{\coqdocvariable{g}} \coqref{syntaxdb.:::x '[' x ']'}{\coqdocnotation{]}};\coqdoceol
\coqdocindent{0.50em}
\coqdef{syntaxdb.variables subst}{variables\_subst}{\coqdocprojection{variables\_subst}} : \coqdockw{\ensuremath{\forall}} \coqdef{syntaxdb.x:243}{x}{\coqdocbinder{x}} \coqdef{syntaxdb.f:244}{f}{\coqdocbinder{f}}, \coqref{syntaxdb.:::x '[' x ']'}{\coqdocnotation{(}}\coqref{syntaxdb.variables}{\coqdocprojection{variables}} \coqref{syntaxdb.m:236}{\coqdocvariable{m}} \coqref{syntaxdb.x:243}{\coqdocvariable{x}}\coqref{syntaxdb.:::x '[' x ']'}{\coqdocnotation{)}} \coqref{syntaxdb.:::x '[' x ']'}{\coqdocnotation{[}} \coqref{syntaxdb.f:244}{\coqdocvariable{f}} \coqref{syntaxdb.:::x '[' x ']'}{\coqdocnotation{]}} \coqexternalref{::type scope:x '=' x}{http://coq.inria.fr/distrib/V8.12.0/stdlib//Coq.Init.Logic}{\coqdocnotation{=}} \coqref{syntaxdb.f:244}{\coqdocvariable{f}} \coqref{syntaxdb.x:243}{\coqdocvariable{x}};\coqdoceol
\coqdocindent{0.50em}
\coqdef{syntaxdb.ops subst}{ops\_subst}{\coqdocprojection{ops\_subst}} : \coqdockw{\ensuremath{\forall}} (\coqdef{syntaxdb.o:246}{o}{\coqdocbinder{o}} : \coqref{syntaxdb.O}{\coqdocprojection{O}} \coqref{syntaxdb.S:235}{\coqdocvariable{S}}), \coqref{syntaxdb.binding condition}{\coqdocdefinition{binding\_condition}} (\coqref{syntaxdb.variables}{\coqdocprojection{variables}} \coqref{syntaxdb.m:236}{\coqdocvariable{m}}) (\coqref{syntaxdb.substitution}{\coqdocprojection{substitution}} (\coqdocvar{m} := \coqref{syntaxdb.m:236}{\coqdocvariable{m}})) (\coqref{syntaxdb.ops}{\coqdocprojection{ops}} \coqref{syntaxdb.o:246}{\coqdocvariable{o}});\coqdoceol
\coqdocindent{0.50em}
\coqdef{syntaxdb.assoc}{assoc}{\coqdocprojection{assoc}} : \coqdockw{\ensuremath{\forall}} (\coqdef{syntaxdb.f:248}{f}{\coqdocbinder{f}} \coqdef{syntaxdb.g:249}{g}{\coqdocbinder{g}} : \coqexternalref{nat}{http://coq.inria.fr/distrib/V8.12.0/stdlib//Coq.Init.Datatypes}{\coqdocinductive{$ℕ$}} \coqexternalref{::type scope:x '->' x}{http://coq.inria.fr/distrib/V8.12.0/stdlib//Coq.Init.Logic}{\coqdocnotation{\ensuremath{\rightarrow}}} \coqref{syntaxdb.m:236}{\coqdocvariable{m}}) (\coqdef{syntaxdb.x:250}{x}{\coqdocbinder{x}} : \coqref{syntaxdb.m:236}{\coqdocvariable{m}}), \coqref{syntaxdb.x:250}{\coqdocvariable{x}} \coqref{syntaxdb.:::x '[' x ']'}{\coqdocnotation{[}} \coqref{syntaxdb.g:249}{\coqdocvariable{g}} \coqref{syntaxdb.:::x '[' x ']'}{\coqdocnotation{]}} \coqref{syntaxdb.:::x '[' x ']'}{\coqdocnotation{[}} \coqref{syntaxdb.f:248}{\coqdocvariable{f}} \coqref{syntaxdb.:::x '[' x ']'}{\coqdocnotation{]}} \coqexternalref{::type scope:x '=' x}{http://coq.inria.fr/distrib/V8.12.0/stdlib//Coq.Init.Logic}{\coqdocnotation{=}} \coqref{syntaxdb.x:250}{\coqdocvariable{x}} \coqref{syntaxdb.:::x '[' x ']'}{\coqdocnotation{[}} \coqref{syntaxdb.:::x '[' x ']'}{\coqdocnotation{(}}\coqdockw{fun} \coqdef{syntaxdb.n:251}{n}{\coqdocbinder{n}} \ensuremath{\Rightarrow} \coqref{syntaxdb.:::x '[' x ']'}{\coqdocnotation{(}}\coqref{syntaxdb.g:249}{\coqdocvariable{g}} \coqref{syntaxdb.n:251}{\coqdocvariable{n}}\coqref{syntaxdb.:::x '[' x ']'}{\coqdocnotation{)}} \coqref{syntaxdb.:::x '[' x ']'}{\coqdocnotation{[}} \coqref{syntaxdb.f:248}{\coqdocvariable{f}} \coqref{syntaxdb.:::x '[' x ']'}{\coqdocnotation{])}} \coqref{syntaxdb.:::x '[' x ']'}{\coqdocnotation{]}} ;\coqdoceol
\coqdocindent{0.50em}
\coqdef{syntaxdb.id neutral}{id\_neutral}{\coqdocprojection{id\_neutral}} : \coqdockw{\ensuremath{\forall}} (\coqdef{syntaxdb.x:253}{x}{\coqdocbinder{x}} : \coqref{syntaxdb.m:236}{\coqdocvariable{m}}), \coqref{syntaxdb.x:253}{\coqdocvariable{x}} \coqref{syntaxdb.:::x '[' x ']'}{\coqdocnotation{[}} \coqref{syntaxdb.variables}{\coqdocprojection{variables}} \coqref{syntaxdb.m:236}{\coqdocvariable{m}} \coqref{syntaxdb.:::x '[' x ']'}{\coqdocnotation{]}} \coqexternalref{::type scope:x '=' x}{http://coq.inria.fr/distrib/V8.12.0/stdlib//Coq.Init.Logic}{\coqdocnotation{=}} \coqref{syntaxdb.x:253}{\coqdocvariable{x}}\coqdoceol
\coqdocnoindent
\}.\coqdoceol
\coqdocemptyline
\coqdocnoindent
\coqdockw{Record} \coqdef{syntaxdb.model}{model}{\coqdocrecord{model}} (\coqdef{syntaxdb.S:255}{S}{\coqdocbinder{S}} : \coqref{syntaxdb.signature}{\coqdocrecord{signature}}) := \{\coqdoceol
\coqdocindent{0.50em}
\coqdef{syntaxdb.mod carrier}{mod\_carrier}{\coqdocprojection{mod\_carrier}} $:>$ \coqref{syntaxdb.model data}{\coqdocrecord{model\_data}} \coqref{syntaxdb.S:255}{\coqdocvariable{S}};\coqdoceol
\coqdocindent{0.50em}
\coqdef{syntaxdb.mod laws}{mod\_laws}{\coqdocprojection{mod\_laws}} : \coqref{syntaxdb.is model}{\coqdocrecord{is\_model}} \coqref{syntaxdb.mod carrier:257}{\coqdocmethod{mod\_carrier}}\coqdoceol
\coqdocnoindent
\}.\coqdoceol
\coqdocemptyline
\end{coqdoccode}
The symbol $:>$ declares an implicit coercion. For example, given a term $m$ of type
\coqdocrecord{model\_data},
we can then just write $m$ when we actually mean $\textrm{carrier}\ m$. Coq implicitly inserts
the field getter \texttt{carrier} whenever it is necessary, based on the typing constraints.

The first property \coqdocprojection{substitution\_ext} looks superfluous: it intuitively follows from 
the fact that pointwise equal functions are equal.
We however explicitly require this property because the latter general fact,
called function
extensionality, is
not built-in in Coq.\footnote{Alternatively, we could have explicitly assumed
  function extensionality as an (unrestricted) axiom, as we do anyway when
  axiomatising quotient types.}

The remaining of the file \verb|syntaxdb.v| consists of the definition of model morphisms and
the proofs of Theorems~\ref{thm:initiality:I}
and~\ref{thm:initiality:II}.
The rest of the formalisation focuses on initiality for
signatures with equations (Theorem~\ref{thm:eqn}), in the untyped
setting.
Since quotients are not built-in in Coq, we axiomatise a quotient type
$X /\!/ R$ in the file \verb|Quot.v| for each equivalence relation
$R$ on a type $X$, that is, for each $R$ of type $\coqdocrecord{Eqv}\ X$.
The canonical projection is then denoted by $-/R ∶ X → X /\!/ R$.

A De Bruijn equational theory (Definition~\ref{def:equational-theory}) is
defined as a record with four fields.
\begin{coqdoccode}
\coqdocemptyline
\coqdocnoindent
\coqdockw{Record} \coqdef{quotsyntax.equational theory}{equational\_theory}{\coqdocrecord{equational\_theory}} :=\coqdoceol
\coqdocindent{1.00em}
\{ \coqdef{quotsyntax.metavariables}{metavariables}{\coqdocprojection{metavariables}} : \coqdocrecord{signature} ;\coqdoceol
\coqdocindent{2.00em}
\coqdef{quotsyntax.main signature}{main\_signature}{\coqdocprojection{main\_signature}} : \coqdocrecord{signature} ;\coqdoceol
\coqdocindent{2.00em}
\coqdef{quotsyntax.left handside}{left\_handside}{\coqdocprojection{left\_handside}} : \coqref{quotsyntax.half equation}{\coqdocrecord{half\_equation}} \coqref{quotsyntax.main signature:18}{\coqdocmethod{main\_signature}} \coqref{quotsyntax.metavariables:17}{\coqdocmethod{metavariables}} ;\coqdoceol
\coqdocindent{2.00em}
\coqdef{quotsyntax.right handside}{right\_handside}{\coqdocprojection{right\_handside}} : \coqref{quotsyntax.half equation}{\coqdocrecord{half\_equation}} \coqref{quotsyntax.main signature:18}{\coqdocmethod{main\_signature}} \coqref{quotsyntax.metavariables:17}{\coqdocmethod{metavariables}} \coqdoceol
\coqdocindent{1.00em}
\}.\coqdoceol
\coqdocemptyline
\coqdocnoindent
\end{coqdoccode}%
A \coqdocrecord{half-equation} is a functor between two categories of models
preserving the underlying De Bruijn monad. In other words, it provides any model
of the first signature with an algebra structure for
the second signature, and this assignment is compatible with model morphisms.
\begin{minipage}{\textwidth}
\begin{coqdoccode}
\coqdocemptyline
\coqdocnoindent
\coqdockw{Record} \coqdef{quotsyntax.half equation}{half\_equation}{\coqdocrecord{half\_equation}} (\coqdef{quotsyntax.S1:1}{S1}{\coqdocbinder{S1}} : \coqdocrecord{signature})(\coqdef{quotsyntax.S2:2}{S2}{\coqdocbinder{S2}} : \coqdocrecord{signature}) :=\coqdoceol
\coqdocindent{1.00em}
\{\coqdoceol
\coqdocindent{2.00em}
\coqdef{quotsyntax.lift ops}{lift\_ops}{\coqdocprojection{lift\_ops}} $:>$ \coqdockw{\ensuremath{\forall}} (\coqdef{quotsyntax.m:4}{m}{\coqdocbinder{m}} : \coqdocrecord{model} \coqref{quotsyntax.S1:1}{\coqdocvariable{S1}}), \coqdockw{\ensuremath{\forall}} (\coqdef{quotsyntax.o:5}{o}{\coqdocbinder{o}} : \coqdocprojection{O} \coqref{quotsyntax.S2:2}{\coqdocvariable{S2}}), \coqdocinductive{vec} \coqref{quotsyntax.m:4}{\coqdocvariable{m}} (\coqdocprojection{ar} \coqref{quotsyntax.o:5}{\coqdocvariable{o}}) \coqexternalref{::type scope:x '->' x}{http://coq.inria.fr/distrib/V8.12.0/stdlib//Coq.Init.Logic}{\coqdocnotation{\ensuremath{\rightarrow}}} \coqref{quotsyntax.m:4}{\coqdocvariable{m}};\coqdoceol
\coqdocindent{2.00em}
\coqdef{quotsyntax.lift ops subst}{lift\_ops\_subst}{\coqdocprojection{lift\_ops\_subst}} :\coqdoceol
\coqdocindent{3.00em}
\coqdockw{\ensuremath{\forall}} (\coqdef{quotsyntax.m:7}{m}{\coqdocbinder{m}} : \coqdocrecord{model} \coqref{quotsyntax.S1:1}{\coqdocvariable{S1}}) (\coqdef{quotsyntax.o:8}{o}{\coqdocbinder{o}} : \coqdocprojection{O} \coqref{quotsyntax.S2:2}{\coqdocvariable{S2}}),\coqdoceol
\coqdocindent{4.00em}
\coqdocdefinition{binding\_condition} (\coqdocprojection{variables} \coqref{quotsyntax.m:7}{\coqdocvariable{m}}) (\coqdocprojection{substitution} (\coqdocvar{m} := \coqref{quotsyntax.m:7}{\coqdocvariable{m}}))\coqdoceol
\coqdocindent{13.00em}
(@\coqref{quotsyntax.lift ops:6}{\coqdocmethod{lift\_ops}} \coqref{quotsyntax.m:7}{\coqdocvariable{m}} \coqref{quotsyntax.o:8}{\coqdocvariable{o}}) ;\coqdoceol
\coqdocindent{2.00em}
\coqdef{quotsyntax.lift ops natural}{lift\_ops\_natural}{\coqdocprojection{lift\_ops\_natural}} : \coqdockw{\ensuremath{\forall}} (\coqdef{quotsyntax.m1:10}{m1}{\coqdocbinder{m1}} \coqdef{quotsyntax.m2:11}{m2}{\coqdocbinder{m2}} : \coqdocrecord{model} \coqref{quotsyntax.S1:1}{\coqdocvariable{S1}}) (\coqdef{quotsyntax.f:12}{f}{\coqdocbinder{f}} : \coqdocrecord{model\_mor} \coqref{quotsyntax.m1:10}{\coqdocvariable{m1}} \coqref{quotsyntax.m2:11}{\coqdocvariable{m2}})\coqdoceol
\coqdocindent{12.50em}
(\coqdef{quotsyntax.o:13}{o}{\coqdocbinder{o}} : \coqdocprojection{O} \coqref{quotsyntax.S2:2}{\coqdocvariable{S2}})(\coqdef{quotsyntax.v:14}{v}{\coqdocbinder{v}} : \coqdocinductive{vec} \coqref{quotsyntax.m1:10}{\coqdocvariable{m1}} (\coqdocprojection{ar} \coqref{quotsyntax.o:13}{\coqdocvariable{o}})),\coqdoceol
\coqdocindent{4.00em}
\coqref{quotsyntax.lift ops:6}{\coqdocmethod{lift\_ops}} (\coqdocdefinition{vec\_map} (\coqdockw{fun} \coqdocvar{\_} \ensuremath{\Rightarrow} \coqref{quotsyntax.f:12}{\coqdocvariable{f}}) \coqref{quotsyntax.v:14}{\coqdocvariable{v}})  \coqexternalref{::type scope:x '=' x}{http://coq.inria.fr/distrib/V8.12.0/stdlib//Coq.Init.Logic}{\coqdocnotation{=}} \coqref{quotsyntax.f:12}{\coqdocvariable{f}} (\coqref{quotsyntax.lift ops:6}{\coqdocmethod{lift\_ops}} \coqref{quotsyntax.v:14}{\coqdocvariable{v}})\coqdoceol
%\coqdocindent{8.00em}           % I moved the last curly bracket one line upwards to make the code block fit on one page. If you do not agree with this, let us know. 
%\coqdoceol
\coqdocindent{1.00em}
\}.\coqdoceol
%\coqdocemptyline
\coqdocnoindent
\end{coqdoccode}
\end{minipage}
A model of an equational theory is a model of the main signature equalising
both half-equations, in the sense that they yield equal algebra structures.
\begin{coqdoccode}
\coqdocemptyline
\coqdocnoindent
\coqdockw{Record} \coqdef{quotsyntax.model equational}{model\_equational}{\coqdocrecord{model\_equational}} (\coqdef{quotsyntax.E:43}{E}{\coqdocbinder{E}} : \coqref{quotsyntax.equational theory}{\coqdocrecord{equational\_theory}}) :=\coqdoceol
\coqdocindent{1.00em}
\{ \coqdef{quotsyntax.main model}{main\_model}{\coqdocprojection{main\_model}} $:>$ \coqdocrecord{model} (\coqref{quotsyntax.main signature}{\coqdocprojection{main\_signature}} \coqref{quotsyntax.E:43}{\coqdocvariable{E}}) ;\coqdoceol
\coqdocindent{2.00em}
\coqdef{quotsyntax.model eq}{model\_eq}{\coqdocprojection{model\_eq}} : \coqdockw{\ensuremath{\forall}} \coqdef{quotsyntax.o:46}{o}{\coqdocbinder{o}} (\coqdef{quotsyntax.v:47}{v}{\coqdocbinder{v}} : \coqdocinductive{vec} \coqref{quotsyntax.main model:45}{\coqdocmethod{main\_model}} (\coqdocprojection{ar} \coqref{quotsyntax.o:46}{\coqdocvariable{o}})),\coqdoceol
\coqdocindent{4.00em}
\coqref{quotsyntax.left handside}{\coqdocprojection{left\_handside}} \coqref{quotsyntax.E:43}{\coqdocvariable{E}} \coqref{quotsyntax.main model:45}{\coqdocmethod{main\_model}} \coqref{quotsyntax.o:46}{\coqdocvariable{o}} \coqref{quotsyntax.v:47}{\coqdocvariable{v}} \coqexternalref{::type scope:x '=' x}{http://coq.inria.fr/distrib/V8.12.0/stdlib//Coq.Init.Logic}{\coqdocnotation{=}} \coqref{quotsyntax.right handside}{\coqdocprojection{right\_handside}} \coqref{quotsyntax.E:43}{\coqdocvariable{E}} \coqref{quotsyntax.main model:45}{\coqdocmethod{main\_model}} \coqref{quotsyntax.o:46}{\coqdocvariable{o}} \coqref{quotsyntax.v:47}{\coqdocvariable{v}}\coqdoceol
\coqdocindent{1.00em}
\}.\coqdoceol
\coqdocemptyline
\coqdocnoindent
\end{coqdoccode}%
Following Definition~\ref{def:eqrel-ini}, the initial algebra of an equational
theory is obtained by quotienting the
initial algebra of the main signature by the smallest congruent equivalence
relation relating the images by the algebra structures induced by the two half-equations.

\begin{coqdoccode}
  \coqdocemptyline
\coqdocnoindent
\coqdockw{Inductive} \coqdef{quotsyntax.rel Z}{rel\_Z}{\coqdocinductive{rel\_Z}} (\coqdef{quotsyntax.E:49}{E}{\coqdocbinder{E}} : \coqref{quotsyntax.equational theory}{\coqdocrecord{equational\_theory}}) : \coqdocinductive{Z} (\coqref{quotsyntax.main signature}{\coqdocprojection{main\_signature}} \coqdocvar{E}) \coqexternalref{::type scope:x '->' x}{http://coq.inria.fr/distrib/V8.12.0/stdlib//Coq.Init.Logic}{\coqdocnotation{\ensuremath{\rightarrow}}} \coqdocinductive{Z} (\coqref{quotsyntax.main signature}{\coqdocprojection{main\_signature}} \coqdocvar{E}) \coqexternalref{::type scope:x '->' x}{http://coq.inria.fr/distrib/V8.12.0/stdlib//Coq.Init.Logic}{\coqdocnotation{\ensuremath{\rightarrow}}} \coqdockw{Prop} :=\coqdoceol
\coqdocnoindent
\ensuremath{|} \coqdef{quotsyntax.eqE}{eqE}{\coqdocconstructor{eqE}} : \coqdockw{\ensuremath{\forall}} \coqdef{quotsyntax.o:52}{o}{\coqdocbinder{o}} \coqdef{quotsyntax.v:53}{v}{\coqdocbinder{v}}, \coqref{quotsyntax.rel Z:50}{\coqdocinductive{rel\_Z}} (\coqref{quotsyntax.left handside}{\coqdocprojection{left\_handside}} \coqref{quotsyntax.E:49}{\coqdocvariable{E}} (\coqdocdefinition{ZModel} \coqdocvar{\_}) \coqref{quotsyntax.o:52}{\coqdocvariable{o}} \coqref{quotsyntax.v:53}{\coqdocvariable{v}}) (\coqref{quotsyntax.right handside}{\coqdocprojection{right\_handside}} \coqref{quotsyntax.E:49}{\coqdocvariable{E}} (\coqdocdefinition{ZModel} \coqdocvar{\_}) \coqref{quotsyntax.o:52}{\coqdocvariable{o}} \coqref{quotsyntax.v:53}{\coqdocvariable{v}}) \coqdoceol
\coqdocnoindent
\ensuremath{|} \coqdef{quotsyntax.reflE}{reflE}{\coqdocconstructor{reflE}} : \coqdockw{\ensuremath{\forall}} \coqdef{quotsyntax.z:54}{z}{\coqdocbinder{z}}, \coqref{quotsyntax.rel Z:50}{\coqdocinductive{rel\_Z}} \coqref{quotsyntax.z:54}{\coqdocvariable{z}} \coqref{quotsyntax.z:54}{\coqdocvariable{z}}\coqdoceol
\coqdocnoindent
\ensuremath{|} \coqdef{quotsyntax.symE}{symE}{\coqdocconstructor{symE}} : \coqdockw{\ensuremath{\forall}} \coqdef{quotsyntax.a:55}{a}{\coqdocbinder{a}} \coqdef{quotsyntax.b:56}{b}{\coqdocbinder{b}}, \coqref{quotsyntax.rel Z:50}{\coqdocinductive{rel\_Z}} \coqref{quotsyntax.b:56}{\coqdocvariable{b}} \coqref{quotsyntax.a:55}{\coqdocvariable{a}} \coqexternalref{::type scope:x '->' x}{http://coq.inria.fr/distrib/V8.12.0/stdlib//Coq.Init.Logic}{\coqdocnotation{\ensuremath{\rightarrow}}} \coqref{quotsyntax.rel Z:50}{\coqdocinductive{rel\_Z}} \coqref{quotsyntax.a:55}{\coqdocvariable{a}} \coqref{quotsyntax.b:56}{\coqdocvariable{b}}\coqdoceol
\coqdocnoindent
\ensuremath{|} \coqdef{quotsyntax.transE}{transE}{\coqdocconstructor{transE}} : \coqdockw{\ensuremath{\forall}} \coqdef{quotsyntax.a:57}{a}{\coqdocbinder{a}} \coqdef{quotsyntax.b:58}{b}{\coqdocbinder{b}} \coqdef{quotsyntax.c:59}{c}{\coqdocbinder{c}}, \coqref{quotsyntax.rel Z:50}{\coqdocinductive{rel\_Z}} \coqref{quotsyntax.a:57}{\coqdocvariable{a}} \coqref{quotsyntax.b:58}{\coqdocvariable{b}} \coqexternalref{::type scope:x '->' x}{http://coq.inria.fr/distrib/V8.12.0/stdlib//Coq.Init.Logic}{\coqdocnotation{\ensuremath{\rightarrow}}} \coqref{quotsyntax.rel Z:50}{\coqdocinductive{rel\_Z}} \coqref{quotsyntax.b:58}{\coqdocvariable{b}} \coqref{quotsyntax.c:59}{\coqdocvariable{c}} \coqexternalref{::type scope:x '->' x}{http://coq.inria.fr/distrib/V8.12.0/stdlib//Coq.Init.Logic}{\coqdocnotation{\ensuremath{\rightarrow}}} \coqref{quotsyntax.rel Z:50}{\coqdocinductive{rel\_Z}} \coqref{quotsyntax.a:57}{\coqdocvariable{a}} \coqref{quotsyntax.c:59}{\coqdocvariable{c}}\coqdoceol
\coqdocnoindent
\ensuremath{|} \coqdef{quotsyntax.congrE}{congrE}{\coqdocconstructor{congrE}} : \coqdockw{\ensuremath{\forall}} (\coqdef{quotsyntax.o:60}{o}{\coqdocbinder{o}} : \coqdocprojection{O} (\coqref{quotsyntax.main signature}{\coqdocprojection{main\_signature}} \coqref{quotsyntax.E:49}{\coqdocvariable{E}})) (\coqdef{quotsyntax.v:61}{v}{\coqdocbinder{v}} \coqdef{quotsyntax.v':62}{v'}{\coqdocbinder{v'}} : \coqdocinductive{vec} \coqdocvar{\_} (\coqdocprojection{ar} \coqref{quotsyntax.o:60}{\coqdocvariable{o}})),\coqdoceol
\coqdocindent{2.00em}
\coqdocinductive{rel\_vec} (@\coqref{quotsyntax.rel Z:50}{\coqdocinductive{rel\_Z}} \coqref{quotsyntax.E:49}{\coqdocvariable{E}})  \coqref{quotsyntax.v:61}{\coqdocvariable{v}} \coqref{quotsyntax.v':62}{\coqdocvariable{v'}} \coqexternalref{::type scope:x '->' x}{http://coq.inria.fr/distrib/V8.12.0/stdlib//Coq.Init.Logic}{\coqdocnotation{\ensuremath{\rightarrow}}} \coqref{quotsyntax.rel Z:50}{\coqdocinductive{rel\_Z}} (\coqdocconstructor{Op} \coqref{quotsyntax.o:60}{\coqdocvariable{o}} \coqref{quotsyntax.v:61}{\coqdocvariable{v}}) (\coqdocconstructor{Op} \coqref{quotsyntax.o:60}{\coqdocvariable{o}} \coqref{quotsyntax.v':62}{\coqdocvariable{v'}}).\coqdoceol
\coqdocemptyline
\coqdocnoindent
  \coqdocnoindent
  \coqdockw{Definition} \coqdef{quotsyntax.ZEr}{ZEr}{\coqdocdefinition{ZEr}} (\coqdef{quotsyntax.E:63}{E}{\coqdocbinder{E}} : \coqref{quotsyntax.equational theory}{\coqdocrecord{equational\_theory}}) : \coqdocrecord{Eqv} (\coqdocinductive{Z} (\coqref{quotsyntax.main signature}{\coqdocprojection{main\_signature}} \coqref{quotsyntax.E:63}{\coqdocvariable{E}})) :=\coqdoceol
  \coqdocindent{1.00em}
  \coqdocconstructor{Build\_Eqv} (@\coqref{quotsyntax.rel Z}{\coqdocinductive{rel\_Z}} \coqref{quotsyntax.E:63}{\coqdocvariable{E}}) (@\coqref{quotsyntax.reflE}{\coqdocconstructor{reflE}} \coqref{quotsyntax.E:63}{\coqdocvariable{E}}) (@\coqref{quotsyntax.symE}{\coqdocconstructor{symE}} \coqref{quotsyntax.E:63}{\coqdocvariable{E}})(@\coqref{quotsyntax.transE}{\coqdocconstructor{transE}} \coqref{quotsyntax.E:63}{\coqdocvariable{E}}) .\coqdoceol
  \coqdocnoindent
  \coqdocemptyline
  \coqdocnoindent
  \coqdockw{Definition} \coqdef{quotsyntax.ZE}{ZE}{\coqdocdefinition{ZE}} (\coqdef{quotsyntax.E:64}{E}{\coqdocbinder{E}} : \coqref{quotsyntax.equational theory}{\coqdocrecord{equational\_theory}}) := \coqdocinductive{Z} (\coqref{quotsyntax.main signature}{\coqdocprojection{main\_signature}} \coqref{quotsyntax.E:64}{\coqdocvariable{E}}) \coqdocnotation{//} \coqdocnotation{(}\coqref{quotsyntax.ZEr}{\coqdocdefinition{ZEr}} \coqref{quotsyntax.E:64}{\coqdocvariable{E}}\coqdocnotation{)}.\coqdoceol
  \coqdocemptyline
  \coqdocnoindent
\end{coqdoccode}%
The congruence case \coqdocconstructor{congrE} involves the pointwise
relation $\coqdocinductive{rel\_vec}\  R$ induced on vectors by a relation $R$.

The rest of the file consists in showing that this definition indeed induces an initial model
of the given equational theory. We also provide the instantiation on the equational signature  \verb|LC|$\beta\eta$\verb|_sig| of $\lambda$-calculus modulo $\beta\eta$-equivalence.

  \section{Conclusion}\label{s:conclu}
  We have proposed a simple, set-based theory of syntax with variable
  binding, which associates a notion of model (or algebra) to each binding
  signature, and constructs a term model following De Bruijn
  representation. The notion of model features a substitution
  operation.
%, which provides the user of our theory with a generic
  % definition.
 We have experienced the simplicity of this theory by
  implementing it in both Coq and HOL Light.

  We have furthermore equipped the construction with an
  initial-algebra semantics, organising the models of any binding
  signature into a category, and proving that the term model is
  initial therein.  

  We have then studied this initial-algebra semantics in a bit more depth, in two
  directions.
  \begin{itemize}
    \item We have first established a formal link with the
    presheaf-based approach~\cite{fiore:presheaf}, proving
    that well-behaved models (in a suitable sense on each side of the
    correspondence) agree up to an equivalence of categories.
    \item We have then recast the whole initial-algebra
      semantics into two established, abstract frameworks for syntax
      with variable binding, one based on
      strengths~\cite{fiore:presheaf,DBLP:conf/lics/Fiore08}, the
      other on
      modules~\cite{hirscho:lam,DBLP:journals/iandc/HirschowitzM10}.
    \end{itemize}

  Finally, we have shown that our theory extends easily to a
  simply-typed setting, and smoothly incorporates equations and
  transitions.
\subsection*{Funding acknowledgement}
This work was supported in part by a European Research Council (ERC) Consolidator Grant for the project “TypeFoundry”, funded under the European Union’s Horizon 2020 Framework Programme (grant agreement no. 101002277).
%% start the paper here:

  %% the following bibliography is gererated manually for the sake of brevity
  %% only; please use a separate .bib file in your submission

% \bibliographystyle{plain}
\bibliographystyle{alphaurl}
\bibliography{bib}

\end{document}